\documentclass[11pt]{article}

\usepackage{fullpage}
\usepackage{amsfonts,latexsym,amsthm,amssymb,amscd,euscript}
\usepackage{epsfig, slashed}
\usepackage{graphicx,bm}
\usepackage{subfigure, svg}
\usepackage{dsfont}
\usepackage{hyperref, youngtab}
\usepackage{mathrsfs}
\usepackage{xcolor}
    \hypersetup{colorlinks=true,citecolor=blue,urlcolor =black,linkbordercolor={1 0 0}}
\usepackage{amsmath}               
{
	\theoremstyle{plain}
	\newtheorem{theorem}{Theorem}
	\newtheorem{assumption}[theorem]{Assumption}
	\newtheorem{proposition}[theorem]{Proposition}
	\newtheorem{lemma}[theorem]{Lemma}
	\newtheorem{corollary}[theorem]{Corollary}

	\theoremstyle{definition}
	\newtheorem{definition}[theorem]{Definition}
	
}

\usepackage{color}
 \definecolor{green}{rgb}{0.0, 0.5, 0.5}

\allowdisplaybreaks[1]


\theoremstyle{remark}
\newtheorem*{remark}{Remark}
\newtheorem{notation}[theorem]{Notation}

\newcommand{\R}{\mathbb R}
\newcommand{\CF}{\mathbb C}

\newcommand{\N}{\mathbb N}

\newcommand{\HH}{\mathcal H}

\newcommand{\nc}{\newcommand}

\nc{\on}{\operatorname}
\nc{\Spec}{\on{Spec}}

\DeclareMathOperator{\range}{range}

\nc{\T}{\mathcal{T}}
\nc{\ZZ}{\mathcal{Z}}
\nc{\ex}[1]{\left\langle {#1} \right\rangle}
\nc{\e}{\epsilon}
\nc{\vp}{\varphi}
\nc{\eg}{E_\gamma}
\nc{\tn}{\tilde{\nu}}
\nc{\MT}{\tilde{V}}
\nc{\id}{\mathds{1}}

\nc{\gam}{\gamma}
\nc{\beq}{\begin{equation}}
\nc{\eeq}{\end{equation}}

\nc{\op}{\left(}
\nc{\cp}{\right)}

\nc{\os}{\left[}
\nc{\cs}{\right]}

\nc{\oa}{\big{(}}
\nc{\ca}{\big{)}}

\nc{\ob}{\Big{(}}
\nc{\cb}{\Big{)}}

\nc{\oc}{\bigg{(}}
\nc{\cc}{\bigg{)}}

\nc{\od}{\Bigg{(}}
\nc{\cd}{\Bigg{)}}

\nc{\pdev}[2]{\frac{\partial {#1}}{\partial {#2}}}
\nc{\pdevt}[3]{ \op \frac{\partial {#1}}{\partial {#2}} \cp_{#3}}

\nc{\ket}[1]{| {#1} \rangle}
\nc{\bra}[1]{\langle {#1} |}
\nc{\inner}[2]{\langle {#1} | {#2} \rangle}
\nc{\modsq}[1]{| {#1} |^2}

\nc{\mel}[3]{\langle {#1} | {#2} | {#3} \rangle}

\nc{\indp}[2]{ {p_{#1}}^{#2} }

\newcommand{\largehalffigsize}{0.5}

\usepackage{bbm} 
\usepackage[normalem]{ulem} 

\newcommand{\ind}{\mathbbm{1}}
\newcommand{\setof}[2]{\left\{ #1\; : \;#2 \right\}}

\newsavebox\MBox

\numberwithin{equation}{section}
\numberwithin{theorem}{section}

\newcommand{\footremember}[2]{%
    \footnote{#2}
    \newcounter{#1}
    \setcounter{#1}{\value{footnote}}%
}

\newcommand{\slength}[1]{L(#1)}

\begin{document} 

\title{The spin-one Motzkin chain is gapped for any area weight $t<1$} 

\author{Radu Andrei  \footremember{harvard}{Department of Physics, ETH Zurich, Switzerland} \and Marius Lemm \footremember{tuebingen}{Department of Mathematics, University of T\"ubingen, 72076 T\"ubingen, Germany} \and Ramis Movassagh\footremember{ibmcambridge}{IonQ}}

 \graphicspath{ {./images/} }

\date{May 22, 2026}
\maketitle

\begin{abstract}
We consider the spin-one Motzkin chain with area weight $t>0$.
We resolve three open questions from the literature about this model. We prove (i) existence of a uniform spectral gap for all $t<1$ as conjectured by Zhang--Ahmadein--Klich \cite{zhang2017novel} (ii)  an explicit formula for the long-distance limit of the string order parameter, which shows it is non-vanishing at small $t$, confirming a conjecture by Barbiero et al. \cite{barbiero2017haldane}, and (iii) that  gaplessness for  $t>1$ is robust and extends to hard boundary conditions, answering a question of Zhang--Klich \cite{zhang2017entropy}.
Our proof rests on an effective approximate description of the local ground states on finite open Motzkin chains. These ground states can be labeled by Motzkin walks with imbalances between up- and down-steps and we obtain different low-area approximations depending on whether the imbalance is high or low.
\end{abstract}


\renewcommand{\labelenumi}{(\alph{enumi})}

{
  \hypersetup{linkcolor=black}
  \setcounter{tocdepth}{1}
  \tableofcontents
}

\section{Introduction and main results}
Motzkin spin chains \cite{bravyi2012criticality,movassagh2016supercritical,zhang2017novel,levine2017gap,barbiero2017haldane,movassagh2017entanglement,movassagh2017entanglement,udagawa2017finite,sugino2018renyi,sugino2018area,dell2019long,alexander2019exact,tong2021shor,menon2024symmetries,hao2023exact}, have emerged as a new class of quantum lattice Hamiltonians that allow to explore a variety of fundamental phenomena. The models also have fermionic  cousins called Fredkin chains \cite{salberger2017entangled,salberger2017deformed,movassagh2016gap,zhang2017entropy} and higher-dimensional variants \cite{zhang2023coupled,zhang2024quantum,zhang2026sequential}. The initial surge of interest in Motzkin spin chains arose when Bravyi et al.\ \cite{bravyi2012criticality}, Movassagh-Schor \cite{movassagh2016supercritical} and Zhang-Ahmadein-Klich \cite{zhang2017novel} discovered that these display unusually large ground state entanglement entropy. Subsequent investigations revealed detailed intricate structure and symmetries of their  higher ground state correlation functions \cite{movassagh2017entanglement,sugino2018renyi,dell2019long,menon2024symmetries} and critical dynamical exponents  \cite{chen2017gapless,chen2017quantum}. Finally, a rich list of unforeseen connections of these models to other areas of physics has emerged, specifically to fully integrable models \cite{udagawa2017finite,tong2021shor,hao2023exact,zhang2023coupled}, holography \cite{alexander2019exact,alexander2021exact}, and number theory \cite{hao2022can}.
This rapid journey has taken Motzkin spin chains within a little over 10 years from a toy model with curious entanglement behavior to presenting a new paradigmatic model of quantum matter. It has been used in quantum error-correction \cite{brandao2019quantum,movassagh2020constructing} and is nowadays the target of quantum simulation in laboratory experiments \cite{mukherjee2026quantum}. 

The Motzkin spin chains introduced by Zhang-Ahmadein-Klich in \cite{zhang2017novel} come with two  parameters --- spin $s\in\mathbb Z_+$ and area weight $t>0$ --- that allow to explore rich physical behavior, e.g., volume-law entanglement for $s\geq 2$ and $t>1$. It is  of interest to \textit{understand the ground state phase diagram of Motzkin spin chains}.

In the present paper, we present a comprehensive study of the ground state properties of the spin $s=1$ Motzkin chain. We address  three open questions from the literature about this model, as we describe now.

\subsection{First main result: spectral gap}
In the 2017 paper introducing the area-weighted Motzkin spin chains, ZAK conjectured that the area-weighted Motzkin spin chains are \textit{gapped} for area weight $t<1$ and any spin $s$. This conjecture has remained open. Upper bounds on the closing rate of the spectral gap exist for $t=1$ and $s\geq 1$ \cite{bravyi2012criticality,movassagh2016supercritical}, for $t>1$ and $s\geq 2$ \cite{levine2017gap} and for $t>1$ and $s=1$ \cite{zhang2017entropy}. In other words, the only regime that can be gapped is the $t<1$ regime. (We recall that ``gapped'' means that there exists a constant gap independent of the system size.)  One motivation for the ZAK gap conjecture is Hastings' famous result \cite{hastings2007area} that a gap implies the area law for the entanglement entropy that was proved to hold for $t<1$ by ZAK.\\ 

In this paper, we prove the ZAK conjecture for the spin-$1$ Motzkin chain. 
Given area weight $t>0$, let $H_{n}(s,t)$ be the Motzkin Hamiltonian on a chain of length $n$ as introduced in \cite{zhang2017novel} and recalled in Subsection \ref{ssect:Hamiltoniandefn} below. We write $\gam_n(t)$ for its spectral gap. 

\begin{theorem}[The spin-$1$ Motzkin chain is gapped for any area weight $t<1$]
\label{thm:main}
For every $t\in (0,1)$, there exists a constant $c(t)>0$ such that
\beq
\gam_{n}(t)\geq c(t)>0,\qquad \textnormal{for all } n\geq 1.
\eeq
\end{theorem}
We emphasize that the constant $c(t)>0$ does not depend on the system size and thus the lower bound extends to the thermodynamic limit. 
The proof is analytic and rests on an effective approximate description of the ground states on finite open Motzkin chains. These ground states can be labeled by Motzkin walks with imbalances between up- and down-steps and we obtain different low-area approximations depending on whether the imbalance is high or low. 

A key point is that the result is valid for all $t<1$. Indeed, standard finite-size criteria \cite{knabe1988energy,lemm2019spectral,gosset2016local} allow to derive a gap for very small $t\ll 1$ from finite-size calculations, but as $t\uparrow 1$, a finite-size criterion at fixed size becomes progressively weaker, because the area-dampening becomes weaker. Instead, we use a finite-size criterion about ground state projectors that is related in spirit to the martingale method (Theorem \ref{thm:criterion}). This operates on blocks of length $k$ and it is important for us to exploit $k$ as an additional large parameter that we will choose large, but finite depending on how close $t$ is to $1$.

 The proof of the gap appeared in the 2022 version of this preprint \cite{andrei2022spin}. This updated version has an improved presentation and contains two further main results about the spin-$1$ Motzkin chain that we describe in the following. These two additional results are obtained by exploiting and refining the structural understanding and effective descriptions we have gained of the ground states of open Motzkin chains in proving Theorem \ref{thm:main}.

\subsection{Second main result: long-distance limit of string-order parameter}


A numerical DMRG study of the $t<1$ system \cite{barbiero2017haldane} by Barbiero et al.\ found further evidence for a spectral gap as well as a non-vanishing limit of the string order parameter
\[
O_n(i, j)  = \langle S_i^z (-1)^{\sum_{i\le l < j} S_l^z} S_j^z \rangle_n
\]
at large distances $|i-j|$. Here, the expectation is taken with respect to the unique ground state of the Motzkin Hamiltonian on $n$ sites.  We recall that non-vanishing of the string-order parameter is related to symmetry-protected topological order for one-dimensional quantum spin-$1$ chains. It was first observed for the AKLT chain  where it is also known as ``hidden string order'' and  associated with the breaking of a $\mathbb Z_2\times\mathbb Z_2$ symmetry \cite{den1989preroughening,girvin1989hidden,kennedy1992hidden,oshikawa1992hidden}. 
In \cite{barbiero2017haldane}, the authors investigated the gap and string-order numerically using DMRG. They conjectured that the long-distance limit of the string-order parameter is positive uniformly in  the system size.
(They also discussed the possibility that the spin-$1$ Motzkin chain is in a symmetry-protected topological quantum phase. We do not expect this to be the case, because the gap established by Theorem \ref{thm:main} stays open as $t\to 0$ and the $t=0$ model has tensor product ground states, making it topologically trivial. Note that it is  possible for a model to exhibit non-vanishing string-order parameter without being topologically ordered, e.g., the transverse-field Ising model in the paramagnetic phase.)

 Topological or not, the long-distance limit of the string order parameter carries significant structural information about the ground state, especially in a gapped phase where the correlations are known to decay to zero.
 
 We rigorously derive an explicit power series representation for the long-distance limit of the string-order parameter for any $t<1$. We can prove that this formula yields a non-vanishing string-order parameter for $t\leq 0.848$. The power series can be evaluated numerically at all $t$ showing a non-zero order parameter at all $t<1$ (see Figure \ref{fig:order_parameter_numerics}).

\begin{theorem}[String order parameter]\label{thm:main2}
Let $t\in (0,1)$ and let $0< \theta<\theta'< 1$. Then the limits 
    \begin{align}
O_{\mathrm{bulk}}=\lim_{n\to\infty} O_n (\lfloor\theta n\rfloor, \lfloor\theta' n\rfloor),\qquad 
   O_{\mathrm{bdry}}=\lim O_n (1, \lfloor \theta n\rfloor)
    \end{align}
    exist, are independent of $\theta,\theta'$, and are explicitly given by \eqref{eq:Bseries}.
    
    Suppose that  $t<t_{\mathrm{SOP}}$ where $t_{\mathrm{SOP}}$ is the unique zero in $[0,1]$ of  $t^4+t^3+t^2-t-1$. Then, 
    the string order parameters are non-zero in the thermodynamic limit
    \[
    O_{\mathrm{bulk}}>0,\qquad O_{\mathrm{bdry}}>0
    \]
\end{theorem}


We prove Theorem \ref{thm:main2} in Section \ref{sec:stringorder}.

We note the analytical estimate $t_{\mathrm{SOP}}>0.848$. That is, for all $t\leq 0.848$, we give an analytical direct proof that the string-order parameter is non-zero.

 \begin{figure}[t]
	\centering	
    \scalebox{0.35}{\includegraphics{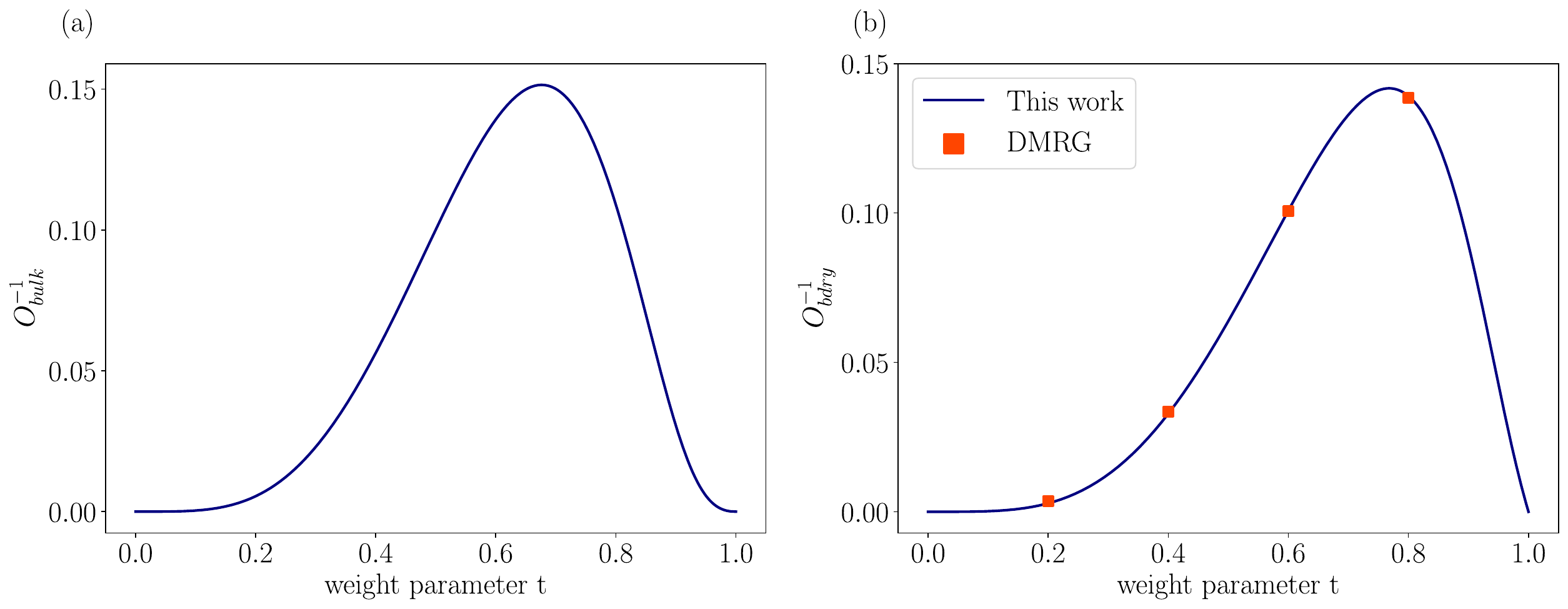}}
	\caption{Calculated values of (a) bulk, and (b) boundary limiting values of the string order parameter, versus the weight parameter $t$.     The points labeled `DMRG' were calculated by Barbiero et al.\ in \cite{barbiero2017haldane}. The curves were calculated using the power series formulae \eqref{eq:Bseries} that explicitly give the value of the string-order parameters in the thermodynamic limit for each value of $t\in(0,1)$.}
	\label{fig:order_parameter_numerics}
\end{figure}

In fact, the full statement that we prove on the string-order parameter is more informative: we derive an \textit{explicit analytical formula} for the string-order parameter in the thermodynamic limit; see  Theorem \ref{thm:sop_values}. The formula for the string-order parameter is \eqref{eq:Bseries}, which represents it as a power series in $t$ whose coefficients satisfy a simple recursion relation and decay exponentially. All of this, as well as the proof of Theorem \ref{thm:main2}, are discussed in detail in Section \ref{sec:stringorder}. Here, we would just like to emphasize that the analytical formula for the string-order parameter can be easily evaluated numerically; see Figure \ref{fig:order_parameter_numerics} and aligns with the DMRG results of \cite{barbiero2017haldane}.

\subsection{Third main result: gapless modes for $t>1$ with hard boundary conditions }
It turns out that the effective description of the unbalanced ground states can be adapted to the gapless $t>1$ phase. This allows us to answer the following open question by Klich-Zhang \cite{zhang2017entropy}:\\

\textit{``It is interesting to note the role of boundary conditions when discussing the gap. (...) However, in the proof for the colorless $s = 1$ case, the excited state we constructed is a superposition of many walks, some of those walks in the superposition have a boundary spin flipped (...) and, our bound on the gap will have a dependence on the strength of the boundary terms, which have both been set to unity. We have not addressed the question of what happens to the gap if the norm boundary terms in (2.5) are set to infinity - i.e. if only excitations consistent with a hard boundary condition are allowed. We may address this interesting question in the future.''} \cite{zhang2017entropy}\\

We answer this question in the affirmative: the gaplessness of the $t>1$ model is robust and also occurs for a hard boundary condition.

\begin{theorem}[Robust gaplessness for $t>1$ ]\label{thm:main3}
  Let $t>1$. Then
\[
\gamma_{2n}(t)\leq C n t^{-n}.
\]  
The same bound holds for the spectral gap of the Motzkin Hamiltonian with a hard boundary condition, i.e., \eqref{eq:original-hamiltonian} restricted to $\mathrm{ker}\Pi_{bdry}$.
\end{theorem}

The upper bound on $\gamma_n(t)$ in \cite{zhang2017entropy} also behaves as $t^{-n}$ to leading order, but it does not apply for the Hamiltonian with a hard boundary condition. 

We prove Theorem \ref{thm:main3} in Section \ref{sec:trial}. Since this is an \textit{upper} bound on the spectral gap, the proof only requires constructing a suitable trial state. The authors of  \cite{zhang2017entropy} construct a trial state by a boundary perturbation, which creates sensitivity to the boundary conditions. Instead, we construct a trial state that is roughly of the ``double-peak'' shape \textnormal{/\textbackslash/\textbackslash}  and thus does not see the boundary condition.  In general, a trial state calculation is  conceptually a lot simpler than the proof of a gap lower bound as in Theorem \ref{thm:main}. Nonetheless, the calculations we do with the trial state (to show approximate orthogonality to the ground state and to bound its energy) are $t>1$ analogs of combinatorial insights we found when studying the $t<1$ local ground states. In this sense, all the results fit closely together.

\subsection{Summary}

To summarize, this work resolves the following three open problems about the spin-$1$ Motzkin chain.

\begin{itemize}
\item Existence of a spectral gap for any $t<1$ as first conjectured in \cite{zhang2017novel} and numerically observed in \cite{barbiero2017haldane}.
\item Non-vanishing long-distance limit of the string order  parameter  as numerically observed and conjectured in \cite{barbiero2017haldane}.
\item Gaplessness for $t>1$ with hard boundary conditions, answering a question from \cite{zhang2017entropy}.
\end{itemize}


\subsection{Proof strategy}

The overarching idea to derive a spectral gap is to use an \textit{analytical finite-size criterion}. In recent years, related finite-size criteria have been successfully applied to other frustration-free Hamiltonians including higher-dimensional ones~\cite{abdul2020class,guo2021nonzero,lemm2019gaplessness,lemm2019gapped,lemm2019aklt,lemm2020existence,nachtergaele1996spectral,pomata2019aklt,pomata2020demonstrating,haferkamp2021improved,warzel2021bulk,warzel2021bulk,lemm2022quantitatively,jauslin2022random,mizel2023renormalization,hunter2025two,rai2026hierarchy}, but we emphasize that the verification of a finite-size criterion always requires good understanding of the Hamiltonian under investigation. Here we use a criterion based on a well-known duality lemma of Fannes-Nachtergaele-Werner \cite{fannes1992finitely} for estimating the angle between local ground spaces. Similar criteria played a central role in the recent works \cite{abdul2020class,guo2021nonzero,pomata2019aklt,pomata2020demonstrating}; see also \cite{spitzer2003improved}. The version we utilize here reduces the spectral gap problem to bounding the ground state overlap $\| G_{[k + 1, 3k]} G_{[1, 2k]} - G_{[1, 3k]}  \| < {1 \over 2}$ which roughly speaking measures the ``delocalization'' of possible excitations. See Theorem \ref{thm:criterion} for the precise statement of the finite-size criterion. The norm $\| G_{[k + 1, 3k]} G_{[1, 2k]} - G_{[1, 3k]}  \|$ can be calculated solely in terms of states on the full chain which are excited (orthogonal to the full-chain ground space), but their components on the first two-thirds of the chain are local ground states. To prove $\| G_{[k + 1, 3k]} G_{[1, 2k]} - G_{[1, 3k]}  \| < {1 \over 2}$, we ask how much these states can overlap with the ground space on the last two-thirds of the chain. If  this overlap is small, the finite-size criterion implies a spectral gap.

The proof of small overlap is the crux and it requires a detailed understanding of the finite-size ground states. In particular, the following two technical challenges arise when we decompose the Motzkin Hamiltonian into subsystems to verify the finite-size criterion and need to be addressed:
\begin{itemize}
\item The Motzkin walks are pinned to have an initial up or flat step and final down or flat step through particular boundary projectors. This leads to a particular breaking of translation-invariance and different kinds of subsystem Hamiltonians at the bulk versus boundaries.
\item The ground space of each subsystem relevant to the finite-size criterion (which naturally comes with open boundary conditions) is highly degenerate; the dimension scales like system size squared. When composing Motzkin subchains, this degeneracy leads to massive combinatorial factors which have to be a posteriori balanced by the area weight. 
\end{itemize}

To address the first point, we develop a scheme to remove the boundary projectors and reduce the derivation of the gap to the gap of the Motzkin Hamiltonian with open boundary conditions. For this, we rely on Kitaev's projection lemma \cite{kempe2006complexity} and an explicit calculation of the boundary energy penalty incurred by superpositions of Motzkin walks not satisfying the boundary conditions. See Subsections \ref{ssect:boundarypenalty} and \ref{ssect:proofmain} for the details.

To address the more difficult second point, we introduce a notion of approximate ground states to ameliorate some of the massive combinatorial issues. Our approximation scheme is based on the observation that a ground state of the open chain with $p$ unbalanced up steps and $q$ unbalanced down steps will tend to have the up-steps accumulating on the left end and the down-steps accumulating on the right end of the chain because of the exponential weighting of area. This leads us to classes of approximate ground state which have very low area. The analysis of their overlaps is central to the verification of the finite-size criterion and is decomposed into a low- and high-imbalance regime that will be explained further in due course. 

Working with these approximate ground states, which are themselves superpositions of unbalanced Motzkin walks, requires carefully tracking of the $p$- and $q$-dependence of their normalization factors. It is paramount that these satisfy an approximate factorization condition at large $k$ and we prove by a rather intricate interplay of elementary induction schemes that are guided by numerics and inspired by heuristically viewing the recursion relation of normalization factors as a spatially inhomogeneous 1D discrete diffusion equation; see Appendix \ref{sec:norm-ratios-appendix}.

The approximate description of finite-volume ground states that we gain in the process can be used to understand the Motzkin chains in additional detail beyond the gap. In particular, we use this approximate description to prove Theorem \ref{thm:main2}, which gives us a power series for the long-distance limit of the string-order parameter at any $t\in(0,1)$. Moreover, they also improve our understanding of the high-area regime $t>1$ to inspire a new choice of trial state proving Theorem \ref{thm:main3}.

For readers familiar with the literature on Motzkin spin chains, we mention that we do not reformulate the gap problem as a gap problem for classical Markov chains as was done in \cite{levine2017gap,movassagh2016gap,movassagh2016supercritical}. This has two reasons: (i) Even with the reformulation, one would still have to derive a new non-trivial classical probability result, namely existence of an order-$1$ gap for the appropriate classical Markov chain. (ii) As the above works show, this approach commonly incurs losses of factors depending on the system size. This makes it well suitable for proving that the gap closes polynomially for $t=1$ regime or exponentially in the $t>1$ regime. However, the present case is more delicate because we aim to derive an order-$1$ lower bound on the gap and such system-size dependent losses, even just logarithmic ones, can no longer be afforded. Nonetheless, it would be interesting to see an alternative derivation of our main result via purely probabilistic techniques. Conversely, our result can be used to derive the spectral gap of the corresponding classical Markov chain, which may be of independent interest.

\subsection{Comparison with other methods} Let us briefly explain why other spectral gap methods we have tried fail to give a spectral gap for the full range $t\in (0,1)$. 

First, Knabe type finite-size criteria \cite{knabe1988energy,anshu2020improved,gosset2016local,lemm2020finite,lemm2019spectral,lemm2022quantitatively} rely on explicitly calculating the gap for a small finite subsystem. We found that these can be used to derive the spectral gap for sufficiently small $t$, specifically, we have successfully verified a similar finite-size criterion \cite{lemm2019spectral} numerically for $t\leq 0.55$. However, it is clear that these can never cover the full range of $t\in (0,1)$, because any fixed finite-size gap must lie below the relevant threshold for $t=1$  and thus also for all $t$ sufficiently close to $1$.

Second, the martingale method \cite{nachtergaele1996spectral} with finite spatial overlap does not seem apply in our model for values of $t$ arbitrarily close to $1$, basically, because the relevant systems entering it have small overlap compared to their size. See the Remark \ref{rmk:MM} for further explanations.  This is different in our finite-size criterion where we can choose $k$ as a large parameter (diverging as $t\to 1$) and the relevant subsystems overlap by approximately  50\%.

\subsection{Open problems}
We close by mentioning some open problems. First, it is natural to aim to extend the result to higher spin while keeping $t\in (0,1)$ arbitrary. We expect that the combinatorial analysis in the main text can be generalized to higher spin. However, the extension will require a  new, more conceptual proof of the convergence of ratios of normalization factors in Appendix A, which is currently done by many nested inductions that would become complicated for higher spin. 

As a second open problem, we mention that the Motzkin spin chains are ``bosonic'' in the sense that the local spin number is an integer. Their ``fermionic'' siblings with half-integer spin are called Fredkin spin chains and have also been studied in detail \cite{movassagh2016gap,salberger2017deformed,udagawa2017finite,zhang2017entropy}. While the local interaction on the Fredkin side is slightly more subtle (it is $3$-local instead of nearest-neighbor as in the Motzkin case), the developments there are essentially parallel and we generally expect these models to be amenable to our technique. 

\subsection{Organization of the paper}
\label{ssect:orga}
In \textit{Section \ref{sect:main}}, we introduce the main model, the Motzkin Hamiltonian with pinning boundary conditions. We also introduce it with open boundary conditions and explain how a spectral gap with open boundary conditions implies the main result via Kitaev's projection lemma \cite{kempe2006complexity}.

In \textit{Section \ref{sect:oc-gs}}, we give a precise characterization of open-chain ground states by extending the analysis of \cite{zhang2017novel} and use this characterization to infer the important boundary energy penalty of raised ground states mentioned above.

In \textit{Section \ref{sec:criterion}}, we formulate our finite-size criterion for general frustration-free open spin chains (Theorem \ref{thm:criterion}) based on the duality projection lemma of \cite{fannes1992finitely}.

\textit{Sections \ref{sect:strategy}-\ref{sec:completing-low-imbalance-proof}} and two appendices contain the technically challenging analytical verification of the finite-size criterion for any $t<1$, i.e., the proof of Theorem \ref{thm:main}. The key point is to give an approximate description of local ground states in terms of suitable classes of dominant walks. We summarize the  proof strategy for these sections in more detail in Section \ref{sect:strategy}.

In Section \ref{sec:stringorder}, we prove Theorem \ref{thm:main2} on the string-order parameter. For this, we heavily rely on the approximations of the local ground states developed in Sections \ref{sect:strategy}-\ref{sec:completing-low-imbalance-proof}.

In Section \ref{sec:trial}, we prove Theorem \ref{thm:main3} by constructing suitable trial states and developing $t>1$ analogs of ground state representations from Sections \ref{sect:strategy}-\ref{sec:completing-low-imbalance-proof}.

\section{The model and the proof of the main result}\label{sect:main}
\subsection{The Motzkin Hamiltonian}\label{ssect:Hamiltoniandefn}
The Motzkin Hamiltonian is defined on a chain of $n$ spin-$1$ particles, so the total Hilbert space is $\bigotimes_{j=1}^n \mathbb C^3$. We label the basis states as up-, down- or null-steps, i.e., $\mathbb C^3=\mathrm{span}\{\ket{u},\ket{d},\ket{0}\}$.
Following ZAK \cite{zhang2017novel}, we define for any area weight $t> 0$, the Motzkin Hamiltonian
\begin{equation}
H_{n}(t)=\Pi_{bdry}+\sum_{j=1}^{n-1}\Pi_{j,j+1}(t) \label{eq:original-hamiltonian}
\end{equation}
with local interactions
\begin{align}
\Pi_\text{bdry}&=\ket{d}\bra{d}_1 + \ket{u} \bra{u}_{n} \label{eq:hamiltonian_bdry_term} \\ 
\Pi_{j,j+1}(t) &=  \ket{U(t)} \bra{ U(t)}_{j,j+1} + \ket{D(t)} \bra{D(t)}_{j,j+1} + \ket{\varphi(t)} \bra {\varphi(t)}_{j,j+1} \label{eq:projectors}
\end{align}
where 
\begin{align}
|U(t)\rangle &=\frac{1}{\sqrt{1+t^2}} \ob t \cdot \ket{0 u} - \ket{u 0} \cb\\
|D(t)\rangle &=\frac{1}{\sqrt{1+t^2}} \ob \ket{0 d} - t \cdot \ket{d 0} \cb\\
|\varphi(t)\rangle &=\frac{1}{\sqrt{1+t^2}} \ob \ket{u d} - t \cdot \ket{0 0} \cb
\end{align}

We recall that $H_{n}(t)$ has a unique frustration-free ground state which is an area-weighted superposition of Motzkin walks.

\begin{theorem}[\cite{zhang2017novel}]\label{thm:zhang2017novelff}
For every $t>0$, the zero eigenspace of $H_n(t)$ is spanned by the normalized ground state vector
$$
\ket {GS_{0,0}}=\frac{1}{\sqrt{N_{0,0}}}\sum_{w\in G_{0,0}} t^{\mathcal A(w)} \ket{w}.
$$
\end{theorem}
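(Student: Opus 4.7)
The plan is to exploit the frustration-free structure of $H_n(t)$: it is a sum of positive semidefinite projectors, so a state lies in its zero eigenspace if and only if it is annihilated by every projector in the sum. I will (i) verify that $\ket{GS_{0,0}}$ is annihilated by each projector, then (ii) show that the common kernel is one-dimensional. For (i), I handle the boundary piece by the Motzkin definition: any $w\in G_{0,0}$ is non-negative with endpoints at height $0$, so its first step cannot be $d$ and its last cannot be $u$, giving $\Pi_{\textnormal{bdry}}\ket{GS_{0,0}}=0$. For each bulk projector I observe that $\ket{U(t)},\ket{D(t)},\ket{\vp(t)}$ are each combinations of two $2$-site basis vectors related by a \emph{swap move} ($u0\leftrightarrow 0u$, $0d\leftrightarrow d0$, $ud\leftrightarrow 00$). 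I pair $w\in G_{0,0}$ with its image $w'$ under such a swap at sites $(j,j+1)$, compute $\mathcal A(w)-\mathcal A(w')=\pm 1$ with the sign matching the coefficient ratio in the corresponding local vector; the contributions to $\Pi_{j,j+1}(t)\ket{GS_{0,0}}$ then cancel pairwise over all such swap-partners.

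For uniqueness, I will use the observation that the three swap moves generate an equivalence relation on $n$-site basis vectors, partitioning them into classes $\mathcal{C}$. A standard frustration-free argument then shows that the kernel of $\sum_j\Pi_{j,j+1}(t)$ is spanned by the area-weighted class states $\ket{GS_\mathcal{C}}=\frac{1}{\sqrt{Z_\mathcal{C}}}\sum_{w\in\mathcal{C}}t^{\mathcal A(w)}\ket{w}$, one per class. Since $\Pi_{\textnormal{bdry}}$ is diagonal in the basis and distinct class states have disjoint support, an arbitrary $\sum_\mathcal{C}\alpha_\mathcal{C}\ket{GS_\mathcal{C}}$ lies in $\ker\Pi_{\textnormal{bdry}}$ iff, for every $\mathcal{C}$ with $\alpha_\mathcal{C}\neq 0$, each walk $w\in\mathcal{C}$ satisfies $w_1\neq d$ and $w_n\neq u$. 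Two short direct checks show that every swap move preserves both the net height change $\#u-\#d$ and the non-negativity of the walk, so $G_{0,0}$ is closed under the equivalence. It then remains to prove that $G_{0,0}$ is a \emph{single} equivalence class and that no other class is entirely free of walks starting with $d$ or ending with $u$.

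This combinatorial claim will be the principal obstacle. To establish connectivity within $G_{0,0}$, I plan to reduce any Motzkin walk to $\ket{0^n}$: pick a matched pair $(u_i,d_j)$ from a canonical matching (match each $d$ with the nearest preceding unmatched $u$), use rules $1$ and $2$ to slide the intervening $0$'s until $u$ and $d$ are adjacent, then apply rule $3$ to replace $ud$ by $00$; induct on the number of up-steps. To rule out other ``good'' classes I argue by contrapositive. If $H:=\#u-\#d>0$ in class $\mathcal{C}$, the analogous matched-pair reduction yields the canonical representative $\ket{0^{n-H}u^H}$, which ends in $u$; symmetrically $H<0$ produces a walk starting with $d$. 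If $H=0$ but some walk visits a negative height, I apply the Motzkin reduction to the maximal non-negative prefix ending just before the walk first reaches $-1$ to obtain $\ket{0^{\tau-1}d\cdot w[\tau+1\!:\!n]}$, then use rule $2$ to slide that $d$ all the way to position $1$. In every non-Motzkin class one therefore finds a walk with $w_1=d$ or $w_n=u$, contradicting the ``good'' condition. Putting the pieces together, the zero eigenspace is exactly the one-dimensional span of $\ket{GS_{0,0}}$.
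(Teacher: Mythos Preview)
Your framework is the same as the paper's: the bulk kernel of $H_n^0(t)$ is spanned by one area-weighted superposition per equivalence class under the local moves (this is exactly the paper's Theorem~\ref{thm:degenerate-pq-gs} and Lemma~\ref{lm:equiv-classes}), and then the boundary projector singles out the Motzkin class. Your verification that $\ket{GS_{0,0}}$ is annihilated, and your connectivity argument reducing any Motzkin walk to $0^n$, are correct and standard.

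There is, however, a small gap in your case analysis for ruling out other classes. You claim that when $H=\#u-\#d>0$ the matched-pair reduction yields $\ket{0^{n-H}u^H}$. This is false in general: take $n=3$ and the string $duu$, which has $H=1$ but admits no local moves at all, so it is \emph{not} equivalent to $00u$. The point is that your matching rule (pair each $d$ with the nearest preceding unmatched $u$) leaves some $d$'s unmatched whenever the walk dips below its starting height; after eliminating the matched pairs and sliding zeros you obtain not $0^{n-H}u^H$ but the paper's canonical representative $g_{p,q}=d^p0^{n-p-q}u^q$, where $p$ counts the unmatched $d$'s and $q$ the unmatched $u$'s (so $q-p=H$). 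Your conclusion survives because $H>0$ forces $q>0$, hence $g_{p,q}$ still ends in $u$; symmetrically $H<0$ forces $p>0$ and $g_{p,q}$ starts with $d$. Equivalently, you can drop the restriction $H=0$ from your third case and apply the prefix-reduction argument whenever the walk visits a negative height (i.e.\ whenever $p>0$), which then covers the classes your first two cases miss. With either fix the argument is complete and coincides with the paper's Lemma~\ref{lm:equiv-classes}.
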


Here $G_{0,0}$ denotes the set of Motzkin walks of length $n$ and $\mathcal A$ denotes the total area under the walk.  Recall that a Motzkin walk is a discrete one-dimensional walk comprised of up-, down-, and flat steps, which only takes non-negative values, i.e., it stays above the horizontal axis.

Theorem \ref{thm:zhang2017novelff} implies that $H_{n}(t)$ is frustration-free with zero energy ground state and so its spectral gap is equal to its smallest positive eigenvalue,
$$
\gam_{n}(t)=\inf(\mathrm{spec}\, H(t)\setminus\{0\}).
$$

\subsection{The Motzkin Hamiltonian with open boundary conditions}
The following open-boundary Motzkin Hamiltonian will play a central role in our proof. We define
\begin{equation} \label{eq:translation-invariant-part}
H^0_{n}(t)=\sum_{j=1}^{n-1}\Pi_{j,j+1}(t)
\end{equation}
so that $H_{n}(t) = H_{n}^0(t) + \Pi_\text{bdry}$. We write $\gamma^{0}_n(t)$ for the spectral gap of $H^{0}_n(t)$.

As Theorem \ref{thm:degenerate-pq-gs} below shows, $H^{0}_n(t)$ has frustration-free ground states that can be explicitly described in a similar way as in Theorem \ref{thm:zhang2017novelff}. The main difference is that, without the boundary projectors, the initial and final height of the Motzkin walk are free, leading to a degeneracy of the ground space of $H^{0}_n(t)$ that is quadratic in system size. 

To characterize the ground states of $H_n^0(t)$ exactly, we introduce the following notions. First, notice that there is a one-to-one correspondence between basis states in the Hilbert space $\bigotimes_{j=1}^{n} \mathbb C^3$ and strings $\mathfrak{s}\in\{0,u,d\}^{n}$. General states thus correspond to linear combinations of such strings. On these, we define the local moves 
\begin{equation} \label{eq:local-moves}
(0u)\leftrightarrow t\text{ }(u0)\qquad t\text{ }(0d)\leftrightarrow(d0)\qquad(00)\leftrightarrow t\text{ }(ud).
\end{equation}
Notice that by making the unit of area equal to $1$, each local move changes the area by $1$.

The key idea is to introduce an equivalence relation among (scalar multiples of) strings.
\begin{definition}[Equivalence relation]
    Two strings are equivalent if and only if
	they are related by a sequence of local moves:
	\begin{equation}
	\mathfrak{s_{1}}\sim\mathfrak{s_{2}}\quad\stackrel{\mathrm{def}}{\Longleftrightarrow}\quad t^{{\mathcal A}(\mathfrak{s_{1})}}\mathfrak{s_{1}}\leftrightarrow t^{{\mathcal A}(\mathfrak{s_{2}})}\mathfrak{s_{2}},\label{eq:localMove}
	\end{equation}
	where ${\mathcal A}(\mathfrak{s})$ is the area under the walk encoded by $\mathfrak {s}$.
\end{definition}

Since the string $\mathfrak {s}$ only defines the corresponding walk up to an overall up- or down-shift, we use the convention that the walk corresponding to $\mathfrak {s}$  is the unique non-negative walk of minimal area. See Figure \ref{fig:equivalence-class} for an example and Subsection \ref{ssec:li-split-gs} for further discussion.


The ground states will be defined in terms of the following equivalence classes of the imbalanced walks.

\begin{definition}
For any integers $p,q \ge 0$ with $p + q \le n$, we define the walk
\[
g_{p,q}=( \underbrace{d, \dots, d}_{p},0,\dots ,0,\underbrace{u,\dots, u}_{q} ),
\]
where the starting height is $p$, the ending height is $q$,
and in between the imbalanced steps we have a string of all flat steps
(zeros) at zero height; (see Fig. \ref{fig:equivalence-class}).
We write $G_{p,q}$ for its equivalence class under the equivalence relation $\sim$ from \eqref{eq:localMove}.
\end{definition}

\begin{figure}[t]
\begin{center}
	\scalebox{\largehalffigsize}{\includegraphics{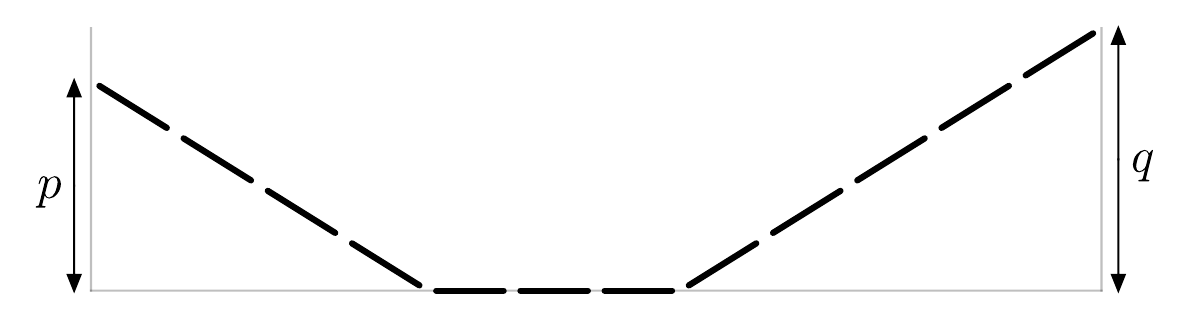}}
    \caption{\label{fig:equivalence-class}Representative walk $g_{p,q}$ in the equivalence class $G_{p,q}$. Through the local moves \eqref{eq:local-moves}, any walk in the $G_{p,q}$ class can be brought (with an appropriate area-dependent prefactor) to the form shown here.}
	\end{center}
\end{figure}




\begin{theorem}[Characterization of open-chain ground space]
    \label{thm:degenerate-pq-gs}
	The ground space of $H_n^0(t)$ is spanned by the collection of orthonormal vectors
	\begin{equation}
	|GS_{p,q}\rangle=\frac{1}{\sqrt{{N}_{p,q}}}\sum_{w\in G_{p,q}}t^{{\mathcal A}(w)}|w\rangle\label{eq:GroundStates},
	\qquad p,q\geq 0,\quad p+q\leq n.
	\end{equation}
	where ${N}_{p,q}$ is a normalization factor. 
\end{theorem}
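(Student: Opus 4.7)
The plan is to exploit frustration-freeness. Since $H_n^0(t) = \sum_{j=1}^{n-1} \Pi_{j,j+1}(t)$ is a sum of orthogonal projectors, its kernel is $\bigcap_{j=1}^{n-1} \ker \Pi_{j,j+1}(t)$. I would first translate each local kernel condition into an amplitude relation on a basis expansion $\ket{\psi} = \sum_w c_w \ket{w}$. Fixing all sites outside $(j,j+1)$, orthogonality of $\ket{\psi}$ to each of $\ket{U(t)}, \ket{D(t)}, \ket{\varphi(t)}$ yields, for every pair $w, w'$ that differ only by a single local move at positions $(j, j+1)$ in the sense of \eqref{eq:local-moves}, the relation $c_{w'} = t^{\mathcal{A}(w') - \mathcal{A}(w)} c_w$, since each of the three moves changes the area by exactly one unit. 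Iterating over all bonds, $\ket{\psi} \in \ker H_n^0(t)$ if and only if the rescaled coefficients $c_w\, t^{-\mathcal{A}(w)}$ are constant on each equivalence class of $\sim$.

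The rest of the theorem reduces to the combinatorial claim that the equivalence classes of $\sim$ are exactly the $G_{p,q}$ for $p, q \geq 0$ with $p + q \leq n$. I would split this into two parts. For invariance, define the imbalance of a string $\mathfrak{s}$ as the pair $(p(\mathfrak{s}), q(\mathfrak{s}))$ obtained by bracket-matching $u$'s with later $d$'s (and ignoring $0$'s): $p(\mathfrak{s})$ is the number of unmatched $d$'s and $q(\mathfrak{s})$ the number of unmatched $u$'s. Each swap move $(0u) \leftrightarrow (u0)$ or $(0d) \leftrightarrow (d0)$ preserves the multiset of letters and hence the matching, while $(00) \leftrightarrow (ud)$ inserts or removes a trivially matched $ud$-pair and leaves the unmatched counts unchanged. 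For reduction to canonical form, one gives an explicit algorithm showing any $\mathfrak{s}$ with imbalance $(p,q)$ is equivalent to $g_{p,q}$: bring each matched $ud$-pair to adjacency using the swap moves, collapse it via $ud \to 00$, and then sort the remaining letters into the pattern $d^p 0^{n-p-q} u^q$ using the $(0u)$- and $(0d)$-swaps. Together these two parts identify the equivalence classes with the $(p,q)$-sectors.

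Combining the two steps, the kernel of $H_n^0(t)$ is parameterized by one complex scalar per class $G_{p,q}$, so it has dimension $\#\{(p,q) : p, q \geq 0,\ p + q \leq n\} = (n+1)(n+2)/2$. The specific choice $c_w = N_{p,q}^{-1/2} t^{\mathcal{A}(w)}$ produces the unit vector $\ket{GS_{p,q}}$ by construction, and $\langle GS_{p,q} | GS_{p',q'}\rangle = 0$ for $(p,q) \neq (p',q')$ is immediate because the classes $G_{p,q}$ are pairwise disjoint. I expect the main technical hurdle to be the canonical-form reduction in the second step: matched $ud$-pairs in a general string may be deeply nested or separated by many spacers, so bringing them to adjacency without disturbing previously processed pairs requires a careful induction (for instance on nesting depth, or on the position of the leftmost matched $u$). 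Once that reduction is in hand, the theorem follows as an open-boundary extension of the pinned-chain analysis carried out in \cite{zhang2017novel}.
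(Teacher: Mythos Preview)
Your proposal is correct and follows essentially the same approach as the paper: both reduce the kernel computation to the combinatorial identification of the $\sim$-equivalence classes with the $(p,q)$-sectors via an invariance-plus-reduction argument (the paper's Lemma~\ref{lm:equiv-classes}), and both conclude orthogonality from disjointness of the classes. The paper's reduction algorithm is phrased slightly differently (merge each partnered $u$ with its matching $d$, then shuttle the remaining unmatched letters outward) and defers some details to \cite{zhang2017novel}, but it is the same argument you outline; the nesting concern you flag is real but mild, since one can always process an \emph{innermost} matched pair first.
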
 
This theorem will be proved in Subsection \ref{ssec:ground-state-characterization} by generalizing ideas in \cite{zhang2017novel}.

The key result and main technical work of this paper is to prove that the Motzkin Hamiltonian with open boundary conditions is gapped.

\begin{theorem}[Gap with open boundary conditions] \label{thm:ff-component-gap}
	For every $t\in (0,1)$, there exists a constant $c_1(t)$ such that
	\begin{equation}
	\gamma^0_n(t) \ge c_1(t) > 0,\qquad n\geq 2.
	\end{equation}
\end{theorem}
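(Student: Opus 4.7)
The plan is to verify the finite-size criterion of Theorem \ref{thm:criterion} for the open-chain Hamiltonian $H_n^0(t)$. Concretely, I would fix an integer block size $k$ (to be optimized as a function of $t$) and aim to establish the operator-norm bound $\|G_{[k+1,3k]}G_{[1,2k]} - G_{[1,3k]}\| < \tfrac{1}{2}$, where $G_{[a,b]}$ is the orthogonal projection onto the ground space of $H^0$ restricted to sites $[a,b]$. By Theorem \ref{thm:degenerate-pq-gs}, each such projector decomposes as a sum of rank-one projectors $\sum_{p,q}|GS_{p,q}\rangle\langle GS_{p,q}|$ over admissible imbalances $p,q\geq 0$. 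Since we work on an open chain of length $n\geq 3k$, the gap estimate is then independent of $n$ by the structure of the criterion.

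The first step is to rephrase the operator norm as a supremum of overlaps between two families of vectors: those that are ground states on the left two-thirds and those that are ground states on the right two-thirds, intersected with the orthogonal complement of the full-chain ground space. This isolates the genuinely problematic contributions, namely the cross-term overlaps of local $|GS_{p,q}\rangle$ vectors living on overlapping subintervals. The central analytical obstacle, flagged in Section 1.1, is that the ground-space dimension on a length-$\ell$ subchain is of order $\ell^2$, so a naive estimate picks up combinatorial prefactors that overwhelm the area-weight suppression as $t\uparrow 1$.

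To defeat this blow-up, I would follow the paper's advertised strategy of introducing approximate ground states. The idea is that for $t<1$, the dominant configurations in $|GS_{p,q}\rangle$ are those in which the $p$ unbalanced down-steps are pushed as far left as possible and the $q$ unbalanced up-steps as far right as possible, since every alternative placement enlarges the enclosed area and is thus suppressed by at least a factor $t^{\Omega(1)}$ per displaced step. I would make this precise by constructing replacement vectors whose support is exactly such boundary-concentrated walks, and by controlling the $\ell^2$-error of this approximation using the area weight. The overlaps of these approximate ground states across the middle block then factorize in a clean way and reduce to ratios of normalization constants $N_{p,q}$.

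The hardest step, and the one I expect to dominate the technical work, is to obtain bounds on the ratios $N_{p,q}(\ell)/N_{p',q'}(\ell')$ that are uniform in $\ell$ and that quantitatively match, rather than merely beat, the combinatorial degeneracy. Here I would set up an induction on $(p,q)$ leveraging the recurrence for $N_{p,q}(\ell)$ that is implicit in the equivalence-class description and viewing it, as the paper suggests, as an inhomogeneous discrete diffusion; the point is to extract bounds of the form $N_{p+1,q}/N_{p,q}\leq C(t)$ and analogously in $q$, with constants independent of $\ell$. Once such uniform normalization estimates are in hand, the overlaps decay exponentially in $k$ at a rate controlled by $t<1$, and choosing $k=k(t)$ sufficiently large makes $\|G_{[k+1,3k]}G_{[1,2k]} - G_{[1,3k]}\|<\tfrac12$. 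Theorem \ref{thm:criterion} then yields the constant $c_1(t)>0$ of Theorem \ref{thm:ff-component-gap}.
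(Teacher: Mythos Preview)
Your overall strategy matches the paper's: verify the finite-size criterion (Theorem~\ref{thm:criterion}) by reformulating $\|G_{[k+1,3k]}G_{[1,2k]}-G_{[1,3k]}\|$ as a supremum of overlaps indexed by $(p,q)$ (Proposition~\ref{pr:reduction-of-criterion}), then control those overlaps via approximate ground states and normalization estimates, and finally let $k\to\infty$. So the architecture is right.

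However, two ingredients you describe are not sharp enough as stated, and the paper's proof genuinely hinges on the sharper versions. First, your single approximation scheme---``unbalanced steps pushed to the boundary''---is precisely the paper's \emph{high-imbalance} approximation (Lemma~\ref{lm:approx-high}), but it only applies when $p$ or $q$ exceeds roughly $k$. For the complementary \emph{low-imbalance} regime $p,q\lesssim (1+c)k$, the paper uses an entirely different approximation (Lemma~\ref{lm:low-imbalance-approximation-trunc}): not boundary concentration, but the observation that walks in $|GS_{p,q}\rangle$ overwhelmingly reach zero height in every macroscopic subsegment, so the ground state approximately Schmidt-decomposes across any cut with a bounded number of Schmidt terms. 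Without this second approximation the low-imbalance case does not close. Second, your target normalization bound $N_{p+1,q}/N_{p,q}\leq C(t)$ is too weak for the low-imbalance analysis. What the paper actually needs (Theorem~\ref{thm:mainec}) is the much finer statement that the ratios $\pi^k_{p,q}=N^k_{p,q}/N^k_{p,0}$ converge \emph{exponentially in $k$ to a $p$-independent limit} $\pi_{0,q}$. This $p$-independence is what forces the auxiliary quantities $x_r$ of Section~\ref{sec:normalizations} to coalesce (Lemma~\ref{lm:xr-closeness}), which in turn drives the matrix element $\langle A\phi_{p,q}|AG_{BC}|A\phi_{p,q}\rangle$ to zero. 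A mere uniform ratio bound would leave the $x_r$ unrelated to one another and the argument would stall at Section~\ref{sec:completing-low-imbalance-proof}.
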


This result will be proved by verifying a suitable finite-size criterion presented in Section \ref{sec:criterion}. 

\subsection{Boundary penalty of raised ground states}\label{ssect:boundarypenalty}

The fact that the Motzkin Hamiltonian $H(t)$ is equipped with special ``pinning'' boundary projectors makes the model highly non-translation-invariant at the boundary and not well suited for finite-size criteria. Indeed, notice that a finite-size criterion naturally concerns open boundary conditions, since it requires good understanding of subchains. (To our knowledge, the only exception to this general rule is the recent work \cite{warzel2021bulk}.)  

Therefore, we require a post-processing step to reduce the spectral gap problem for $H_{n}(t)$ to that of $H_{n}^0(t)$, i.e., to conclude Theorem \ref{thm:main} from Theorem \ref{thm:ff-component-gap}. The basic idea for this step is as follows: Given that $H_{n}^0(t)$ is gapped, it is relatively clear that the gap of $H_{n}(t)$ is mainly challenged by the possibility that members of the ground state family $\ket {GS_{p,q}}$ (described in Theorem \ref{thm:degenerate-pq-gs} above) could have small excitation energy with respect to the boundary projector, thereby closing the gap. We are able to exclude this possibility by Proposition \ref{pr:projector-min-exp} below.

\begin{definition} \label{def:lifted-states}
	Let $G^{n} \subset \oa \CF^3 \ca^{\otimes n}$ be the collection of raised ground states of $H_{n}^0(t)$, 
	\begin{equation}\label{eq:Gndefn}
	G^{n} = \text{span} \{\ket{GS_{p,q}}: p,q\geq 0,\, 1\leq p+q\}.
	\end{equation}
\end{definition}

\begin{proposition}[Boundary penalty of raised ground states] \label{pr:projector-min-exp}
	For every $t\in (0,1)$, there exists a constant $c_2(t)$ such that
	\begin{equation}
	\lambda_{min} \op \Pi_\text{bdry} |_{G^{n}} \cp \ge c_2(t) > 0,\qquad n\geq 1
	\end{equation}
\end{proposition}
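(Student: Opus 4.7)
The plan is to leverage the explicit structure of the raised ground state family from Theorem~\ref{thm:degenerate-pq-gs} to reduce the proposition to a family of explicit normalization ratios, and then invoke the ratio estimates developed in Appendix~\ref{sec:norm-ratios-appendix}.

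The first step is to show that $\Pi_\text{bdry}$ acts diagonally in the orthonormal basis $\{|GS_{p,q}\rangle\}_{p+q\geq 1}$ of $G^n$. The key combinatorial observation is that a string $w\in G_{p,q}^{(n)}$ begins with $d$ if and only if $p\geq 1$ and $w=d\cdot w'$ with $w'\in G_{p-1,q}^{(n-1)}$, because the leading $d$ automatically becomes an unmatched left-excess down-step. Together with the area identity $\mathcal{A}(w)=\alpha_p+\mathcal{A}(w')$, where $\alpha_p$ depends only on $p$ (it is the area contributed by the initial down-step), this yields
\[
|d\rangle\langle d|_1\,|GS_{p,q}^{(n)}\rangle \;=\; t^{\alpha_p}\sqrt{\frac{N_{p-1,q}^{(n-1)}}{N_{p,q}^{(n)}}}\;|d\rangle\otimes|GS_{p-1,q}^{(n-1)}\rangle
\]
for $p\geq 1$, and zero for $p=0$. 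An analogous identity (with $\alpha_q$) holds for $|u\rangle\langle u|_n$. Since the equivalence classes $G_{p,q}^{(n)}$ are pairwise disjoint, the right-hand sides for different $(p,q)$ are mutually orthogonal, so $\Pi_\text{bdry}$ is diagonal and
\[
\lambda_{\min}\!\left(\Pi_\text{bdry}|_{G^n}\right) \;=\; \min_{p+q\geq 1} A_n(p,q),\quad A_n(p,q) \;:=\; \ind[p\geq 1]\,\frac{t^{2\alpha_p}N_{p-1,q}^{(n-1)}}{N_{p,q}^{(n)}} + \ind[q\geq 1]\,\frac{t^{2\alpha_q}N_{p,q-1}^{(n-1)}}{N_{p,q}^{(n)}}.
\]

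The second step is a probabilistic reformulation: $A_n(p,q)$ is the total Gibbs weight, under $\nu_{p,q}(w)\propto t^{2\mathcal{A}(w)}$ on $G_{p,q}^{(n)}$, of walks that begin with $d$ or end with $u$. The guiding intuition is that, since $t<1$, the measure $\nu_{p,q}$ concentrates on walks close to the minimum-area representative $g_{p,q}$, in which the excess down-steps are flushed to the left and the excess up-steps to the right. Consequently $\nu_{p,q}[w_1=d]$ should be bounded below whenever $p\geq 1$, and symmetrically $\nu_{p,q}[w_n=u]$ when $q\geq 1$. To make this quantitative I would decompose $N_{p,q}^{(n)}$ by the first letter,
\[
N_{p,q}^{(n)} \;=\; t^{2\alpha_p}N_{p-1,q}^{(n-1)}\ind[p\geq 1] \;+\; t^{2\beta_p}N_{p,q}^{(n-1)} \;+\; (\text{contribution of walks starting with }u),
\]
and analogously by the last letter, then invoke the precise normalization-ratio bounds of Appendix~\ref{sec:norm-ratios-appendix} to conclude $A_n(p,q)\geq c_2(t)>0$ uniformly in $(p,q,n)$.

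The main obstacle is this uniformity. For large $p+q$ the exponentially small area weight $t^{2\alpha_p}$ must cancel precisely against the recursion-driven growth of $N_{p-1,q}^{(n-1)}/N_{p,q}^{(n)}$, and even for fixed $(p,q)$ in the thermodynamic limit the sharp $n$-asymptotics of $N_{p,q}^{(n)}$ are needed. The especially delicate piece is the "starting with $u$" contribution to the decomposition of $N_{p,q}^{(n)}$, which can carry its own complicated $p$-dependence; controlling it requires the discrete-diffusion-style induction developed by the authors in Appendix~\ref{sec:norm-ratios-appendix}, which appears precisely designed to furnish the needed uniform control.
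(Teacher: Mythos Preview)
Your diagonalisation of $\Pi_{\mathrm{bdry}}$ on $G^n$ and the resulting reduction to the ratios
\[
A_n(p,q)=\ind[p\geq 1]\,\frac{t^{2\alpha_p}N^{(n-1)}_{p-1,q}}{N^{(n)}_{p,q}}+\ind[q\geq 1]\,\frac{t^{2\alpha_q}N^{(n-1)}_{p,q-1}}{N^{(n)}_{p,q}}
\]
are correct. However, the paper does not attack all $(p,q)$ simultaneously: it first invokes a spectral reduction from \cite{bravyi2012criticality,movassagh2016gap} (the unbalanced steps behave as interchangeable particles under local moves, and one may drop all but the leftmost one together with $|u\rangle\langle u|_n$, which only lowers the energy) to reduce to the single case $(p,q)=(1,0)$. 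There the bound is elementary: decomposing $N^{n}_{1,0}=N^{n-1}_{0,0}+\tilde N$ according to whether the sole unbalanced $d$ sits at position $1$ or at position $j\geq 2$, and noting that the latter incurs an extra area of at least $j$, gives $\tilde N\leq N^{n-1}_{0,0}\cdot t/(1-t)$ and hence $\lambda_{\min}\geq 1-t$. No recourse to Appendix~\ref{sec:norm-ratios-appendix} is needed.

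Your direct approach can also be completed, but you substantially overestimate its difficulty. The one-step recursion \eqref{eq:Nrecursion} (applied in $p$ via the symmetry $N^k_{p,q}=N^k_{q,p}$) gives, for $p\geq 1$,
\[
\frac{N^{(n)}_{p,q}}{t^{2p-1}N^{(n-1)}_{p-1,q}}=1+t\,\frac{N^{(n-1)}_{p,q}}{N^{(n-1)}_{p-1,q}}+t^{2}\,\frac{N^{(n-1)}_{p+1,q}}{N^{(n-1)}_{p-1,q}}\,.
\]
The ``starting with $u$'' contribution you flag as especially delicate is exactly the last term, and it is controlled immediately by the elementary decay estimate of Lemma~\ref{lm:qdecay} (via symmetry, $N^{k}_{p+1,q}\leq C_0 t^{2p}N^{k}_{p,q}$), yielding the uniform bound
\[
\frac{t^{2p-1}N^{(n-1)}_{p-1,q}}{N^{(n)}_{p,q}}\geq \frac{1}{1+C_0 t^{2p-1}+C_0^2 t^{4p}}\geq \frac{1}{1+C_0 t+C_0^2 t^4}>0\,.
\]
So only the short induction behind Lemma~\ref{lm:qdecay} is needed, not the discrete-diffusion machinery of Theorem~\ref{thm:mainec}. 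The paper's route is shorter still because the reduction to $(1,0)$ lets it bypass even that lemma.
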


We postpone the proof of Proposition \ref{pr:projector-min-exp} to Subsection \ref{ssec:account-for-boundary}.

\subsection{Derivation of spectral gap assuming spectral gap with open b.c.\ (Theorem \ref{thm:ff-component-gap})}
\label{ssect:proofmain}
Assuming Proposition \ref{pr:projector-min-exp} holds, we can conclude the spectral gap of the Motzkin Hamiltonian with pinning boundary conditions from Theorem \ref{thm:ff-component-gap} via an application of a standard argument known as Kitaev's projection lemma \cite{kempe2006complexity}.

\begin{proof}[Proof of Theorem \ref{thm:main} assuming Theorem \ref{thm:ff-component-gap}]
    Throughout the proof, we always restrict to the subspace $\ket{GS_{0, 0}}^{\perp}\subset (\CF^3)^{\otimes n}$, the orthogonal complement of the ground state $\ket{GS_{0, 0}}$. We suppress this restriction from the notation, i.e., we identify $H_n \equiv H_n \vert_{\ket{GS_{0, 0}}^\perp}$, etc. The claim that $H_n$ has a uniform spectral gap when considered on the whole space now translates to the bound
	\begin{equation}
	\lambda_{min} \op H_{n} \cp \ge c(t) > 0
	\end{equation}
	where $c(t) > 0$ should be independent of the system size $n$. Let $ 0 < \epsilon \leq 1$. We define the operator
	\begin{equation}
	H_{n}^{\epsilon} \equiv H_{n}^0 + \epsilon \Pi_\text{bdry}
	\end{equation}
	Since $H_{n} \ge H_{n}^\epsilon$ and both operators have identical ground space, it suffices to prove
	\begin{equation} \label{eq:rephrased-claim}
	\lambda_{min}\op H_n^{\epsilon} \cp  \ge c(t) > 0
	\end{equation}
	for some $\epsilon > 0$. 
	
	To this end, we apply the projection lemma \cite{kempe2006complexity} to the operator $H_n^\epsilon = H_n^0 + \epsilon \Pi_\text{bdry}$ and the subspace $G^n = \ker H^0_n\setminus \mathrm{span}\{\ket{GS_{0,0}}\}$ defined in \ref{def:lifted-states}. Thanks to Theorem \ref{thm:ff-component-gap}, we know that 
	\begin{equation}
	H^{0}_n\vert_{(G^{n})^\perp} \geq c_1(t).
	\end{equation}
	Since $\|\Pi_{bdry}\|=1$, the projection lemma says that
	$$
	\lambda_{min}(H_n^{\epsilon})\geq \epsilon \; \lambda_{min}(\Pi_{bdry}\vert_{G_{n}}) - \frac{\epsilon^2}{c_1(t) - \epsilon}.
	$$
	By Proposition \ref{pr:projector-min-exp}, we have $\lambda_{min}(\Pi_{bdry}\vert_{G^n})\geq c_2(t)$ and so
	$$
	\lambda_{min}(H^{\epsilon})\geq \epsilon\; c_2(t)-\frac{\epsilon^2}{c_1(t)-\epsilon}.
	$$
	Choosing $\epsilon>0$ sufficiently small yields the claim \eqref{eq:rephrased-claim} and hence Theorem \ref{thm:main}.
\end{proof}

\section{Characterization of open-chain ground states}\label{sect:oc-gs}

This section is structured as follows: first, in Subsection \ref{ssec:ground-state-characterization} we classify the ground states of the Hamiltonian $H_n^0$ and prove Theorem \ref{thm:degenerate-pq-gs}. Afterwards, we use the characterization to infer the boundary penalty thereby proving Proposition \ref{pr:projector-min-exp}. 

This reduces our problem to establishing a gap for $H_n^0$, i.e., to prove Theorem \ref{thm:ff-component-gap} which we shall address via the finite-size criterion presented in the next section. 

\subsection{Ground states of $H_n^0(t)$}
\label{sec:ground-states-ramis}
\label{ssec:ground-state-characterization}

The following arguments generalize the considerations used for proving Theorem 3 in \cite{zhang2017novel}. Accordingly, we will sketch them only briefly and invite the reader to consider \cite{zhang2017novel} for further details.

Recall that we can identify each state by a linear combination of strings $\mathfrak{s}\in\{0,d,u\}^{n}$. Recall also that we say two strings $\mathfrak{s}$ and $\mathfrak{t}$ are equivalent, $\mathfrak{s}\sim\mathfrak{t}$, if they are related by a sequence of local moves \eqref{eq:local-moves} and that $G_{p,q}$ is the equivalence class of the special walk $g_{p,q}$ shown in Figure \ref{fig:equivalence-class}.

\begin{lemma} \label{lm:equiv-classes}
Any string $\mathfrak{s}\in \{0,u,d\}^{n}$ belongs to a unique $ G_{p,q}$.
Any Motzkin walk belongs to $G_{0,0}$.
\end{lemma}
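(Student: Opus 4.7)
The plan is in two steps: (i) exhibit invariants $(p,q)$ of a string under the local moves of \eqref{eq:local-moves} that distinguish the classes $G_{p,q}$, and (ii) show that every string can be reduced to the canonical representative $g_{p,q}$ by a sequence of local moves.

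For (i), I would assign to each string $\mathfrak{s}\in\{0,u,d\}^n$ a \emph{bracket type}: interpret every $u$ as an open bracket, every $d$ as a close bracket, ignore the $0$'s, and perform the standard greedy bracket matching. The unmatched letters, read from left to right, necessarily form a block of $d$'s followed by a block of $u$'s; let $p(\mathfrak{s})$ and $q(\mathfrak{s})$ denote their counts. A direct check on each of the three moves of \eqref{eq:local-moves} then shows that $(p,q)$ is invariant: the commutations $(0u)\leftrightarrow t(u0)$ and $t(0d)\leftrightarrow (d0)$ merely swap a $0$ with a neighboring non-zero letter and so cannot alter the matching; and $(00)\leftrightarrow t(ud)$ inserts or deletes a pair that is adjacent and hence matched under the greedy rule, leaving the rest of the matching untouched. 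Since $g_{p,q}$ visibly has bracket type $(p,q)$, this distinguishes the classes $G_{p,q}$ and gives the uniqueness half of the lemma.

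For (ii), I would induct on the length $n$, with a case analysis on the leading letter. When the leading letter is $0$ or $d$, apply the inductive hypothesis to the length-$(n-1)$ suffix and finish by pulling the leading letter through the leading block of $d$'s of the suffix via $t(0d)\leftrightarrow (d0)$. The interesting case is when the leading letter is $u$. If this $u$ is unmatched, it is one of the $q$ trailing unmatched $u$'s of $g_{p,q}$; I push it all the way to the right using $(0u)\leftrightarrow t(u0)$ and apply induction to what remains. If the $u$ is matched, I locate its partner $d$, apply the inductive hypothesis to the strict interior between them to flatten it to a string of $0$'s, commute the $u$ and $d$ together through the intervening $0$'s, and annihilate the pair via $t(ud)\leftrightarrow (00)$; the remaining shorter string then falls under the induction hypothesis.

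The second statement of the lemma is immediate from (i): a Motzkin walk stays non-negative and returns to height $0$, so every $d$ is matched by an earlier $u$ and every $u$ by a later $d$, i.e., $(p,q)=(0,0)$. The main subtlety in the plan is the matched-outermost-$u$ subcase of the reduction: one must verify that flattening the nested interior does not disturb the outer matching. I would handle this by a secondary induction on the maximum height of the encoded walk (equivalently, on the nesting depth), using the observation that the annihilation move $(ud)\leftrightarrow (00)$ strictly decreases the number of non-zero letters while the commutations preserve it, which guarantees termination of the reduction procedure.
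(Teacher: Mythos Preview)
Your proposal is correct and follows essentially the same two-step strategy as the paper: exhibit $(p,q)$ as an invariant of the local moves (you do this cleanly via greedy bracket matching; the paper argues it more tersely and defers to \cite{zhang2017novel}) and then reduce every string to the canonical representative $g_{p,q}$. One small imprecision to fix: in the unmatched-leading-$u$ subcase you say ``push it all the way to the right using $(0u)\leftrightarrow t(u0)$,'' but that move only swaps $u$ with an adjacent $0$, so it cannot pass through intervening $u$'s or $d$'s; the clean remedy is to apply the inductive hypothesis to the length-$(n-1)$ suffix first (exactly as you already do in the $0$/$d$ cases), observe that an unmatched leading $u$ forces $p=0$ so the reduced suffix is $0^{n-q}u^{q-1}$, and then commute the leading $u$ through the block of zeros.
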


\begin{proof}
	The claim can be rephrased by saying that each string $\mathfrak{s}\in\{0,u,d\}^{n}$ is equivalent to a unique $g_{p,q}$. The special case of a Motzkin path is equivalent to $g_{0,0}=\ket{0\ldots 0}$ \cite{zhang2017novel}.

	 Consider a fixed string $\mathfrak{s}$ which has a total of $p_1$ up-steps and $q_1$ down-steps. These come in two categories: A subset of the up-steps occurs to the left of a down-step; we call the number of such partnered up-steps $\tilde p\leq p_1$. By applying local moves, we can merge all these partnered up- and down-steps, yielding an equivalent state with $q=q_1-\tilde p$ down-steps which are to the left of the $p=p_1-\tilde p$ up-steps. All other steps in the walk are $0$. Since the remaining down-steps have no up-steps to their left, we can move them to the left edge by successively applying the local move of swapping them with a $0$ step only. Similarly, we can move the remaining up-steps to the right edge through local moves. This procedure terminates in a scalar multiple of the walk $g_{p,q}$ which is determined by the various factors of $t$ and $t^{-1}$ obtained by applying the local moves. 
	 
	 It was shown rigorously in \cite[Proof of Theorem 3]{zhang2017novel} that the net effect of local moves is to transform the area weight consistently, i.e., if $\mathfrak s_1$ and $\mathfrak s_2$ are connected by local moves, then $t^{{\mathcal A}(\mathfrak{s_{1})}}\mathfrak{s_{1}}\leftrightarrow t^{{\mathcal A}(\mathfrak{s_{2}})}\mathfrak{s_{2}}$. In the present situation, this implies
	 $$
	 t^{\mathcal A[\mathfrak s]} \mathfrak s \leftrightarrow  t^{\mathcal A[\mathfrak g_{p,q}]}  g_{p,q}
	 $$
	 or, in other words, $\mathfrak s\sim g_{p,q}$. This proves that the string $\mathfrak{s}$ belongs to the equivalence class $G_{p,q}$. Notice that the numbers $p$ and $q$ were uniquely defined by the initial state and they are invariant under local moves. Hence, the different equivalence classes are disjoint.
	
	Finally, if $\mathfrak{s}$ was a Motzkin walk, then by definition $\tilde p=p_1=q_1$ in the beginning, leading to $p=q=0$ at the end, and so $\mathfrak{s}\in G_{0,0}$.
	\end{proof}
	

We are now ready to give the characterization of ground states.

\begin{proof}[Proof of Theorem \ref{thm:degenerate-pq-gs}] 
The key observation due to  \cite[Proof of Theorem 3]{zhang2017novel} is that the local moves \eqref{eq:local-moves} characterize the kernels (zero-energy eigenspaces) of the projectors $\Pi_{j,j+1}$ from \eqref{eq:projectors} constituting the Hamiltonian $H_n^0$.  Fix a pair of $p,q\geq 0$ with $p+q\leq n$. By construction of $\ket{GS_{p,q}}$, it is a uniform superposition of elements of the equivalence class $G_{p,q}$. Here we use that the area weights are transformed consistently by local moves as noted in the proof of Lemma \ref{lm:equiv-classes} above. This implies that $\ket{GS_{p,q}}$ lies in the kernel of all local projectors and
$$
H_n^{0}(t) \ket{GS_{p,q}}=0\;;
$$
that is, every $\ket{GS_{p,q}}$ is a frustration-free ground state of $H_n^0(t)$.

Next, we show that these are all the ground states. By frustration-freeness, any ground state $\ket{\psi}$ must be annihilated by all local projectors in $\Pi(t)_{j,j+1}$.
	Suppose we pick a specific $j$ and a string $\mathfrak{s}_1$ such that $\langle \psi,\mathfrak s_1\rangle\neq 0$. Since $\ket \psi$ is annihilated
	by $\Pi_{j,j+1}(t)$, we must have
	$$
	t^{{\mathcal A}(\mathfrak{s_{1})}}\langle \psi,\mathfrak s_1\rangle=t^{{\mathcal A}(\mathfrak{s_{2})}}\langle \psi,\mathfrak s_2\rangle
	$$
	for any $\mathfrak s_1\sim \mathfrak s_2$. Iterating this, it follows that $\psi$ has the same overlap with any area-weighted member of the equivalence class of $\mathfrak{s}_1$. By Lemma \ref{lm:equiv-classes}, this implies that $\psi$ belongs to the span of the $\ket{GS_{p,q}}$'s.
	
	It remains to prove the orthogonality of different $\ket{GS_{p,q}}$'s. For this, note that any nonzero contribution to the overlap between two states must come from them containing the same string/walk, as individual walks/strings form an orthonormal set. By the disjointness part of Lemma \ref{lm:equiv-classes}, any individual string/walk belongs to only one equivalence class $G_{p,q}$. Hence, it only contributes to a unique $\ket{GS_{p,q}}$. This establishes orthogonality and completes the proof of Theorem \ref{thm:degenerate-pq-gs}.
\end{proof}

\subsection{Boundary penalty of raised ground states}\label{ssec:account-for-boundary}
In this subsection, we prove Proposition \ref{pr:projector-min-exp} by calculating the expectation of the boundary projector $\Pi_{bdry}$ in states from the raised subspace $G^{n}$ from Definition \eqref{eq:Gndefn}.




\begin{proof}[Proof of Proposition \ref{pr:projector-min-exp}]
We begin with some standard reductions.
As stated in the proof of Theorem~\ref{thm:degenerate-pq-gs} an unbalanced space with $p,q$ extra steps is spanned by the set of strings in $G_{p,q}$, where there are $p$ unmatched step-ups and $q$ unmatched step-downs. Under the local moves any extra step-up or step-down can only exchange position with flat steps and otherwise does not participate in the local moves. Hence, we can treat the unbalanced steps on equal footing which implies that the spectrum of $H_n$ restricted to the subspace with $p,q$ extra steps depends only on $p+q$.  We drop the projector $|u\rangle\langle u|_n$, which only decreases the energy. A standard argument from \cite[Supplementary Material]{bravyi2012criticality} and \cite[see Section 3.3.3]{movassagh2016gap} allows to reduce to the case of a single unbalanced up-step, i.e.,  $p=1$ and $q=0$. Since the argument is standard and contained in \cite{bravyi2012criticality,movassagh2016gap}, we do not repeat it here and only summarize it at a high level: One labels the first unbalanced up-step by a parameter $x$ and the remaining unbalanced steps by $y$ and one drops all $y$ projectors, which only decreases the energy. This reduces to the analysis of the single $x$ parameter, which is exactly the case of a single unbalanced up-step, i.e., $p=1$ and $q=0$.

It thus remains to consider  the case $p = 1$ and $q = 0$ in which case we have to bound $\langle GS_{1,0}|\Pi_{bdry}|GS_{1,0}\rangle=|\langle GS_{1,0}|d\rangle_{1}|^{2}$.
States in $GS_{1,0}$ can be divided into $n$
subclasses. The first is $|d\rangle\otimes|{\mathcal M}_{n-1}^{t}\rangle$,
which contributes to the overlap. The remaining $n-1$ subclasses
correspond to embedding a down step at the position $2\le j \le n$
into a Motzkin walk of length $n-1$, which leads to an area weight of at most $t^{j}$. (To see that the area weight can be even smaller, take the length $n-1$
Motzkin walk, $uu...ud...dd$ and embed a $d$ anywhere in the second
half of the walk.) These considerations show that the normalization constant satisfies 
\begin{equation}
N^n_{1,0}=N^{n-1}_{0,0}+\tilde N
\end{equation}
with the bound
\begin{equation}
\tilde N \le   N^{n-1}_{0,0} \op \sum_{j=1}^{n-1}t^{j} \cp \le N^{n-1}_{0,0}\frac{t}{1-t}.
\end{equation}
Hence, we have
\begin{equation}
\lambda_{min} \op \Pi_\text{bdry} |_{G^{n}} \cp=|\langle G_{1,0}|d\rangle_{1}|^{2}=\frac{N^{n-1}_{0,0}}{N^n_{1,0}}\geq 1-t,\label{eq:GAP_Imbalanced}
\end{equation}
which is a constant independent of $n$ for any fixed $t<1$. 
\end{proof}

\section{The finite-size criterion}\label{sec:criterion}

Given the considerations above, our main task is reduced to establishing a spectral gap for any system size for the Motzkin Hamiltonian $H_n^0(t)$ with open boundary conditions.

We will achieve this by verifying a finite-size criterion for the existence of a spectral gap based on the Fannes-Nachtergaele-Werner duality lemma for pairs of projections \cite{fannes1992finitely}. The criterion works for general frustration-free quantum spin chains and is formulated in Theorem \ref{thm:criterion} below. Afterwards, we reformulate the finite-size criterion for the Motzkin spin chain by using special properties of the open-chain ground states.\\


\subsection{The finite-size criterion}
We formulate the finite-size criterion for general frustration-free quantum spin chains for the benefit of readers interested in using it elsewhere.

\begin{assumption}\label{as:criterion-2}
	Consider a one-dimensional spin chain on $n$ sites with open boundary conditions described by the following nearest-neighbor, translation-invariant Hamiltonian
	$$H_n = \sum_{i = 1}^{n - 1} h_{i,i+1}$$
	where $h_{i,i+1}$ is a positive semi-definite operator that only acts on the Hilbert spaces associated with sites $i$ and $i+1$. We assume that $H_n$ is frustration-free and we write $\gam_n$ for its spectral gap.
\end{assumption}

\begin{definition} \label{def:ground-space-projectors}
	Let $G_{[a,b]}$ be the projector onto the ground space of the part of the Hamiltonian acting between sites $a$ and $b$, namely
	\begin{equation}
	\range G_{[a,b]} = \ker \op \sum_{i = a}^{b-1} h_{i, i+1} \cp
	\end{equation}
\end{definition}

\begin{theorem}[Finite-size criterion] \label{thm:criterion}
	Given Assumption \ref{as:criterion-2}, if there exists a fixed positive integer $k$ such that
	\begin{equation}\label{eq:criterion-condition}
	\| G_{[k + 1, 3k]} G_{[1, 2k]} - G_{[1, 3k]}  \| < {1 \over 2}
	\end{equation}		
	then $H_n$ has a spectral gap in the thermodynamic limit, i.e., there exists a constant $c>0$ such that
	\begin{equation}
	\gamma_n \geq c> 0, \quad \forall n\geq 2.
	\end{equation}	
\end{theorem}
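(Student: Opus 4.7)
The plan is to run a Fannes--Nachtergaele--Werner (FNW) bootstrap. The key observation is that frustration-freeness (Assumption~\ref{as:criterion-2}) implies the projector identity $G_{[a,b]} = G_{[a,c]} \wedge G_{[c',b]}$ whenever $c' \leq c$ (so that $[a,c]$ and $[c',b]$ jointly cover every bond of $[a,b]$), where $\wedge$ denotes the projection onto the intersection of ranges. This is what makes the FNW duality lemma for a pair of projections applicable: for projections $P, Q$ with $R = P \wedge Q$, one has
\begin{equation}
(I - P) + (I - Q) \geq (1 - \|PQ - R\|)(I - R),
\end{equation}
so whenever $\|PQ - R\| < 1$ and both $H_P, H_Q$ have positive gaps on their respective subsystems, the sum $H_P + H_Q$ inherits a uniform spectral gap above the common kernel of $P$ and $Q$.

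\textbf{Base step at length $3k$.} I would first apply the above with $P = G_{[1, 2k]}$, $Q = G_{[k+1, 3k]}$, $R = G_{[1, 3k]}$, which form an intersection pair by frustration-freeness. The hypothesis \eqref{eq:criterion-condition} gives $\|PQ - R\| < 1/2$, hence $(I - G_{[1, 2k]}) + (I - G_{[k+1, 3k]}) \geq \alpha (I - G_{[1, 3k]})$ for some $\alpha > 0$. Combining with the trivial operator bounds $H_{[1, 3k]} \geq H_{[1, 2k]} \geq \gamma_{2k}(I - G_{[1, 2k]})$ and $H_{[1, 3k]} \geq H_{[k+1, 3k]} \geq \gamma_{2k}(I - G_{[k+1, 3k]})$---using that $H_{[1,3k]}$ is a sum of non-negative terms containing each sub-Hamiltonian, and that $\gamma_{2k} > 0$ because we are on a finite-dimensional space---yields a uniform gap $\gamma_{3k} \geq c_1 > 0$.

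\textbf{Induction to arbitrary $n$.} To extend to all $n$, I would induct in steps of $k$. Assuming the inductive bound $H_{[1, n - k]} \geq c (I - G_{[1, n - k]})$, I decompose $[1, n] = [1, n - k] \cup [n - 2k + 1, n]$ with overlap of $k$ sites and apply FNW again to the pair $G_{[1, n - k]}, G_{[n - 2k + 1, n]}$, whose intersection is $G_{[1, n]}$. The quantity that must be controlled is
\begin{equation}
\epsilon_n := \|G_{[n - 2k + 1, n]} \, G_{[1, n - k]} - G_{[1, n]}\|.
\end{equation}
To bound $\epsilon_n$, I would use the nesting $G_{[1, n - k]} \leq G_{[1, 2k]}$ and (by translation invariance) $G_{[n - 2k + 1, n]}$ is the translate of $G_{[1, 2k]}$ to the right edge of $[1,n]$, interpose the projector $G_{[n - 2k + 1, n - k]}$, and thereby reduce $\epsilon_n$ to an expression that is controlled by the hypothesis at length $3k$. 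Once $\epsilon_n < 1/2$ is shown uniformly in $n$, combining FNW with the inductive gap on $[1, n - k]$ and the fixed gap on the $2k$-site block $[n - 2k + 1, n]$ yields $H_{[1, n]} \geq c'(I - G_{[1, n]})$ with $c'$ independent of $n$, closing the induction.

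\textbf{Main obstacle.} The principal difficulty is the uniform bound $\epsilon_n < 1/2$ for every $n \geq 3k$, since the hypothesis supplies this only at length $3k$. Propagating it to all scales requires a careful use of the nested-projector inequalities $G_Y \leq G_X$ for $X \subseteq Y$ together with translation invariance of $H_n$ to reduce overlaps at arbitrary length to local $3k$-site overlaps. The exact threshold $1/2$ in the hypothesis is presumably what ensures the FNW iteration stays strictly sub-critical at each step of the induction, and achieving this closure is the crux of the proof.
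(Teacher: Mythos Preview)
Your inductive bootstrap has a genuine gap: the constant degrades at every step. Even granting $\epsilon_n<\tfrac12$ for all $n$ (which you do not actually establish---see below), the combination you describe gives only
\[
c_n \;\geq\; \tfrac12\,\min(c_{n-k},\gamma_{2k})\,(1-\epsilon_n),
\]
because $2H_{[1,n]}\geq H_{[1,n-k]}+H_{[n-2k+1,n]}$ is the best available operator inequality (the overlap bonds are double-counted) and the FNW two-projection bound then supplies the factor $(1-\epsilon_n)$. Since already $c_{3k}=\tfrac12\gamma_{2k}(1-z_k)<\gamma_{2k}$, the minimum is $c_{n-k}$ at every subsequent step, and iterating gives $c_{(m+2)k}\geq 2^{-m}\gamma_{2k}(1-z_k)^m\to 0$. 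No uniform lower bound follows.

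The paper avoids this by treating all $2k$-blocks simultaneously rather than inductively. One covers $[1,n]$ by the shifted blocks $[jk+1,(j+2)k]$, bounds $2H_n\geq \gamma_{2k}\,S^{(k)}$ with $S^{(k)}=\sum_j P^{(k)}_j$ and $P^{(k)}_j=I-G_{[jk+1,(j+2)k]}$, and then \emph{squares}. The key structural fact is that non-adjacent $P^{(k)}_j$ commute, so their cross terms are non-negative and only nearest-neighbor anticommutators survive. FNW Lemma~6.3 bounds each of those by $-z_k\bigl(P^{(k)}_j+P^{(k)}_{j+1}\bigr)$, and summing yields $(S^{(k)})^2\geq(1-2z_k)S^{(k)}$, hence $\gamma(H_n)\geq\tfrac{\gamma_{2k}}{2}(1-2z_k)$ in a single estimate with no iteration. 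The threshold $\tfrac12$ in the hypothesis is exactly what makes $1-2z_k>0$; it has nothing to do with keeping an induction sub-critical.

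A secondary issue: your proposed reduction of $\epsilon_n$ to $z_k$ via the nesting $G_{[1,n-k]}\leq G_{[n-3k+1,n-k]}$ is not justified. The monotonicity $\|FE-F\wedge E\|\leq\|FE'-F\wedge E'\|$ for projections $E\leq E'$ is false in general (take $E'=I$, which forces the right side to vanish), and the extra structure you would need to salvage it---for instance that $G_{[n-3k+1,n]}$ and $G_{[1,n-k]}$ commute---does not follow from frustration-freeness. The paper's parallel decomposition sidesteps this question entirely: by translation invariance every adjacent pair of $2k$-blocks produces the \emph{same} $3k$-window, so only the single quantity $z_k$ ever appears.
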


This result is inspired by the successful use of a similar finite-size criterion for two-dimensional AKLT-type systems \cite{abdul2020class,pomata2019aklt,pomata2020demonstrating}.

\begin{remark}[Comparison to the martingale method]\label{rmk:MM}
	We compare this criterion with the well-established martingale method \cite{nachtergaele1996spectral} which also requires an upper bound on an expression of the form $\| G_{X} G_{Y} - G_{X \cup Y} \|$. The martingale method is different in two ways: First, the overall sizes of $X$ and $Y$ become arbitrarily large, so it is not a finite-size criterion. Second, $X$ and $Y$ intersect only at a small number of sites, usually the range of the interaction terms within the Hamiltonian. We have found that making the intersection $X \cap Y$ significantly larger is helpful because, roughly speaking, a large overlap between the subspaces where $G_X$ and $G_Y$ act will ensure that the product $G_X G_Y$ is very close to $G_{X\cup Y}$. This will give the desired bound. On a related note, it seems that for the Motzkin Hamiltonian, the martingale method with finite overlap does not seem to apply for arbitrary $t<1$. For example, numerics show that for $t > 0.3$, the relevant condition for the martingale method fails if consecutive subsystems differ in size by 1 site. The reason is that the walks which compose the ground states can fluctuate with less and less penalty per local move as $t \to 1$, and only a large intersection between $X$ and $Y$ will ensure that such fluctuations are penalized enough such that the necessary bound is achieved. 
\end{remark}

We present an alternative to condition \eqref{eq:criterion-condition} in Theorem \ref{thm:criterion}, that will be more useful in the following sections. It involves the projections

	\begin{equation}
	E_k \equiv G_{[1,2k]} - G_{[1,3k]}
	\end{equation}

Using them, we can rephrase condition \eqref{eq:criterion-condition} as follows. 
\begin{lemma}
    \label{lm:alternative-criterion}
	Condition \eqref{eq:criterion-condition} is equivalent to
	\begin{equation}\label{eq:criterion-condition-alt}
	\| G_{[k + 1, 3k]} E_k \| < {1 \over 2}
	\end{equation}
\end{lemma}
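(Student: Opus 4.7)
The plan is to show that the operators appearing on the left-hand sides of \eqref{eq:criterion-condition} and \eqref{eq:criterion-condition-alt} are literally equal as operators, from which equivalence of the two norm inequalities is automatic. The only ingredient needed is frustration-freeness (Assumption \ref{as:criterion-2}).

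The key observation is a containment of ground spaces. Because the Hamiltonian restricted to $[1,3k]$ is obtained from the Hamiltonian restricted to $[k+1,3k]$ by adding the nonnegative terms $h_{i,i+1}$ for $i=1,\dots,k$, frustration-freeness implies that any common zero-energy vector of the longer interval is also a common zero-energy vector of the shorter interval. Thus
\begin{equation*}
\range G_{[1,3k]} \subseteq \range G_{[k+1,3k]},\qquad \range G_{[1,3k]} \subseteq \range G_{[1,2k]},
\end{equation*}
and for orthogonal projections these containments translate into the absorption identities
\begin{equation*}
G_{[k+1,3k]}\, G_{[1,3k]} = G_{[1,3k]},\qquad G_{[1,2k]}\, G_{[1,3k]} = G_{[1,3k]}.
\end{equation*}

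Substituting the definition $E_k = G_{[1,2k]} - G_{[1,3k]}$ and distributing then yields
\begin{equation*}
G_{[k+1,3k]}\, E_k \;=\; G_{[k+1,3k]}\, G_{[1,2k]} - G_{[k+1,3k]}\, G_{[1,3k]} \;=\; G_{[k+1,3k]}\, G_{[1,2k]} - G_{[1,3k]},
\end{equation*}
so the two operators whose norms appear in \eqref{eq:criterion-condition} and \eqref{eq:criterion-condition-alt} coincide. Taking norms gives the claimed equivalence. There is no real obstacle here; the lemma is a bookkeeping identity and the only point requiring attention is making sure the correct intervals are used when invoking frustration-freeness to get the two absorption identities.
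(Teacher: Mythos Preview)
Your proof is correct and essentially identical to the paper's: both use frustration-freeness to obtain the absorption identity $G_{[k+1,3k]}G_{[1,3k]}=G_{[1,3k]}$, then distribute $G_{[k+1,3k]}$ over $E_k=G_{[1,2k]}-G_{[1,3k]}$ to show the two operators coincide. (Your second absorption identity, $G_{[1,2k]}G_{[1,3k]}=G_{[1,3k]}$, is true but not actually needed for this computation.)
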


\begin{proof}
    Frustration-freeness implies that if $A \subseteq B$, then the projectors obey $G_A G_B = G_B G_A = G_B$. Therefore $G_{[1,3k]} = G_{[k + 1,3k]} G_{[1,3k]}$ and we conclude that

	\begin{equation}
	G_{[k+1, 3k]} G_{[1,2k]} - G_{[1,3k]} = G_{[k+1, 3k]} \oa G_{[1,2k]} - G_{[1,3k]} \ca = G_{[k+1, 3k]} E_k
	\end{equation}
	which proves the lemma.
\end{proof}

\begin{proof}[Proof of Theorem \ref{thm:criterion}] For simplicity write $h_i$ for the quantity $h_{i, i + 1}$.
Given a positive integer $k$, define the sums
\begin{equation}
H_j^{(k)} = \sum_{i = jk + 1}^{(j+2)k - 1} h_i
\end{equation}
so that the $j$th sum contains projectors that involve $2k$ consecutive sites, starting from $jk+ 1$. Note that by translation invariance, all the $H_j^{(k)}$ have identical spectra.

For simplicity of notation, we assume that $n$ is divisible by $k$, say $n = k \cdot a$. (If this is not the case, then the standard trick \cite{lemm2019spectral} is to  introduce a few artificial extra sites at the right edge so that we reach an exact multiple of $k$ and add zero interactions to the Hamiltonian across the newly created edges, which we also denote by $h_i$ in the argument below.)

We investigate the following sum (see Figure \ref{fig:criterion-figure} for a pictorial representation):
\begin{align}
H^{(k)} &= \sum_{j = 0}^{a - 2} H^{(k)}_j
= 2 \cdot H_n - \op  \sum_{i = 1}^{k} h_i + \sum_{i = n - k + 1}^{n - 1} h_i  + \sum_{j = 2}^{a-1} h_{j}\cp 
\end{align}

\begin{figure}[t]
	
	\centering
	
	\scalebox{.55}{\includegraphics{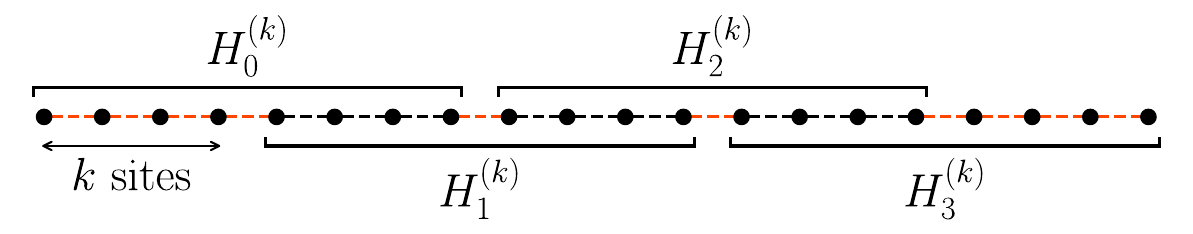}}	
	\caption{Illustration of where each term $H^{(k)}_j$ acts, for $N = 20$ and $k = 4$. The blue dots represent sites, and the dashed lines connecting them are bonds (interactions) $h_i$. The blue bonds are counted twice in the sum $H^{(k)}$, while the orange ones are only included once. This split is used to find the inequality \eqref{eq:criterion-bound2}.}
	\label{fig:criterion-figure}
	
\end{figure}

The residual terms in the parentheses on the last line are non-negative operators and so we have
\begin{equation} \label{eq:criterion-bound1}
2 H_n \ge H^{(k)}.
\end{equation}
At any finite $k$, an individual term $H^{(k)}_j$ has a finite number of eigenvalues, and therefore a finite spectral gap, which we will call $\gamma_k$. Since all $H^{(k)}_j$ have the same spectrum regardless of $j$, this value $\gamma_k$ will not depend on $j$. Moreover, by frustration-freeness $H^{(k)}_j$ has a nontrivial kernel, and its lowest eigenvalue is zero. Therefore we can write 
\begin{equation}
H^{(k)}_j \ge \gamma_k P^{(k)}_j
\end{equation}
where $P^{(k)}_j$ is the projector onto the range of $H^{(k)}_j$. Summing over $j$, we obtain
\begin{equation}
H^{(k)} = \sum_{j = 0}^{a - 2} H^{(k)}_j \ge \gamma_k S^{(k)},\qquad S^{(k)} =\sum_{j = 0}^{a - 2} P^{(k)}_j.
\end{equation}
Combining this with the previous result \eqref{eq:criterion-bound1} we get
\begin{equation}\label{eq:criterion-bound2}
H_n \ge {1 \over 2} H^{(k)} \ge {\gamma_k \over 2} S^{(k)}
\end{equation} 
From frustration-freeness we know that the kernel of a sum of projectors $h_i$ consists precisely of the states that are annihilated simultaneously by all terms. Since $H_n$ and $H^{(k)}$ consist of the same projectors $h_i$ (they only have different prefactors), we see that their kernels must be identical. Moreover, $H^{(k)}$ and $S^{(k)}$ have identical kernels by definition of the terms $P_j^{(k)}$. Together with the inequality \eqref{eq:criterion-bound2}, we find an ordering between gaps:
\begin{equation}
\gamma(H_n) \ge {1 \over 2} \gamma(H^{(k)}) \ge {\gamma_k \over 2}  \gamma(S^{(k)})
\end{equation}
and it suffices to bound the RHS term from below. To do so, we square $S^{(k)}$,
\begin{align}
\ob S^{(k)} \cb^2 
= \sum_{j = 0}^{a - 2} \ob P^{(k)}_j \cb^2 +  \sum_{j = 1}^{a - 2} \sum_{l = 0}^{j-1} \{P^{(k)}_j, P^{(k)}_l \}
\end{align}
where $\{A,B\}$ denotes the anticommutator of operators $A$ and $B$. Note that since each $P^{(k)}_j$ is a projector, we have $\ob P^{(k)}_j \cb^2 = P^{(k)}_j$ and $P^{(k)}_l P^{(k)}_j \ge 0$ if $|j-l|\geq 2$. Hence,
\begin{align}
\ob S^{(k)} \cb^2 
 &\ge S^{(k)} + \sum_{j = 0}^{a - 3} \{P^{(k)}_j, P^{(k)}_{j+1} \} 
\end{align}
For the remaining anticommutators we use \cite[Lemma 6.3]{fannes1992finitely}, which gives
\begin{equation}
\{P^{(k)}_j, P^{(k)}_{j+1} \} \ge -(P^{(k)}_j + P^{(k)}_{j+1}) \cdot  \| \oa P^{(k)}_{j+1} \ca^\perp \oa P^{(k)}_j \ca^\perp  - \oa P^{(k)}_j \ca^\perp \wedge \oa P^{(k)}_{j+1}\ca^\perp  \|
\end{equation}
By translation invariance, the operator norm does not depend on $j$, so we can focus on 
\begin{equation}
z_k \equiv \| \oa P^{(k)}_{1} \ca^\perp \oa P^{(k)}_0 \ca^\perp  - \oa P^{(k)}_0 \ca^\perp \wedge \oa P^{(k)}_{1}\ca^\perp  \|
\end{equation}
Summing over $j$, we get
\begin{equation}\label{eq:criterion-bound3}
\sum_{j = 0}^{a - 3} \{P^{(k)}_{j+1}, P^{(k)}_j \} \ge \op -  z_k \cp \op \sum_{j = 0}^{a - 3} \oa P^{(k)}_j + P^{(k)}_{j+1} \ca \cp 
\end{equation}
The sum on the RHS is almost twice $S^{(k)}$:
\begin{equation}
\sum_{j = 0}^{a - 3} \oa P^{(k)}_j + P^{(k)}_{j+1} \ca = 2 \sum_{j = 0}^{a - 2} P^{(k)}_j - \op P^{(k)}_0 + P^{(k)}_{a-2}\cp \le 2 \sum_{j = 0}^{a - 2} P^{(k)}_j = 2 S^{(k)}
\end{equation}
Together with $z_k \ge 0$, we get from \eqref{eq:criterion-bound3} that
\begin{equation}\label{eq:criterion-bound4}
\sum_{j = 0}^{a - 3} \{P^{(k)}_{j+1}, P^{(k)}_j \} \ge \op -  2 z_k \cp  S^{(k)}
\end{equation}
so that looking back to the square of $S^{(k)}$ we have
\begin{equation}
\ob S^{(k)} \cb^2 \ge (1 - 2 z_k)  S^{(k)}
\end{equation}
As $S^{(k)}$ is a non-negative operator with nontrivial kernel, this gives a lower bound on the gap, $
\gamma(S^{(k)}) \ge (1 - 2 z_k)$,
which translates into an $n-$independent lower bound on the gap of the original Hamiltonian $H_n$:
\begin{equation}
\gamma(H_n) \ge {\gamma_k \over 2}  \gamma(S^{(k)}) \ge \gamma_k  \op {1 \over 2} - z_k \cp
\end{equation}

Since $P^{(k)}_j$ projected onto the range of $H^{(k)}_j$ by definition, we see that $\oa P^{(k)}_{j}\ca^\perp$ is the ground space projector for sites $jk+1$ through $(j+2)k$, which we will denote by $G_{[jk+1,(j+2)k]}$. In this notation we have 
\begin{align}
\oa P^{(k)}_0 \ca^\perp = G_{[1, 2k]}\qquad
\oa P^{(k)}_1 \ca^\perp = G_{[k + 1, 3k]}\qquad
\oa P^{(k)}_0 \ca^\perp \wedge \oa P^{(k)}_{1}\ca^\perp = G_{[1, 3k]}
\end{align}
where the last identity follows from frustration-freeness. The necessary condition, then, is that for some $k$ we have
\begin{equation}
z_k=\| G_{[k + 1, 3k]} G_{[1, 2k]} - G_{[1, 3k]}  \| < {1 \over 2}
\end{equation}
completing the proof of Theorem \ref{thm:criterion}.
\end{proof}

In the following subsections, we consider the finite-size criterion for open Motzkin chains and reformulate in a convenient way for our later purposes.

\newcommand{\arbset}{\mathcal{Z}}
\newcommand{\unnormstate}[1]{{#1}}
\newcommand{\normstate}[1]{\widehat{#1}}
\newcommand{\area}[1]{\mathcal{A}({#1})}
\newcommand{\norm}[1]{\mathcal{N}({#1})}
\newcommand{\gsp}{G}

\subsection{Relations between Motzkin ground states on different subchains}
\label{ssec:ground-state-other-properties}

As mentioned previously, the proof will rely on constructing approximations to ground states on subsegments of our spin chain, by omitting certain classes of walks (with exponentially vanishing weights) from the superpositions \eqref{eq:GroundStates}. For simplicity, we introduce the following notations:
\begin{definition} \label{def:walk_set_definitions}
    Given any set of walks $\arbset$, defined on a subsegment of the spin chain, denote by $\ket{\unnormstate{\arbset}}$ the area-weighted superposition of all the walks in $\arbset$:
    \begin{equation}
        \ket{\unnormstate{\arbset}} \equiv \sum_{w \in \arbset} t^{\area{w}} \ket{w}
    \end{equation}
    Denote the squared norm of this state by $\norm{\arbset}$,
    \begin{equation}
        \norm{\arbset} \equiv \inner{\unnormstate{\arbset}}{\unnormstate{\arbset}} = \sum_{w \in \arbset} t^{2 \area{w}}
    \end{equation}
    Finally, the normalized version of $\ket{\unnormstate{\arbset}}$ will be represented by $\ket{\normstate{\arbset}}$:
    \begin{equation}
        \ket{\normstate{\arbset}} \equiv {\ket{\unnormstate{\arbset}} \over \sqrt{\norm{\arbset}}}
    \end{equation}
\end{definition}

\begin{remark}
    With this convention, the ground states $\ket{GS_{p,q}}$ introduced in \eqref{eq:GroundStates} are just $\ket{\normstate{G_{p,q}}}$. Furthermore, the normalization factors $N_{p,q}$ implicitly defined by the same equation are equivalent to $\norm{G_{p,q}}$. However, due to the special importance of these states and normalization factors, we will employ for them the simpler notations $\ket{GS_{p,q}}$ and respectively $N_{p,q}$.
\end{remark}

\begin{notation}
    In view of Lemma \ref{lm:alternative-criterion}, we are led to consider several different segments of Motzkin spin chains, so it will be useful to have a label to keep track of the segment under discussion. For	any such segment $S$, we let $G^S_{p,q}$ be the equivalence class defined above, i.e., the set of all walks on segment $S$, with $(p,q)$ unbalanced steps. The corresponding normalization factor will be $N_{p,q}^S = \norm{G_{p,q}^S}$, the ground state $\ket{GS^S_{p,q}} = \ket{\normstate{G_{p,q}^S}}$, etc. When the segment $S$ under discussion is clear from the context, we occasionally drop the $S$ label.
\end{notation}

\begin{notation}
	Given a chain segment $S$ and a subset of it $T \subset S$, we denote by $\gsp_T$ the ground space projector on $T$, analogously to definition \ref{def:ground-space-projectors}. If acting on states living in the Hilbert space associated with $S$ (and $T$ is a proper subset of $S$), we will understand that the operator $G_T$ acts as the identity on $\HH_{S \setminus T}$, a shorthand for $G_T \otimes I_{S \setminus T}$.
\end{notation}

Here, we state and prove two other useful properties of ground states, and ground space projectors:

\begin{proposition}[Overlap properties] \label{pr:gs-projector-diagonal-pq}
	For any $S, T$ as above, the ground space projector $G_T$, when viewed as acting on the Hilbert space $\HH_T$, is diagonal in the basis of states with definite numbers of unbalanced steps:
	\begin{equation}
	\forall \ket{a_{p,q}}, \ket{b_{p',q'}} \in \HH_T \quad \quad \quad \quad (p \ne p' \text{ or } q \ne q') \quad \implies \quad \mel{b_{p',q'}}{G_T}{a_{p,q}} = 0
	\end{equation}
	Furthermore, the above holds true even when $G_T$ is seen as acting on the Hilbert space $\HH_S$ associated with the wider segment $S$:
	\begin{equation}
	\forall \ket{a_{p,q}}, \ket{b_{p',q'}} \in \HH_S \quad \quad \quad \quad (p \ne p' \text{ or } q \ne q') \quad \implies \quad \mel{b_{p',q'}}{G_T}{a_{p,q}} = 0
	\end{equation}
	
\end{proposition}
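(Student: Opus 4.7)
My plan is to exploit the explicit orthogonal decomposition of the ground space given by Theorem~\ref{thm:degenerate-pq-gs}: for any segment $T$,
\[
G_T = \sum_{\substack{p_T,q_T \ge 0\\ p_T + q_T \le |T|}} \ket{GS^T_{p_T,q_T}}\bra{GS^T_{p_T,q_T}},
\]
and by construction each $\ket{GS^T_{p_T,q_T}}$ is an area-weighted superposition supported entirely on the equivalence class $G^T_{p_T,q_T}$. By Lemma~\ref{lm:equiv-classes} these classes partition the set of basis strings on $T$, so each $\ket{GS^T_{p_T,q_T}}$ lies in a single $(p_T,q_T)$-sector of $\HH_T$. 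This is the one structural input both halves of the proof will rely on.

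For the first claim (states living in $\HH_T$), the plan is to take $\ket{a_{p,q}},\ket{b_{p',q'}}\in\HH_T$ with $(p,q)\neq (p',q')$ and expand $\mel{b_{p',q'}}{G_T}{a_{p,q}}$ using the decomposition above. Since $\ket{a_{p,q}}$ is supported on strings in $G^T_{p,q}$, only the term with $(p_T,q_T)=(p,q)$ can give a nonzero right factor $\inner{GS^T_{p,q}}{a_{p,q}}$; but then the left factor $\inner{b_{p',q'}}{GS^T_{p,q}}$ must vanish because $\ket{b_{p',q'}}$ is supported on $G^T_{p',q'}$, a set disjoint from $G^T_{p,q}$ by Lemma~\ref{lm:equiv-classes}. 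Every term drops, and the matrix element is zero.

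For the second claim I will upgrade the same idea to $G_T\otimes\id_{S\setminus T}$ acting on $\HH_S$, with the goal of showing that this operator preserves the $(p,q)$-sector decomposition of the larger Hilbert space. The key technical lemma is a lifting property: any local move from \eqref{eq:local-moves} on two adjacent sites inside $T$ is simultaneously a valid local move on $S$, so if two basis strings $\mathfrak{s},\mathfrak{s}'$ on $S$ agree on $S\setminus T$ and their restrictions to $T$ satisfy $\mathfrak{s}|_T\sim_T \mathfrak{s}'|_T$, they automatically satisfy $\mathfrak{s}\sim_S \mathfrak{s}'$ and hence lie in the same class $G^S_{p,q}$. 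Granted this, $G_T\otimes\id$ applied to a basis string $\mathfrak{s}\in G^S_{p,q}$ yields a superposition of strings obtained by swapping $\mathfrak{s}|_T$ for $\sim_T$-equivalent substrings, each of which still sits in $G^S_{p,q}$; the argument of the previous paragraph then carries through verbatim. The main obstacle is precisely this lifting property; everything else is bookkeeping once one accepts that equivalence under local moves on $T$ refines equivalence under local moves on $S$.
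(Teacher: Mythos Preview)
Your proposal is correct and follows essentially the same approach as the paper: both parts use the decomposition $G_T=\sum_{r,s}\ket{GS^T_{r,s}}\bra{GS^T_{r,s}}$ together with the disjointness of the equivalence classes from Lemma~\ref{lm:equiv-classes}, and the second claim is handled via the same ``lifting'' observation that local moves on $T$ are also local moves on $S$, so that $G_T\otimes\id_{S\setminus T}$ preserves the $G^S_{p,q}$ equivalence class.
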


\begin{proof}[Proof of Proposition \ref{pr:gs-projector-diagonal-pq}]
	We will begin by proving the first claim. The proof of Theorem \ref{thm:degenerate-pq-gs} implies that the $\ket{GS_{p,q}^T}$ form an orthonormal basis for the ground space on $T$. Hence, we can write
	\begin{equation}
	G_T = \sum_{r,s} \ket{GS_{r,s}^T} \bra{GS_{r,s}^T}
	\end{equation}
	with the sum running over all possible $r,s $ consistent with $T$ (i.e. $r,s \le |T|$). Since $\ket{a_{p,q}}$ contains only walks in $G^T_{p,q}$, it will only have nonvanishing overlap with $\ket{GS_{p,q}^T}$ (same argument as in the proof of Theorem \ref{thm:degenerate-pq-gs}), and so
	\begin{equation}
	G_T \ket{a_{p,q}} = \op \sum_{r,s} \ket{GS_{r,s}^T} \bra{GS_{r,s}^T} \cp \ket{a_{p,q}} = \ket{GS_{p,q}^T} \; \inner{GS_{p,q}^T}{a_{p,q}}
	\end{equation}
	Because the RHS is proportional to $\ket{GS_{p,q}^T}$, it only contains walks in $G^T_{p,q}$. Therefore the only possibility for nonvanishing overlap with $\ket{b_{p',q'}}$ is to have both $p=p'$ and $q=q'$.
	
	The following characterization will be useful for the second part of the proof: since all walks in the composition of $G_T \ket{a_{p,q}}$ still lie in the equivalence class $G_{p,q}^T$, it means that any of them can be transformed, using only local moves, into any walk contributing to the original state $ \ket{a_{p,q}}$.
	
	For the second claim, working in $\HH_S$, we again investigate how $G_T$ acts on $\ket{a_{p,q}}$. For any walk included in $\ket{a_{p,q}}$, the projector $G_T$ may only change the steps in $T$, since it acts as the identity on $S \setminus T$. We have also seen above that any such change is reversible by local moves. Therefore any walk in $G_T \ket{a_{p,q}}$ (living on the full segment $S$) can still be transformed, by local moves, into any walk from $\ket{a_{p,q}}$. We conclude that $G_T$ does not change the equivalence class of walks even when acting on the full $S$, and the second claim of the Proposition follows.
	\end{proof}

\subsection{Reformulation of the criterion for Motzkin chains}
In this section we reduce the criterion \ref{thm:criterion} to a form better suited for the translation-invariant part $H_n^0$ of the Motzkin Hamiltonian under discussion (eq. \eqref{eq:translation-invariant-part}). We define a collection of states, indexed by $k$:

\begin{definition} \label{def:phi-k-pq}
	By $\ket{\phi^{k}_{p,q}}$ we mean a state with $(p,q)$ unbalanced steps living in the Hilbert space associated with sites $[1,3k]$. When the $k$ value is clear and fixed, the corresponding label may be dropped, leaving $\ket{\phi_{p,q}}$ as the state.
\end{definition}

The main result of this section is the following:

\begin{proposition}\label{pr:reduction-of-criterion}
	With Definition \ref{def:ground-space-projectors} and \eqref{eq:criterion-condition} we have,
	\begin{equation}
	\|G_{[k + 1, 3k]} E_k \| = \sqrt{\sup\limits_{\substack{p,q \ge 0 \\ p+q \le 3k}} \op \sup\limits_{\phi_{p,q}^k \in \range E_k} \mel{\phi_{p,q}^k}{G_{[k+1, 3k]}}{\phi_{p,q}^k} \cp}
	\end{equation}
\end{proposition}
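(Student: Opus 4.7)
The plan is to combine the standard operator identity $\|A\|^2 = \|A^*A\|$ with the sector decomposition supplied by Proposition \ref{pr:gs-projector-diagonal-pq}. I would first reduce the norm on the left-hand side to a supremum over the range of $E_k$, and then show that this supremum decouples according to the $(p,q)$ unbalanced-step sectors, yielding the double supremum on the right-hand side.

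The first step is to observe that $E_k = G_{[1,2k]} - G_{[1,3k]}$ is itself an orthogonal projection. By frustration-freeness, any state annihilated by the Hamiltonian on $[1,3k]$ is also annihilated by its restriction to the sub-chain $[1,2k]$, hence $G_{[1,3k]} G_{[1,2k]} = G_{[1,2k]} G_{[1,3k]} = G_{[1,3k]}$. Squaring the difference then yields $E_k^2 = E_k$, so $E_k$ is the projector onto $\range G_{[1,2k]} \cap (\range G_{[1,3k]})^\perp$. Applying $\|A\|^2 = \|A^*A\|$ with $A = G_{[k+1,3k]} E_k$ and using both $E_k^2 = E_k$ and $G_{[k+1,3k]}^2 = G_{[k+1,3k]}$ gives
$$
\|G_{[k+1,3k]} E_k\|^2 \;=\; \|E_k\, G_{[k+1,3k]}\, E_k\| \;=\; \sup_{\substack{\phi \in \range E_k \\ \|\phi\|=1}} \langle \phi | G_{[k+1,3k]} | \phi\rangle.
$$

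Next, I would invoke Proposition \ref{pr:gs-projector-diagonal-pq} with $S = [1,3k]$ and $T \in \{[1,2k],\,[1,3k],\,[k+1,3k]\}$. This tells us that each of the three ground-space projectors $G_{[1,2k]}$, $G_{[1,3k]}$, $G_{[k+1,3k]}$ is block-diagonal on $\HH_{[1,3k]}$ with respect to the decomposition $\HH_{[1,3k]} = \bigoplus_{p,q} \HH^{(p,q)}_{[1,3k]}$ into fixed-unbalanced-step sectors. Consequently $E_k$ commutes with every sector projection $Q^{(p,q)}$, and each unit vector $\phi \in \range E_k$ decomposes uniquely as $\phi = \sum_{p,q} \phi_{p,q}$ with $\phi_{p,q} \in \range E_k \cap \HH^{(p,q)}_{[1,3k]}$, mutually orthogonal and satisfying $\sum_{p,q} \|\phi_{p,q}\|^2 = 1$. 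Because $G_{[k+1,3k]}$ carries no cross-sector matrix elements, the expectation splits as a convex combination of sector-wise Rayleigh quotients,
$$
\langle \phi | G_{[k+1,3k]} | \phi\rangle \;=\; \sum_{p,q} \|\phi_{p,q}\|^2 \, \langle \hat\phi_{p,q} | G_{[k+1,3k]} | \hat\phi_{p,q}\rangle,\qquad \hat\phi_{p,q} = \phi_{p,q}/\|\phi_{p,q}\|,
$$
which is maximized by concentrating all of the mass on the sector achieving the largest inner supremum. Taking the square root then yields the stated identity.

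The argument is primarily a book-keeping rearrangement, so I do not anticipate a substantive obstacle. The only non-trivial inputs are that $E_k$ is an orthogonal projection (immediate from frustration-freeness) and that $G_{[k+1,3k]}$ remains $(p,q)$-block-diagonal when extended by the identity to the larger chain $[1,3k]$, which is exactly the content of the second half of Proposition \ref{pr:gs-projector-diagonal-pq}. A minor technicality is to agree that the inner supremum contributes $0$ whenever $\range E_k \cap \HH^{(p,q)}_{[1,3k]} = \{0\}$, so that empty sectors drop out harmlessly from the outer supremum.
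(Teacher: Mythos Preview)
Your proposal is correct and follows essentially the same route as the paper: reduce the norm to $\sup_{\phi\in\range E_k}\langle\phi|G_{[k+1,3k]}|\phi\rangle$, then use the $(p,q)$-block-diagonality of $G_{[k+1,3k]}$ from Proposition~\ref{pr:gs-projector-diagonal-pq} and a convex-combination argument to split into sectors. Your use of $\|A\|^2=\|A^*A\|$ together with the explicit observation that $E_k$ is a projection is a slightly cleaner way to reach the first reduction than the paper's direct computation, but the two arguments are otherwise identical.
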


\begin{proof}[Proof of Proposition \ref{pr:reduction-of-criterion}] From the definition of the norm, 
\begin{equation}
\| G_{[k+1, 3k]} E_k \| =   \sup\limits_{\phi^k \in \HH_{[1,3k]}} \| G_{[k+1, 3k]} E_k \ket{\phi^k} \| =  \sup\limits_{\phi^k \in \HH_{[1,3k]}} \sqrt{ \mel{\phi^k}{E_k  G_{[k+1, 3k]} G_{[k+1, 3k]} E_k}{\phi^k}}
\end{equation}
Since $G_{[k+1, 3k]}$ is a projector, it squares to itself, and the above simplifies to
\begin{equation}
\| G_{[k+1, 3k]} E_k \| = \sup\limits_{\phi^k \in \HH_{[1,3k]}} \sqrt{ \mel{\phi^k}{E_k  G_{[k+1, 3k]} E_k}{\phi^k}}
\end{equation}

In the above, $\ket{\phi^k}$ is a priori an arbitrary state in the Hilbert space $\HH_{[1,3k]}$. However, without loss of generality, we can take it to be in the range of $E_k$. This is because $\HH_{[1,3k]}$ can be written as the direct sum of $\range E_k$ and its orthogonal complement. Any part of $\ket{\phi^k}$ in the orthogonal complement gets annihilated by the projector $E_k$, without contributing anything to the norm. This means we can take 
\begin{equation}
E_k \ket{\phi^k} = \ket{\phi^k} \quad \iff \quad G_{[1,2k]} \ket{\phi^k} = \ket{\phi^k}  \quad \text{and} \quad G_{[1,3k]} \ket{\phi^k} = 0
\end{equation}
and obtain the norm as
\begin{equation}
\| G_{[k+1, 3k]} E_k \| = \sup\limits_{\phi^k \in \range E_k} \sqrt{ \mel{\phi^k}{G_{[k+1, 3k]}}{\phi^k}}
\end{equation}
Since the square root is strictly increasing, we can safely take it out of the supremum to obtain 
\begin{equation}
\| G_{[k+1, 3k]} E_k \| = \sqrt{ \sup\limits_{\phi^k \in \range E_k}  \mel{\phi^k}{G_{[k+1, 3k]}}{\phi^k}}
\end{equation}
As seen in the proof of Proposition \ref{pr:gs-projector-diagonal-pq}, the ground space projector on any interval $I$ can be written as 
\begin{equation}\label{eq:ground-space-projector-decomposition}
G^I = \sum_{p,q} \ket{GS^I_{p,q}} \bra{GS^I_{p,q}}
\end{equation}
Each term selects the component of $\ket{\phi^k}$ with the corresponding number of unbalanced steps. Expanding $\ket{\phi^k}$ in terms of such components, we find
\begin{equation}
\ket{\phi^k} = \sum_{p,q} a_{p,q} \ket{\phi_{p,q}^k}
\end{equation}
where we assume each $\ket{\phi_{p,q}^k}$ is in the range of $E_k$, is normalized, and has $(p,q)$ unbalanced steps. Normalization for $\ket{\phi^k}$ requires
\begin{equation}\label{eq:phi-normalization}
1 = \inner{\phi^k}{\phi^k} = \sum_{p,q} |a_{p,q}|^2
\end{equation}
where the second equality above follows from the orthonormality of individual components: for any $p,q,p',q'$ we have $\inner{\phi_{p,q}}{\phi_{p',q'}} = \delta_{p,p'} \delta_{q,q'}$. It follows that 
\begin{equation}
\mel{\phi^k}{G_{[k+1, 3k]}}{\phi^k} = \sum_{p,q,p',q'} a_{p,q} a_{p',q'}^* \mel{\phi_{p',q'}^k}{G_{[k+1, 3k]}}{\phi_{p,q}^k} = \sum_{p,q} |a_{p,q}|^2  \mel{\phi_{p,q}^k}{G_{[k+1, 3k]}}{\phi_{p,q}^k}
\end{equation}
where the second equality follows from $\mel{\phi_{p',q'}^k}{G_{[k+1, 3k]}}{\phi_{p,q}^k} = \mel{\phi_{p,q}^k}{G_{[k+1, 3k]}}{\phi_{p,q}^k} \delta_{p,p'} \delta_{q,q'}$; the ground space projector does not have matrix elements between states with different numbers of unbalanced steps, as shown in Proposition \ref{pr:gs-projector-diagonal-pq}.
From the normalization condition \eqref{eq:phi-normalization}, and the fact that $G_{[k+1, 3k]}$ is a non-negative operator, we see
\begin{equation}
\mel{\phi^k}{G_{[k+1, 3k]}}{\phi^k} = \sum_{p,q} |a_{p,q}|^2 \mel{\phi^k_{p,q}}{G_{[k+1, 3k]}}{\phi^k_{p,q}} \le \sup\limits_{p,q} \mel{\phi^k_{p,q}}{G_{[k+1, 3k]}}{\phi^k_{p,q}}
\end{equation}
and the bound can be attained, since there is no constraint on the $a_{p,q}$ coefficients other than normalization. Choosing all of them to be zero, except for the one corresponding to the largest matrix element $\mel{\phi_{p,q}^k}{G_{[k+1, 3k]}}{\phi_{p,q}^k}$, gives equality in the above. Formally, this means
\begin{equation}
\sup\limits_{\phi^k \in \range E_k} \mel{\phi^k}{G_{[k+1, 3k]}}{\phi^k} = \sup\limits_{\substack{p,q \ge 0 \\ p+q \le 3k}} \op \sup\limits_{\phi_{p,q}^k \in \range E_k} \mel{\phi_{p,q}^k}{G_{[k+1, 3k]}}{\phi_{p,q}^k} \cp
\end{equation}
where we've made explicit the conditions on $p$ and $q$. At fixed $k$, the length of the full chain is $3k$; the number of unbalanced steps must be non-negative, and also can never be more than the total steps, so $p,q \ge 0$ and $p+q \le 3k$. We then find that
\begin{equation}
\| G_{[k+1, 3k]} E_k \| = \sqrt{\sup\limits_{\substack{p,q \ge 0 \\ p+q \le 3k}} \op \sup\limits_{\phi_{p,q}^k \in \range E_k} \mel{\phi_{p,q}^k}{G_{[k+1, 3k]}}{\phi_{p,q}^k} \cp}
\end{equation}
completing the proof.
\end{proof}

\section{Analytical  verification of the finite-size criterion: overview}\label{sect:strategy}

The remaining sections will focus on showing the following key asymptotic.

\begin{theorem}[Key asymptotic]\label{thm:asymptotic}We have
\begin{equation} \label{eq:reduced-main-claim}
\lim\limits_{k \to \infty} \op \sup\limits_{\substack{p,q \ge 0 \\ p+q \le 3k}} \op \sup\limits_{\phi_{p,q}^k \in \range E_k} \mel{\phi_{p,q}^k}{G_{[k+1, 3k]}}{\phi_{p,q}^k} \cp \cp = 0
\end{equation}
\end{theorem}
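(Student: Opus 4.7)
The plan is to analyze $\mel{\phi_{p,q}}{G_{[k+1,3k]}}{\phi_{p,q}}$ by exploiting that $\phi_{p,q}$ is simultaneously in the ground space of $H^0$ on $[1,2k]$, orthogonal to the ground space on $[1,3k]$, and has definite unbalanced-step counts $(p,q)$. By Theorem \ref{thm:degenerate-pq-gs} the ground space on $[1,2k]$ has the orthonormal basis $\{\ket{GS^{[1,2k]}_{r,s}}\}$, so I would write
\begin{equation*}
\ket{\phi_{p,q}} = \sum_{r,s} \ket{GS^{[1,2k]}_{r,s}} \otimes \ket{\eta_{r,s}},
\end{equation*}
where each $\ket{\eta_{r,s}}$ lives on $[2k+1,3k]$ with imbalance chosen so that the total state has imbalance $(p,q)$, consistently with Proposition \ref{pr:gs-projector-diagonal-pq}. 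Splitting each $\ket{GS^{[1,2k]}_{r,s}}$ further across $[1,k]\otimes[k+1,2k]$ and expanding $G_{[k+1,3k]}$ in its orthonormal ground-state basis, the matrix element becomes a sum of squared moduli of pairings between ground states on $[k+1,3k]$ and induced pieces of $\phi_{p,q}$ on that interval. The constraint $G_{[1,3k]}\ket{\phi_{p,q}}=0$ translates into explicit orthogonality relations among the coefficients of these decompositions, and is the only thing distinguishing $\phi_{p,q}$ from an arbitrary ground state on $[1,2k]$.

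For the supremum over $(p,q)$, I would split into two regimes. In the \emph{high-imbalance regime}, where $p+q$ grows faster than some slow power of $k$, the area-weighted normalizations $N^{[a,b]}_{p,q}$ carry an exponential suppression of order $t^{p^2+q^2}$ coming from the minimum-area configuration $g_{p,q}$ of Figure \ref{fig:equivalence-class}. This allows one to bound the matrix element crudely, using Cauchy--Schwarz together with normalization-ratio estimates in the spirit of the proof of Proposition \ref{pr:projector-min-exp}, and this beats the polynomial combinatorial degeneracy of the subchain ground spaces.

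The delicate case is the \emph{low-imbalance regime}, where $p+q$ is slowly growing or bounded. Here I would employ the approximate ground-state framework flagged in the introduction: because each local move changes area by $1$ and the weight is $t^{\mathcal{A}}$ with $t<1$, the ground state $\ket{GS_{p,q}}$ is concentrated on walks whose $p$ unbalanced down-steps are pushed near the left edge and whose $q$ unbalanced up-steps are pushed near the right edge, as in $g_{p,q}$. Replacing the exact ground states on $[1,2k]$ and $[k+1,3k]$ by such localized approximations, the pairings computing the matrix element become overlaps between states whose imbalances sit on almost disjoint parts of the chain (near site $1$ versus near site $3k$); the exclusion of the diagonal $(r,s)=(p,q)$ matching forced by $G_{[1,3k]}\ket{\phi_{p,q}}=0$ then leaves only off-diagonal contributions, which one expects to vanish as $k\to\infty$.

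The hardest part will be making this localization argument quantitative and uniform in $(p,q,\phi_{p,q})$. One must simultaneously control the approximation errors $\|\ket{GS_{p,q}}-\ket{GS^{\mathrm{approx}}_{p,q}}\|$ and ratios of normalization factors $N^{[a,b]}_{p,q}/N^{[a',b']}_{p',q'}$ across all subsegments and admissible imbalances appearing in the decomposition, and show that the resulting combinatorial sums are dominated by the exponential area penalty uniformly in $k$. This is precisely where the inhomogeneous 1D discrete diffusion interpretation of the normalization recursion, alluded to in the introduction and developed in Appendix \ref{sec:norm-ratios-appendix}, becomes indispensable. Combining the bounds from the two regimes then yields \eqref{eq:reduced-main-claim}.
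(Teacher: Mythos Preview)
Your overall architecture---Schmidt-decompose $\phi_{p,q}$, split into high- and low-imbalance regimes, and use the localization of unbalanced steps together with the normalization-ratio analysis of Appendix~\ref{sec:norm-ratios-appendix}---matches the paper's strategy. However, two concrete pieces of your plan would not go through as written.

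First, your regime threshold is wrong. You propose separating at ``$p+q$ faster than a slow power of $k$'' versus ``$p+q$ slowly growing or bounded'', but the correct cutoff is \emph{linear} in $k$: the paper treats $p>(1+c/2)k$ or $q>(1+c/2)k$ as high-imbalance and $p,q<(1+c)k$ as low-imbalance (Propositions~\ref{pr:high-imbalance-zero-limits} and~\ref{pr:low-imbalance-zero-limits}). This is forced by the approximation lemmas themselves: Lemma~\ref{lm:approx-high} needs $p$ (or $q$) to exceed the relevant segment length by a linear margin for the product-state approximation to kick in, while Lemma~\ref{lm:low-imbalance-approximation-trunc} needs $p,q$ to be strictly below the segment lengths. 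A sublinear threshold leaves a huge range of $(p,q)$ uncovered by either mechanism.

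Second, and more seriously, your high-imbalance argument is conceptually off. You write that the normalizations carry suppression $t^{p^2+q^2}$ and that this ``beats the polynomial combinatorial degeneracy''. But $\mel{\phi_{p,q}}{G_{[k+1,3k]}}{\phi_{p,q}}$ is the expectation of a projector in a \emph{normalized} state, so absolute normalization factors cancel; there is nothing to be small here by size alone. The actual mechanism in the paper (Section~\ref{ssec:high-imb-app}) is structural: when $p>(1+c)k$, the full-chain ground state is superpolynomially approximated by the product $\ket{d}^{\otimes k}\otimes\ket{GS^{[k+1,3k]}_{p-k,q}}$, so orthogonality of $\phi_{p,q}$ to the full ground space transfers, after projecting onto walks with first $k$ steps all down, to approximate orthogonality to the ground space on $[k+1,3k]$. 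One then has to separately dispose of the walks in $\phi_{p,q}$ whose first $k$ steps are \emph{not} all down (your proposal does not address this at all; see Proposition~\ref{pr:hiap-p2} and Appendix~\ref{ssec:high-p-other-walks}). Your low-imbalance sketch is closer in spirit, but the phrase ``exclusion of the diagonal $(r,s)=(p,q)$ matching\ldots leaves only off-diagonal contributions'' mischaracterizes the constraint: $G_{[1,3k]}\phi_{p,q}=0$ is a single linear relation among the Schmidt coefficients, not the removal of one term. The paper's actual argument (Sections~\ref{sec:normalizations}--\ref{sec:completing-low-imbalance-proof}) introduces auxiliary quantities $x_r$, shows via the diffusion estimates of Appendix~\ref{sec:norm-ratios-appendix} that all $x_r$ are close to $x_0$, and then uses the orthogonality relation---which says a convex combination of the $x_r$ is approximately zero---to force each $|x_r|$ small.
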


This asymptotic allows to verify the finite-size criterion and hence implies the main result.

\begin{proof}[Proof of Theorem \ref{thm:ff-component-gap} assuming Theorem \ref{thm:asymptotic}]
  Together with Proposition \ref{pr:reduction-of-criterion}, the asymptotic  \eqref{eq:reduced-main-claim} implies that  
  $$	
  \|G_{[k + 1, 3k]} E_k \|\to 0,\qquad k\to\infty.
  $$
  By Lemma \ref{lm:alternative-criterion}, this implies that there exists a large $k_0$ such that Theorem \ref{thm:criterion} applies and yields Theorem \ref{thm:ff-component-gap} (which as already seen then gives the main result).
 \end{proof}

Therefore, the task that will occupy us in the remainder of this work is to prove Theorem \ref{thm:asymptotic}. This turns out to be technically quite challenging and requires several new ideas in the analysis of the Motzkin spin chain to be completed. A special role is played by a suitable notion of approximate ground state projectors introduced in the next section. We develop a detailed description of the behavior of these approximate ground states under composition and decomposition in different physical regimes (low-imbalance versus high-imbalance). We also derive precise control on their combinatorial normalization coefficients based on rather technical estimates for a spatially inhomogeneous 1D diffusion equation. It would be interesting if this refined understanding of the open-chain ground states could be useful to study other physical properties of Motzkin spin chains.

\paragraph{Proof strategy for Theorem \ref{thm:asymptotic}}
First, notice that we can understand the quantity on the LHS of \eqref{eq:reduced-main-claim} as measuring the `delocalization' of possible excitations in the model: we look at states $\ket \phi$ on the full chain which are excited (orthogonal to the full-chain ground space), but their components on the first two-thirds of the chain are local ground states. We then ask how much these states can overlap with the ground space on the last two-thirds of the chain, and we aim to show that the answer is almost not at all. Intuitively speaking, we have to exclude the possibility that there exists some excited state on the full chain  which is very close to ground states on the first two-thirds and also the last two-thirds of the chain. This latter situation would require some form of non-localized excitation. 

The main technical difficulty arises because of the quadratic ground state degeneracy described in Theorem \ref{thm:degenerate-pq-gs}, which is the price to pay for removing the boundary projectors. This means that we must consider ground states with various $p$- and $q$-values, not only on the initial chain, but also (and this is the crux of the matter) when dividing the chain into subsegments. It is therefore imperative that we develop a simpler-to-work-with effective description, we call these approximate ground states.

Our notion of approximate ground states differs based on two main categories: states with many unbalanced steps or few ones (called high-imbalance and low-imbalance respectively). The simpler case is when there are many unbalanced steps of at least one type (up or down), i.e. more than $(1+c) k$ such steps, for some $c>0$. Then, because of the area weighting in the ground state, the outermost $k$ unbalanced steps will tend to accumulate in the corresponding outermost third of the chain. The reason is that the presence of any balanced step in the outermost third carries, in comparison with the lowest-area walk, an additional-area cost that is of order $k$. Therefore, at large $k$, a ground state on the full chain will overwhelmingly contain only unbalanced steps in one of its thirds, and can therefore be approximated by a convenient product state. This approximation is made precise in Section \ref{sec:high-imb}, and its application to obtain the desired bound is outlined in Section \ref{ssec:high-imb-app}.

On the other hand, for the case with few unbalanced steps on both sides, our approximation will rely on a Schmidt decomposition of the exact ground state, followed by a rigorous proof that most of the terms can be ignored in the large $k$ limit. The intuition is that, if we divide the full chain into three segments, all of which have length on the order of $k$, it is exponentially unlikely to have, in the composition of the full ground state, walks which do not reach the zero-height level in all three such segments separately. This can be understood, again, due to the area cost of such an extraordinary walk being on the order of $k$ larger than the minimum-area walk. At large enough $k$, the exponential weighting will suppress all such extraordinary walks. The specific details and proof of this low-imbalance approximation are presented in Section \ref{sec:low-imb}. Combining it with some technical properties of the normalizations for our states (Section \ref{sec:normalizations}, which relies on the diffusion analysis of Appendix \ref{sec:norm-ratios-appendix}), we find that the bound also holds for few unbalanced steps (Sections \ref{ssec:low-imb-app} and \ref{sec:completing-low-imbalance-proof}).

In the end, we combine all of these results and conclude the central asymptotic formula \eqref{eq:reduced-main-claim}.



\section{Low-imbalance approximations}\label{sec:low-imb}
Throughout this and the following sections, we will approximate ground states and $\ket{\phi_{p,q}}$ states by combinatorially simpler objects.

\begin{definition}\label{def:good-approximation}
	Given two collections of states indexed by $k$, call them $\{\ket{a^{(k)}}\}$ and $\{\ket{b^{(k)}}\}$, we will say that the latter \emph{superpolynomially approximates} the former if
	\begin{equation}
	\forall n \in \N: \qquad \qquad \lim\limits_{k \to \infty} \os k^n \ob 1 - |\inner{a^{(k)}}{b^{(k)}}|^2 \cb \cs = 0
	\end{equation}
\end{definition}

The ground states of our Hamiltonian can be divided into two categories: those with small, and respectively large, numbers of unbalanced steps. This classification is relative to a division of the spin chain into subsegments, and will be made precise below. In this section, we find superpolynomial approximations for the low-imbalance ground states of our Hamiltonian. The next section will similarly treat the high-imbalance states. After describing the approximation schemes for a split of the chain into two segments, we will generalize the results and apply them to the case of division into thirds, which is the relevant situation for our finite-size criterion.\\

To allow for the division of the chain into unequal segments, we make:
\begin{assumption} \label{as:low-imb-half-split-constants}
	Let $f_1$, $f_2 > 0$ be two given constants; fix a small number $b$ with $0 < b < f_1, f_2$.  We also fix two other constants $a_1$, $a_2$ such that $0< a_1 < f_1 - b$ and $0< a_2 < f_2 - b$.
\end{assumption}

	Although not used in this section, the following condition on $b$ will later be essential: we shall impose $b < {1 \over 4 \beta}$, where $\beta > 0$ is the parameter appearing in Theorem \ref{thm:mainec}. Note that $\beta$ depends only on the value of $t$, so we can indeed take it to be fixed once $t$ is specified.

\begin{notation}
    \label{not:half-split-k-segments}
	We will consider a family of spin chain segments indexed by $k$, where at each $k$ the corresponding segment contains $(f_1 + f_2) k$ sites. We will view such a segment as composed from two parts: the left one (L) of length $f_1 k$, and the right (R) one, with length $f_2 k$.
\end{notation}

\newcommand{\atgset}{\mathcal{G}}

\begin{definition} \label{def:atgs-definition}
	For every segment in Notation \ref{not:half-split-k-segments}, we define the following set of walks with $(p,q)$ unbalanced steps, from which to construct our approximation to the corresponding ground state:
    \begin{equation} \label{eq:atgset_def}
        \atgset_{p,q} = \bigcup_{r<b k} G_{p,r}^L \times G_{r,q}^R.
    \end{equation}
\end{definition}
The main result of this section is:
\begin{lemma}\label{lm:low-imbalance-approximation-trunc}
	With the Definitions \ref{def:good-approximation}, \ref{def:atgs-definition}, Assumption  \ref{as:low-imb-half-split-constants}, and Notation \ref{not:half-split-k-segments} from above, we have that, for all $p < a_1 k$ and $q < a_2 k$,	the states constructed from the sets $\atgset_{p,q}$ of eq. \eqref{eq:atgset_def} superpolynomially approximate the true ground states on the corresponding segments:
	\begin{equation}
	\forall n \in \N: \qquad \lim\limits_{k \to \infty} \os k^n \; \sup\limits_{\substack{p < a_1 k \\ q < a_2 k}} \ob 1 - |\inner{GS_{p,q}}{\normstate{\atgset_{p,q}}}|^2 \cb \cs = 0.
	\end{equation} 	
\end{lemma}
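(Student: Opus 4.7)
The plan is to express the overlap as a ratio of normalizations and then to control the discrepancy by combining an area-cost bijection (for walks with large boundary height) with a penalty estimate and normalization-ratio control (for walks that fail to touch zero on one side of the cut).

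First, a direct inner-product calculation, combined with the disjointness of equivalence classes (Lemma~\ref{lm:equiv-classes}) and the fact that the area is additive for walks whose minimum height is attained on both halves, yields
\begin{equation}
|\inner{GS_{p,q}}{ATGS_{p,q}}|^2 \;=\; \frac{ATN_{p,q}}{N_{p,q}},
\end{equation}
so the task reduces to proving that $(N_{p,q}-ATN_{p,q})/N_{p,q}$ is superpolynomially small in $k$, uniformly in $p<a_1 k$ and $q<a_2 k$.

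Next I would classify each walk $w\in G^{L\cup R}_{p,q}$ by its boundary height $r\geq 0$ together with its minimum heights $m_L,m_R\geq 0$ on $L$ and $R$, where $\min(m_L,m_R)=0$ by the min-area convention of Theorem~\ref{thm:degenerate-pq-gs}. Since $ATN_{p,q}$ captures exactly the contributions with $m_L=m_R=0$ and $r<bk$, the excess splits into (i) walks with $m_L=m_R=0$ but $r\geq bk$, and (ii) raised walks with $m_L>0$ or $m_R>0$. For part (i), I use that each $w\in G^L_{p,r}$ necessarily touches height $0$: letting $\tau$ be the last such time and replacing $w$ on $(\tau,f_1 k]$ by flats defines a map $w\mapsto\tilde w\in G^L_{p,0}$ whose area decrease is at least $r(r+1)/2$ (the final $r+1$ heights of $w$ are bounded below by $0,1,\dots,r$) and which has at most $f_1 k\cdot 3^{f_1 k}$ preimages. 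This yields $N^L_{p,r}\leq O(k)\cdot 3^{f_1 k}\cdot t^{r(r+1)}\cdot N^L_{p,0}$, a symmetric bound on $R$, and together with the lower bound $N_{p,q}\geq N^L_{p,0}N^R_{0,q}$ (from the $r=0$, $m_L=m_R=0$ contribution) controls part (i) by $O(k^3)\cdot 3^{(f_1+f_2)k}\cdot t^{2b^2 k^2}=e^{-\Omega(k^2)}$, uniformly in $p,q$.

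For part (ii), a walk staying $\geq m_L\geq 1$ throughout $L$ has its $L$-area inflated by at least $m_L f_1 k$ relative to the down-shifted walk in $G^L_{p-m_L,r-m_L}$, yielding a penalty $t^{2m_L f_1 k}$. The down-shifted class has smaller minimum area (by roughly $m_L(p+r)-m_L^2$), so a quantitative bound on the ratio $N^L_{p-m_L,r-m_L}/N^L_{p,r}$ is required; this is exactly the kind of normalization estimate established via the diffusion analysis of Appendix~\ref{sec:norm-ratios-appendix} together with the normalization section. In the low-imbalance regime $p<a_1 k$, $r<bk$ with $a_1+b<f_1$, the penalty $t^{2m_L(f_1-a_1-b)k}$ dominates the area savings and, after summation over $m_L\geq 1$ and $r$, the total raised contribution from $L$ relative to $N_{p,q}\geq N^{(0)}_{p,q}$ is at most $e^{-\Omega(k)}$; the same holds on $R$. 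Combining (i) and (ii) gives the superpolynomial approximation claimed. The main obstacle is the quantitative control of part (ii): the area penalty alone is not enough, because a naive use of $N_{p,q}\geq t^{p(p+1)+q(q+1)}$ blows up when $p,q$ are a positive fraction of $k$, and one genuinely needs the normalization-ratio estimates to absorb the $p,q$-dependence into the denominator through the full $\sum_r N^L_{p,r}N^R_{r,q}$ rather than a single term.
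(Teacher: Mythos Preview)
Your overall reduction to a normalization ratio and your two-part decomposition (large boundary height $r\geq bk$ versus raised walks with $m_L>0$ or $m_R>0$) match the paper's structure. Your part (i) works, though it is much cruder than what the paper does: you use a replace-by-flats map with exponential preimage count $O(k)\,3^{f_1 k}$ and compensate with the quadratic area gain $r(r+1)/2$, whereas the paper flattens only two specific steps (the outermost unbalanced up-step of $w_1$ and down-step of $w_2$), obtaining preimage count $\binom{(f_1+f_2)k}{2}$ and area loss $\geq 2r-1$.

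The genuine divergence, and the weak point, is part (ii). The paper does \emph{not} down-shift and does \emph{not} invoke Appendix~\ref{sec:norm-ratios-appendix} here at all. Instead it introduces a swap map: locate the first unbalanced up-step $s_1$ and the rightmost balanced step $b_1$ (flattening $b_1$ and its partner if $b_1$ is a down-step), then interchange $s_1$ and $b_1$. This lowers the minimum height in $R$ by one, changes at most three steps (so the preimage count is $\leq\binom{(f_1+f_2)k}{3}$), and because $s_1$ sits in $L$ while $b_1$ sits at distance $\leq q$ from the right end, the area drops by at least $(f_2-a_2)k$. Iterating over the minimum height $h=1,\dots,q$ and summing gives a fully self-contained polynomial-times-exponential bound. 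Your down-shift approach could in principle be completed, but it forces you to control $N^L_{p-m_L,r-m_L}/N^L_{p,r}$, and the estimates in Appendix~\ref{sec:norm-ratios-appendix} that you point to (Theorem~\ref{thm:mainec}, Lemma~\ref{lm:rlb}) only give the required lower bound on $\pi^k_{p,r}$ when $\beta(p+r)$ is well below $k$; since $\beta$ is large, this is a nontrivial constraint you have not addressed. You also do not explicitly cover raised walks with $r\geq bk$, though that gap is minor.

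In short: the missing idea is the local swap map. By exchanging specific steps rather than shifting entire walks, the paper gets polynomial preimage bounds and makes Lemma~\ref{lm:low-imbalance-approximation-trunc} completely independent of the diffusion analysis, which is reserved for the much more delicate $x_r$-factor estimates in Section~\ref{sec:normalizations}.
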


\begin{remark}
    Up to normalization, the state constructed from $\atgset_{p,q}$ is a superposition of products of ground states on each segment separately:
    \begin{equation}\label{eq:atgs-remark}
	\ket{\unnormstate{\atgset_{p,q}}} = \sum_{r < bk} \ket{\unnormstate{G^L_{p,r}}} \ket{\unnormstate{G^R_{r,q}}}.
	\end{equation}
    Taking only $r = 0$ would correspond to picking a product state; this would not suffice for a reasonable approximation of the true ground state, as the bipartite entanglement of the latter is known to grow for larger $t$. However, we will argue that picking a small fraction of the terms in the Schmidt decomposition, i.e. $r< bk$, does in fact provide a good approximation.
\end{remark}

We begin with a detailed discussion of the structure of ground states in this low-imbalance regime, followed by a two-part proof of Lemma \ref{lm:low-imbalance-approximation-trunc}.

\subsection{Splitting the ground states}\label{ssec:li-split-gs}
As discussed in Section \ref{sec:ground-states-ramis}, the ground state with ($p,q$) unbalanced steps, corresponding to a chain segment $S$, is the area-weighted superposition of all walks in $G_{p,q}^S$. Since walks are defined to have the minimal area consistent with non-negativity, they must reach zero height in at least one point. If this is the case, the starting and ending heights for a walk $w$ with $(p, q)$ unbalanced steps must be $p$ and $q$ respectively, as can be seen by performing local moves that transform $w$ into $g_{p,q}$. On the other hand, if we are given a walk that is not minimal, we can ``minimize" it by shifting it down by an appropriate amount (Fig. \ref{fig:ground-state-heights}). 

\begin{figure}[t]
	\centering	
	\subfigure[]{\scalebox{.3}{\includegraphics{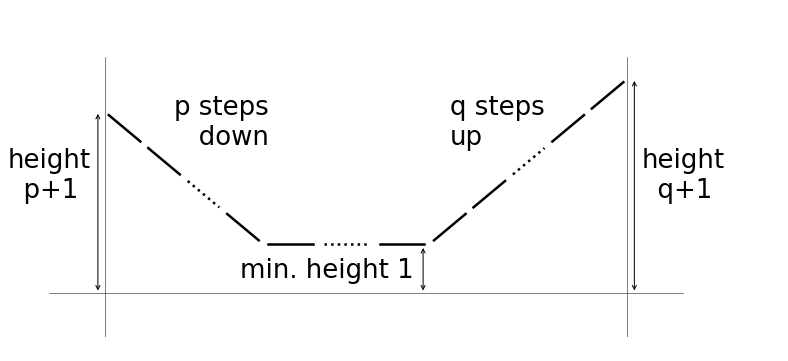}}}%
	\subfigure[]{\scalebox{.3}{\includegraphics{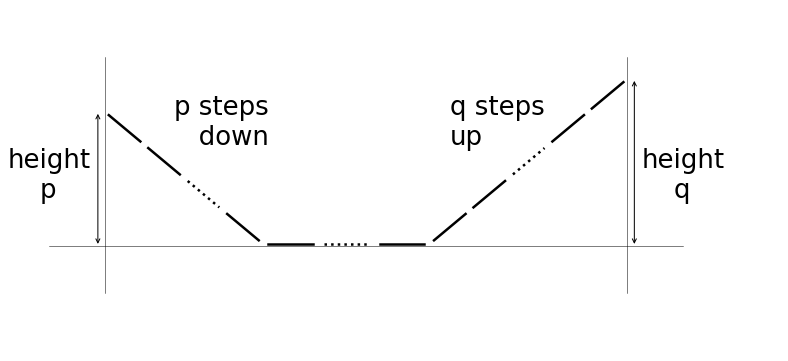}}}%
	\caption{Walks that do not reach zero height at any point (a) can be ``minimized'' by shifting them down, until they do so (b). In this latter form, the starting and ending heights are equal to the numbers of unbalanced down and up steps, respectively.}
	\label{fig:ground-state-heights}
\end{figure}

We will view $S$ as being divided into two parts $L$ and $R$, as in Notation \ref{not:half-split-k-segments}. In this case, a valid walk can either reach zero height in only one of these subsegments, or in both. Formally, we can write $G^S_{p,q}$ as the disjoint union of the following: (also see Fig. \ref{fig:three-gs-collections})
\begin{itemize}
	\item $H^{S; \; LR}_{p,q}$, containing the walks which reach zero height both in the $L$ and $R$ segments.
	\item $H^{S; \; L}_{p,q}$, containing the walks which reach zero height in the $L$ segment, but not in $R$.
	\item $H^{S; \; R}_{p,q}$, containing the walks which reach zero height in the $R$ segment, but not in $L$.
\end{itemize}  

\begin{figure}[t]
	\centering	
	\subfigure[]{\scalebox{.31}{\includegraphics{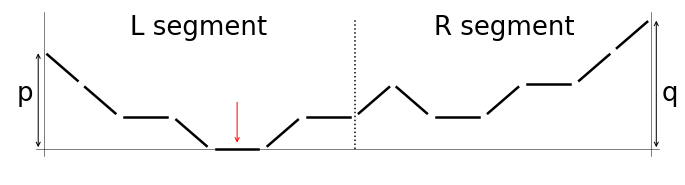}}}%
	\subfigure[]{\scalebox{.31}{\includegraphics{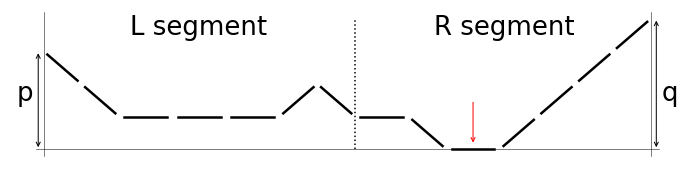}}}%
	
	\subfigure[]{\scalebox{.31}{\includegraphics{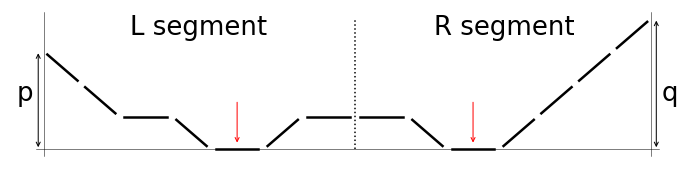}}}%
	\caption{Typical walks belonging to the subcollections $H^{L}_{p,q}$ (a), $H^{R}_{p,q}$ (b), and $H^{LR}_{p,q}$ (c). The red arrows indicate the regions of zero height that determine this classification. The three subcollections are used for the ground state decomposition in eq. \eqref{eq:three-classes-ugs}.}
	\label{fig:three-gs-collections}
\end{figure}

\begin{notation}
	Throughout the rest of the section, the segment $S$ under consideration will be understood to be as in Notation \ref{not:half-split-k-segments}, and so its corresponding label will be omitted for simplicity. The three classes above will be denoted by $H^{LR}_{p,q}$, $H^{L}_{p,q}$, and $H^{R}_{p,q}$ respectively.
\end{notation}

Using Definition \ref{def:walk_set_definitions}, write the (unnormalized) exact ground state as
\begin{equation} \label{eq:three-classes-ugs}
\ket{\unnormstate{G_{p,q}}} = \ket{\unnormstate{H^{LR}_{p,q}}} + \ket{\unnormstate{H^{L}_{p,q}}} + \ket{\unnormstate{H^{R}_{p,q}}},
\end{equation}
where the three terms on the RHS are orthogonal, since the underlying sets are disjoint. The corresponding normalization factor relation is:
\begin{equation} \label{eq:three-classes-ugs-norms}
    N_{p,q} = \norm{H^{LR}_{p,q}} + \norm{H^{L}_{p,q}} + \norm{H^{R}_{p,q}}.
\end{equation}

\subsection{The first approximation} \label{ssec:li-approximation-lemma}
We will first show that, in the low$-p$ and low$-q$ regime (more precisely, we require $p < a_1 k$ and $q < a_2 k$), the first term in the RHS of eq. \eqref{eq:three-classes-ugs-norms} dominates the other two, and we find an approximate ground state which includes only walks in $H^{LR}_{p,q}$:

\begin{lemma} \label{lm:low-imbalance-approximation-1}
	Under Assumption \ref{as:low-imb-half-split-constants}, Notation \ref{not:half-split-k-segments} and with the Definition \ref{def:good-approximation}, we have that the $\ket{\normstate{H^{LR}_{p,q}}}$ superpolynomially approximate the true ground states $\ket{GS_{p,q}}$:
	\begin{equation}
	\forall n \in \N: \qquad \lim\limits_{k \to \infty} \os k^n \; \sup\limits_{\substack{p < a_1 k \\ q < a_2 k}} \ob 1 - |\inner{GS_{p,q}}{\normstate{H^{LR}_{p,q}}}|^2 \cb \cs = 0.
	\end{equation} 	
\end{lemma}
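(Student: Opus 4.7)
The plan is to compute $\inner{GS_{p,q}}{AGS_{p,q}}$ directly and then bound the ``missing mass'' by exploiting the area penalty incurred by walks that fail to reach zero height in both $L$ and $R$. Since $\ket{AGS_{p,q}}$ is the normalized restriction of $\ket{UGS_{p,q}}$ to the walks in $H^{LR}_{p,q}$, one immediately has $\inner{GS_{p,q}}{AGS_{p,q}} = \sqrt{AN_{p,q}/N_{p,q}}$, so that
\[
1 - |\inner{GS_{p,q}}{AGS_{p,q}}|^2 \;=\; \frac{N_{p,q} - AN_{p,q}}{N_{p,q}} \;\leq\; \frac{1}{AN_{p,q}}\left(\sum_{w \in H^L_{p,q}} t^{2\mathcal{A}(w)} \;+\; \sum_{w \in H^R_{p,q}} t^{2\mathcal{A}(w)}\right).
\]
By the $L \leftrightarrow R$ symmetry of the setup it suffices to bound the $H^L_{p,q}$ term.

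For the $H^L_{p,q}$ contribution, first note that $H^L_{p,0} = \emptyset$ because a minimal-area walk ending at height $q = 0$ must touch zero at its last site, which lies in $R$. For $q \geq 1$ I would decompose each $w \in H^L_{p,q}$ by its height $r \geq 1$ at the $L$--$R$ boundary: the $L$-restriction is a walk in $G^L_{p,r}$ (touching $0$) and the $R$-restriction is a walk of length $f_2 k$ from $r$ to $q$ staying at height $\geq 1$. The key area-shift observation is that walks from $r$ to $q$ in $R$ with minimum height $\geq 1$ are in bijection with walks from $r-1$ to $q-1$ in $R$ staying at height $\geq 0$, obtained by shifting down by one, and the area is reduced by exactly $f_2 k$. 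Iterating this shift (splitting the latter collection further by whether the shifted walk touches $0$ or not) yields the identity
\[
\sum_{w \in H^L_{p,q}} t^{2\mathcal{A}(w)} \;=\; t^{2 f_2 k} \sum_{s \geq 0}\,\sum_{j=0}^{q-1} N^L_{p,\,s+1+j}\, N^R_{s,\,q-1-j}\, t^{2 j f_2 k}.
\]
The prefactor $t^{2 f_2 k}$ is the source of the super-polynomial smallness.

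The last step is to compare this double sum to $AN_{p,q} = \sum_{s \geq 0} N^L_{p,s} N^R_{s,q}$ via uniform normalization-ratio bounds of the form
\[
\frac{N^L_{p,\,s+1+j}}{N^L_{p,\,s}},\;\; \frac{N^R_{s,\,q-1-j}}{N^R_{s,\,q}} \;\leq\; C(t)^{j+1},
\]
valid for $p < a_1 k$, $q < a_2 k$ and $s < b k$. Granted such ratio bounds (exactly what the discrete-diffusion analysis of Appendix \ref{sec:norm-ratios-appendix} is designed to deliver), the $j$-series becomes a convergent geometric series whose sum is $O(1)$ for $k$ large, and a term-by-term comparison yields $N_{p,q} - AN_{p,q} \leq C'(t)\, t^{2 f_2 k}\, AN_{p,q}$. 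Combined with the symmetric bound for $H^R_{p,q}$ this gives
\[
1 - |\inner{GS_{p,q}}{AGS_{p,q}}|^2 \;\leq\; C''(t)\, t^{2 \min(f_1, f_2)\, k},
\]
which decays faster than any inverse polynomial in $k$, uniformly in $(p,q)$ in the prescribed range.

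The main obstacle is securing the uniformity of those normalization-ratio bounds. The strict inequalities $a_1 < f_1 - b$ and $a_2 < f_2 - b$ from Assumption \ref{as:low-imb-half-split-constants} are there precisely to keep $p+s$ and $q+s$ a definite $\Theta(k)$ distance from the ``saturation'' regime $p+q \approx f_i k$, where the normalizations become very sensitive to small shifts in their arguments and the naive ratios blow up. I expect the careful bookkeeping of these ratios through the inhomogeneous diffusion recursion to be by far the technically heaviest step; once it is in hand, everything else reduces cleanly to the area-shift accounting above.
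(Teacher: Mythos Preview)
Your area-shift identity
\[
\sum_{w \in H^L_{p,q}} t^{2\mathcal A(w)} \;=\; t^{2 f_2 k} \sum_{s \geq 0}\sum_{j=0}^{q-1} N^L_{p,s+1+j}\, N^R_{s,q-1-j}\, t^{2 j f_2 k}
\]
is correct, but the completion you propose has a real gap. The ratio bound $N^R_{s,\,q-1-j}/N^R_{s,q}\leq C(t)^{j+1}$ is \emph{false}: already for $j=0$ and $s=0$, Lemma~\ref{lm:qdecay} forces $N^R_{0,q-1}/N^R_{0,q}\geq C_0^{-1}t^{-2(q-1)}$, which for $q\sim a_2 k$ is exponentially large in $k$, not bounded. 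Appendix~\ref{sec:norm-ratios-appendix} does not provide what you need here: its Lemmas~\ref{lm:qdecay} and~\ref{lm:rlb} give two-sided bounds $N^k_{0,q+1}/N^k_{0,q}\in[C_1 t^{2q},C_0 t^{2q}]$, while the main Theorem~\ref{thm:mainec} concerns convergence of $\pi^k_{p,q}$ in $k$, a different statement entirely. In particular, the lower bound of Lemma~\ref{lm:rlb} is only proved for first index $0$, so even the corrected version of your comparison (tracking the $t^{-2qj}$ blowup and absorbing it into $t^{2(j+1)f_2 k}$ via $q<a_2 k<f_2 k$) would need an extension of Lemma~\ref{lm:rlb} to general first index that the paper never establishes. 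Your claimed final rate $t^{2\min(f_1,f_2)k}$ is correspondingly too strong; the true rate is $t^{2(f_2-a_2)k}$ for the $H^L$ term.

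The paper avoids all of this with a purely combinatorial argument that uses no input from Appendix~\ref{sec:norm-ratios-appendix}. It stratifies $H^L_{p,q}$ by the minimum height $h\geq 1$ reached in $R$ and builds an explicit map $H^{L,h}_{p,q}\to H^{L,h-1}_{p,q}$: locate the leftmost unbalanced up-step $s_1$ (necessarily in $L$) and the rightmost balanced step $b_1$ (necessarily in $R$, at most $q$ sites from the right end), flatten $b_1$ with its partner if it is a down-step, then swap $s_1\leftrightarrow b_1$. The swap lowers a stretch of length $d(s_1,b_1)>(f_2-a_2)k$ by one unit, so the area drops by at least $(f_2-a_2)k$, while at most three steps are altered, bounding each preimage by $\binom{(f_1+f_2)k}{3}$. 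This gives $N^{L,1}_{p,q}/AN_{p,q}<(f_1+f_2)^3 k^3 t^{2(f_2-a_2)k}$ directly, and summing the at most $q<f_2 k$ layers yields the superpolynomial decay. The argument is entirely self-contained within Section~\ref{sec:low-imb}; your route, even if repaired, would make the lemma depend on Appendix~\ref{sec:norm-ratios-appendix} in a way the paper's logical structure does not.
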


\begin{proof}[Proof of Lemma \ref{lm:low-imbalance-approximation-1}]
From \eqref{eq:three-classes-ugs}, see that the overlap between the approximate and exact ground states is given only by the terms in $H^{LR}_{p,q}$, and it is equal to
\begin{equation}
\inner{GS_{p,q}}{\normstate{H^{LR}_{p,q}}} = {\inner{\unnormstate{G_{p,q}}}{\unnormstate{H^{LR}_{p,q}}} \over \sqrt{N_{p,q} \; \norm{H^{LR}_{p,q}}}} = {\norm{H^{LR}_{p,q}} \over \sqrt{N_{p,q} \; \norm{H^{LR}_{p,q}}}} = \sqrt{\norm{H^{LR}_{p,q}} \over N_{p,q}}.
\end{equation}
Furthermore, from \eqref{eq:three-classes-ugs-norms}, it follows that the above can be expressed as
\begin{equation}\label{eq:app-ovlp}
\inner{GS_{p,q}}{\normstate{H^{LR}_{p,q}}} = \sqrt{N_{p,q} - \norm{H^{L}_{p,q}} - \norm{H^{R}_{p,q}} \over N_{p,q}},
\end{equation}
so that 
\begin{equation}
\quad 1 -|\inner{GS_{p,q}}{\normstate{H^{LR}_{p,q}}}|^2 = {\norm{H^{L}_{p,q}} \over N_{p,q}} + {\norm{H^{R}_{p,q}} \over N_{p,q}}.
\end{equation}
Since both ${\norm{H^{L}_{p,q}} / N_{p,q}}$ and ${\norm{H^{R}_{p,q}} / N_{p,q}}$ are positive quantities by definition, it suffices to show that they separately vanish fast enough, under the given conditions.\\

We will focus on the ${\norm{H^{L}_{p,q}} / N_{p,q}}$ term, and the argument for the other one will be analogous. The normalization factor $\norm{H^{L}_{p,q}}$ contains contributions from walks that only reach zero height in the left interval, but not in the right one. The minimum height they reach within the right interval must be a positive integer, and we can classify the walks by this minimum height.

\begin{definition}
		Let $H^{L,h}_{p,q}$ be the subcollection of walks that reach zero height in $L$, but only reach minimum height $h > 0$ in $R$.
\end{definition}
Since all walks under discussion must end at height $q$ due to the condition on unbalanced steps, we see that their minimum height within $R$ cannot be more than $q$. It follows immediately that $H^{L}_{p,q}$ is the disjoint union
\begin{equation}
H^{L}_{p,q} = \bigcup_{h = 1}^q H^{L,h}_{p,q},
\end{equation}
which implies the relation of normalization factors
\begin{equation}
\norm{H^{L}_{p,q}} = \sum_{h = 1}^q \norm{H^{L, h}_{p,q}}.
\end{equation}

The intuition here is that walks with larger $h$ must enclose correspondingly large areas (for example, at least $h$ times the length of $R$, guaranteed by the minimum height condition). Since the size of $R$ is $f_2 k$ and $t < 1$, this translates into exponentially small normalization factors $t^{2 A(w)}$ when $k$ is large: $\norm{H^{L}_{p,q}} \gg \norm{H^{L, 1}_{p,q}} \gg \norm{H^{L, 2}_{p,q}} \gg \dots$; to formalize this, begin by considering the relation between $\norm{H^{L}_{p,q}}$ and $\norm{H^{L, 1}_{p,q}}$:

\begin{lemma}[Weighted image bound] \label{lm:weighted-image-bound}
	Let $\arbset$ and $\arbset'$ be finite collections of Motzkin walks, and let $\Phi:\arbset \to \arbset'$ be a map. Suppose that there are constants $\Delta>0$ and $M>0$ such that
	\begin{equation}
		\area{\Phi(w)} \le \area{w} - \Delta \qquad \text{for every } w\in \arbset,
	\end{equation}
	and every $w'\in \arbset'$ has at most $M$ preimages under $\Phi$. Then
	\begin{equation}
		\norm{\arbset} \le M t^{2\Delta} \norm{\arbset'}.
	\end{equation}
\end{lemma}

\begin{proof}
	Since $t<1$, the area decrease gives $t^{2\area{w}} \le t^{2\Delta} t^{2\area{\Phi(w)}}$ for every $w\in \arbset$. Therefore
	\begin{equation}
		\norm{\arbset}
		= \sum_{w\in \arbset} t^{2\area{w}}
		\le t^{2\Delta} \sum_{w'\in \Phi(\arbset)} |\Phi^{-1}(w')| t^{2\area{w'}}
		\le M t^{2\Delta} \norm{\arbset'},
	\end{equation}
	as claimed.
\end{proof}

\begin{proposition} \label{pr:low-imbalance-lowering-ratio}
	The ratio of $\norm{H^{L,1}_{p,q}}$ to $\norm{H^{LR}_{p,q}}$ vanishes faster than polynomially in $k$:
	\begin{equation}
	\forall n \in \N: \qquad \qquad \lim\limits_{k \to \infty} \os k^n \cdot \sup\limits_{\substack{p < a_1 k \\ q < a_2 k}} \op {\norm{H^{L,1}_{p,q}} \over \norm{H^{LR}_{p,q}}} \cp \cs = 0.
	\end{equation}
\end{proposition}

\begin{proof}[Proof of Proposition \ref{pr:low-imbalance-lowering-ratio}]
The goal is to formalize the intuition that walks in $H^{L,1}_{p,q}$ will enclose larger areas than those belonging to $H^{LR}_{p,q}$, which gives them exponentially smaller weights, because $t<1$. However, there is not a one-to-one correspondence between walks in $H^{L,1}_{p,q}$ and those in $H^{LR}_{p,q}$, and in fact $\norm{H^{L,1}_{p,q}}$ may contain significantly more distinct terms than $\norm{H^{LR}_{p,q}}$.

The strategy is to construct a mapping $H^{L,1}_{p,q} \to H^{LR}_{p,q}$ with the following two properties:
\begin{itemize}
	\item It maps any walk in $H^{L,1}_{p,q}$ to one in $H^{LR}_{p,q}$, with area smaller by at least a linear function of $k$.
	\item The number of different walks from the domain that get mapped to the same target in $H^{LR}_{p,q}$ is at most polynomial in $k$.
\end{itemize} 
Once such a mapping is constructed, the ratio $\norm{H^{L,1}_{p,q}} / \norm{H^{LR}_{p,q}}$ is bounded above by a polynomial times an exponential in $k$, which will vanish even if multiplied by an additional $k^n$ factor. To construct the map, first establish an important property of unbalanced steps:
\begin{proposition} \label{pr:unbalanced-step-height}
	When counting unbalanced up-steps from left to right in a minimized walk (cf.\ Fig. \ref{fig:ground-state-heights}), the $i^\text{th}$ such step goes from height $i-1$ to $i$. Similarly, the $j^\text{th}$ unbalanced down step goes from height $p-j+1$ to $p-j$.
\end{proposition}
\begin{proof}[Proof of Proposition \ref{pr:unbalanced-step-height}]
	An up-step $y_u$, going from height $z$ to $z+1$ within a walk, is balanced if there exists a down-step to its right, which goes between $z + 1$ and $z$. We will call the nearest such down-step $y_d$ (i.e. the leftmost one that is still to the right of $y_u$) its \emph{balancing partner}. A step is unbalanced if it has no such balancing partner.
	
	It follows that, if we have an \emph{unbalanced} up step $y$ going between $z$ and $z+1$, there is no partner to its right that goes back down to height $z$ or lower. The entire portion of the walk to the right of $y$ is only situated at heights $z + 1$ or higher. Moreover, since the step $y$ is assumed to end at height $z+1$, the portion to its right must start at this height; then, the minimum height of this portion is precisely $z+1$.
	
	As a consequence, for any $z \in \{1,2,\dots q - 1\}$ there is a unique unbalanced up step going between $z$ and $z+1$. (To see why, assume the contrary and take two such distinct steps; the rightmost one has an end at height $z$, contradicting the conclusion of the previous paragraph). The first statement of Proposition \ref{pr:unbalanced-step-height} follows, and an analogous argument also proves the second claim.  
\end{proof}


For a given value of $h$, take any walk $w \in H^{L,h}_{p,q}$, and let $s_i$ be the $i^\text{th}$ unbalanced up-step of $w$. The $s_i$ are constrained to lie in $L$ or $R$ respectively, based on the value of $i$, as follows:

\begin{proposition} \label{pr:unbalanced-steps-location}
	The first $h$ unbalanced steps $\{s_1, s_2, \dots s_h\}$ are located in the $L$ subsegment, and the other ones $\{s_{h+1}, \dots s_q\}$ are in R.
\end{proposition}
\begin{proof}[Proof of Proposition \ref{pr:unbalanced-steps-location}]
	The portion of $w$ that lies in the $R$ segment reaches minimum height $h$ by assumption, while from Proposition \ref{pr:unbalanced-step-height} we know that $s_h$ goes between heights $h-1$ and $h$. Therefore $s_h$ cannot be in $R$, and neither can all the previous unbalanced up steps $\{s_1, s_2, \dots s_{h-1}\}$; all of them must be found in $L$. On the other hand, from the proof of the same Proposition, we find that the portion of $w$ to the right of $s_{h+1}$ only lies at heights $h+1$ and above. This portion cannot contain all the steps in $R$, since by assumption some of them reach height $h$. Therefore $s_{h+1}$ must be contained in $R$. All other unbalanced up-steps $\{s_{h+2}, \dots s_q\}$ are to the right of $s_{h+1}$, so also in $R$. We conclude that $w$ has $h$ unbalanced up steps in $L$, and the other $q - h$ in $R$.	
\end{proof}

\emph{Note:} The result above, with a general value of $h$, is useful when bounding the ratio $\norm{H^{L,h}_{p,q}} / \norm{H^{L,h-1}_{p,q}}$. For the current argument, it suffices to use the $h = 1$ result, which says that walks in $H^{L,1}_{p,q}$ have a single unbalanced up step in $L$, and the others in $R$.\\

To describe the mapping process, consider an arbitrary walk $w \in H^{L,1}_{p,q}$, and let $b_1$ be its rightmost balanced step. We establish that $b_1$ is separated from $s_1$ by a number of steps that grows linearly with $k$:

\begin{proposition} \label{pr:separation-b1-s1}
	The step $b_1$ is located in the $R$ segment, and the distance $d(s_1, b_1)$ between the $s_1$ and $b_1$ steps obeys the following:
	\begin{equation} \label{eq:distance-bound}
	d(s_1, b_1) > f_2 k - q > (f_2 - a_2) k
	\end{equation}
\end{proposition} 
\begin{proof}[Proof of Proposition \ref{pr:separation-b1-s1}]
	There are $f_2 k$ total steps in $R$ but, by Proposition \ref{pr:unbalanced-steps-location}, only $q - 1$ unbalanced ones. Since we have $q < a_2 k < f_2 k$ by assumption, there must also exist balanced steps in $R$. In particular, since it is the rightmost one, $b_1$ is in $R$. As there are only $q-1$ unbalanced up steps in that subsegment, $b_1$ must be at most $q - 1$ positions away from the rightmost end of the chain. Meanwhile, $s_1$ is in $L$, so it is at least the size of $R$ (namely, $f_2 k$) positions away from the right end of the chain. Therefore the distance between $s_1$ and $b_1$ is bounded below by $(f_2 - a_2) k$, as claimed.
\end{proof}

Since any balanced up step has a down partner to its right, which is also balanced itself, the last balanced step ($b_1$) cannot be up; it may only be flat or down. See Fig. \ref{fig:typical-hl1-walk} for an example.
	
\begin{figure}[t]
		
		\centering
		
		\scalebox{.37}{\includegraphics{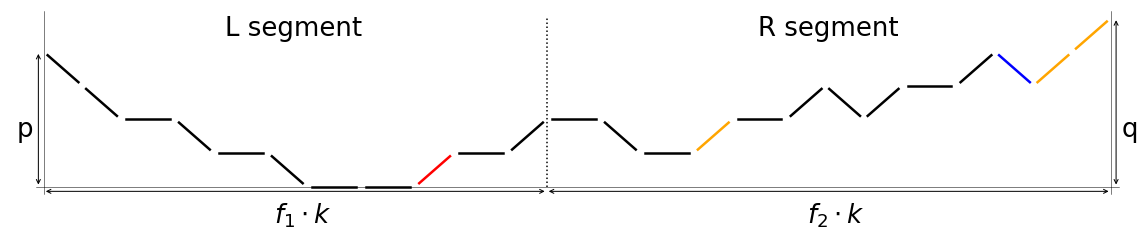}}	
		\caption{Typical walk in $H^{L,1}_{p,q}$. The leftmost unbalanced up step $s_1$ is in the $L$ segment, colored in red. The other unbalanced up-steps are in orange. The rightmost balanced step $b_1$ is blue.}
		\label{fig:typical-hl1-walk}
		
\end{figure}	
	
For the mapping, we will need $b_1$ to be flat. If it is down instead, find its flattening partner $b_2$ (which must be an up-step to its left), and replace them both by flat steps; see Fig.~\ref{fig:h1_single_side_zero} (a). This procedure does not affect the number of unbalanced steps that the walk $w$ has, nor its minimum height in the $R$ segment, and therefore all the previous conclusions are still valid.
	
	
	
	

\begin{figure}[t]
	\centering	
	\scalebox{.4}{\includegraphics{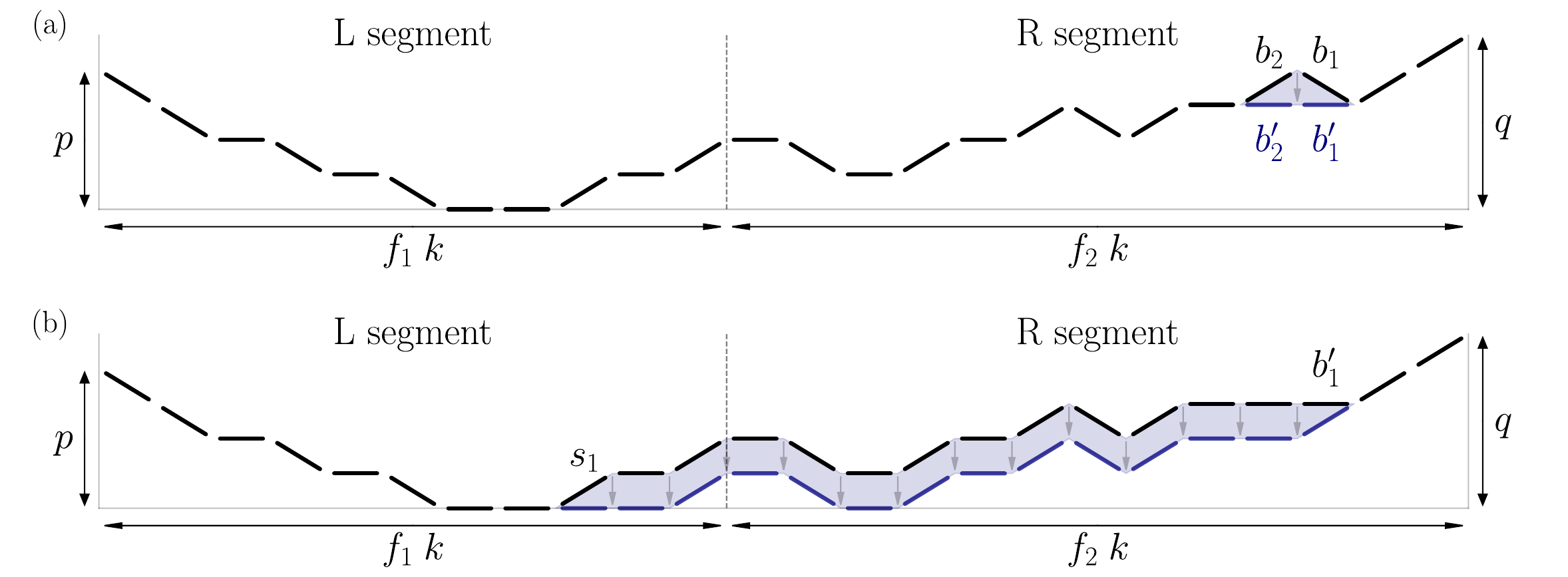}}%
	\caption{Swap argument in the low-imbalance case. In (a), the rightmost balanced step $b_1$ is flattened together with its balanced partner $b_2$ to $b_1'$ and $b_2'$, respectively. This reduces the area by the shaded part. In (b), the unbalanced step $s_1$ on the left segment is swapped with the rightmost flat step $b_1'$, which shifts the shaded region shifts down by one. }
	\label{fig:h1_single_side_zero}
    \end{figure}

Observe that all the points where the walk $w$ reaches height 1 must either belong to $b_1$ or be to its left. This is because all the steps to the right of $b_1$ are ascending by assumption, so they will never return to the height where $b_1$ is located. This height must be at least 1 by the assumption that $w \in H^{L,1}_{p,q}$.
	
The main operation is exchanging the steps $s_1$ and $b_1$. Since $s_1$ was up but $b_1$ was flat, this swap will lower the height of the portion between them by one unit. Everything else will remain at the same levek; see Fig. \ref{fig:h1_single_side_zero} (b). Call the resulting walk $w'$.

		
		
	
\begin{proposition} \label{pr:mapping-is-correct}
	The walk $w'$ obtained through the process described above belongs to the collection $H_{p,q}^{LR}$.
\end{proposition}	
\begin{proof}[Proof of Proposition \ref{pr:mapping-is-correct}]
	First we establish that $w'$ still has the same numbers $(p,q)$ of unbalanced steps. Since for a minimized walk these are equal to the starting and ending heights, and the endpoints of our walk are not affected by the swap, it suffices to argue that $w'$ is still minimized. Namely, we argue that the minimal overall height of $w'$ is still zero. This is true because the section that got shifted down was to the right of $s_1$, so by Proposition \ref{pr:unbalanced-step-height} it had a minimum height of 1 before the shift. After the shift, this minimum height will be reduced by one unit, to zero. The rest of the walk was not changed, and since $w$ was minimized, no part of it went below zero height. Therefore the overall minimal height of $w'$ is also zero, so $w'$ is indeed minimized.
	
	The second property that we must check is that $w'$ reaches zero height within both the $L$ and $R$ segments. It has been argued above that all the points where $w$ reached a height of 1 must have been to the left of $b_1$. At least one such point must have been in $R$ by the assumption $w \in H^{L,1}_{p,q}$, so in particular it was to the right of $s_1$, i.e. in the section that got shifted down. After the swap it is found at zero height, and so $w'$ now reaches zero height within $R$. On the other hand, the left endpoint of $s_1$, which lies in $L$, had been at zero height by Proposition \ref{pr:unbalanced-step-height}. That point is not affected by the swap, so $w'$ also reaches zero height in $L$ and the proof is complete. 
\end{proof}
	
Note that it is straightforward to generalize the above to a mapping $H^{L,h+1}_{p,q} \to H^{L,h}_{p,q}$. Locate the rightmost balanced step, flatten it (along with its partner) if required, and then swap it with $s_1$.

Now we turn to analyzing the area of $w'$. The portion that got lowered by one unit of height was seen in Prop. \ref{pr:separation-b1-s1} to have length larger than $(f_2 - a_2) k$, and so
\begin{equation}\label{eq:area-bounds}
A (w') < A(w) - (f_2 - a_2) k
\end{equation}
Observe that if the extra flattening step is performed before the swap, this only gives a further reduction of the area (Fig. \ref{fig:h1_single_side_zero}) and so the bound above still holds true. In any case, the described swap procedure only changes two (if no flattening is needed) or three (including flattening) steps of the original walk $w$.

\begin{figure}[t]
	\centering	
	\subfigure[]{\scalebox{.37}{\includegraphics{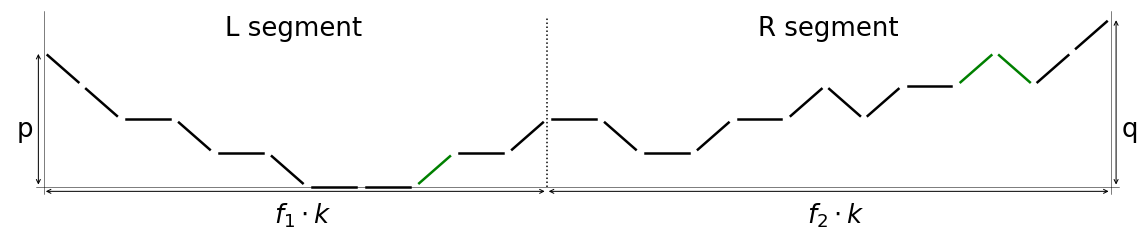}}}%
	
	\subfigure[]{\scalebox{.37}{\includegraphics{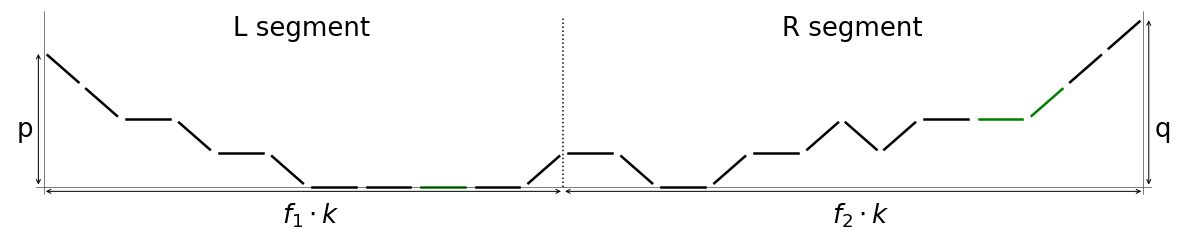}}}%
	
	\caption{The initial walk $w$ (a) and the fully modified one $w'$ (b). Only the three steps shown in green have been changed, but the enclosed area has been greatly reduced. This means that $w$ contributes less than $w'$ to the ground state (eq. \eqref{eq:low-imb-bound-g}).}
	\label{fig:hl1-lowering-summary}
\end{figure}
	
The constructed mapping is not injective, as several different choices of $w$ can lead to the same resulting $w'$. To obtain a valid relation between normalization factors, we must bound the cardinality of the preimage for an arbitrary $w'\in H^{LR}_{p,q}$. We have established that the mapping changes at most three steps in the entire walk. So there are at most ${(f_1 + f_2) k \choose 3}$ choices for the locations at which changes are operated. Each change is uniquely specified, and therefore no more than ${(f_1 + f_2) k \choose 3}$ elements of $H^{L,1}_{p,q}$ get mapped to the same target in $H^{LR}_{p,q}$.

Applying Lemma \ref{lm:weighted-image-bound} with $\arbset=H^{L,1}_{p,q}$, $\arbset'=H^{LR}_{p,q}$, $\Delta=(f_2-a_2)k$, and $M={(f_1+f_2)k\choose 3}$ yields
\begin{equation}\label{eq:low-imb-bound-p}
{\norm{H^{L,1}_{p,q}} \over \norm{H^{LR}_{p,q}}} \le {(f_1 + f_2) k \choose 3} t^{2 k (f_2 - a_2)} < (f_1 + f_2)^3 \cdot k^3 t^{2 k (f_2 - a_2)}.
\end{equation}
This holds uniformly in $p < a_1 k$ and $q < a_2 k$, so
\begin{equation}\label{eq:low-imb-bound-p-sup}
\sup\limits_{\substack{p < a_1 k \\ q < a_2 k}} \op {\norm{H^{L,1}_{p,q}} \over \norm{H^{LR}_{p,q}}} \cp < (f_1 + f_2)^3 \cdot k^3 t^{2 k (f_2 - a_2)}.
\end{equation} 
Since $f_2 - a_2 > 0$ and $t<1$, the RHS of the above goes to zero exponentially fast as $k \to \infty$, completing the proof of Proposition \ref{pr:low-imbalance-lowering-ratio}.
\end{proof}
	
As suggested at various points in the proof, the result generalizes to higher values of $h$:
\begin{corollary} \label{pr:low-imbalance-lowering-ratio-general-h}
	For every $h\ge 1$,
	\begin{equation}\label{eq:low-imb-bound-g}
	\norm{H^{L,h+1}_{p,q}}
	\le (f_1 + f_2)^3 \cdot k^3 t^{2 k (f_2 - a_2)} \norm{H^{L,h}_{p,q}}
	\end{equation}
	uniformly over $p<a_1k$ and $q<a_2k$.
\end{corollary}

\begin{proof}
	The map is the same lowering map as in the proof of Proposition \ref{pr:low-imbalance-lowering-ratio}, applied to a walk whose minimum height in $R$ is $h+1$. It lowers the relevant interval by one unit, thereby decreasing the minimum height in $R$ from $h+1$ to $h$, while changing at most three steps. The area drop is again at least $(f_2-a_2)k$, and the preimage bound is unchanged. Lemma \ref{lm:weighted-image-bound} gives the claim.
\end{proof}

We use Corollary \ref{pr:low-imbalance-lowering-ratio-general-h} to finish the proof of Lemma \ref{lm:low-imbalance-approximation-1}. For $k$ large enough, the coefficient on the RHS of \eqref{eq:low-imb-bound-g} is below $1$, and therefore $\norm{H^{L,h+1}_{p,q}}\le \norm{H^{L,h}_{p,q}}$ for all $h\ge 1$. Since $H^L_{p,q}=\bigsqcup_{h=1}^q H^{L,h}_{p,q}$ and $q<f_2k$, Proposition \ref{pr:low-imbalance-lowering-ratio} implies
\begin{equation}
\norm{H^L_{p,q}}
= \sum_{h=1}^q \norm{H^{L,h}_{p,q}}
\le q\, \norm{H^{L,1}_{p,q}}
< f_2 (f_1+f_2)^3 k^4 t^{2k(f_2-a_2)}\norm{H^{LR}_{p,q}}.
\end{equation}
The analogous right-bound gives
\begin{equation}
\norm{H^R_{p,q}}
< f_1 (f_1+f_2)^3 k^4 t^{2k(f_1-a_1)}\norm{H^{LR}_{p,q}}.
\end{equation}
Since $N_{p,q}\ge \norm{H^{LR}_{p,q}}$, it follows that
\begin{equation}
{\norm{H^L_{p,q}} \over N_{p,q}}
< f_2 (f_1+f_2)^3 k^4 t^{2k(f_2-a_2)},
\qquad
{\norm{H^R_{p,q}} \over N_{p,q}}
< f_1 (f_1+f_2)^3 k^4 t^{2k(f_1-a_1)}.
\end{equation}
Combining these estimates with the overlap identity above gives, uniformly for $p<a_1k$ and $q<a_2k$,
\begin{equation}
1 - |\inner{GS_{p,q}}{\normstate{H^{LR}_{p,q}}}|^2
< f_2 (f_1+f_2)^3 k^4 t^{2k(f_2-a_2)}
+ f_1 (f_1+f_2)^3 k^4 t^{2k(f_1-a_1)}.
\end{equation}
Both terms on the RHS decay exponentially fast in $k$, and hence
\begin{equation}
\forall n \in \N: \qquad
\lim\limits_{k \to \infty} \os k^n \cdot \sup\limits_{\substack{p<a_1k \\ q<a_2k}} \ob 1 - |\inner{GS_{p,q}}{\normstate{H^{LR}_{p,q}}}|^2 \cb \cs = 0.
\end{equation}
\end{proof}

We will now use the result of Lemma \ref{lm:low-imbalance-approximation-1} to prove Lemma \ref{lm:low-imbalance-approximation-trunc}. For the proof, we will require a factorization of the walk sets in $H^{LR}_{p,q}$.

\begin{proposition} \label{pr:decomposition-of-walks}
	Given Assumption \ref{as:low-imb-half-split-constants} and Notation \ref{not:half-split-k-segments}, let $I^r_{p,q}\subset H^{LR}_{p,q}$ be the set of walks whose height at the $L|R$ interface is $r$. Then
	\begin{equation} \label{eq:breakup-of-hlr}
	H^{LR}_{p,q}=\bigsqcup_r I^r_{p,q}.
	\end{equation}
	Moreover, concatenation gives a bijection
	\begin{equation}
	G^L_{p,r}\times G^R_{r,q}\longrightarrow I^r_{p,q}.
	\end{equation}
	Consequently,
	\begin{equation} \label{eq:sum-in-low-approximate-gs}
	\ket{\unnormstate{H^{LR}_{p,q}}}
	= \sum_r \ket{\unnormstate{G^L_{p,r}}}\otimes \ket{\unnormstate{G^R_{r,q}}}
	= \sum_r \ket{\unnormstate{G^L_{p,r}}}\ket{\unnormstate{G^R_{r,q}}},
	\end{equation}
	and
	\begin{equation} \label{eq:normalization_factorization}
	\norm{H^{LR}_{p,q}} = \sum_r N^L_{p,r}\,N^R_{r,q}.
	\end{equation}
\end{proposition}

\begin{proof}[Proof of Proposition \ref{pr:decomposition-of-walks}]
	The decomposition into the disjoint union \eqref{eq:breakup-of-hlr} follows by classifying each walk according to its height $r$ at the separation point between $L$ and $R$. Here $r$ runs from $0$ to $\min(|L|-p,|R|-q)$; since $|L|=f_1k$, $|R|=f_2k$, $p<a_1k<(f_1-b)k$, and $q<a_2k<(f_2-b)k$, this range contains all $r<bk$.

	Every walk in $I^r_{p,q}$ starts from height $p$ on the left, reaches height $r$ at the interface, and ends at height $q$, while also reaching zero height within both $L$ and $R$. It can therefore be viewed as the concatenation of a walk $w_1\in G^L_{p,r}$ and a walk $w_2\in G^R_{r,q}$. Conversely, any pair $(w_1,w_2)\in G^L_{p,r}\times G^R_{r,q}$ concatenates to a walk in $I^r_{p,q}$. This proves the bijection.

	If $w=w_1+w_2$ is such a concatenation, then $\ket{w}=\ket{w_1}\ket{w_2}$ and, because the two walks have matching interface height, the areas add:
	\begin{equation}
	\area{w_1+w_2}=\area{w_1}+\area{w_2}.
	\end{equation}
	Using Definition \ref{def:walk_set_definitions}, this gives \eqref{eq:sum-in-low-approximate-gs}. Taking squared norms of both sides, and using orthogonality of the different interface-height sectors, gives \eqref{eq:normalization_factorization}.
\end{proof}

\begin{definition} \label{def:low-imbalance-truncated-set}
	With $I^r_{p,q}$ as in Proposition \ref{pr:decomposition-of-walks}, define
	\begin{equation}
	H^{LR,<b}_{p,q}:=\bigsqcup_{r<bk} I^r_{p,q}.
	\end{equation}
	Under the concatenation bijection of Proposition \ref{pr:decomposition-of-walks}, this is the same truncated set as $\atgset_{p,q}$ from Definition \ref{def:atgs-definition}.
\end{definition}

\begin{remark}
	The reason why it was imposed in the first place that $p < a_1 k < (f_1 - b) k $ was to allow for the existence of ground states on $L$ with $(p,r)$ unbalanced steps for all $r < bk$. The same goes for $R$ and ground states with $(r,q)$ unbalanced steps.
\end{remark}

\subsection{Proof of the section's main result}

\begin{proof}[Proof of Lemma \ref{lm:low-imbalance-approximation-trunc}]
By Definition \ref{def:low-imbalance-truncated-set}, $H^{LR,<b}_{p,q}\subset H^{LR}_{p,q}$. Hence Definition \ref{def:walk_set_definitions} gives the overlap identity
\begin{equation}
\left|\inner{\normstate{H^{LR}_{p,q}}}{\normstate{H^{LR,<b}_{p,q}}}\right|^2
= {\norm{H^{LR,<b}_{p,q}} \over \norm{H^{LR}_{p,q}}}.
\end{equation}
Equivalently,
\begin{equation}\label{eq:low-imb-trunc-tail}
1-\left|\inner{\normstate{H^{LR}_{p,q}}}{\normstate{H^{LR,<b}_{p,q}}}\right|^2
= {\sum_{r\ge bk} \norm{I^r_{p,q}} \over \sum_r \norm{I^r_{p,q}}}.
\end{equation}

We now estimate the tail in \eqref{eq:low-imb-trunc-tail}. The concatenation bijection in Proposition \ref{pr:decomposition-of-walks} gives
\begin{equation}
\norm{I^r_{p,q}}=N^L_{p,r}N^R_{r,q}.
\end{equation}
There is also a direct lowering map from $I^{r+1}_{p,q}$ to $I^r_{p,q}$. Take a walk $w\in I^{r+1}_{p,q}$, view it as a concatenation of $w_1\in G^L_{p,r+1}$ and $w_2\in G^R_{r+1,q}$, and locate the leftmost unbalanced up step of $w_1$ together with the rightmost unbalanced down step of $w_2$. By Proposition \ref{pr:unbalanced-step-height}, the portion between these two steps lies at heights at least $1$, so flattening those two steps lowers the central peak by one unit. This maps $w$ to a walk in $I^r_{p,q}$, decreases the area by at least $2r-1$, and changes only two steps.

The preimage of any target walk is therefore bounded by ${(f_1+f_2)k\choose 2}$. Applying Lemma \ref{lm:weighted-image-bound} with $\arbset=I^{r+1}_{p,q}$, $\arbset'=I^r_{p,q}$, and $\Delta=2r-1$ gives
\begin{equation} \label{eq:low-imb-central-peak-contribution}
{\norm{I^{r+1}_{p,q}} \over \norm{I^r_{p,q}}}
< {(f_1+f_2)k\choose 2} t^{2(2r-1)}.
\end{equation}
If $r\ge bk$, then the exponent grows linearly with $k$. Bounding the combinatorial factor by $(f_1+f_2)^2k^2$ and summing over at most $(f_1+f_2)k$ possible values of $r$, we find
\begin{equation}
\sup\limits_{\substack{p<a_1k \\ q<a_2k}}
\ob 1-\left|\inner{\normstate{H^{LR}_{p,q}}}{\normstate{H^{LR,<b}_{p,q}}}\right|^2 \cb
< (f_1+f_2)^3 k^3 t^{4bk-2}.
\end{equation}
Due to the exponential factor,
\begin{equation}
\forall n\in \N: \qquad
\lim\limits_{k\to\infty}\os k^n\cdot
\sup\limits_{\substack{p<a_1k \\ q<a_2k}}
\ob 1-\left|\inner{\normstate{H^{LR}_{p,q}}}{\normstate{H^{LR,<b}_{p,q}}}\right|^2 \cb\cs=0.
\end{equation}
Combining this with Lemma \ref{lm:low-imbalance-approximation-1} and Lemma \ref{lm:approx-of-approx}, and using that $H^{LR,<b}_{p,q}$ agrees with $\atgset_{p,q}$, gives
\begin{equation}
\forall n \in \N: \qquad \qquad
\lim\limits_{k \to \infty} \os k^n \cdot \sup\limits_{\substack{p < a_1 k \\ q < a_2 k}} \ob 1 - |\inner{GS_{p,q}}{\normstate{\atgset_{p,q}}}|^2 \cb \cs = 0,
\end{equation}
which is the claim.
\end{proof}

\section{High-imbalance approximations} \label{sec:high-imb}
Having covered the regime of low-$p$ and low-$q$, it remains to find a complementary approximation lemma, which applies when either the high-$p$ or the high-$q$ regime is present. This situation is simpler, since the ground state will be dominated by walks whose unbalanced steps are pushed outwards, and so it approximately factorizes. The setup is the following:

\begin{assumption} \label{as:high-imb-half-split-constants}
	Let $f_1$, $f_2 > 0$ be given constants, and fix some small $c$ with $0 < c < 1/4 - 2b$.
\end{assumption}

Notation \ref{not:half-split-k-segments} remains in the same form.

\begin{definition} \label{def:pgs-definition}
	For every segment in Notation \ref{not:half-split-k-segments} we define an approximate ground state with $(p,q)$ unbalanced steps, which will be useful in the high-$q$ regime, as
	\begin{equation}\label{eq:pgsr-definition}
	\ket{PGS^{(R)}_{p,q}} =  \ket{GS^L_{p, q - f_2 k}} \otimes \oa \ket{u} \ca^{ \otimes f_2 k}
	\end{equation}
	where the $k$ label was suppressed for simplicity in the naming of all states. The main property of this state is that it contains exclusively up steps in the $R$ segment. The analogous state which is useful in the high-$p$ regime is
	\begin{equation}\label{eq:pgsl-definition}
	\ket{PGS^{(L)}_{p,q}} =  \oa \ket{d} \ca^{ \otimes f_1 k} \otimes \ket{GS^R_{p - f_1 k, q}}
	\end{equation}	
	with exclusively down steps in the $L$ segment.
\end{definition}

The result of this section is:

\begin{lemma} \label{lm:approx-high}
	Given Assumption \ref{as:high-imb-half-split-constants} and Notation \ref{not:half-split-k-segments}, and with Definitions \ref{def:good-approximation} and \ref{def:pgs-definition}, the product state in \eqref{eq:pgsr-definition} superpolynomially approximates the true ground state when $q>(f_2 + c)k$:
	\begin{equation} \label{eq:high-q-approximation-lemma}
	\forall n \in \N: \qquad \qquad \lim\limits_{k \to \infty} \os k^n \cdot \sup\limits_{q > (f_2 + c) k} \ob 1 - |\inner{GS_{p,q}}{PGS^{(R)}_{p,q}}|^2 \cb \cs = 0
	\end{equation}
	and similarly the state in \eqref{eq:pgsl-definition} is a good approximation when $p > (f_1 + c)k$:
	\begin{equation} \label{eq:high-p-approximation-lemma}
	\forall n \in \N: \qquad \qquad \lim\limits_{k \to \infty} \os k^n \cdot \sup\limits_{p > (f_1 + c) k} \ob 1 - |\inner{GS_{p,q}}{PGS^{(L)}_{p,q}}|^2 \cb \cs = 0
	\end{equation}	 
\end{lemma}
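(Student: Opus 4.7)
The plan is to prove \eqref{eq:high-q-approximation-lemma}; the claim \eqref{eq:high-p-approximation-lemma} follows by the reflection symmetry exchanging $L \leftrightarrow R$ and $u \leftrightarrow d$. The strategy is to decompose $\ket{GS_{p,q}}$ according to the number $m$ of unbalanced up-steps lying in the segment $L$, and to show that under $q > (f_2 + c)k$ the minimal value $m_\ast = q - f_2 k$ gives the overwhelmingly dominant contribution---this minimal class is exactly the support of $\ket{PGS^{(R)}_{p,q}}$.

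\textbf{Step 1: decomposition and explicit overlap.} For each integer $m$ with $m_\ast \le m \le \min(f_1 k, q)$, let $G_{p,q}^m \subset G_{p,q}$ be the subset of walks with exactly $m$ unbalanced ups in $L$; by Proposition \ref{pr:unbalanced-step-height} this is well-defined and the $\{G_{p,q}^m\}$ partition $G_{p,q}$. Each $w \in G_{p,q}^m$ factors as $w = w_L + w_R$ with $w_L \in G^L_{p,m}$ and $w_R$ a walk on $R$ from height $m$ to $q$ with $q - m$ unbalanced ups and no unbalanced downs. Writing $N^R(m) = \sum_{w_R} t^{2A(w_R)}$, area additivity gives $N^{(m)}_{p,q} = N^L_{p,m}\cdot N^R(m)$. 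For $m = m_\ast$ the walk $w_R$ is forced to be the unique staircase $u^{\otimes f_2 k}$ with area $A_R^\ast$, so $\ket{UGS^{(m_\ast)}_{p,q}} = t^{A_R^\ast}\,\ket{UGS^L_{p,m_\ast}}\otimes \ket{u}^{\otimes f_2 k}$, and a direct calculation gives
\begin{equation}
1 - |\inner{GS_{p,q}}{PGS^{(R)}_{p,q}}|^2 \;=\; \sum_{m>m_\ast} \frac{N^{(m)}_{p,q}}{N_{p,q}} \;\le\; \sum_{m>m_\ast} \frac{N^L_{p,m}}{N^L_{p,m_\ast}}\cdot \frac{N^R(m)}{t^{2A_R^\ast}}.
\end{equation}

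\textbf{Step 2: $R$-side area lowering.} The core estimate is
\begin{equation}
\frac{N^R(m)}{t^{2A_R^\ast}} \;\le\; \mathrm{poly}(k)\cdot t^{(m-m_\ast)\, f_2 k}
\end{equation}
for each $m > m_\ast$. The key point is that a walk $w_R$ on $R$ cannot dip below its starting height $m$: all down-steps in $R$ must be balanced, and an adaptation of Proposition \ref{pr:unbalanced-step-height} shows that their matching ups must lie in $R$ at heights $\ge m$ (any candidate matching up in $L$ at height $m-1\to m$ is forced to pair with a corresponding down already in $L$, due to the structural constraint that after the $m$-th unbalanced up in $L$ the walk stays at height $\ge m$). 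A direct comparison of minimal-area configurations then shows $A(w_R) \ge A_R^\ast + (m-m_\ast)\cdot f_2 k/2 + O(k)$, while the number of admissible walks on $R$ contributing at each area level is polynomial in $k$ by a flattening-and-swap construction in the spirit of Proposition \ref{pr:low-imbalance-lowering-ratio}: exchanging the rightmost unbalanced up in $L$ with a non-up step in $R$ injects walks in $G_{p,q}^{m}$ into $G_{p,q}^{m-1}$ with polynomial preimage cardinality.

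\textbf{Step 3: $L$-side control and the main obstacle.} To close the argument we need an estimate of the form
\begin{equation}
\frac{N^L_{p,m}}{N^L_{p,m_\ast}} \;\le\; t^{-\alpha(m-m_\ast)k}, \qquad \alpha < f_2,
\end{equation}
so that the product in Step 1 decays as $t^{\Omega(k)(m-m_\ast)}$. Obtaining this is the main obstacle: raising $m$ simultaneously restricts the walks in $L$ (reducing their number) and raises their minimum area, and the balance between these two effects is delicate, especially for small $p$ where ratios can initially grow before being suppressed by area weighting. Precisely such control is supplied by the normalization analysis of Section \ref{sec:normalizations} and the inhomogeneous-diffusion estimates of Appendix \ref{sec:norm-ratios-appendix}; the assumption $c < 1/4 - 2b$ in Assumption \ref{as:high-imb-half-split-constants}, together with $b < 1/(4\beta)$ where $\beta>0$ is the constant of Theorem \ref{thm:mainec}, is precisely calibrated so that $\alpha$ can be taken strictly below $f_2$. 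Combining Steps 2 and 3 yields $N^{(m)}_{p,q}/N^{(m_\ast)}_{p,q} \le \mathrm{poly}(k)\cdot t^{c'k(m-m_\ast)}$ for some $c' > 0$, and summing the resulting geometric series over $m > m_\ast$ produces a bound that vanishes faster than any inverse polynomial in $k$, uniformly in $q > (f_2+c)k$, establishing \eqref{eq:high-q-approximation-lemma}. A symmetric argument gives \eqref{eq:high-p-approximation-lemma}.
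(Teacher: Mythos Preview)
There is a genuine gap: your Step 2 estimate is false. Take $m = m_\ast + 1$. Then $w_R$ starts at height $m_\ast + 1$, and the walk consisting of one flat step followed by $f_2 k - 1$ up-steps has area $A_R^\ast + 1$, so $N^R(m_\ast+1)/t^{2A_R^\ast} \ge t^{2}$, a constant---not $\mathrm{poly}(k)\,t^{f_2 k}$. More generally the $R$-side ratio carries no exponent linear in $k$: the shift-up by $m - m_\ast$ units and the loss of $m-m_\ast$ forced up-steps nearly cancel in area. Your Step 3, framed as needing $\alpha < f_2$ to beat a nonexistent $t^{f_2 k}$ from Step 2, is therefore built on a false premise.

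In fact all the exponential-in-$k$ decay comes from the $L$-side, and trivially so: since $m \ge m_\ast > ck$, Lemma~\ref{lm:qdecay} gives $N^L_{p,m+1}/N^L_{p,m} \le C_0 t^{2m} \le C_0 t^{2ck}$, hence $N^L_{p,m}/N^L_{p,m_\ast} \le (C_0 t^{2ck})^{m-m_\ast}$, which together with a crude polynomial bound on the $R$-side closes the argument. Neither Section~\ref{sec:normalizations} nor Theorem~\ref{thm:mainec} is invoked, and the constraint $c < 1/4 - 2b$ plays no role in this lemma. The paper sidesteps the factorization altogether by applying the swap to the \emph{full} walk: with $q > (f_2+c)k$ the leftmost unbalanced up-step $s_1$ lies at least $ck$ sites to the left of the $L$--$R$ boundary while the rightmost balanced step $b_1$ lies in $R$, and exchanging them maps $G^{z+1}_{p,q}\to G^z_{p,q}$ with area drop at least $ck$ and at most $\binom{(f_1+f_2)k}{3}$ preimages, giving $N^{z+1}_{p,q}/N^z_{p,q} \le (f_1+f_2)^3 k^3 t^{2ck}$ directly.
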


We begin with a preliminary discussion, and then proceed to prove the lemma. In what follows we will only discuss the high$-q$ case and prove eq. \eqref{eq:high-q-approximation-lemma}, as the argument for \eqref{eq:high-p-approximation-lemma} in the high-$p$ regime will be identical.\\

\subsection{Splitting the ground states}\label{ssec:hi-split-gs}
For any walk $w$ in the high-$q$ regime, all heights reached within the $R$ segment are relatively large (i.e. are bounded below by $ck$). Placing a flat or down step in this segment (as opposed to an up one) will carry a significant additional area cost. For this reason, we expect walks that do not exclusively contain up steps within $R$ to be exponentially suppressed. To make this precise, work with Notation \ref{not:half-split-k-segments}. We classify walks by the number of balanced steps they contain in the $R$ segment:
\begin{definition}
	Let $G^{S;\,z}_{p,q}$ be the collection of walks in $G^S_{p,q}$ that contain exactly $z$ balanced steps in the $R$ segment (see Figure \ref{fig:high-imbalance-subcollections}). Throughout the rest of the section, we will only take the segment $S$ as in Notation \ref{not:half-split-k-segments}, and we omit the corresponding $S$ label, to simplify the notation as $G^{z}_{p,q}\subset G_{p,q}$.
\end{definition}

\begin{figure}[t]
	\centering	
	\subfigure[]{\scalebox{.31}{\includegraphics{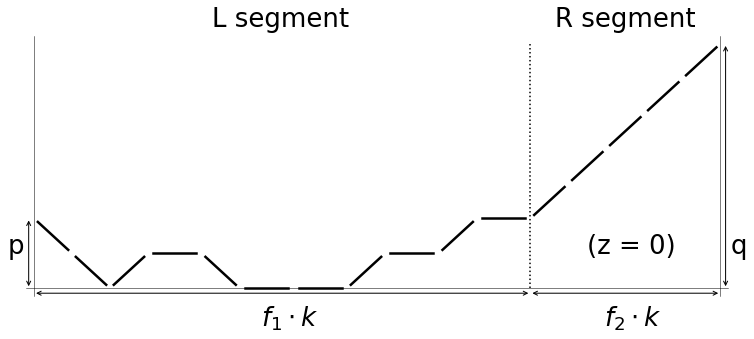}}}%
	\subfigure[]{\scalebox{.31}{\includegraphics{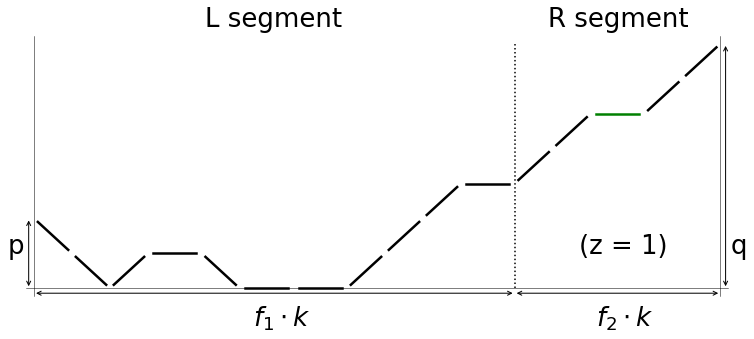}}}%
	
	\subfigure[]{\scalebox{.31}{\includegraphics{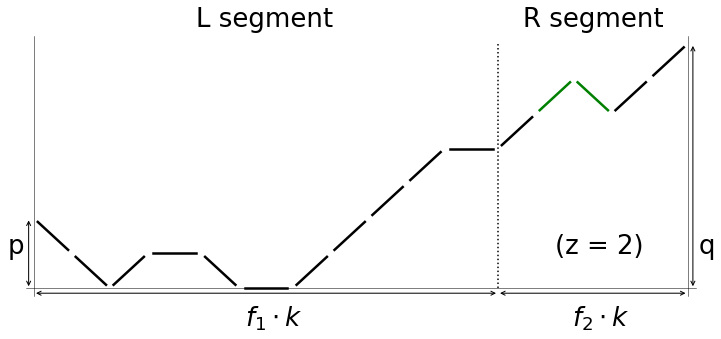}}}%
	\subfigure[]{\scalebox{.31}{\includegraphics{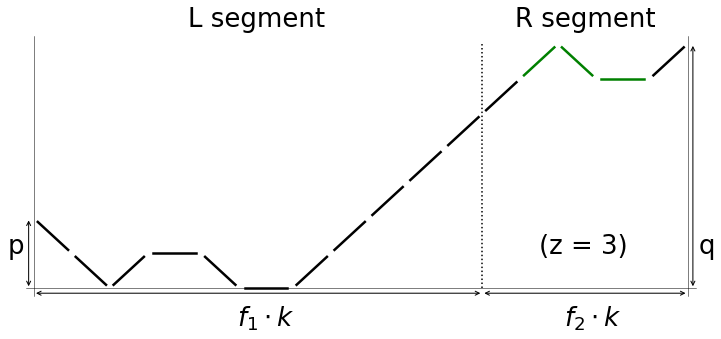}}}%
	
	\caption{Typical walks in $G^{z}_{p,q}$ for $z$ equal to 0, 1, 2, and 3. The balanced steps contributing to the count of $z$ are shown in green. As $z$ increases, the total area under typical walks becomes larger, and so their contribution within a ground state is exponentially suppressed (eq. \eqref{eq:high-imb-successive-ratio}).}
	\label{fig:high-imbalance-subcollections}
\end{figure}

From this definition it follows that $G_{p,q}$ is the disjoint union over $z$ of the $G^{z}_{p,q}$. Using Definition \ref{def:walk_set_definitions}, the unnormalized exact ground state decomposes as
\begin{equation} \label{eq:high-imbalance-exact-ugs}
\ket{\unnormstate{G_{p,q}}} = \sum_z \ket{\unnormstate{G^z_{p,q}}},
\end{equation}
where the $z$ index in the summation goes from zero up to, nominally, $f_2k$ (i.e. when all steps in the $R$ segment are balanced). Note that the subcollections $G^{z}_{p,q}$ with $z$ close to this upper limit of $f_2 k$ will often be empty due to constraints imposed by the total length of the spin chain; however this will not matter further in the argument.

Since the sets $G^z_{p,q}$ are disjoint, the normalization factor splits as
\begin{equation}\label{eq:high-imbalance-normalization-factor-split}
N_{p,q}=\norm{G_{p,q}}=\sum_z \norm{G^z_{p,q}}.
\end{equation}
Thus
\begin{equation}\label{eq:high-imbalance-exact-gs}
\ket{GS_{p,q}} = {1 \over \sqrt{N_{p,q}}} \sum_z \ket{\unnormstate{G^z_{p,q}}}.
\end{equation}
The claim is that, at large enough $q$, the $\norm{G^0_{p,q}}$ term dominates all others in \eqref{eq:high-imbalance-normalization-factor-split}. That enables us to approximate the true ground state by
\begin{equation}\label{eq:high-imbalance-0z-gs}
\ket{\normstate{G^0_{p,q}}}.
\end{equation}

\begin{proposition} \label{pr:equivalence-z0}
	The state $\ket{\normstate{G^0_{p,q}}}$ is identical to the $\ket{PGS^{(R)}_{p,q}}$ of eq. \eqref{eq:pgsr-definition}.
\end{proposition}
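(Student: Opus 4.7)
The plan is to reduce the identification to two cleanly-separated pieces: a structural fact about walks in $G^0_{p,q}$ in the high-$q$ regime, and a bookkeeping step that pushes this factorization through area weights and normalizations. The structural fact is that, under the hypothesis $q>f_2 k$, every step in the $R$-segment of a walk $w\in G^0_{p,q}$ must be an unbalanced up-step. Once this is in hand, the rest is direct: walks in $G^0_{p,q}$ are in bijection with walks in $G^L_{p,\,q-f_2 k}$ via restriction to $L$, the area decomposes additively with a constant $R$-contribution, and the definition of $\ket{GS^{(z=0)}_{p,q}}$ from \eqref{eq:high-imbalance-0z-gs} collapses to the product state $\ket{PGS^{(R)}_{p,q}}$.

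First I would prove the ordering property that in any minimized walk in $G_{p,q}$, every unbalanced up-step lies strictly to the right of every unbalanced down-step. This is a symmetric counterpart of Proposition \ref{pr:unbalanced-step-height}: for an unbalanced up-step from $z$ to $z+1$, the entire portion to its right stays at heights $\ge z+1$ (otherwise the walk would return to $z$ via a down-step, which would balance the up-step); symmetrically, for an unbalanced down-step from $z'+1$ to $z'$, the portion to its left stays at heights $\ge z'+1$. Assuming an unbalanced up-step were to the left of an unbalanced down-step makes these two height-confinements overlap on a common interval, forcing simultaneously $z\ge z'+1$ and $z'\ge z+1$, a contradiction. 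Now, in a walk $w\in G^0_{p,q}$ every step in $R$ is unbalanced by the defining condition $z=0$. Since $q>f_2 k$ and $R$ has only $f_2 k$ positions, at least one unbalanced up-step of $w$ lies in $L$; by the ordering property, no unbalanced down-step can lie to its right, so every unbalanced down-step of $w$ is also in $L$. Consequently all $f_2 k$ steps in $R$ are unbalanced up-steps, and the $R$-portion of $w$ is the fixed basis string $\ket u^{\otimes f_2 k}$.

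Next I would note that the restriction $w_L$ of $w$ to $L$ has exactly $p$ unbalanced down-steps and $q-f_2 k$ unbalanced up-steps, and that $w_L$ reaches height $0$ within $L$ (because the ascending $R$-portion starts at height $q-f_2 k>0$ and strictly increases, so the minimum of $w$ must be attained in $L$). Hence $w_L\in G^L_{p,\,q-f_2 k}$, and conversely every such $w_L$ extends to a unique element of $G^0_{p,q}$ by appending $f_2 k$ up-steps on $R$; the map $w\leftrightarrow w_L$ is therefore a bijection. The $R$-portion, being a fixed ascending staircase from height $q-f_2 k$ to $q$, contributes a constant area $C_k$ independent of $w_L$, giving $A(w)=A(w_L)+C_k$. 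Substituting into \eqref{eq:high-imbalance-0z-gs} yields
\[
\sum_{w\in G^0_{p,q}} t^{A(w)}\ket{w}\;=\;t^{C_k}\,\ket{UGS^L_{p,\,q-f_2 k}}\otimes\ket u^{\otimes f_2 k},\qquad N^0_{p,q}\;=\;t^{2C_k}\,N^L_{p,\,q-f_2 k},
\]
and the common $t^{C_k}$ cancels between the numerator and $\sqrt{N^0_{p,q}}$, leaving
\[
\ket{GS^{(z=0)}_{p,q}}\;=\;\frac{\ket{UGS^L_{p,\,q-f_2 k}}}{\sqrt{N^L_{p,\,q-f_2 k}}}\otimes\ket u^{\otimes f_2 k}\;=\;\ket{GS^L_{p,\,q-f_2 k}}\otimes\ket u^{\otimes f_2 k}\;=\;\ket{PGS^{(R)}_{p,q}},
\]
which is the proposition. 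The only nontrivial ingredient is the ordering of unbalanced steps; the remainder is direct bookkeeping in the area weights, so I do not expect any real obstacle beyond that structural lemma.
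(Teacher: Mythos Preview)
Your proof is correct and follows the same route as the paper: both argue that every step in $R$ must be an unbalanced up-step and then factor the area-weighted sum accordingly. The paper simply asserts this structural fact in one sentence, whereas you supply the missing justification via the ordering lemma (unbalanced down-steps precede unbalanced up-steps) together with the pigeonhole observation that $q>f_2 k$ forces an unbalanced up-step into $L$; your subsequent bookkeeping on the bijection, area additivity, and cancellation of $t^{C_k}$ is also more explicit than the paper's, but the underlying argument is the same.
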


\begin{proof}[Proof of Proposition \ref{pr:equivalence-z0}]
	The condition of having $z=0$ balanced steps in the $R$ segment means that all walks in $G^{0}_{p,q}$ have their last $f_2 k$ steps all up. We are placing no restriction on the steps in the $L$ segment, and since we take the area-weighted superposition of all possible walks, we form exactly the ground state on the $L$ segment, with the corresponding numbers $(p, q-f_2 k)$ of unbalanced steps:
	\begin{equation} \label{eq:pgsr-definition-equivalence}
	\ket{\normstate{G^0_{p,q}}}
	= \ket{GS^L_{p,q - f_2 k}} \otimes \oa \ket{u} \ca^{\otimes f_2 k}
	\equiv \ket{PGS^{(R)}_{p,q}}.
	\end{equation}
\end{proof}
The characterization \eqref{eq:pgsr-definition-equivalence} shows that $\ket{PGS^{(R)}_{p,q}}$ consists precisely of the walks in $G^{0}_{p,q}$, with the correct area weights. This additional result will help us to prove Lemma \ref{lm:approx-high}.

\subsection{Proof of the section's main result}

\begin{proof}[Proof of Lemma \ref{lm:approx-high}]
	The proof is very similar to that of Lemma \ref{lm:low-imbalance-approximation-1}. From equations \eqref{eq:high-imbalance-exact-gs} and \eqref{eq:pgsr-definition-equivalence}, Definition \ref{def:walk_set_definitions} gives
	\begin{equation}
	\left|\inner{GS_{p,q}}{PGS^{(R)}_{p,q}}\right|^2
	= {\norm{G^0_{p,q}} \over N_{p,q}}.
	\end{equation}
	Combined with eq. \eqref{eq:high-imbalance-normalization-factor-split}, this yields
	\begin{equation} \label{eq:high-imbalance-error-term}
	1 - \left|\inner{GS_{p,q}}{PGS^{(R)}_{p,q}}\right|^2
	= \sum_{z>0} {\norm{G^z_{p,q}} \over N_{p,q}}.
	\end{equation}
	
	As before, the strategy is to map walks in $G^{z+1}_{p,q}$ to walks in $G^{z}_{p,q}$ and obtain an upper bound on $\norm{G^{z+1}_{p,q}}/\norm{G^{z}_{p,q}}$. For any $z\ge 0$, take any $w \in G^{z+1}_{p,q}$, and let $b_1$ be its rightmost balanced step. Since $z+1>0$, the $R$ segment contains a positive number of balanced steps, and in particular $b_1$ must be in $R$ (Fig. \ref{fig:high_imbalance} (a)).
	
	The step $b_1$ must be flat or down, because any balanced up-step has a partner to its right. The following steps of the proof will require $b_1$ to be flat. If instead it is down, find its balancing partner and replace them both by flat steps to lower the area as shown in Fig.~\ref{fig:high_imbalance} (a). Note that the two flat steps we've introduced are still balanced, so this leaves the walk in $G^{z+1}_{p,q}$.
	
	Next, find the first unbalanced up-step $s_1$. Since there are at least $(f_2 + c) k$ of them (from the condition on $q$), we see that $s_1$ must be at least $(f_2 + c) k$ positions away from the rightmost end of the chain, i.e. at least $ck$ positions to the left of the boundary between $L$ and $R$. As $ b_1$ is in $R$, we obtain that the distance between the two is at least $ck$:
	
	\begin{equation} \label{eq:high-imbalance-distance-bound}
	d(s_1, b_1) > ck.
	\end{equation}
	
	Interchanging $s_1$ and $b_1$ to form a walk $w'$ then lowers the area by at least $ck$ units; see Fig.~\ref{fig:high_imbalance}(b). This swap eliminated a balanced step from the rightmost segment, so $w' \in G^{z}_{p,q}$.

    \begin{figure}[t]
	\centering	
	\scalebox{.4}{\includegraphics{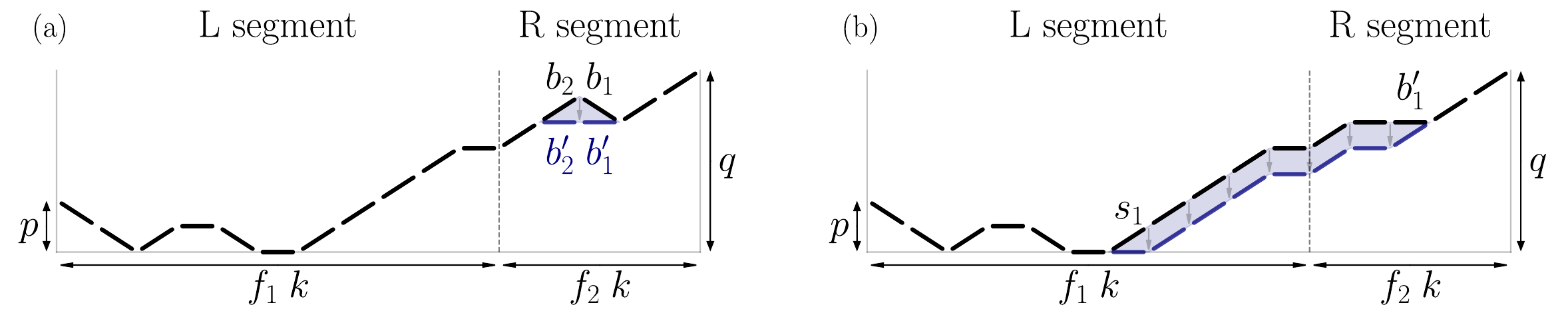}}%
		\caption{Typical walk in $G^{2}_{p,q}$, and an illustration of the swap procedure. In (a), the rightmost balanced step $b_1$ is flattened together with its balanced partner $b_2$. This maps to a path of area that is smaller by the shaded part. In (b), $s_1$ and $b_1'$ are exchanged, and the shaded section shifts down by one unit as a result. In the end, only two steps have been altered. The final result belongs to $G^{1}_{p,q}$, and encloses a smaller area, helping to prove eq. \eqref{eq:high-imb-successive-ratio}.}
	\label{fig:high_imbalance}
    \end{figure}

	Similarly to the previous section, the mapping is not injective, and we need to bound the cardinality of the preimage for any given $w' \in G^{z}_{p,q}$. The treatment of this aspect is identical to that in the proof of Lemma \ref{lm:low-imbalance-approximation-1}, with the result that the desired cardinality is at most ${(f_1 + f_2) k \choose 3}$. Applying Lemma \ref{lm:weighted-image-bound} with $\Delta=ck$ gives
	\begin{equation}\label{eq:high-imb-successive-ratio}
	{\norm{G^{z+1}_{p,q}} \over \norm{G^{z}_{p,q}}}
	<  {(f_1 + f_2) k \choose 3} t^{2 k c}
	< (f_1 + f_2)^3 \cdot k^3 t^{2 k c}.
	\end{equation}
	Since $c > 0$ and $t<1$, the RHS of the above goes to zero exponentially fast as $k \to \infty$. At large enough $k$ this implies monotonicity in $z$, $\norm{G^{z+1}_{p,q}} < \norm{G^{z}_{p,q}}$, and so in particular we can use $\norm{G^{z}_{p,q}} < \norm{G^{1}_{p,q}}$ for all $z\ge 1$. Therefore
	\begin{equation}
	\sum_{z = 1}^{f_2 k} {\norm{G^z_{p,q}} \over \norm{G^0_{p,q}}}
	< f_2 k \cdot {\norm{G^1_{p,q}} \over \norm{G^0_{p,q}}}
	< f_2 (f_1 + f_2)^3 \cdot k^4 t^{2 k c}.
	\end{equation}
	By construction $\norm{G^0_{p,q}} \le N_{p,q}$, so replacing the denominator $\norm{G^0_{p,q}}$ by $N_{p,q}$ in the sum above will only make it smaller:
	\begin{equation}
	\sum_{z >0} {\norm{G^z_{p,q}} \over N_{p,q}}
	< \sum_{z>0} {\norm{G^z_{p,q}} \over \norm{G^0_{p,q}}}
	< f_2 (f_1 + f_2)^3 \cdot k^4 t^{2 k c}.
	\end{equation}
	It follows that
	\begin{equation}
	1 - \left|\inner{GS_{p,q}}{PGS^{(R)}_{p,q}}\right|^2
	< f_2 (f_1 + f_2)^3 \cdot k^4 t^{2 k c}.
	\end{equation}
	The above is valid for all $q > (f_2 + c) k$, but the RHS does not involve $q$. Taking the supremum of the LHS over $q$ in this range gives
	\begin{equation}
	\sup\limits_{q > (f_2 + c) k} \ob 1 - \left|\inner{GS_{p,q}}{PGS^{(R)}_{p,q}}\right|^2 \cb < f_2 (f_1 + f_2)^3 \cdot k^4 t^{2 k c}.
	\end{equation}
	Due to the exponential factor on the right, eq. \eqref{eq:high-q-approximation-lemma} follows. As mentioned previously, an identical argument shows that eq. \eqref{eq:high-p-approximation-lemma} is also true, completing the proof of the lemma.
\end{proof}

\section{Implementing the approximations}\label{sec:application}
\subsection{Imbalance regimes and splitting the chain}\label{ssec:division}
We now use the approximations of sections \ref{sec:low-imb} and \ref{sec:high-imb} to find the limiting behavior in $k$ of the quantity
\begin{equation}
\sup\limits_{\substack{p,q \ge 0 \\ p+q \le 3k}} \op \sup\limits_{\phi_{p,q} \in \range E_k} \mel{\phi_{p,q}}{G_{[k+1, 3k]}}{\phi_{p,q}} \cp
\end{equation}
which appears in the RHS of Proposition \ref{pr:reduction-of-criterion}. Fix a constant $c \in (0, 1/4 - 2b)$, which one may imagine to be very small. We will cover the following distinct regimes:
\begin{itemize}
	\item $p > k \op 1 + {c \over 2}\cp$, $q$ arbitrary, but still consistent with $p$, i.e. $q+p \le 3k$
	\item $q > k \op 1 + {c \over 2}\cp$, $p$ arbitrary (but still consistent with $q$)
	\item $p,q < k \op 1 + c \cp$
\end{itemize}
The three cases are not mutually exclusive, but their union covers all possible values for $p$ and $q$ on a chain segment of length $3k$. For large numbers of unbalanced steps, we directly show that
\begin{proposition}\label{pr:high-imbalance-zero-limits}
	For the large-$p$ regime, $p > k \op 1 + {c \over 2}\cp$, we have
	\begin{equation} \label{eq:high-p-limit}
	\lim\limits_{k \to \infty} \op\sup\limits_{\substack{(1 + c/2) k < p \le 3k \\ 0 \le q \le 3k - p}} \op\sup\limits_{\ket{\phi_{p,q}} \in \range E_k} \mel{\phi_{p,q}}{G_{[k+1, 3k]}}{\phi_{p,q}} \cp \cp = 0 
	\end{equation}
	and similarly in the regime where $q$ is larger than $k \op 1 + {c \over 2}\cp$:
	\begin{equation} \label{eq:high-q-limit}
	\lim\limits_{k \to \infty} \op\sup\limits_{\substack{(1 + c/2) k < q \le 3k \\ 0 \le p \le 3k - q}} \op\sup\limits_{\ket{\phi_{p,q}} \in \range E_k} \mel{\phi_{p,q}}{G_{[k+1, 3k]}}{\phi_{p,q}} \cp \cp = 0 
	\end{equation}
\end{proposition}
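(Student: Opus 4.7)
By the $u \leftrightarrow d$ symmetry of the Motzkin Hamiltonian, it suffices to establish the high-$p$ limit \eqref{eq:high-p-limit}; the high-$q$ limit \eqref{eq:high-q-limit} follows from the same argument with left and right reversed. Fix $p > (1+c/2)k$ and $q$ with $p+q \le 3k$, and let $\ket{\phi_{p,q}} \in \range E_k$, so that $G_{[1,2k]}\ket{\phi_{p,q}} = \ket{\phi_{p,q}}$ and $G_{[1,3k]}\ket{\phi_{p,q}} = 0$. The plan is to apply Lemma \ref{lm:approx-high} to approximate the relevant ground-space projectors as a rank-one projector onto $\ket{d}^{\otimes \ell}$ on a left prefix, tensored with a ground-space projector on the remaining right portion. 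The threshold $p > (1+c/2)k$ is chosen precisely so that the hypothesis of Lemma \ref{lm:approx-high} on the full chain $[1,3k]$ is met for the split at position $\ell := (1-c/2)k$, corresponding to $f_1 = 1 - c/2$, $f_2 = 2 + c/2$.

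Concretely, this application of Lemma \ref{lm:approx-high} yields the superpolynomial approximation
\begin{equation*}
\ket{GS^{[1,3k]}_{p,q}} \;\approx\; \ket{d}^{\otimes\ell} \otimes \ket{GS^{[\ell+1,3k]}_{p-\ell,q}},
\end{equation*}
so the condition $G_{[1,3k]}\ket{\phi_{p,q}} = 0$ translates into an approximate orthogonality of $\ket{\phi_{p,q}}$ to the product state on the right. Let $P_\ell$ denote the projector onto $\ket{d}^{\otimes\ell}$ at sites $[1,\ell]$. Since $G_{[k+1,3k]}$ acts trivially on $[1,\ell] \subset [1,k]$ it commutes with $P_\ell$, so the Pythagorean theorem gives
\begin{equation*}
\mel{\phi_{p,q}}{G_{[k+1,3k]}}{\phi_{p,q}} \;=\; \|G_{[k+1,3k]} P_\ell \ket{\phi_{p,q}}\|^2 \;+\; \|G_{[k+1,3k]} (1-P_\ell)\ket{\phi_{p,q}}\|^2.
\end{equation*}
The first term reduces, via Proposition \ref{pr:gs-projector-diagonal-pq} and the one-dimensionality of the $(p-\ell,q)$ imbalance sector of the ground space on $[\ell+1,3k]$, to the squared overlap of $\ket{\phi_{p,q}}$ with the product state above, which is superpolynomially small by the approximate orthogonality just derived. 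The second term is controlled by showing that $(1-P_\ell)\ket{\phi_{p,q}}$ is itself superpolynomially small: expanding $\ket{\phi_{p,q}} = \sum_{p_1,q_1} \ket{GS^{[1,2k]}_{p_1,q_1}} \otimes \ket{\xi_{p_1,q_1}}$ and applying Lemma \ref{lm:approx-high} inside the segment $[1,2k]$ shows that each sector with sufficiently large $p_1$ is approximately of the form $\ket{d}^{\otimes \ell} \otimes (\cdot)$, and therefore contributes negligibly to $(1-P_\ell)\ket{\phi_{p,q}}$.

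The main obstacle is handling the intermediate sectors $(p_1,q_1)$ in the expansion of $\ket{\phi_{p,q}}$ with $p_1$ as small as $p - k > (c/2)k$, which can arise because the total imbalance $(p,q)$ may be partly carried by $d$-unbalanced steps on $[2k+1,3k]$; for these small-$p_1$ sectors Lemma \ref{lm:approx-high} does not directly yield a $\ket{d}^{\otimes \ell}$ factorization on $[1,2k]$ at the cutoff $\ell$. I would address this by combining a finer split of $[1,2k]$ via Lemma \ref{lm:approx-high} with a smaller left prefix $\ell' < (p_1/k - c)k$ for which its hypothesis does hold (yielding a partial factorization $\ket{d}^{\otimes \ell'} \otimes (\cdot)$), together with the normalization-ratio estimates of Section \ref{sec:normalizations}, which quantify how the area weighting exponentially suppresses configurations that distribute $d$-unbalanced steps away from the left edge. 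Together these should show that the small-$p_1$ sectors carry only a superpolynomially small total weight in $\ket{\phi_{p,q}}$, completing the proof.
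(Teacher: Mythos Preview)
Your overall architecture matches the paper's closely, but there are two genuine problems.

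First, with your choice $\ell=(1-c/2)k<k$, the claim that $\|G_{[k+1,3k]}P_\ell\ket{\phi_{p,q}}\|^2$ reduces to the single overlap $|\langle d^{\otimes\ell}\otimes GS^{[\ell+1,3k]}_{p-\ell,q}\,|\,\phi_{p,q}\rangle|^2$ is false: $G_{[k+1,3k]}$ acts as the identity on sites $[\ell+1,k]$, so after projecting with $P_\ell$ you still have an unconstrained $(k-\ell)$-site factor, and the $(p-\ell,q)$ sector of the ground space on $[k+1,3k]$ is not one-dimensional when you tensor it with that factor. This is fixable by taking $\ell=k$ (using the constant $c/2$ in Lemma~\ref{lm:approx-high}), which is exactly what the paper does.

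The second gap is the more serious one. Your plan for the second term is to show that $\|(1-P_\ell)\ket{\phi_{p,q}}\|$ is superpolynomially small, and you correctly isolate the obstruction: sectors $(p_1,q_1)$ in the $[1,2k]$ Schmidt decomposition with $p_1$ as small as $(c/2)k$. But the proposed remedy cannot work. The state $\ket{\phi_{p,q}}$ is an \emph{arbitrary} element of $\range E_k$; nothing prevents it from being entirely supported on such a small-$p_1$ sector (e.g.\ $\ket{GS^{[1,2k]}_{p-z,v}}\otimes\ket{\psi}$ with $z$ close to $k$), in which case $(1-P_\ell)\ket{\phi_{p,q}}$ has norm of order~$1$. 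The normalization-ratio estimates of Section~\ref{sec:normalizations} control ratios between \emph{ground-state} normalizations and say nothing about the weight an arbitrary $\ket{\phi_{p,q}}$ places on a given sector. The paper does not attempt to show these sectors are small; instead it shows (Proposition~\ref{pr:hiap-p2}, proved in Appendix~\ref{ssec:high-p-other-walks}) that $G_{[k+1,3k]}$ \emph{approximately annihilates} them, via a case analysis on the height at site $k$: small height forces a low-imbalance approximation on $[k+1,3k]$ that the walk cannot satisfy, while large height forces (via the high-imbalance lemma on $[k+1,3k]$) many consecutive $d$-steps after site $k$, which in turn conflicts with the $[1,2k]$ ground-state structure. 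You need this annihilation argument, not a weight bound.
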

For the low-imbalance regime, we find a similar result, albeit through a longer argument:
\begin{proposition}\label{pr:low-imbalance-zero-limits}
	In the $p,q < (1 + c) \cdot k$ regime it is true that
	\begin{equation} \label{eq:low-pq-limit}
	\lim\limits_{k \to \infty} \op\sup\limits_{0 \le p, q < (1 + c) k} \op\sup\limits_{\ket{\phi_{p,q}} \in \range E_k} \mel{\phi_{p,q}}{G_{[k+1, 3k]}}{\phi_{p,q}} \cp \cp = 0 
	\end{equation}
\end{proposition}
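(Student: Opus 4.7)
The plan is to reduce the matrix element $\mel{\phi_{p,q}}{G_{[k+1,3k]}}{\phi_{p,q}}$ to a combinatorial expression in the open-chain normalization factors on subsegments of length $k$, and then exploit the interaction between the three conditions $\phi_{p,q}\in\range G_{[1,2k]}$, $G_{[1,3k]}\phi_{p,q}=0$, and the low-imbalance hypothesis $p,q<(1+c)k$.

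First, I would divide the chain $[1,3k]$ into three subsegments $A=[1,k]$, $B=[k+1,2k]$, $C=[2k+1,3k]$ of equal length $k$, and use Proposition \ref{pr:gs-projector-diagonal-pq} to expand $\ket{\phi_{p,q}}$. Because $\ket{\phi_{p,q}}\in\range G_{[1,2k]}$ and carries total imbalance $(p,q)$, height-matching at the cut $2k|2k+1$ forces a decomposition of the form
\begin{equation}
\ket{\phi_{p,q}}=\sum_r \ket{GS^{A\cup B}_{p,r}}\otimes \ket{\xi_r}_C,
\end{equation}
where the $\ket{\xi_r}_C$ carry imbalances compatible with producing the overall $(p,q)$. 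Second, I would invoke Lemma \ref{lm:low-imbalance-approximation-trunc} on each of the three ground-state families that appear: $\ket{GS^{A\cup B}_{p,r}}$ split at $A|B$, $\ket{GS^{B\cup C}_{r',q}}$ (which enters through $G_{[k+1,3k]}$) split at $B|C$, and $\ket{GS^{A\cup B\cup C}_{p,q}}$ (entering through the orthogonality constraint) split recursively. All of them may be replaced by their truncated $\ket{ATGS}$ counterparts with only a superpolynomially small error in $k$, uniformly in the low-imbalance range.

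After this replacement, both the target matrix element and the orthogonality relation become sums over a small index $r<bk$ of products of inner products $\langle UGS^A_{p,r}|UGS^A_{p,r'}\rangle\,\langle UGS^B_{r,s}|UGS^B_{r',s'}\rangle\cdots$, which by Proposition \ref{pr:gs-projector-diagonal-pq} collapse to combinatorial sums over the normalization factors $N^A_{p,r}$, $N^B_{r,s}$, $N^C_{s,q}$. The matrix element is then a quadratic form in the unknown coefficients of $\ket{\xi_r}_C$, while the orthogonality constraint $\inner{GS^{A\cup B\cup C}_{p,q}}{\phi_{p,q}}=0$ is a linear functional of the same coefficients. The heart of the proposition is that, after normalising by the appropriate $N^X$ factors, the dominant eigendirection of the quadratic form is precisely the direction killed by the orthogonality functional, so a Cauchy--Schwarz-type estimate yields the vanishing as $k\to\infty$. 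Intuitively, by frustration-freeness one has $\range G_{[1,2k]}\cap\range G_{[k+1,3k]}=\range G_{[1,3k]}$, so suppressing the component in the latter forces $G_{[k+1,3k]}\phi_{p,q}$ to be small; the combinatorial machinery serves to make this quantitative in the low-imbalance regime.

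The principal obstacle is the fine control of the ratios $N^X_{a,b}/N^X_{a,b'}$ for $X\in\{A,B,C\}$ and indices varying over the low-imbalance range. These ratios govern whether the Cauchy--Schwarz estimate actually produces a decaying bound rather than being swamped by the quadratic-in-$k$ ground-state degeneracy. Obtaining them requires the inhomogeneous 1D discrete diffusion analysis developed in Section \ref{sec:normalizations} and Appendix \ref{sec:norm-ratios-appendix}. A secondary but delicate concern is that the superpolynomial errors from the several applications of Lemma \ref{lm:low-imbalance-approximation-trunc} must aggregate to remain negligible against the main term uniformly in $(p,q)$, which is where the choice of the constants $b,a_1,a_2,f_1,f_2$ at each split becomes important; the factor $(1+c)$ in the hypothesis on $p,q$ (rather than a tighter bound like $(1-c)$) is presumably what leaves room for these constants to be chosen consistently.
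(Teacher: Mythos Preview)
Your overall architecture is right: Schmidt-decompose $\ket{\phi_{p,q}}$ along a two-thirds/one-third cut, replace the ground states by their truncated approximations from Lemma~\ref{lm:low-imbalance-approximation-trunc}, and reduce both the target matrix element and the orthogonality constraint to combinatorial expressions in the normalization factors $N^X_{a,b}$. That is exactly the strategy the paper follows in Sections~\ref{ssec:low-imb-app}--\ref{sec:completing-low-imbalance-proof}.

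However, there is a concrete gap in your choice of subsegments. You take $A=[1,k]$, $B=[k+1,2k]$, $C=[2k+1,3k]$ of equal length $k$, but the low-imbalance hypothesis only gives $p,q<(1+c)k$, so $p$ may exceed $|A|=k$. In the notation of Lemma~\ref{lm:low-imbalance-approximation-trunc} this forces $f_1=1$ while you need $p<a_1k$ with $a_1<f_1-b<1$; the lemma simply does not apply. This is not a cosmetic issue: the whole point of the overlap $c$ between the low- and high-imbalance regimes is that $p,q$ can genuinely sit slightly above $k$. The paper fixes this by taking \emph{unequal} intervals $A,C$ of width $(1+2c)k$ and $B$ of width $(1-4c)k$ (see Figure~\ref{fig:division}), so that $p<(1+c)k<|A|-ck$ and likewise for $q$. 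One then bounds $\mel{\phi_{p,q}}{G_{[k+1,3k]}}{\phi_{p,q}}$ by $\mel{\phi_{p,q}}{G_{BC}}{\phi_{p,q}}$, which is legitimate because $B\cup C\subset[k+1,3k]$.

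Your description of the endgame as ``the dominant eigendirection of the quadratic form is the one killed by the orthogonality functional, so a Cauchy--Schwarz estimate gives decay'' captures the spirit but not the mechanism. What the paper actually does is introduce scalars $x_r$ (Definition~\ref{def:xr-factors}) so that the orthogonality constraint becomes $|\sum_r f_r x_r|\to 0$ and the matrix element becomes $\sum_r f_r|x_r|^2$, with the weights $f_r$ summing to $1$. The crucial input from Appendix~\ref{sec:norm-ratios-appendix} is that the ratios $\pi_{r,v}=N^B_{r,v}/N^B_{0,v}$ depend on $v$ only through a factor $1+O(t^{\alpha k})$; this forces all the $x_r$ to be within $O(\varepsilon)$ of $x_0$ (Lemma~\ref{lm:xr-closeness}). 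Then the vanishing of the weighted average of the $x_r$ forces each $|x_r|$ to be small, and hence so is $\sum_r f_r|x_r|^2$. So the argument is not a direct Cauchy--Schwarz bound but rather ``the quadratic form is approximately rank one, and that rank-one direction is killed by orthogonality.'' Your identification of the normalization-ratio analysis as the principal obstacle is correct; just be aware that what is needed is specifically the near-$v$-independence of $\pi_{r,v}/\pi_r$, not merely boundedness.

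A smaller omission: your Schmidt decomposition $\ket{\phi_{p,q}}=\sum_r\ket{GS^{A\cup B}_{p,r}}\otimes\ket{\xi_r}$ tacitly assumes every constituent walk reaches zero height on both sides of the cut. In fact three cases arise (zero height in $AB$ only, in $C$ only, or in both), giving extra terms in \eqref{eq:phi-pq-full-expansion}. The paper handles this via Proposition~\ref{pr:reduction-of-phi-pq}, showing the extra terms contribute negligibly to the $G_{BC}$ matrix element; you would need an analogous step.
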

We then combine Propositions \ref{pr:high-imbalance-zero-limits} and \ref{pr:low-imbalance-zero-limits} to conclude that the supremum over all possible $p,q$ goes to zero in the limit of large $k$:

\begin{proposition} \label{pr:all-imbalance-zero-limits}
	From Propositions \ref{pr:high-imbalance-zero-limits} and \ref{pr:low-imbalance-zero-limits} it follows that
	\begin{equation}
	\lim\limits_{k \to \infty} \op \sup\limits_{\substack{p,q \ge 0 \\ p+q \le 3k}} \op\sup\limits_{\ket{\phi_{p,q}} \in \range E_k} \mel{\phi_{p,q}}{G_{[k+1, 3k]}}{\phi_{p,q}} \cp \cp = 0 
	\end{equation}
\end{proposition}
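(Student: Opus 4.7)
The approach is an elementary decomposition of the domain of the outer supremum. Fix the constant $c \in (0, 1/4 - 2b)$ as in Subsection \ref{ssec:division} (which is in turn the $c$ appearing in both Propositions \ref{pr:high-imbalance-zero-limits} and \ref{pr:low-imbalance-zero-limits}). Denote the admissible index set by
\[
A(k) = \{(p,q) : p,q \ge 0,\ p + q \le 3k\}
\]
and partition it as $A(k) = A_1(k) \cup A_2(k) \cup A_3(k)$, where
\[
A_1(k) = \{(p,q) \in A(k) : p > (1 + c/2)k\},\qquad
A_2(k) = \{(p,q) \in A(k) : q > (1 + c/2)k\},
\]
\[
A_3(k) = \{(p,q) \in A(k) : p \le (1 + c/2)k,\ q \le (1 + c/2)k\}.
\]
These three sets cover $A(k)$: if $(p,q) \notin A_1(k) \cup A_2(k)$, then $p \le (1+c/2)k$ and $q \le (1+c/2)k$, so $(p,q) \in A_3(k)$. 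In particular, since $(1+c/2) < (1+c)$, any $(p,q) \in A_3(k)$ satisfies $p < (1+c)k$ and $q < (1+c)k$, which places it inside the regime covered by Proposition \ref{pr:low-imbalance-zero-limits}.

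Writing
\[
F(k,p,q) \equiv \sup_{\ket{\phi_{p,q}} \in \range E_k} \mel{\phi_{p,q}}{G_{[k+1,3k]}}{\phi_{p,q}},
\]
the supremum of a nonnegative function over a finite union equals the maximum of the suprema over the pieces, so
\[
\sup_{(p,q) \in A(k)} F(k,p,q) = \max\!\left(\sup_{(p,q) \in A_1(k)} F(k,p,q),\ \sup_{(p,q) \in A_2(k)} F(k,p,q),\ \sup_{(p,q) \in A_3(k)} F(k,p,q)\right).
\]
Proposition \ref{pr:high-imbalance-zero-limits} shows that the first and second suprema on the right tend to $0$ as $k \to \infty$ (applied respectively with the $p > (1+c/2)k$ and $q > (1+c/2)k$ ranges, noting the $F$-supremum over $A_i(k)$ is bounded above by the larger supremum appearing in the proposition, where the constraint $p+q \le 3k$ is weakened appropriately or kept, as both are consistent). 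Proposition \ref{pr:low-imbalance-zero-limits} shows that the third supremum tends to $0$, since $A_3(k)$ is contained in the range $\{0 \le p,q < (1+c)k\}$ covered by that proposition.

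Finally, we use the elementary fact that the maximum of finitely many sequences each tending to $0$ also tends to $0$: for any $\varepsilon > 0$, each of the three suprema is eventually smaller than $\varepsilon$, hence so is their maximum, proving the claim. There is no real technical obstacle here; the only thing to verify carefully is the covering of $A(k)$ by the three regimes, which we did above. This completes the reduction of Theorem \ref{thm:asymptotic} to Propositions \ref{pr:high-imbalance-zero-limits} and \ref{pr:low-imbalance-zero-limits}, and hence, via Proposition \ref{pr:reduction-of-criterion} and Lemma \ref{lm:alternative-criterion}, to the finite-size criterion Theorem \ref{thm:criterion}.
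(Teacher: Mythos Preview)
Your proposal is correct and takes essentially the same approach as the paper: both argue that the three regimes $p > (1+c/2)k$, $q > (1+c/2)k$, and $p,q < (1+c)k$ cover the full index set $\{p,q \ge 0,\ p+q \le 3k\}$, so the global supremum is controlled by the three partial ones. The paper's proof is a one-sentence appeal to this covering; your version simply spells out the details (the explicit partition, the inclusion $A_3(k) \subset \{p,q < (1+c)k\}$, and the max-of-three-limits argument) that the paper leaves implicit.
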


\begin{proof} [Proof of Proposition \ref{pr:all-imbalance-zero-limits}]
	As discussed in the beginning of this section, the ranges of $p,q$ covered in Propositions \ref{pr:high-imbalance-zero-limits} and \ref{pr:low-imbalance-zero-limits} can be combined to cover all the possibilities for $p,q\ge 0$ with $p + q \le 3k$. 
\end{proof}

We now turn to proving Propositions \ref{pr:high-imbalance-zero-limits} and \ref{pr:low-imbalance-zero-limits}. The former is a direct application of the results in Section \ref{sec:high-imb}. The latter proof relies on Section \ref{sec:low-imb} in a similar manner, but also requires more in-depth technical discussions, which are presented separately in Section \ref{sec:normalizations} and appendix \ref{sec:norm-ratios-appendix}.

\subsection{High imbalance}\label{ssec:high-imb-app} Since ground states with a large number of unbalanced steps approximately factorize, verification of the criterion is more straightforward in this case.

\begin{proof}[Proof of Proposition \ref{pr:high-imbalance-zero-limits}]
	We will work in the large $p$ regime, and show that eq. \eqref{eq:high-p-limit} holds. By assumption, $\ket{\phi_{p,q}}$ is orthogonal to the ground space on the full chain:
	\begin{equation}
	    G_{[1,3k]} \ket{\phi_{p,q}} = 0
	\end{equation}
	Expanding $G_{[1,3k]}$ in terms of individual ground states, we note that only the state with $(p,q)$ unbalanced steps can contribute. Therefore, the above translates to
	\begin{equation}
	    \inner{GS_{p,q}^{[1,3k]}}{\phi_{p,q}} = 0 \qquad \implies \qquad \inner{\phi_{p,q}}{GS_{p,q}^{[1,3k]}} \inner{GS_{p,q}^{[1,3k]}}{\phi_{p,q}} = 0
	\end{equation}
	With $p > (1+c) k$, we can use the high imbalance approximation Lemma \ref{lm:approx-high}, with $L = [1,k]$ and $R = [k+1,3k]$ to approximate $\ket {GS_{p,q}^{[1,3k]}}$ by
	\begin{equation}
	    \ket{PGS_{p,q}^{[1,3k]}} = \ket{d}^{\otimes k} \otimes \ket{GS_{p - k,q}^{[k + 1,3k]}} 
	\end{equation}
	and it follows by the projector approximation Lemma \ref{lm:proj-approx} that the quantities $|\inner{PGS_{p,q}^{[1,3k]}}{\phi_{p,q}}|^2$ approximate $|\inner{GS_{p,q}^{[1,3k]}}{\phi_{p,q}}|^2$ at large $k$. Since the latter overlaps are, by assumption, identically zero when $\ket{\phi_{p,q}} \in \range E_k$, we find
	\begin{equation} \label{eq:hiap-e1}
	    \lim\limits_{k \to \infty} \op\sup\limits_{\substack{(1 + c/2) k < p \le 3k \\ 0 \le q \le 3k - p}} \op\sup\limits_{\ket{\phi_{p,q}} \in \range E_k} \inner{\phi_{p,q}}{PGS_{p,q}^{[1,3k]}} \inner{PGS_{p,q}^{[1,3k]}}{\phi_{p,q}} \cp \cp = 0 
	\end{equation}
	The projector onto $PGS$ can be decomposed as a tensor product:
	\begin{equation}
	    \ket{PGS_{p,q}^{[1,3k]}} \bra{PGS_{p,q}^{[1,3k]}} = \ket{d}^{\otimes k} \bra{d}^{\otimes k} \otimes \ket{GS_{p - k,q}^{[k + 1,3k]}} \bra{GS_{p - k,q}^{[k + 1,3k]}}
	\end{equation}
	Note that when this acts on $\ket{\phi_{p,q}}$, the second component behaves like $G_{[k+1, 3k]}$. The reason is that $\ket{d}^{\otimes k} \bra{d}^{\otimes k}$ selects only walks with the first $k$ steps down. Since every walk of $\ket{\phi_{p,q}}$ has $(p,q)$ unbalanced steps on $[1,3k]$, those with the first $k$ steps down will have $(p-k,q)$ unbalanced steps on $[k+1, 3k]$. Therefore, the only ground state they need to be compared to is $\ket{GS_{p - k,q}^{[k + 1,3k]}}$. This allows for the rewriting of \eqref{eq:hiap-e1} as
	\begin{equation} \label{eq:hiap-e2}
	    \lim\limits_{k \to \infty} \op\sup\limits_{\substack{(1 + c/2) k < p \le 3k \\ 0 \le q \le 3k - p}} \op\sup\limits_{\ket{\phi_{p,q}} \in \range E_k} \bra{\phi_{p,q}} \ob \ket{d}^{\otimes k} \bra{d}^{\otimes k} \otimes G_{[k+1, 3k]} \cb \ket{\phi_{p,q}} \cp \cp = 0 
	\end{equation}
	This form is close to the desired result, but has the extra $\ket{d}^{\otimes k} \bra{d}^{\otimes k}$ projector. Some walks in the composition of $\ket{\phi_{p,q}}$ will have their first $k$ steps down, and others will not. Equation \eqref{eq:hiap-e2} deals with those that do, and one must separately consider the ones that do not. This will be addressed as follows: since $\ket{\phi_{p,q}}$ is in the range of $E_k$, it consists only of ground states on the first two thirds of the chain $[1,2k]$. Namely, it is an eigenstate of the projector $G_{[1,2k]}$, with eigenvalue 1:
	\begin{equation}
	\ket{\phi_{p,q}} = G_{[1,2k]} \ket{\phi_{p,q}}
	\end{equation}
	This allows for a Schmidt decomposition of $\ket{\phi_{p,q}}$ with respect to subsystems $[1,2k]$ and $[2k+1, 3k]$, in which only ground states will be present on the left side. To determine which numbers of unbalanced steps can appear in this decomposition, we must carefully analyze the minimization of walks. Recall that a walk on $[1,3k]$ with $(p,q)$ unbalanced steps must start at height $p$, end at height $q$, and reach zero somewhere in between. Since we are assuming $p > (1 + c) k$, all the points where the walk reaches zero height must be at least $(1+c)k$ steps away from the start, so in particular none of them can be found in the first third $[1,k]$. We distinguish three cases:
	\begin{itemize}
	    \item Walks that reach zero height both within the middle third $[k+1, 2k]$, and within the last one $[2k+1, 3k]$. When we `cut' them after $2k$ steps, both resulting components will still reach zero height, so they will already be minimized. If the height after $2k$ steps (at the split point) is $v$, then the resulting walks will have $(p,v)$ and $(v,q)$ unbalanced steps respectively.
	    \item Walks that do not reach zero height in the middle third, but rather only within the last one. When splitting, the first component will not be minimized. If $z$ is the minimum height that the original walk reached within the first two thirds $[1,2k]$, and $z+v$ is the height at the split point (with $v \ge 0$), then the resulting walks have $(p-z, v)$ and $(v+z,q)$ unbalanced steps respectively. 
	    \item Walks that only reach zero height within the middle third, and not the last one. With $z$ the minimum height within the last third $[2k+1, 3k]$, and $z + v$ the height at the splitting point, we find $(p, v + z)$ and $(v, q-z)$ unbalanced steps.
	\end{itemize}
	The resulting decomposition will be
	\begin{equation}\label{eq:schmidt}
	    \ket{\phi_{p,q}} = \sum_{v \ge 0} \ket{GS_{p,v}^{[1,2k]}} \otimes \ket{\psi^1_{v,q}} + \sum_{z = 1}^{p} \sum_{v \ge 0} \ket{GS_{p - z, v}^{[1,2k]}} \otimes \ket{\psi^2_{v + z,q}} + \sum_{z = 1}^{q} \sum_{v \ge 0} \ket{GS_{p, v + z}^{[1,2k]}} \otimes \ket{\psi^3_{v, q - z}}
	\end{equation}
	The $\psi^1, \psi^2, \psi^3$ states all live on the last third $[2k+1, 3k]$. By convention we absorb the coefficients from the Schmidt decomposition into their definition, so they are not normalized. This will not pose a problem, since the main focus will be on the properties of the $[1,2k]$ ground states instead. For simplicity of notation, give separate names to the three terms:
	\begin{align*}
	    \ket{\mathrm {I}} &= \sum_{v \ge 0} \ket{GS_{p,v}^{[1,2k]}} \otimes \ket{\psi^1_{v,q}} \\
	    \ket{\mathrm {II}} &= \sum_{z = 1}^{p} \sum_{v \ge 0} \ket{GS_{p - z, v}^{[1,2k]}} \otimes \ket{\psi^2_{v + z,q}} \\
	    \ket{\mathrm {III}} &= \sum_{z = 1}^{q} \sum_{v \ge 0} \ket{GS_{p, v + z}^{[1,2k]}} \otimes \ket{\psi^3_{v, q - z}}
	\end{align*}
	The ground states that appear in $\ket{\mathrm {I}}$ and $\ket{\mathrm {III}}$ have $p$ unbalanced steps on the left, so we can use the high imbalance approximation lemma with $L = [1,k]$ and $R = [k + 1, 2k]$ to argue that:
	\begin{itemize}
	    \item $\ket{GS_{p,v}^{[1,2k]}}$ is superpolynomially approximated by $\ket{d}^{\otimes k} \otimes \ket{GS_{p - k,v}^{[k + 1,2k]}}$.
	    \item $\ket{GS_{p,v + z}^{[1,2k]}}$ is superpolynomially approximated by $\ket{d}^{\otimes k} \otimes \ket{GS_{p - k,v + z}^{[k + 1,2k]}}$.
	\end{itemize}
	Through an argument similar to that of the superposition approximation Lemma \ref{lm:superposition-approx}, we conclude that $\ket{\mathrm {I}}$ is approximated superpolynomially by
	\begin{equation}
	    \ket{\mathrm {I}'} = \sum_{v \ge 0} \ket{d}^{\otimes k} \otimes \ket{GS_{p - k,v}^{[k + 1,2k]}} \otimes \ket{\psi^1_{v,q}} = \ket{d}^{\otimes k} \otimes  \op \sum_{v \ge 0} \ket{GS_{p - k,v}^{[k + 1,2k]}} \otimes \ket{\psi^1_{v,q}} \cp
	\end{equation}
	and similarly, $\ket{\mathrm{III}}$ is approximated by
	\begin{equation}
	    \ket{\mathrm {III}'} = \sum_{z = 1}^{q} \sum_{v \ge 0} \ket{d}^{\otimes k} \otimes \ket{GS_{p - k,v + z}^{[k + 1,2k]}} \otimes \ket{\psi^3_{v, q - z}} = \ket{d}^{\otimes k} \otimes \op \sum_{z = 1}^{q} \sum_{v \ge 0} \ket{GS_{p - k,v + z}^{[k + 1,2k]}} \otimes \ket{\psi^3_{v, q - z}} \cp
	\end{equation}
	The same reasoning doesn't directly work for $\ket{\mathrm{II}}$, since the ground states appearing within it only have $p - z$ unbalanced steps on the left, which is not guaranteed to be above $(1 + c) k$ if $z$ is large. Instead, we will separate $\ket{\mathrm{II}}$ into two terms, one including the walks with the first $k$ steps down, call it $\ket{\mathrm{IIa}}$, and everything else, which will be called $\ket{\mathrm{IIb}}$. Formally,
	\begin{equation}
	    \ket{\mathrm{II}} \equiv \ket{\mathrm{IIa}} + \ket{\mathrm{IIb}} \qquad \qquad \ob \ket{d}^{\otimes k} \bra{d}^{\otimes k} \cb \ket{\mathrm{IIa}} = \ket{\mathrm{IIa}} \qquad \qquad \ob \ket{d}^{\otimes k} \bra{d}^{\otimes k} \cb \ket{\mathrm{IIb}} = 0
	\end{equation}
	With $\ket{\phi_{p,q}} = \ket{\mathrm {I}} + \ket{\mathrm {IIa}} + \ket{\mathrm {IIb}} + \ket{\mathrm {III}}$, we define the approximation
	\begin{equation}
	    \ket{\phi_{p,q}'} = \ket{\mathrm {I}'} + \ket{\mathrm {IIa}} + \ket{\mathrm {IIb}} + \ket{\mathrm {III}'}
	\end{equation}
	This is superpolynomial since $\ket{\mathrm {I}'}$ and $\ket{\mathrm {III}'}$ are. We will show that, at large $k$, the projector $G_{[k+1, 3k]}$ approximately annihilates this. The argument is in two parts:
	\begin{proposition} \label{pr:hiap-p1}
	    \begin{equation*}
	    \lim\limits_{k \to \infty} \op\sup\limits_{\substack{(1 + c/2) k < p \le 3k \\ 0 \le q \le 3k - p}} \op\sup\limits_{\ket{\phi_{p,q}} \in \range E_k} \| G_{[k+1, 3k]} \ob \ket{\mathrm {I}'} + \ket{\mathrm {IIa}} + \ket{\mathrm {III}'} \cb \| \cp \cp = 0 
	    \end{equation*}
	\end{proposition}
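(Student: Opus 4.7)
The plan is to exploit the structural fact that the three states $\ket{\mathrm I'}$, $\ket{\mathrm{IIa}}$, and $\ket{\mathrm{III}'}$ all share a common first tensor factor $\ket{d}^{\otimes k}$ on the sites $[1,k]$, whereas by construction $\ket{\mathrm{IIb}}$ has \emph{no} $\ket{d}^{\otimes k}$ component. Writing $P \equiv \ket{d}^{\otimes k}\bra{d}^{\otimes k}$ for the corresponding rank-one projector on the first $k$ sites, the first step is to observe the identity
\[
(P \otimes I_{[k+1,3k]})\ket{\phi_{p,q}'} = \ket{\mathrm I'} + \ket{\mathrm{IIa}} + \ket{\mathrm{III}'},
\]
where $\ket{\phi_{p,q}'} = \ket{\mathrm I'} + \ket{\mathrm{IIa}} + \ket{\mathrm{IIb}} + \ket{\mathrm{III}'}$ is the global superpolynomial approximation of $\ket{\phi_{p,q}}$ constructed above.

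Next, I would convert the norm we wish to bound into a quadratic form. Since $P$ acts only on $[1,k]$ while $G_{[k+1,3k]}$ acts only on $[k+1,3k]$, the two projectors commute, and a standard manipulation with commuting projectors gives
\[
\| G_{[k+1,3k]}(\ket{\mathrm I'} + \ket{\mathrm{IIa}} + \ket{\mathrm{III}'}) \|^2 = \langle \phi_{p,q}' | \oa P \otimes G_{[k+1,3k]} \ca | \phi_{p,q}'\rangle.
\]

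The crux is that the right-hand side above is exactly the expectation value that is controlled by Eq.~\eqref{eq:hiap-e2}, but with $\phi'$ in place of $\phi$. The final step is therefore to replace $\phi_{p,q}'$ by $\phi_{p,q}$ at negligible cost. Since $\|P\otimes G_{[k+1,3k]}\|\le 1$ and $\|\phi_{p,q} - \phi_{p,q}'\|$ is superpolynomially small in $k$ uniformly in the sup range -- this being inherited from the uniform superpolynomial approximations of $\ket{\mathrm I}$ by $\ket{\mathrm I'}$ and $\ket{\mathrm{III}}$ by $\ket{\mathrm{III}'}$ supplied by Lemma~\ref{lm:approx-high} and the superposition approximation Lemma~\ref{lm:superposition-approx} -- the elementary bound
\[
\bigl|\langle\phi'|M|\phi'\rangle - \langle\phi|M|\phi\rangle\bigr| \le \|M\|\oa \|\phi\|+\|\phi'\|\ca \|\phi - \phi'\|
\]
upgrades the pointwise overlap estimate into a uniform superpolynomially small error on the expectation value. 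Taking the supremum over $(p,q)$ in the claimed range and over $\ket{\phi_{p,q}} \in \range E_k$ and then letting $k\to\infty$, the difference term vanishes and the main term vanishes by \eqref{eq:hiap-e2}, yielding the proposition.

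There is no substantive obstacle here: all the difficulty has been front-loaded into Lemma~\ref{lm:approx-high} and into the decomposition \eqref{eq:schmidt}; the proof of Proposition~\ref{pr:hiap-p1} itself is then essentially bookkeeping. The only point meriting a moment's care is to track that the superpolynomial estimate $\ket{\phi_{p,q}}\approx\ket{\phi_{p,q}'}$ is \emph{uniform} in $(p,q)$, which follows because Lemma~\ref{lm:approx-high} was stated uniformly for $p > (f_1+c)k$ and the Schmidt coefficients appearing in the decomposition \eqref{eq:schmidt} are controlled by $\|\phi_{p,q}\|=1$.
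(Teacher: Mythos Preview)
Your proof is correct and follows essentially the same approach as the paper: both arguments identify $\ket{\mathrm I'}+\ket{\mathrm{IIa}}+\ket{\mathrm{III}'}$ as $(P\otimes I)\ket{\phi_{p,q}'}$, rewrite the squared norm as $\langle\phi_{p,q}'|\,P\otimes G_{[k+1,3k]}\,|\phi_{p,q}'\rangle$, and then pass from $\phi'$ back to $\phi$ to invoke \eqref{eq:hiap-e2}. The only cosmetic difference is that where you use the elementary bound $|\langle\phi'|M|\phi'\rangle-\langle\phi|M|\phi\rangle|\le\|M\|(\|\phi\|+\|\phi'\|)\|\phi-\phi'\|$, the paper instead invokes its packaged expectation-approximation Lemma~\ref{lm:overlaps-and-expectations}.
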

	\begin{proposition} \label{pr:hiap-p2}
	    \begin{equation*}
	   \lim\limits_{k \to \infty} \op\sup\limits_{\substack{(1 + c/2) k < p \le 3k \\ 0 \le q \le 3k - p}} \op\sup\limits_{\ket{\phi_{p,q}} \in \range E_k} \| G_{[k+1, 3k]}  \ket{\mathrm {IIb}} \| \cp \cp = 0 
	\end{equation*}
	\end{proposition}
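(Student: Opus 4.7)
The plan is to analyze $\ket{\mathrm{IIb}}$ via the Schmidt-type decomposition indexed by $(z, v)$, writing schematically $\ket{\mathrm{IIb}} = \sum_{z \ge 1, v \ge 0} Q_d \ket{GS^{[1, 2k]}_{p-z, v}} \otimes \ket{\psi^2_{v+z, q}}$ with $Q_d = I - \ket{d}^{\otimes k}\bra{d}^{\otimes k}$ acting on sites $[1, k]$, and handle two complementary ranges of $p - z$. For the range $p - z > (1+c)k$, the high-imbalance approximation Lemma \ref{lm:approx-high} applied on $[1, 2k]$ (with $L = [1,k]$, $R = [k+1, 2k]$) superpolynomially approximates $\ket{GS^{[1, 2k]}_{p-z, v}}$ by $\ket{d}^{\otimes k}\otimes \ket{GS^{[k+1, 2k]}_{p-z-k, v}}$, so $Q_d$ annihilates it up to superpolynomially small error. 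Summing over the at most $O(k^2)$ pairs $(z, v)$ in this range, and using $\sum_{z, v}\|\ket{\psi^2_{v+z, q}}\|^2 \le \|\ket{\mathrm{II}}\|^2 \le 1$ together with Cauchy-Schwarz, shows that this range contributes superpolynomially little to $\|G_{[k+1, 3k]}\ket{\mathrm{IIb}}\|$.

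For the complementary range $p - z \le (1+c)k$, the essential new input is a tight geometric constraint: the walk on $[2k+1, 3k]$ has $(v+z, q)$ unbalanced steps fitting in $k$ sites, forcing $v + z + q \le k$. Combined with $z \ge p - (1+c)k$ and $p > (1+c/2)k$, this yields $v + q < (1+c/2)k$, so $(p-z, v)$ is pushed into the low-imbalance regime for $[1,2k]$. I would then invoke Lemma \ref{lm:low-imbalance-approximation-trunc} to approximate $\ket{GS^{[1, 2k]}_{p-z, v}}$ by a normalized sum over $r < bk$ of products $\ket{UGS^{[1, k]}_{p-z, r}}\otimes \ket{UGS^{[k+1, 2k]}_{r, v}}$, with weights controlled by the normalization ratios from Section \ref{sec:normalizations}. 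Plugging this back into $\ket{\mathrm{IIb}}$ and applying $G_{[k+1, 3k]}$ reduces the task to bounding the overlap of the factorized product $\ket{GS^{[k+1, 2k]}_{r, v}} \otimes \ket{\psi^2_{v+z, q}}$ against the coherent ground states $\ket{GS^{[k+1, 3k]}_{h, q}}$, where the height mismatch at the junction $2k$ (left piece ending at standalone height $v$, right piece starting at height $v+z$) corresponds to an extra area cost of order $zk$ in the combined walk and hence a $t^{2zk}$ suppression in the ratio of the true normalization $N^{[k+1, 3k]}_{h, q}$ to its factorized proxy.

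The principal obstacle lies in carrying out this second step cleanly: quantifying precisely how the boundary mismatch at $2k$ obstructs alignment of the factorized state with the true coherent ground state on $[k+1, 3k]$, and then summing the resulting bounds over $r < bk$, $v$, and the problematic range of $z$ while tracking the intricate $(p, q, z, v, r)$-dependence of the normalization factors. This will rely critically on the refined normalization ratio asymptotics developed in Appendix \ref{sec:norm-ratios-appendix}, where the recursion is interpreted as a spatially inhomogeneous 1D discrete diffusion, to verify that the area-driven exponential suppression dominates the polynomial combinatorial factors uniformly in the relevant parameter range. A secondary technical issue is the thin transition window in which $p - z$ lies between the threshold $(1-b)k$ above which the low-imbalance approximation ceases to apply and the threshold $(1+c)k$ below which the high-imbalance approximation ceases to apply; I would bridge it with a separate crude combinatorial bound, using the hard bound $v + q < (1+c/2)k$ to limit the number of contributing $(z,v)$ to $O(k^2)$ before appealing to the area penalty.
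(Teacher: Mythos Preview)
Your Case 1 ($p-z > (1+c)k$) is correct. Case 2, however, has a real gap. The assertion that ``$(p-z,v)$ is pushed into the low-imbalance regime for $[1,2k]$'' does not follow from what you derive: you only establish $p-z \le (1+c)k$ and $v < (1+c/2)k$, whereas Lemma~\ref{lm:low-imbalance-approximation-trunc} with any split $L\cup R = [1,2k]$ requires $p-z < (|L|/k - b)k$ and $v < (|R|/k - b)k$. Since $|L|+|R|=2k$, these two thresholds sum to $(2-2b)k$, strictly less than the $(2+3c/2)k$ your worst-case ranges demand --- so no single split works. You flag a transition window for $p-z$ but not for $v$, and the proposed ``crude combinatorial bound'' for the window is not spelled out. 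Moreover, the subsequent ``height mismatch at $2k$ gives area cost $\sim zk$'' step is not a standing lemma; making it precise amounts to showing that a walk on $[k+1,3k]$ which stays at height $\ge z > 0$ throughout $[k+1,2k]$ has exponentially small weight in $\ket{GS^{[k+1,3k]}_{\cdot,q}}$ --- which is essentially the low-imbalance approximation on $[k+1,3k]$ itself, never invoked in your plan.

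The paper's proof is shorter and sidesteps all of this by changing parametrization: instead of the Schmidt index $z$, it classifies walks $w$ in $\ket{\mathrm{IIb}}$ by the height $r$ that $w$ reaches at position $k$, and exploits directly the defining feature of $\ket{\mathrm{II}}$ --- which you do not use --- that every such walk reaches zero only in $[2k+1,3k]$, never in $[k+1,2k]$. For $r \le (1-c)k$, the low-imbalance approximation on $[k+1,3k]$ (not on $[1,2k]$) shows that $\ket{GS^{[k+1,3k]}_{r,q}}$ is dominated by walks reaching zero in both halves, so $w$ --- which does not reach zero in the first half --- contributes negligibly to $G_{[k+1,3k]}\ket{\mathrm{IIb}}$. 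For $r > (1-c)k$, high-imbalance on $[k+1,3k]$ forces most of the middle third to consist of down-steps; a further case split on $p-r$ then shows, via approximation lemmas on $[1,2k]$ with a suitably \emph{asymmetric} split, that such a $w$ is already exponentially suppressed within the $[1,2k]$ ground state it came from. No recourse to the normalization-ratio analysis of Appendix~\ref{sec:norm-ratios-appendix} is needed.
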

	\noindent Assume these two propositions. From the triangle inequality we obtain
	\begin{align}
	    \| G_{[k+1, 3k]}  \ket{\phi_{p,q}'} \| 
	    &\le \| G_{[k+1, 3k]} \ob \ket{\mathrm {I}'} + \ket{\mathrm {IIa}} + \ket{\mathrm {III}'} \cb \| + \| G_{[k+1, 3k]} \ket{\mathrm {IIb}} \|
	\end{align}
Both terms on the last line vanish at large $k$, and  so $G_{[k+1, 3k]}$ approximately annihilates $\ket{\phi_{p,q}'}$:
	\begin{equation}
	\lim\limits_{k \to \infty} \op\sup\limits_{\substack{(1 + c/2) k < p \le 3k \\ 0 \le q \le 3k - p}} \op\sup\limits_{\ket{\phi_{p,q}} \in \range E_k} \mel{\phi_{p,q}'}{G_{[k+1, 3k]}}{\phi_{p,q}'} \cp \cp = 0 
	\end{equation}
	Using the expectation approximation Lemma \ref{lm:overlaps-and-expectations}, we conclude
	\begin{align}
	    \lim\limits_{k \to \infty} \op\sup\limits_{\substack{(1 + c/2) k < p \le 3k \\ 0 \le q \le 3k - p}} \op\sup\limits_{\ket{\phi_{p,q}} \in \range E_k} \mel{\phi_{p,q}}{G_{[k+1, 3k]}}{\phi_{p,q}} \cp \cp=0
	\end{align}
	completing the proof of eq. \eqref{eq:high-p-limit}. 
	\end{proof}
	The argument for eq. \eqref{eq:high-q-limit} is very similar, so most of its details will be omitted. An outline is sketched in \ref{ssec:high-q-outline}. We turn to the proof of Proposition \ref{pr:hiap-p1}:
	\begin{proof}[Proof of Proposition \ref{pr:hiap-p1}]
	All terms in $\ket{\phi_{p,q}'}$, except for $\ket{\mathrm {IIb}}$, have their first $k$ steps down, so
	\begin{equation}
	    \ob \ket{d}^{\otimes k} \bra{d}^{\otimes k} \cb \ob \ket{\mathrm {I}'} + \ket{\mathrm {IIa}} + \ket{\mathrm {III}'} \cb  = \ket{\mathrm {I}'} + \ket{\mathrm {IIa}} + \ket{\mathrm {III}'} 
	\end{equation}
	and if we add a $G_{[k+1, 3k]}$ projector, it follows that
	\begin{equation}
	    \ob \ket{d}^{\otimes k} \bra{d}^{\otimes k} \otimes G_{[k+1, 3k]} \cb \ob \ket{\mathrm {I}'} + \ket{\mathrm {IIa}} + \ket{\mathrm {III}'} \cb  = G_{[k+1, 3k]} \ob \ket{\mathrm {I}'} + \ket{\mathrm {IIa}} + \ket{\mathrm {III}'} \cb
	\end{equation}
	On the other hand, since $\ket{d}^{\otimes k} \bra{d}^{\otimes k}$ annihilates $\ket{\mathrm{IIb}}$, the LHS of the above is equal to:
	\begin{align}
	    \ob \ket{d}^{\otimes k} \bra{d}^{\otimes k} \otimes G_{[k+1, 3k]} \cb \ob \ket{\mathrm {I}'} + \ket{\mathrm {IIa}} + \ket{\mathrm {III}'} \cb  
	    &= \ob \ket{d}^{\otimes k} \bra{d}^{\otimes k} \otimes G_{[k+1, 3k]} \cb \ket{\phi_{p,q}'}
	\end{align}
    These equalities can be combined, and the norm taken, to yield
    \begin{equation}
        \| G_{[k+1, 3k]} \ob \ket{\mathrm {I}'} + \ket{\mathrm {IIa}} + \ket{\mathrm {III}'} \cb \| = \|\ob \ket{d}^{\otimes k} \bra{d}^{\otimes k} \otimes G_{[k+1, 3k]} \cb \ket{\phi_{p,q}'} \|
    \end{equation}
    From \eqref{eq:hiap-e2} and through the expectation approximation Lemma \ref{lm:overlaps-and-expectations}, we find
	\begin{align*} 
	    0 &= \lim\limits_{k \to \infty} \op\sup\limits_{\substack{(1 + c/2) k < p \le 3k \\ 0 \le q \le 3k - p}} \op\sup\limits_{\ket{\phi_{p,q}} \in \range E_k} \bra{\phi_{p,q}} \ob \ket{d}^{\otimes k} \bra{d}^{\otimes k} \otimes G_{[k+1, 3k]} \cb \ket{\phi_{p,q}} \cp \cp\\
	    &= \lim\limits_{k \to \infty} \op\sup\limits_{\substack{(1 + c/2) k < p \le 3k \\ 0 \le q \le 3k - p}} \op\sup\limits_{\ket{\phi_{p,q}} \in \range E_k} \bra{\phi_{p,q}'} \ob \ket{d}^{\otimes k} \bra{d}^{\otimes k} \otimes G_{[k+1, 3k]} \cb \ket{\phi_{p,q}'} \cp \cp 
	\end{align*}
	On the last line, the matrix element is $\|\ob \ket{d}^{\otimes k} \bra{d}^{\otimes k} \otimes G_{[k+1, 3k]} \cb \ket{\phi_{p,q}'} \|^2$, so the desired result follows: $G_{[k+1, 3k]}$ approximately annihilates $ \ket{\mathrm {I}'} + \ket{\mathrm {IIa}} + \ket{\mathrm {III}'} $.
	\begin{equation}
	   \lim\limits_{k \to \infty} \op\sup\limits_{\substack{(1 + c/2) k < p \le 3k \\ 0 \le q \le 3k - p}} \op\sup\limits_{\ket{\phi_{p,q}} \in \range E_k} \| G_{[k+1, 3k]} \ob \ket{\mathrm {I}'} + \ket{\mathrm {IIa}} + \ket{\mathrm {III}'} \cb \| \cp \cp = 0 
	\end{equation}
	\end{proof}
	\noindent The proof of Proposition \ref{pr:hiap-p2} is deferred to Appendix \ref{ssec:high-p-other-walks}. It consists of analyzing several subcases of walks in $\ket{\mathrm{IIb}}$, and arguing that their total contribution to $G_{[k+1,3k]} \ket{\mathrm{IIb}}$ vanishes superpolynomially.

\subsection{Low imbalance}\label{ssec:low-imb-app} Now we consider the case when both $p$ and $q$ are less than $k(1 + c)$. The goal is to prove Proposition \ref{pr:low-imbalance-zero-limits}, which will require a series of approximations and technical discussions. We divide the full chain $[1, 3k]$ into three intervals (Fig. \ref{fig:division}):
\begin{itemize}
	\item $A$ on the left, of width $(1+2c) k$.
	\item $B$ in the middle, of width $(1 - 4c) k$.
	\item $C$ on the right, again of size $(1+2c) k$.
\end{itemize}
Existence of the middle interval requires that $c < 1/4$. The reason for this construction is that any $p \le (1 + c) k$ is classified as "low-$p$ regime" with respect to interval $A$, since it is smaller than the size of $A$, by at least $ck$ steps. Similarly, all $q \le (1 + c) k$ are in the low-$q$ regime relative to interval $C$. This will allow for the approximation Lemma \ref{lm:low-imbalance-approximation-trunc} to be applied.

\begin{figure}[t]
	\centering	
	\subfigure[]{\scalebox{.31}{\includegraphics{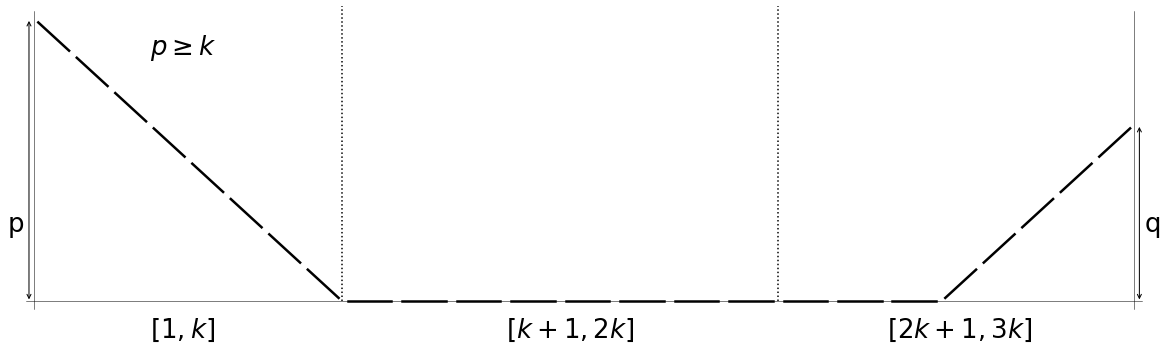}}}%
			
	\subfigure[]{\scalebox{.31}{\includegraphics{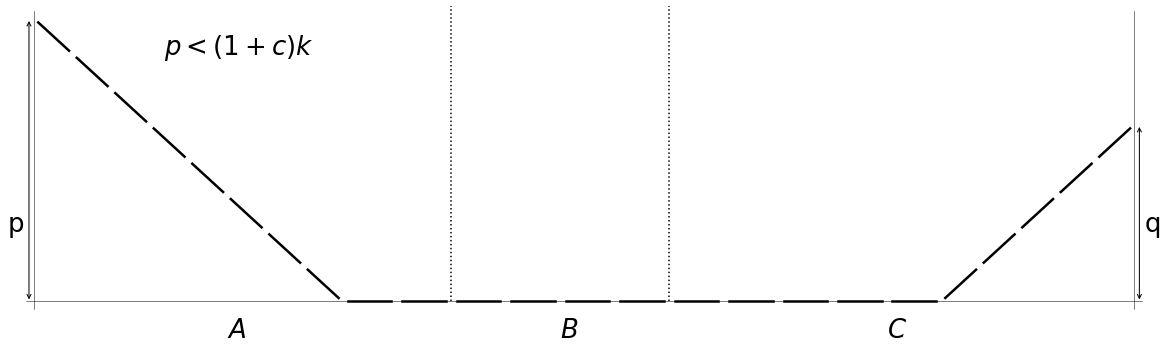}}}%
		
	\caption{Division of the full chain, illustrated here at $k = 8$ and $c = 1/8$. A walk is schematically shown, that has too many unbalanced steps on the left for Lemma \ref{lm:low-imbalance-approximation-trunc} to apply when the chain is split into equal thirds. But when splitting into unequal intervals A, B, and C, as described above, we can safely apply the approximation \ref{lm:low-imbalance-approximation-trunc}.}
	\label{fig:division}
\end{figure}

The $B$ and $C$ intervals are fully contained in the last two thirds: $B \cup C \subset [k+1, 3k]$. Because of that, and the frustration-freeness of the Hamiltonian, we have 
\begin{equation}
\mel{\phi_{p,q}}{G_{[k+1,3k]}}{\phi_{p,q}} \le \mel{\phi_{p,q}}{G_{BC}}{\phi_{p,q}}
\end{equation}
and it suffices to show that the term on the RHS is small. Similarly, since $G_{[1,2k]} \ket{\phi_{p,q}} = \ket{\phi_{p,q}}$ by definition of $\ket{\phi}$, and the left and middle intervals are contained in the first two thirds (i.e. $A \cup B \subset [1, 2k]$), we have
$G_{AB} \ket{\phi_{p,q}} = \ket{\phi_{p,q}}$, so we can expand $\ket{\phi_{p,q}}$ in terms of ground states on $A \cup B$. We will work with unnormalized such ground states, and normalize at the end.

Similarly to Section \ref{ssec:high-imb-app}, we perform a Schmidt decomposition of $\ket{\phi_{p,q}}$ with respect to subsystems $A \cup B$ and $C$, in which we must include three types of terms: walks that reach zero height both within $A \cup B$ and $C$, those that do so only in $C$, and lastly those that do so only in $A \cup B$. The result is
\begin{equation} \label{eq:phi-pq-full-expansion}
\ket{\phi_{p,q}} = {1 \over \sqrt{{N'}^{\phi}_{p,q}}} \op \sum_{v \ge 0} \ket{\unnormstate{G^{AB}_{p,v}}} \ket{\psi^C_{v, q}} + \sum_{z = 1}^p \sum_{v \ge 0} \ket{\unnormstate{G^{AB}_{p-z,v}}} \ket{{\psi'}^C_{v + z,q}} + \sum_{y = 1}^q \sum_{v \ge 0} \ket{\unnormstate{G^{AB}_{p,v+y}}} \ket{{\psi''}^C_{v, q - y}}\cp
\end{equation}
where $\ket{\unnormstate{G^{AB}}}$ denotes the unnormalized ground state constructed from the corresponding walk set on $A\cup B$, while $\ket{\psi_C}$, $\ket{{\psi'}^C}$ and $\ket{{\psi''}^C}$ are also unnormalized (having absorbed the Schmidt coefficients into their definition), and live on the $C$ segment. Although the overall normalization factor ${N'}^{\phi}_{p,q}$ could also be absorbed in the definition of the $\ket{\psi}$, we prefer to keep it explicit. For the purpose of proving that the supremum of $\mel{\phi_{p,q}}{G_{BC}}{\phi_{p,q}}$ vanishes at large $k$, it is enough to only consider the first term above, and furthermore one can restrict the summation over $v$ to only run up to $bk$. Formally, one has:
\begin{definition}
	Given an arbitrary $\ket{\phi_{p,q}} \in \range E_k$, extract from the expansion \eqref{eq:phi-pq-full-expansion} the collection of states $\{\psi^C_{v,q}\}$, indexed by $v$, that contribute to the first sum. We define the following substitute for the original state:
	\begin{equation}
	\ket{\phi_{p,q}'} = {1 \over \sqrt{N^{\phi}_{p,q}}} \sum_{v < bk} \ket{\unnormstate{G^{AB}_{p,v}}} \ket{\psi^C_{v,q}}
	\end{equation}
	where for normalization we need
	\begin{equation}
	N^{\phi}_{p,q} = \sum_{v < bk} \inner{\unnormstate{G^{AB}_{p,v}}}{\unnormstate{G^{AB}_{p,v}}} \cdot \inner{\psi^C_{v,q}}{\psi^C_{v,q}}
	\end{equation}
	Since $\ket{\phi_{p,q}}$ is constrained to be orthogonal to the ground state on the full chain $[1, 3k]$, the collections $\{\psi^C_{v,q}\}$ that can be obtained by the procedure above are not fully arbitrary. Let $P_{k, q}$ be the set of all such collections that can be obtained at specific $k$ and $q$.
	
\end{definition}
Note that $\ket{\phi_{p,q}'}$ may not always provide a `good approximation' of $\ket{\phi_{p,q}}$, in the sense of the previous sections: as $k$ grows, the overlap of the states need not go to 1. Indeed, in eq. \eqref{eq:phi-pq-full-expansion} one may take all the $\ket{\psi^C}$ to vanish, and use only nonzero $\ket{{\psi'}^C}$ and $\ket{{\psi''}^C}$ instead. Then our substitute is orthogonal to the initial state. However, $\ket{\phi_{p,q}'}$ is still useful for our bound as the following proposition shows.

\begin{notation}
    In the remainder of this subsection, we abbreviate
    $$
    \lim\limits_{k \to \infty}\sup\limits_{0 \le p, q < (1 + c) k} \equiv \widetilde{\lim\limits_{k \to \infty}}
    $$
\end{notation}
\begin{proposition} \label{pr:reduction-of-phi-pq} Assuming both limits exist, we have
\begin{equation}
\widetilde{\lim\limits_{k \to \infty}}\op \sup\limits_{\ket{\phi_{p,q}} \in \range E_k} \mel{\phi_{p,q}}{G_{BC}}{\phi_{p,q}} \cp  \le \widetilde{\lim\limits_{k \to \infty}} \op \sup\limits_{\{\psi^C_{s,q}\} \in P_{k, q}} \mel{\phi_{p,q}'}{G_{BC}}{\phi_{p,q}'} \cp 
\end{equation}
	
\end{proposition}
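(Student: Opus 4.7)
I decompose $\ket{\phi_{p,q}}$ using \eqref{eq:phi-pq-full-expansion} as $\ket{\phi_{p,q}} = ({N'}^\phi_{p,q})^{-1/2}\bigl(\ket{\text{I}} + \ket{\text{II}} + \ket{\text{III}}\bigr)$, and further split $\ket{\text{I}} = \ket{\text{I}^{\text{trunc}}} + \ket{\text{I}^{\text{high}}}$ at $v = bk$. Setting $\ket{R} = \ket{\text{I}^{\text{high}}} + \ket{\text{II}} + \ket{\text{III}}$, the plan is to establish a uniform bound
\begin{equation}
\|G_{BC}\ket{R}\|^2 \le \epsilon_k \cdot {N'}^\phi_{p,q}
\end{equation}
with $\epsilon_k \to 0$ superpolynomially in $k$, uniformly over $p, q < (1+c)k$ and over the allowed $C$-side collections in $P_{k,q}$. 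Once this is in hand, expanding $\mel{\phi_{p,q}}{G_{BC}}{\phi_{p,q}}$ along $\ket{\text{I}^{\text{trunc}}}$ versus $\ket{R}$ and using Cauchy--Schwarz together with $\|G_{BC}\| = 1$ produces
\begin{equation}
\mel{\phi_{p,q}}{G_{BC}}{\phi_{p,q}} \le \frac{\|\ket{\text{I}^{\text{trunc}}}\|^2}{{N'}^\phi_{p,q}}\,\mel{\phi'_{p,q}}{G_{BC}}{\phi'_{p,q}} + O(\sqrt{\epsilon_k}),
\end{equation}
and since the prefactor is at most $1$, taking the supremum over $\phi_{p,q} \in \range E_k$ (which induces a supremum over $\{\psi^C_{v,q}\} \in P_{k,q}$) and then the $\widetilde{\lim\limits_{k\to\infty}}$ yields the stated inequality.

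The core estimate on $\|G_{BC}\ket{R}\|^2$ is obtained by adapting the swap-and-flatten area-cost arguments used in the proofs of Proposition \ref{pr:low-imbalance-lowering-ratio} and in the central-peak estimate \eqref{eq:low-imb-central-peak-contribution}. In the low-imbalance regime $p, q < (1+c)k$ with $|A| = |C| = (1+2c)k$ and $|AB|, |BC| \ge (2-2c)k$: each walk contributing to $\ket{\text{II}}$ has minimum height $\ge 1$ on $AB$ and hence encloses an extra area of order $k$ relative to its minimized counterpart on $AB$; each walk in $\ket{\text{III}}$ has an analogous extra area of order $k$ on $C$; and each walk in $\ket{\text{I}^{\text{high}}}$ carries an extra peak area of order $bk$. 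Combined with the orthogonality of the $AB$-factors $\ket{UGS^{AB}_{p',q'}}$ for different $(p',q')$ from Proposition \ref{pr:gs-projector-diagonal-pq}, these translate into exponentially small $t^{\Omega(k)}$ ratios between the normalization factors of the $AB$-sides of $\ket{R}$ and those of $\ket{\text{I}^{\text{trunc}}}$.

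The main obstacle is securing the uniformity of $\epsilon_k$ over the unknown $C$-side states $\ket{\psi^C_{v,q}}$, $\ket{\psi'^C_{v+z,q}}$, $\ket{\psi''^C_{v,q-y}}$, which are not a priori norm-bounded: a naive area-cost argument would produce an $\epsilon_k$ depending on $\phi_{p,q}$. This is resolved by grouping contributions to $\ket{R}$ by their $AB$-factor $(p',q')$, so that the $C$-side norms appear as common weights multiplying the $AB$-side normalization ratios, and the exponential $t^{\Omega(k)}$ decay from the $AB$ side absorbs both the polynomial combinatorial prefactors arising from the swap/flatten maps and the at-most $O(k^2)$ many $(p', q')$ sectors involved.
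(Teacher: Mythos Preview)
Your overall architecture---decompose $\ket{\phi_{p,q}}$ into $\ket{\text{I}^{\text{trunc}}}+\ket{R}$, bound the cross terms via Cauchy--Schwarz, and reduce to the normalized $\ket{\text{I}^{\text{trunc}}}$---is exactly what the paper does (Appendix~\ref{ssec:proving-reduction-of-phi-pq}, Proposition~\ref{pr:phi-red-others-vanish}). Your explicit handling of the $v<bk$ truncation is in fact cleaner than the paper's appendix, which glosses over the distinction between $\ket{\text{I}}$ and its truncation.

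However, the mechanism you propose for the key estimate $\|G_{BC}\ket{R}\|^2\le\epsilon_k\cdot N'^\phi_{p,q}$ does not work as stated. You argue that walks in $\ket{\text{II}}$ carry ``extra area of order $k$ on $AB$ relative to their minimized counterpart,'' and that this yields an exponentially small $AB$-side normalization ratio. But an $AB$-side area cost only controls the relative weights \emph{within} a fixed ground state $\ket{UGS^{AB}_{p',q'}}$; it says nothing about the norm of the $C$-side factor $\ket{\psi'^C_{v+z,q}}$, which is entirely unconstrained. Concretely, nothing prevents $\ket{\phi_{p,q}}$ from being almost entirely of type~II (all weight on $\ket{UGS^{AB}_{p-1,v}}\otimes\ket{\psi'^C_{v+1,q}}$, say), in which case $\|\ket{\text{II}}\|^2/N'^\phi_{p,q}\approx 1$ and no $AB$-side area argument can make $\|G_{BC}\ket{\text{II}}\|$ small. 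Your ``grouping by $AB$-factor $(p',q')$'' resolution fails for the same reason: the $C$-side states in $\ket{\text{II}}$ and $\ket{\text{I}^{\text{trunc}}}$ are genuinely different ($\psi'^C$ versus $\psi^C$) and cannot appear as common weights. The same objection applies to your treatment of $\ket{\text{III}}$.

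The paper's mechanism is on the $BC$ side instead: walks in $\ket{\text{II}}$ have minimum height $\ge 1$ on $AB$, hence on $B\subset AB$; when $G_{BC}$ acts, one compares against ground states $\ket{GS^{BC}_{r,q}}$, which (by the low- or high-imbalance lemmas, case-split on $r$) are superpolynomially approximated by walks that \emph{do} reach zero on $B$, so the overlap is exponentially suppressed regardless of how large $\|\ket{\text{II}}\|$ is. Analogously $\ket{\text{III}}$ is handled via the $C$-minimum. In short, the smallness of $\|G_{BC}\ket{R}\|$ comes from the structure of the $BC$ ground states and how poorly the walks of $\ket{R}$ overlap with them---not from any suppression of $\|\ket{R}\|$ itself inside $\ket{\phi_{p,q}}$.
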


The proof of Proposition \ref{pr:reduction-of-phi-pq} is straightforward, but the details are rather lengthy and tangential to the main argument of this section. Therefore, the discussion is deferred to Section \ref{ssec:proving-reduction-of-phi-pq}. We continue with

\begin{definition} \label{def:approximate-phi-state}
	Since $\ket{\phi_{p,q}'}$ is written in terms of ground states on the $AB$ part of the chain, all of which have a relatively small number of unbalanced steps, to approximate it we consider
	\begin{equation} \label{eq:def-approximate-phi-state}
	\ket{A\phi_{p,q}} = {1 \over \sqrt{\mathcal{N}^{A\phi}_{p,q}}} \sum_{r, v < bk} \ket{\unnormstate{G^A_{p,r}}} \ket{\unnormstate{G^B_{r,v}}} \ket{\psi^C_{v,q}}
	\end{equation} 
	where normalization requires	
	\begin{equation}
	\mathcal{N}^{A\phi}_{p,q} = \sum_{r, v < bk} N^{A}_{p,r} \cdot N^{B}_{r,v} \cdot \inner{\psi^C_{v,q}}{\psi^C_{v,q}}
	\end{equation}
\end{definition}

\begin{proposition} \label{pr:approximation-of-reduction}
	The $\ket{A\phi_{p,q}}$ states of Definition \ref{def:approximate-phi-state} satisfy
	\begin{equation}
\widetilde{\lim\limits_{k \to \infty}} \op \sup\limits_{\{\psi^C_{v,q}\} \in P_{k, q}} \mel{\phi_{p,q}'}{G_{BC}}{\phi_{p,q}'} \cp  = \widetilde{\lim\limits_{k \to \infty}} \op \sup\limits_{\{\psi^C_{v,q}\} \in P_{k, q}} \mel{A\phi_{p,q}}{G_{BC}}{A\phi_{p,q}} \cp 
	\end{equation}
\end{proposition}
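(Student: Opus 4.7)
The plan is to apply the truncated low-imbalance approximation (Lemma \ref{lm:low-imbalance-approximation-trunc}) to the segment $AB$ with $L = A$ and $R = B$, and then transfer the resulting closeness of the unit vectors $\ket{\phi'_{p,q}}$ and $\ket{A\phi_{p,q}}$ to the matrix elements of $G_{BC}$ via an expectation-approximation argument. With $|A| = (1+2c)k$ and $|B| = (1-4c)k$, I take $f_1 = 1+2c$, $f_2 = 1-4c$, $a_1 = 1+c$, $a_2 = b$. The hypothesis $a_1 < f_1 - b$ becomes $b<c$, and $a_2 < f_2 - b$ becomes $b < (1-4c)/2$; both are compatible with the global requirement $b < 1/(4\beta)$ provided one chooses $b$ small and $c\in(b, 1/4-2b)$. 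In this regime the lemma yields $|\inner{GS^{AB}_{p,v}}{ATGS^{AB}_{p,v}}|^2 \geq 1 - \eta_k$ with $\eta_k \to 0$ superpolynomially, uniformly over $p < (1+c)k$ and $v < bk$.

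Next, I would combine Definitions \ref{def:atgs-definition} and \ref{def:approximate-phi-state}: on $AB$, the inner sum $\sum_{r<bk}\ket{UGS^A_{p,r}}\ket{UGS^B_{r,v}}$ equals $\sqrt{ATN^{AB}_{p,v}}\,\ket{ATGS^{AB}_{p,v}}$, so
\[
\ket{A\phi_{p,q}} \;=\; \frac{1}{\sqrt{AN^\phi_{p,q}}} \sum_{v<bk} \sqrt{ATN^{AB}_{p,v}}\,\ket{ATGS^{AB}_{p,v}}\ket{\psi^C_{v,q}}.
\]
Because $\ket{ATGS^{AB}_{p,v}}$ is the renormalized area-weighted sum over a \emph{subset} of the walks in $\ket{UGS^{AB}_{p,v}}$, a walk-by-walk comparison gives the clean identity $\inner{GS^{AB}_{p,v}}{ATGS^{AB}_{p,v}} = \sqrt{ATN^{AB}_{p,v}/N^{AB}_{p,v}}$. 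Using $\ket{UGS^{AB}_{p,v}} = \sqrt{N^{AB}_{p,v}}\,\ket{GS^{AB}_{p,v}}$ and the orthogonality of $\ket{\psi^C_{v,q}}$'s for distinct $v$, a direct computation yields
\[
\inner{\phi'_{p,q}}{A\phi_{p,q}} \;=\; \frac{1}{\sqrt{N^\phi_{p,q}\,AN^\phi_{p,q}}} \sum_{v<bk} ATN^{AB}_{p,v}\,\|\psi^C_{v,q}\|^2 \;=\; \sqrt{AN^\phi_{p,q}/N^\phi_{p,q}}.
\]
The uniform bound $ATN^{AB}_{p,v} \geq (1-\eta_k)\,N^{AB}_{p,v}$ then gives $N^\phi_{p,q} - AN^\phi_{p,q} \leq \eta_k\,N^\phi_{p,q}$, hence $1 - |\inner{\phi'_{p,q}}{A\phi_{p,q}}|^2 \leq \eta_k$, uniformly over $p,q < (1+c)k$ and $\{\psi^C_{v,q}\} \in P_{k,q}$.

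Finally, since $\|G_{BC}\|=1$, the elementary estimate $|\mel{a}{M}{a} - \mel{b}{M}{b}| \leq 2\,\|M\|\,\|\ket a - \ket b\|$ for unit vectors (essentially the expectation-approximation lemma used in Section \ref{ssec:high-imb-app}), together with $\|\,\ket{\phi'_{p,q}} - \ket{A\phi_{p,q}}\,\|^2 = 2(1 - \mathrm{Re}\,\inner{\phi'_{p,q}}{A\phi_{p,q}}) \leq 2(1-\sqrt{1-\eta_k})$, gives $|\mel{\phi'_{p,q}}{G_{BC}}{\phi'_{p,q}} - \mel{A\phi_{p,q}}{G_{BC}}{A\phi_{p,q}}| = O(\sqrt{\eta_k})$ superpolynomially and uniformly in all parameters. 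This uniform error estimate is preserved by taking $\sup$ over $\{\psi^C_{v,q}\}\in P_{k,q}$ and then over $p,q<(1+c)k$, so the two iterated limits coincide. The main subtle point is precisely this uniformity over the a priori unbounded weights $\|\psi^C_{v,q}\|$: it goes through because the bound $ATN/N \geq 1-\eta_k$ from Lemma \ref{lm:low-imbalance-approximation-trunc} is uniform in $v$, and the comparison of $N^\phi$ with $AN^\phi$ reduces to a non-negatively weighted average of that uniform ratio, so no pathological weighting can amplify the error.
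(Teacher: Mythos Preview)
Your proof is correct and takes essentially the same approach as the paper: the paper simply cites Lemma~\ref{lm:superposition-approx} (with $L=A\cup B$, $R=C$) to obtain that $\ket{A\phi_{p,q}}$ superpolynomially approximates $\ket{\phi'_{p,q}}$, and then Lemma~\ref{lm:overlaps-and-expectations} to transfer this to the $G_{BC}$ matrix elements. You have effectively inlined the proofs of these two auxiliary lemmas in this particular instance, computing the overlap $\inner{\phi'_{p,q}}{A\phi_{p,q}}=\sqrt{AN^\phi_{p,q}/N^\phi_{p,q}}$ directly and using an elementary $\|a-b\|$ estimate in place of the packaged expectation-approximation lemma; the weighted-average uniformity observation you highlight is exactly the content of the paper's Lemma~\ref{lm:superposition-approx}.
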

\begin{proof}
The assumptions required by Lemma \ref{lm:superposition-approx} clearly hold if we take $L = A \cup B$ and $R = C$. Then we apply Lemma \ref{lm:overlaps-and-expectations} and the proof is complete.
\end{proof}

From the classification of ground states in Section \ref{sec:ground-states-ramis}, we know that the ground space projector on BC is given by
\begin{equation} \label{eq:low-imb-proj-def}
G_{BC} = \sum_{r',q'} \ket{GS_{r',q'}^{BC}} \bra{GS_{r',q'}^{BC}}
\end{equation}
When inserting the form for $\ket{A\phi_{p,q}}$ from eq. \eqref{eq:def-approximate-phi-state} into the matrix element $\mel{A\phi_{p,q}}{G_{BC}}{A\phi_{p,q}}$, only $r$ values below $bk$ contribute, because the overlap $\bra{GS_{r',q'}^{BC}} \; \oa \ket{\unnormstate{G^B_{r,v}}} \ket{\psi^C_{v,q}} \ca$ is only nonzero when $r = r'$ and $q = q'$. For the same reason, only $q'$ values exactly equal to $q$ will matter. 
\begin{definition} Guided by this observation, isolate the part of the projector that actually contributes:
	\begin{equation} \label{eq:low-imb-app-proj-def}
	G_{BC}' = \sum_{r < bk} \ket{GS_{r,q}^{BC}} \bra{GS_{r,q}^{BC}}
	\end{equation}
\end{definition}
From the argument above, we find $\mel{A\phi_{p,q}}{G_{BC}}{A\phi_{p,q}} = \mel{A\phi_{p,q}}{G_{BC}'}{A\phi_{p,q}}$. Every state in eq. \eqref{eq:low-imb-app-proj-def} above has both $r$ and $q$ small enough that approximation Lemma \ref{lm:low-imbalance-approximation-trunc} applies. The following states then superpolynomially approximate the $\ket{GS_{r,q}^{BC}}$ when $r < bk$ and $q < (1 + c) k$:

\begin{definition} \label{def:ags-for-bc-segment}
	Let $I^{BC;v}_{r,q}$ be the set of walks in $G^{BC}_{r,q}$ whose height at the $B|C$ interface is $v$. Define the truncated walk set
	\begin{equation}
	H^{BC,<b}_{r,q}:=\bigsqcup_{v<bk} I^{BC;v}_{r,q}.
	\end{equation}
	Equivalently, by concatenation at the $B|C$ interface,
	\begin{equation}
	\ket{\unnormstate{H^{BC,<b}_{r,q}}} = \sum_{v < bk} \ket{\unnormstate{G^B_{r,v}}} \ket{\unnormstate{G^C_{v,q}}},
	\qquad
	\norm{H^{BC,<b}_{r,q}} = \sum_{v < bk} N^B_{r,v} N^C_{v,q}.
	\end{equation}
\end{definition}

\begin{definition} These states are then used to define the approximate projector:
	\begin{equation}
	AG_{BC} = \sum_{r < bk} \ket{\normstate{H^{BC,<b}_{r,q}}}\bra{\normstate{H^{BC,<b}_{r,q}}}.
	\end{equation}
	Equivalently,
	\begin{equation}
	AG_{BC} = \sum_{r < bk} {1 \over \norm{H^{BC,<b}_{r,q}}} \sum_{v,v' < bk} \ket{\unnormstate{G^B_{r,v}}} \ket{\unnormstate{G^C_{v,q}}} \bra{\unnormstate{G^B_{r,v'}}} \bra{\unnormstate{G^C_{v',q}}}.
	\end{equation}
\end{definition}

According to Lemma \ref{lm:proj-approx}, this approximates $G_{BC}'$, and we have

\begin{equation}
\begin{aligned}
\widetilde{\lim\limits_{k \to \infty}} \op \sup\limits_{\{\psi^C_{v,q}\} \in P_{k, q}} \mel{A\phi_{p,q}}{G_{BC}'}{A\phi_{p,q}} \cp 
= \widetilde{\lim\limits_{k \to \infty}}\op \sup\limits_{\{\psi^C_{v,q}\} \in P_{k, q}} \mel{A\phi_{p,q}}{AG_{BC}}{A\phi_{p,q}} \cp 
\end{aligned}
\end{equation}

Combining Propositions \ref{pr:reduction-of-phi-pq} and \ref{pr:approximation-of-reduction} with the above we find

\begin{equation}
\begin{aligned}
\widetilde{\lim\limits_{k \to \infty}} \op \sup\limits_{\ket{\phi_{p,q}} \in \range E_k} \mel{\phi_{p,q}}{G_{BC}}{\phi_{p,q}} \cp 
= \widetilde{\lim\limits_{k \to \infty}} \op \sup\limits_{\{\psi^C_{v,q}\} \in P_{k, q}} \mel{A\phi_{p,q}}{AG_{BC}}{A\phi_{p,q}} \cp
\end{aligned}
\end{equation}

Recall that the constraint $\{\psi^C_{v,q}\} \in P_{k, q}$ comes from the necessity that $\ket{\phi_{p,q}}$ is always orthogonal to the ground space on the full chain $ABC$. To describe that in a different manner, ground states on the full chain are first approximated as follows:

\begin{definition} \label{def:approximate-full-ground-state}
	Let $I^{ABC;r,v}_{p,q}$ be the set of walks in $G^{ABC}_{p,q}$ whose interface heights at $A|B$ and $B|C$ are respectively $r$ and $v$, and which reach zero height separately in all three segments. Define
	\begin{equation}
	H^{ABC,<b}_{p,q}:=\bigsqcup_{r,v<bk} I^{ABC;r,v}_{p,q}.
	\end{equation}
	For the ground state on the full chain we will use the approximation
	\begin{equation}
	\ket{\normstate{H^{ABC,<b}_{p,q}}} = {1 \over \sqrt{\norm{H^{ABC,<b}_{p,q}}}} \sum_{r,v < bk} \ket{\unnormstate{G^A_{p,r}}} \ket{\unnormstate{G^B_{r,v}}} \ket{\unnormstate{G^C_{v,q}}}
	\end{equation} 
	with the corresponding normalization factor 
	\begin{equation}
	\norm{H^{ABC,<b}_{p,q}} = \sum_{r,v < bk} N^{A}_{p,r} N^{B}_{r,v} N^{C}_{v,q}.
	\end{equation} 
\end{definition}
\begin{proposition} \label{pr:fgs-good-approximation}
	The $\ket{\normstate{H^{ABC,<b}_{p,q}}}$ of Definition \ref{def:approximate-full-ground-state} superpolynomially approximate the true ground states $\ket{GS^{ABC}_{p,q}}$ on the full chain, for all $p,q \le (1+c) k$.
\end{proposition}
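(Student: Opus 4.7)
The plan is to derive the approximation $\ket{FGS_{p,q}} \approx \ket{GS^{ABC}_{p,q}}$ by applying Lemma~\ref{lm:low-imbalance-approximation-trunc} twice in succession: first with the split $L=A$ versus $R=B\cup C$, then with the split $L=B$ versus $R=C$ inside each resulting $BC$-factor. First, I would check that for $p,q < (1+c)k$ the hypotheses of Lemma~\ref{lm:low-imbalance-approximation-trunc} hold with $f_1 = 1+2c$ and $f_2 = 2-2c$; this reduces to $b < c$ and $b < 1-3c$, both of which are compatible with the standing condition $c<1/4-2b$ for $b$ small. The first application then yields, superpolynomially in $k$ and uniformly in the admissible $p,q$,
\begin{equation*}
\ket{GS^{ABC}_{p,q}} \approx \ket{ATGS^{A|BC}_{p,q}} = \frac{1}{\sqrt{ATN^{A|BC}_{p,q}}} \sum_{r < bk} \ket{UGS^A_{p,r}}\ket{UGS^{BC}_{r,q}}.
\end{equation*}

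Next, for each $r < bk$ I would apply Lemma~\ref{lm:low-imbalance-approximation-trunc} to the $BC$-factor with lengths $f_1 = 1-4c$ and $f_2 = 1+2c$. The hypotheses $bk < (f_1 - b)k$ and $(1+c)k < (f_2 - b)k$ reduce to $2b < 1-4c$ and $b < c$, again compatible with $c < 1/4 - 2b$. Because Lemma~\ref{lm:low-imbalance-approximation-trunc} states a supremum bound, the resulting approximation of $\ket{GS^{BC}_{r,q}}$ by $\ket{ATGS^{B|C}_{r,q}}$ is uniform in $r<bk$ and $q<(1+c)k$. A direct computation of the overlap (using that the walks in $\ket{ATGS^{B|C}_{r,q}}$ form a subset of those in $\ket{UGS^{BC}_{r,q}}$ with identical area weights) gives the identity $|\inner{GS^{BC}_{r,q}}{ATGS^{B|C}_{r,q}}|^2 = ATN^{B|C}_{r,q}/N^{BC}_{r,q}$, so the lemma also furnishes $N^{BC}_{r,q}/ATN^{B|C}_{r,q} \to 1$ superpolynomially and uniformly. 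Substituting $\ket{UGS^{BC}_{r,q}} = \sqrt{N^{BC}_{r,q}}\,\ket{GS^{BC}_{r,q}} \approx \sum_{v<bk}\ket{UGS^B_{r,v}}\ket{UGS^C_{v,q}}$ (where the normalization prefactors cancel exactly by construction, leaving only an orthogonal error of norm $\sqrt{N^{BC}_{r,q} - ATN^{B|C}_{r,q}}$) into the first approximation yields
\begin{equation*}
\ket{GS^{ABC}_{p,q}} \approx \frac{1}{\sqrt{ATN^{A|BC}_{p,q}}} \sum_{r,v<bk} \ket{UGS^A_{p,r}}\ket{UGS^B_{r,v}}\ket{UGS^C_{v,q}}.
\end{equation*}

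The last step is to replace the prefactor $1/\sqrt{ATN^{A|BC}_{p,q}}$ by $1/\sqrt{AN^{ABC}_{p,q}}$. Writing $ATN^{A|BC}_{p,q} = \sum_{r<bk} N^A_{p,r}\,N^{BC}_{r,q}$ and $AN^{ABC}_{p,q} = \sum_{r<bk}N^A_{p,r}\,ATN^{B|C}_{r,q}$, the uniform ratio bound $N^{BC}_{r,q}/ATN^{B|C}_{r,q}\to 1$ yields $ATN^{A|BC}_{p,q}/AN^{ABC}_{p,q}\to 1$ superpolynomially, so the substitution is legitimate and the right-hand side becomes exactly $\ket{FGS_{p,q}}$ from Definition~\ref{def:approximate-full-ground-state}. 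The principal obstacle is the composition step: the inner approximation must be substituted inside a sum of $O(k)$ terms indexed by $r$, so a naive triangle inequality would trade superpolynomial decay for merely polynomial control. This is resolved by the uniformity in $r$ of Lemma~\ref{lm:low-imbalance-approximation-trunc} combined with the orthogonality of distinct $r$-contributions (Proposition~\ref{pr:gs-projector-diagonal-pq}), which makes the $r$-indexed errors add in norm-squared rather than in norm and thereby preserves the superpolynomial rate.
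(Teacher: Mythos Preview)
Your proposal is correct and follows essentially the same route as the paper: two successive applications of Lemma~\ref{lm:low-imbalance-approximation-trunc}, composed via the transitivity and weighted-average arguments. The paper packages this as Lemma~\ref{lm:three-way-split} (invoking the auxiliary Lemmas~\ref{lm:approx-of-approx} and~\ref{lm:superposition-approx}) and splits in the reverse order ($AB|C$ first, then $A|B$), but this is immaterial; your direct overlap computation $|\inner{ATGS^{A|BC}_{p,q}}{FGS_{p,q}}|^2 = AN^{ABC}_{p,q}/ATN^{A|BC}_{p,q}$ is exactly the content of Lemma~\ref{lm:superposition-approx}, and your worry about the triangle inequality is actually harmless here since a factor of $k$ does not spoil superpolynomial decay.
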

\begin{proof}[Proof of Proposition \ref{pr:fgs-good-approximation}]
	It follows directly by Lemma \ref{lm:three-way-split}.
\end{proof}

By construction, the state $\ket{\normstate{H^{ABC,<b}_{p,q}}}$ only contains walks that simultaneously reach zero height in all three segments $A,B,C$. Therefore, when computing an overlap such as $\inner{\normstate{H^{ABC,<b}_{p,q}}}{\phi_{p,q}}$, only walks with the same property will be picked out within $\ket{\phi_{p,q}}$. These are precisely the walks that have been included in the substitute state $\ket{\phi_{p,q}'}$; moreover, these walks have the same relative prefactors in $\ket{\phi_{p,q}'}$ as they did in $\ket{\phi_{p,q}}$. The only possible difference is in the absolute values of the prefactors, which may occur because we imposed that $\ket{\phi_{p,q}'}$ has norm 1, while the corresponding component within $\ket{\phi_{p,q}}$ may have  norm $\leq 1$. However, in eq. \eqref{eq:substitute-fullgs-orthogonality} below we are taking the supremum on both sides, which for $\ket{\phi_{p,q}}$ will occur when no $\ket{{\psi'}^C}$ or $\ket{{\psi''}^C}$ components are present. Then, the absolute values of the prefactors are also identical, and equality between the LHS and RHS below follows:
\begin{equation} \label{eq:substitute-fullgs-orthogonality}
\widetilde{\lim\limits_{k \to \infty}} \op \sup\limits_{\ket{\phi_{p,q}} \in \range E_k} |\inner{\normstate{H^{ABC,<b}_{p,q}}}{\phi_{p,q}}| \cp  =  \widetilde{\lim\limits_{k \to \infty}} \op \sup\limits_{\{\psi^C_{v,q}\} \in P_{k, q}} |\inner{\normstate{H^{ABC,<b}_{p,q}}}{\phi_{p,q}'}| \cp 
\end{equation}

Since the true $\ket{\phi_{p,q}}$ must be orthogonal to the true ground state $\ket{GS_{p,q}^{ABC}}$, we use Lemma \ref{lm:overlaps-and-expectations} twice to find that the overlap of their approximations also vanishes:

\begin{align*}
0 &=\widetilde{\lim\limits_{k \to \infty}} \op \sup\limits_{\ket{\phi_{p,q}} \in \range E_k} |\inner{GS_{p,q}^{ABC}}{\phi_{p,q}}| \cp = \widetilde{\lim\limits_{k \to \infty}} \op \sup\limits_{\ket{\phi_{p,q}} \in \range E_k} |\inner{\normstate{H^{ABC,<b}_{p,q}}}{\phi_{p,q}}| \cp \\
& = \widetilde{\lim\limits_{k \to \infty}} \op \sup\limits_{\{\psi^C_{v,q}\} \in P_{k, q}} |\inner{\normstate{H^{ABC,<b}_{p,q}}}{\phi_{p,q}'}| \cp 
= \widetilde{\lim\limits_{k \to \infty}} \op \sup\limits_{\{\psi^C_{v,q}\} \in P_{k, q}} |\inner{\normstate{H^{ABC,<b}_{p,q}}}{A\phi_{p,q}}| \cp 
\end{align*}
The overlap on the last line can be computed at any particular $(p,q)$:
\begin{equation}
\begin{aligned}
\inner{\normstate{H^{ABC,<b}_{p,q}}}{A\phi_{p,q}} 
= \sum_{r,r',v,v' < b k}\frac{ \ob \bra{\unnormstate{G^A_{p,r'}}} \bra{\unnormstate{G^B_{r',v'}}} \bra{\unnormstate{G^C_{v',q}}}  \cb \ob \ket{\unnormstate{G^A_{p,r}}} \ket{\unnormstate{G^B_{r,v}}} \ket{\psi^C_{v,q}} \cb }{ \sqrt{ \mathcal{N}^{A\phi}_{p,q} \cdot \norm{H^{ABC,<b}_{p,q}}}}
\end{aligned}
\end{equation}
In order to have any overlap, states must have identical numbers of unbalanced steps. Recalling from Theorem \ref{thm:degenerate-pq-gs} and Definition \ref{def:walk_set_definitions} that
\begin{equation}
\inner{\unnormstate{G^i_{a',b'}}}{\unnormstate{G^i_{a,b}}} = N^i_{a,b} \cdot \delta_{a,a'} \cdot \delta_{b,b'} \quad \quad \forall i \in \{A, B, C \}
\end{equation}
one obtains for the above
\begin{align}
\inner{\normstate{H^{ABC,<b}_{p,q}}}{A\phi_{p,q}} 
\
\label{eq:aortho} &= {1 \over \sqrt{ \mathcal{N}^{A\phi}_{p,q} \cdot \norm{H^{ABC,<b}_{p,q}}}} \sum_{r,v < b k} N^{A}_{p,r} N^{B}_{r,v} \cdot \inner{\unnormstate{G^C_{v,q}}}{\psi^C_{v,q}} 
\end{align}

A similar computation can be performed for the matrix element $\mel{A\phi_{p,q}}{AG_{BC}}{A\phi_{p,q}}$:

\begin{align*}
\mel{A\phi_{p,q}}{AG_{BC}}{A\phi_{p,q}} = {1 \over \mathcal{N}^{A\phi}_{p,q}} \sum_{r_1, v_1, r_2, v_2, r, v, v'} &{1 \over \norm{H^{BC,<b}_{r,q}}} \cdot \inner{\unnormstate{G^A_{p,r_1}}}{\unnormstate{G^A_{p,r_2}}} \cdot \inner{\unnormstate{G^B_{r_1,v_1}}}{\unnormstate{G^B_{r,v}}}\\
& \cdot \inner{\unnormstate{G^B_{r,v'}}}{\unnormstate{G^B_{r_2,v_2}}} \cdot \inner{\psi^C_{v_1,q}}{\unnormstate{G^C_{v,q}}} \cdot \inner{\unnormstate{G^C_{v',q}}}{\psi^C_{v_2,q}}
\end{align*}
which, taking care of all delta functions that effectively remove some summation variables, gives
\begin{align}
\mel{A\phi_{p,q}}{AG_{BC}}{A\phi_{p,q}} &= {1 \over \mathcal{N}^{A\phi}_{p,q}} \sum_{v_1, v_2, r} {1 \over \norm{H^{BC,<b}_{r,q}}} \cdot N^A_{p,r} \cdot N^B_{r, v_1} \cdot N^B_{r, v_2} \cdot \inner{\psi^C_{v_1,q}}{\unnormstate{G^C_{v_1,q}}} \cdot \inner{\unnormstate{G^C_{v_2,q}}}{\psi^C_{v_2,q}}\\
\label{eq:aovlp} &= {1 \over \mathcal{N}^{A\phi}_{p,q}} \sum_{r} {N^A_{p,r} \over \norm{H^{BC,<b}_{r,q}}} \cdot \op \sum_{v_1} N^B_{r, v_1} \cdot  \inner{\psi^C_{v_1,q}}{\unnormstate{G^C_{v_1,q}}} \cp \cdot \op \sum_{v_2} N^B_{r, v_2} \cdot \inner{\unnormstate{G^C_{v_2,q}}}{\psi^C_{v_2,q}} \cp
\end{align}
In both \eqref{eq:aortho} and \eqref{eq:aovlp} we can see that any component of $\ket{\psi^C_{v,q}}$ not in the ground space of the Hamiltonian acting on $C$ contributes nothing to the numerator, while making the normalization factor $\mathcal{N}^{A\phi}_{p,q}$ in the denominator larger.  Therefore it is advantageous to separate the component of $\ket{\psi^C_{v,q}}$ in the ground space corresponding to $C$:

\begin{definition}
	For every initial $\ket{\phi_{p,q}}$, let $\{c_v\}$ be the collection, indexed by $v$, of overlaps between $\ket{\psi^C_{v,q}}$ and the relevant (unnormalized) ground state:
	\begin{equation}
	c_v = \inner{\unnormstate{G^C_{v,q}}}{\psi^C_{v,q}}
	\end{equation}
\end{definition}
The above allows us to write
\begin{equation}
\ket{\psi^C_{v,q}} = c_v \cdot \ket{\unnormstate{G^C_{v,q}}} + \op p_v \sqrt{N_{v,q}^C} \cp \cdot  \ket{\chi_{v,q}^C}
\end{equation} 
where $\ket{\chi_{v,q}^C}$ is a normalized state, orthogonal to the ground space on segment $C$, and $p_v$ is a complex number giving the relative amplitude of $\ket{\chi_{v,q}^C}$. The square root of the normalization factor $N_{v,q}^C$ is conveniently chosen such that
\begin{equation}
\inner{\psi^C_{v,q}}{\psi^C_{v,q}} = |c_v|^2 \cdot \inner{\unnormstate{G^C_{v,q}}}{\unnormstate{G^C_{v,q}}} + |p_v|^2 \cdot N_{v,q}^C \cdot 1 = N_{v,q}^C \cdot \op |c_v|^2 + |p_v|^2 \cp
\end{equation} 
The only place where this appears is the approximate normalization factor of $\phi$:
\begin{equation}
\mathcal{N}^{A\phi}_{p,q} = \sum_{r, v < bk} N^{A}_{p,r} \cdot N^{B}_{r,v} \cdot N_{v,q}^C \cdot \op |c_v|^2 + |p_v|^2 \cp
\end{equation}
With this definition, equation \eqref{eq:aortho} becomes
\begin{equation} \label{eq:aorth2}
| \inner{\normstate{H^{ABC,<b}_{p,q}}}{A\phi_{p,q}} | = {1 \over \sqrt{ \mathcal{N}^{A\phi}_{p,q} \cdot \norm{H^{ABC,<b}_{p,q}}}} \cdot \bigg| \sum_{r,v < b k} N^{A}_{p,r} \cdot N^{B}_{r,v} \cdot N^C_{v,q} \cdot c_v \bigg|
\end{equation}
and the matrix element that we're looking to bound in the end is
\begin{equation} \label{eq:aovlp2}
\mel{A\phi_{p,q}}{AG_{BC}}{A\phi_{p,q}} = {1 \over \mathcal{N}^{A\phi}_{p,q}} \sum_{r} {N^A_{p,r} \over \norm{H^{BC,<b}_{r,q}}} \cdot \bigg| \sum_{v} N^B_{r, v} \cdot N^C_{v, q} \cdot c_v \bigg|^2
\end{equation}
The purpose is to bound the supremum of the RHS in eq. \eqref{eq:aovlp2} over all collections of factors $\{c_v, p_v \}$ that could be obtained from an initial $\ket{\phi_{p,q}} \in \range E_k$, as described above. By the (approximate) orthogonality condition, we are guaranteed that the RHS of eq. \eqref{eq:aorth2} vanishes at large $k$, for all such obtainable collections. Therefore, the vanishing of \eqref{eq:aorth2} at large $k$ is a weaker condition than the obtainability of $\{c_v, p_v \}$. In what follows we will prove that the quantity in \eqref{eq:aovlp2} vanishes at large $k$ even only under this weaker condition, which will also prove the desired result. The approximate orthogonality says that 

\begin{equation} \label{eq:cs-constraint-eq}
\widetilde{\lim\limits_{k \to \infty}} \op \sup\limits_{\{c_v, p_v \} \text{ obtainable}} {1 \over \sqrt{ \mathcal{N}^{A\phi}_{p,q} \cdot \norm{H^{ABC,<b}_{p,q}}}} \cdot \bigg| \sum_{r,v < b k} N^{A}_{p,r} \cdot N^{B}_{r,v} \cdot N^C_{v,q} \cdot c_v \bigg| \cp =0
\end{equation}
 
As discussed above, the condition that $\{c_v, p_v\}$ must be obtainable will be relaxed, and replaced by the weaker condition that the above limit vanishes. Going on, we will also abbreviate $\{c_v, p_v\}$ by $\{\tilde c_v\}$ (since the $p_v$ are unimportant). When writing $\sup_{\{\tilde c_v\}}$ we mean the supremum over all collections $\{c_v, p_v\}$ which obey the vanishing condition. We are looking to bound, in the limit of large $k$, the following:

\begin{equation}
\begin{aligned}
&\sup\limits_{0 \le p,q \le (1+c)k} \op \sup\limits_{\{\tilde c_v\}} \ob \mel{A\phi_{p,q}}{AG_{BC}}{A\phi_{p,q}} \cb \cp \\
=& \sup\limits_{0 \le p,q \le (1+c)k} \op \sup\limits_{\{\tilde c_v\}} \op {1 \over \mathcal{N}^{A\phi}_{p,q}} \sum_{r} {N^A_{p,r} \over \norm{H^{BC,<b}_{r,q}}} \cdot \bigg| \sum_{v} N^B_{r, v} \cdot N^C_{v, q} \cdot c_v \bigg|^2 \cp \cp
\end{aligned}
\end{equation}

\begin{definition}\label{def:xr-factors} The following quantities are convenient:
	\begin{equation} \label{eq:xr-factors-definition}
	x_r^{(k, \{\tilde c_v\}) } \equiv {\sum_v N_{r,v}^B \; N_{v,q}^C \; c_v \over \norm{H^{BC,<b}_{r,q}}} \; \sqrt{\norm{H^{ABC,<b}_{p,q}} \over \mathcal{N}^{A\phi}_{p,q}}
	\end{equation}
	where the $p,q$ dependence of the $x_r$ factors is kept implicit. We can simplify the expression by also defining
	\begin{equation} \label{eq:j-factors-definition}
	j^{(k, \{\tilde c_v\}) }_{p, q} = \sqrt{\norm{H^{ABC,<b}_{p,q}} \over \mathcal{N}^{A\phi}_{p,q}} = \sqrt{\sum_{u,v} N_{p,u}^A \; N_{u,v}^B \; N_{v,q}^C \over \sum_{u,v} N_{p,u}^A \; N_{u,v}^B \; N_{v,q}^C \; |c_v|^2 }
	\end{equation}
    The $\pi^k_{r,v}$ ratios, defined in \eqref{eq:rationdefn} and \eqref{eq:rationdefn_infty} will also be useful:
    \begin{align}
        \pi^k_{r,v} &= {N^k_{r,v} \over N^k_{r,0}}\\
        \pi^\infty_{r,v} &= \lim\limits_{k \to \infty} \pi^k_{r,v}
    \end{align}
\end{definition}
\begin{remark}
    By symmetry of the normalization factors, $N^k_{r,v} = N^k_{v,r}$, it follows that
    \begin{eqnarray}
        {N^k_{r,v} \over N^k_{0,v}} = {N^k_{v, r} \over N^k_{v, 0}} = \pi^k_{v, r}.
    \end{eqnarray}
\end{remark}

Writing $\norm{H^{BC,<b}_{r,q}}$ as a sum, we obtain
\begin{equation} \label{eq:xr-factors-reduced}
x_r \equiv {\sum_v \pi_{v, r}^{\slength{B}} \; N_{0,v}^B \; N_{v,q}^C \; c_v \over \sum_v \pi_{v, r}^{\slength{B}} \; N_{0,v}^B \; N_{v,q}^C} \; j_{p, q}
\end{equation}
Note that the $x_r$ are invariant (up to a global phase) under an uniform scaling $c_v \to a \cdot c_v$ and $p_v \to a \cdot p_v$ (fixed $a \in \mathbb C$, for all $v$ simultaneously), i.e. they only depend on the relations among the various $c_v, p_v$ and not their overall magnitudes (as it should be, since the $c_v, p_v$ were defined as part of an unnormalized expression). This definition is convenient because eq. \eqref{eq:aorth2} becomes
\begin{equation} \label{eq:xr-approximate-orthonormality}
| \inner{\normstate{H^{ABC,<b}_{p,q}}}{A\phi_{p,q}} | = \bigg| \sum_{r} {N^{A}_{p,r} \; \norm{H^{BC,<b}_{r,q}} \over \norm{H^{ABC,<b}_{p,q}}} \; x_r \bigg|
\end{equation}
while eq. \eqref{eq:aovlp2} turns into
\begin{equation} \label{eq:xr-approximate-matrix-element}
\mel{A\phi_{p,q}}{AG_{BC}}{A\phi_{p,q}} = \sum_{r} {N^{A}_{p,r} \; \norm{H^{BC,<b}_{r,q}} \over \norm{H^{ABC,<b}_{p,q}}} \; |x_r|^2
\end{equation}
with the property that, in both of the above, the coefficients in front of the $x_r$ sum to 1, by construction:
\begin{equation} 
\sum_{r} {N^{A}_{p,r} \; \norm{H^{BC,<b}_{r,q}} \over \norm{H^{ABC,<b}_{p,q}}} = {\sum_{r} N^{A}_{p,r} \; \norm{H^{BC,<b}_{r,q}} \over \norm{H^{ABC,<b}_{p,q}}} = {\sum_{r} N^{A}_{p,r} \; \norm{H^{BC,<b}_{r,q}} \over \sum_{r} N^{A}_{p,r} \; \norm{H^{BC,<b}_{r,q}}} = 1
\end{equation}
The key must be in how the various $x_r$ relate to each other. The only $r$ dependence is in the $\pi_{v, r}^{\slength{B}}$ factors, present both in the numerator and denominator of eq. \eqref{eq:xr-factors-reduced}.\\ 

As we will see in the following section, at large $k$, the quantity $\pi_{v, r}^{k}$ depends very weakly on $v$, allowing us to show that the $x_r$ factors with various $r$ are very close to each other; and from that, we will prove that prove that the quantity in eq. \eqref{eq:xr-approximate-matrix-element} becomes very small at large $k$.\\

\section{Normalizations and the $x_r$ factors}\label{sec:normalizations}

In this rather technical section we show that the relevant $x_r$ factors approach $r$-independent quantities at large $k$:

\begin{lemma}\label{lm:xr-closeness}
	With Definition \ref{def:xr-factors} and the condition \eqref{eq:cs-constraint-eq} constraining the $\{\tilde c_s\}$ factors, we have that
	\begin{equation}
	\lim\limits_{k \to \infty} \op \sup\limits_{0 \le p,q \le (1+c)k} \op \sup\limits_{\{\tilde c_v\}} \op \sup\limits_{r < b k} |x_{r}^{(k,\{\tilde c_v\})} - x_{0}^{(k,\{\tilde c_v\})}| \cp \cp \cp = 0
	\end{equation}	
\end{lemma}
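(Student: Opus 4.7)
The strategy is to exploit a quasi-factorization property of the normalization ratio $\pi_{r,v}=N^B_{r,v}/N^B_{0,v}$: namely, that for $r,v<bk$ its dependence on $v$ is negligible as $k\to\infty$. Granting this, the lemma will follow from an elementary manipulation of weighted averages combined with a single Cauchy--Schwarz step. The key input I will extract from Appendix~\ref{sec:norm-ratios-appendix} is a uniform estimate
\[
\sup_{r,v<bk}\bigg|\frac{\pi_{r,v}}{\pi_{r,0}}-1\bigg|\le \eta_k, \qquad \eta_k\to 0 \text{ as } k\to\infty,
\]
valid for all $p,q\le(1+c)k$. This estimate is to be derived from the one-dimensional discrete diffusion interpretation of the recursion satisfied by $N^B_{r,v}$, with the hypothesis $b<1/(4\beta)$ providing the required smallness.

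Setting $\alpha_v:=N^B_{0,v}N^C_{v,q}$ and introducing the probability distribution $q_r(v):=\pi_{r,v}\alpha_v/\sum_{v'}\pi_{r,v'}\alpha_{v'}$, I rewrite $x_r=j_{p,q}\bar c_r$ with $\bar c_r:=\sum_v q_r(v)\,c_v$. The quasi-factorization above immediately yields $q_r(v)=q_0(v)(1+\zeta_{r,v})/(1+\bar\zeta_r)$ with all auxiliary quantities of modulus at most $\eta_k$, whence $|q_r(v)-q_0(v)|\le C_1\eta_k\,q_0(v)$. Applying the triangle inequality and then Cauchy--Schwarz gives
\[
|\bar c_r-\bar c_0|\le C_1\eta_k\sum_v q_0(v)|c_v|\le C_1\eta_k\sqrt{\sum_v q_0(v)|c_v|^2}.
\]

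To complete the bound, the (possibly unbounded) $|c_v|$ factors are absorbed by $j_{p,q}$ as follows. Defining $\beta_v:=\sum_r N^A_{p,r}N^B_{r,v}N^C_{v,q}$, the same quasi-factorization yields $\beta_v=\alpha_v\cdot M_p\cdot(1+O(\eta_k))$ with $M_p:=\sum_r N^A_{p,r}\pi_{r,0}$ independent of $v$, and hence $q_0(v)\le(1+C_2\eta_k)\cdot\beta_v/\sum_{v'}\beta_{v'}$. Combining this with the definitional inequality
\[
j_{p,q}^2=\frac{\sum_v\beta_v}{\sum_v\beta_v(|c_v|^2+|p_v|^2)}\le\frac{\sum_v\beta_v}{\sum_v\beta_v|c_v|^2},
\]
one concludes $j_{p,q}^2\cdot\sum_v q_0(v)|c_v|^2\le 1+C_3\eta_k$, and therefore $|x_r-x_0|=j_{p,q}|\bar c_r-\bar c_0|\le C_1\eta_k\sqrt{1+C_3\eta_k}\to 0$, with a bound uniform in $r<bk$, in $p,q\le(1+c)k$, and in the admissible $\{\tilde c_v\}$. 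Notably, the orthogonality constraint \eqref{eq:cs-constraint-eq} is not used for this step; it enters only later, when the pointwise closeness of the $x_r$ is combined with it to force $x_0\to 0$.

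The main obstacle is thus the quasi-factorization estimate for $\pi_{r,v}$, which we defer to Appendix~\ref{sec:norm-ratios-appendix}. Establishing it uniformly over the relevant ranges of $p,q,r,v$ will require a careful asymptotic analysis of a spatially inhomogeneous discrete heat-type recursion; the precise threshold $b<1/(4\beta)$ encodes the requirement that the diffusive spread of $v\mapsto N^B_{r,v}$ occurs on a scale large compared to $bk$, so that its $v$-dependence reduces to a global multiplicative factor to leading order. By comparison, the manipulations above are essentially routine once the factorization is in hand.
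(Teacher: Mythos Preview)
Your proposal is correct and relies on the same two ingredients as the paper's proof: the quasi-factorization estimate for $\pi_{r,v}/\pi_{r,0}$ (Proposition~\ref{pr:convergence-requirement}, derived from Theorem~\ref{thm:mainec} under the hypothesis $b<1/(4\beta)$) and a Cauchy--Schwarz step that absorbs $\sum_v q_0(v)|c_v|^2$ into $j_{p,q}^2$. You are also right that the orthogonality constraint \eqref{eq:cs-constraint-eq} plays no role here; in the paper it enters only in Lemma~\ref{lm:xr-smallness}.

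The route differs in presentation. The paper decomposes $c_v=d_v+ie_v$ and further splits each part by sign, obtaining four auxiliary quantities $x_r^\pm,y_r^\pm$ for which it proves multiplicative bounds $(1-2\epsilon)x_0^\pm<x_r^\pm<(1+2\epsilon)x_0^\pm$, and then bounds $x_0^++x_0^-$ and $y_0^++y_0^-$ separately via Cauchy--Schwarz. Your argument instead treats the $q_r(\cdot)$ directly as probability measures and compares them multiplicatively to $q_0(\cdot)$, which handles complex $c_v$ in one stroke and avoids the four-way split. This is cleaner and shorter, at no cost: the constants are comparable, and no additional hypothesis is needed. The paper's decomposition buys nothing essential here; it is simply a more hands-on way to control a complex weighted average by positive quantities.
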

Before proving Lemma \ref{lm:xr-closeness}, we rephrase the main result of Appendix \ref{sec:norm-ratios-appendix}:

\begin{proposition} \label{pr:convergence-requirement}
	Make the notation $\pi_r \equiv \pi_{0,r}^\infty$. For any $\e > 0$, there exists a $k_0$ such that for all $k > k_0$ the following holds true:
	\begin{equation} \label{eq:convergence-requirement}
	\forall r, v < b k \quad \text{we have}\quad 0 \le 1 - {\pi_{v, r}^k \over \pi_r}  < \e
	\end{equation}
\end{proposition}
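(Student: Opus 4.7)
My plan is to recast the problem as asymptotic analysis of a one-dimensional discrete diffusion equation with inhomogeneous coefficients, which is the approach referred to in Appendix \ref{sec:norm-ratios-appendix}. The starting point is a three-term recursion for $N^B_{r,v}$ obtained by conditioning on the last step of a walk from height $r$ to $v$ of length $L$: such a walk ends with a down-, flat-, or up-step originating from height $v+1$, $v$, or $v-1$ respectively, with area contribution depending on that height. This gives
\beq
N^{L}_{r,v} = \alpha_v\, N^{L-1}_{r,v-1} + \beta_v\, N^{L-1}_{r,v} + \alpha_{v+1}\, N^{L-1}_{r,v+1}
\eeq
with reflecting boundary at $v = 0$, where $\alpha_v, \beta_v$ are explicit in $t$ and decay in $v$ due to the area penalty $t^{2v}$. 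I view this as a diffusion in the height variable $v$ with a confining drift toward zero.

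Next I would analyze convergence to equilibrium via the associated transfer operator $T$. Because the area weighting yields detailed balance with respect to an explicit super-exponentially decaying measure $\mu$, the operator $T$ is self-adjoint after symmetrization by $\mu^{1/2}$, with a simple top eigenvalue $\lambda_0$, corresponding eigenvector $\psi$ decaying super-exponentially in the height, and a spectral gap $\Delta > 0$ depending only on $t$. Writing $N^B_{r,v} = \langle v | T^L | r \rangle$, the spectral decomposition yields
\beq
\frac{N^B_{r,v}}{N^B_{0,v}} = \frac{\psi(r)}{\psi(0)} \,\bigl(1 + O(e^{-\Delta L})\bigr),
\eeq
where the error is uniform over $r, v < b k$ thanks to the super-exponential decay of $\psi$. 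Specializing to $v = 0$ gives the identical limit for $\pi_r$, hence $\pi_{r,v}/\pi_r = 1 + O(e^{-\Delta L})$ uniformly. Since $L$ is the length of $B$ which scales like $k$ while $\Delta$ is fixed, this yields the quantitative convergence for any $\varepsilon > 0$.

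The monotonicity $\pi_{r,v} \leq \pi_r$, equivalent to $N^B_{r,v}\, N^B_{0,0} \leq N^B_{r,0}\, N^B_{0,v}$, I would establish by a direct combinatorial injection: given pairs $(w_1, w_2) \in G^B_{r,v} \times G^B_{0,0}$, cut $w_1$ at the last time it touches zero height and $w_2$ at the corresponding time index, then swap the right halves; the resulting pair lies in $G^B_{r,0} \times G^B_{0,v}$, the total area is preserved since each fragment contributes unchanged, and the map is injective because the splitting index can be recovered as the last zero-height crossing of the first component of the image. The main obstacle will be quantifying $\Delta(t)$ and controlling the interplay of the reflecting boundary at $v = 0$ with the height-dependent coefficients; the super-exponential decay of $\mu$ effectively compactifies the height space, reducing the gap estimate to a finite matrix analysis whose dependence on $t$ can be made explicit and which I expect to fill the bulk of the appendix.
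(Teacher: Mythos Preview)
Your spectral/transfer-operator route is genuinely different from the paper's, which deduces the Proposition as an immediate corollary of Theorem~\ref{thm:mainec} and proves that theorem by nested elementary inductions in $k$ (Lemmas~\ref{lm:monotonicity}, \ref{lm:qdecay}, and the refined Theorem~\ref{thm:expconv} with four auxiliary parameters). However, two steps in your outline do not work as written. First, the boundary at $v=0$ is \emph{not} reflecting for $N^L_{r,v}$: when $v=0$ the last step may be the walk's first contact with height zero, so the preceding $L{-}1$ steps need not lie in any $G^{L-1}_{r,v'}$. This is exactly the extra term $t^{2k+1}N^k_{r-1,0}$ in the paper's recursion~\eqref{eq:Nrecursion}, which couples to a \emph{different} value of the first index; hence $N^L_{r,v}\neq\langle v|T^L|r\rangle$ for any $r$-independent tridiagonal $T$. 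One can repair this by working with the unrestricted quantity $M^L_{r,v}=\langle v|T^L|r\rangle$ counting all nonnegative walks and using $N^L_{r,v}=M^L_{r,v}-t^{2L}M^L_{r-1,v-1}$, but you would then have to control the correction and the uniformity in $r,v$ up to $bk$; the paper explicitly identifies this boundary coupling as the main technical obstacle.

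Second, your injection for the monotonicity $N^B_{r,v}N^B_{0,0}\leq N^B_{r,0}N^B_{0,v}$ is broken: cutting $w_1$ at its last zero at position $j$ and $w_2$ at the \emph{same} index $j$, the heights at the cut are $0$ and $h_2(j)$ respectively, which generically differ, so swapping the right halves does not produce concatenable walks. Shifting one half vertically to match heights spoils either nonnegativity or total-area preservation. The paper establishes the required monotonicity (Lemma~\ref{lm:monotonicity}) not by bijection but by a separate induction on the auxiliary ratios $\rho^k_{p,q}=t\,\pi^k_{p,q+1}/\pi^k_{p,q}$.
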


\begin{proof}[Proof of Proposition \ref{pr:convergence-requirement}]
	
	We know from Theorem \ref{thm:mainec} that, given the value of $t$, there exist constants $C_*,\alpha,\beta>0$ such that for all $r + v \le k$,
	\begin{equation}\label{eq:mainec0}
	0\leq \pi_r - \pi_{v, r}^{k}\leq C_* \; t^{\alpha(k-\beta r - \beta v)} \; \pi_r \quad \iff \quad 0 \leq 1 - {\pi_{v, r}^{k} \over \pi_r} \leq C_* \; t^{\alpha [k-  \beta (r + v)]}
	\end{equation}
	When imposing $r, v < bk$ we find that 
	\begin{equation}
	k - \beta (r + v) > k - \beta \; 2 b k = k (1 - 2 \beta b)
	\end{equation}
	This is where the condition $b < {1 \over 4 \beta}$ from Assumption \ref{as:low-imb-half-split-constants} plays an important role: with this constraint on $b$, we find that the rightmost term above is greater than or equal to $k/2$. With $\alpha > 0$ and $t < 1$, we find that
	\begin{equation}
	t^{\alpha [k-  \beta (r + v)]} < t^{\alpha k /2}
	\end{equation}
	and so it follows that
	\begin{equation}
	0 \leq 1 - {\pi_{r, v}^{(k)} \over \pi_r} < C_* \; t^{\alpha k /2}
	\end{equation}
	For the desired inequality \eqref{eq:convergence-requirement} to hold, it is clear that it is sufficient to take $k$ large enough such that
	\begin{equation}
	C_* \; t^{\alpha k /2} \stackrel{!}{<} \e
	\end{equation}
	Due to the positivity of all relevant constants, and the fact that $t<1$, this is satisfied if
	\begin{equation}
	k > {2 \over \alpha} \; {\ln \op {\e / C_*} \cp \over \ln t}
	\end{equation}
	which shows that a suitable $k_0$ can indeed be chosen, completing the proof.	
\end{proof}

We now turn to proving the main result of the section:

\begin{proof}[Proof of Lemma \ref{lm:xr-closeness}]
	Fix some small $\e>0$, and using Proposition \ref{pr:convergence-requirement}, take $k$ large enough such that 
	\begin{equation}
	\forall r, v < b k \quad \text{we have}\quad 0 \le 1 - {\pi_{v, r}^{\slength{B}} \over \pi_r}  < \e
	\end{equation}
	This is equivalent to
	\begin{equation}
	\forall r, v < b k \quad \text{we have}\quad 0 \le \pi_r - \pi_{v, r}^{\slength{B}}  < \e \pi_r \quad \iff \quad \pi_r \ge \pi_{v, r}^{\slength{B}}  > \pi_r \; (1 - \e)
	\end{equation}
	Take arbitrary, but specific $p,q$ obeying the condition of the lemma, and a collection of complex $\{\tilde c_v\}$ obeying the constraint \eqref{eq:cs-constraint-eq}. We separate the real and imaginary parts of the $c_v$ factors, so write $c_v = d_v + i e_v$ where $d_v, e_v \in \R$ for all $v$. Some of the $\{d_v\}$ and $\{e_v\}$ may be positive, and some may be negative. We define accordingly
	\begin{definition}
		\begin{align}
		\label{eq:defx} x_r^+ = {\sum_{v; \; d_v > 0} \pi_{v, r}^{\slength{B}} \; N_{0, v}^B \; N_{v, q}^C \; |d_v| \over \sum_v \pi_{v, r}^{\slength{B}} \; N_{0, v}^B \; N_{v, q}^C } \; j_{p,q} \qquad&\qquad
		x_r^- = {\sum_{v; \; d_v < 0} \pi_{v, r}^{\slength{B}} \; N_{0, v}^B \; N_{v, q}^C \; |d_v| \over \sum_v \pi_{v, r}^{\slength{B}} \; N_{0, v}^B \; N_{v, q}^C } \; j_{p,q}\\
		\label{eq:defy} y_r^+ = {\sum_{v; \; e_v > 0} \pi_{v, r}^{\slength{B}} \; N_{0, v}^B \; N_{v, q}^C \; |e_v| \over \sum_v \pi_{v, r}^{\slength{B}} \; N_{0,v}^B \; N_{v,q}^C } \; j_{p,q} \qquad&\qquad
		y_r^- = {\sum_{v; \; e_v < 0} \pi_{v, r}^{\slength{B}} \; N_{0,v}^B \; N_{v,q}^C \; |e_v| \over \sum_v \pi_{v, r}^{\slength{B}} \; N_{0,v}^B \; N_{v,q}^C } \; j_{p,q}
		\end{align}
		the partial sums including only positive or only negative terms in the numerator of the fraction, such that $x_r^+, x_r^-, y_r^+, y_r^- >0$ and $x_r = \oa x_r^+ - x_r^- \ca + i \oa y_r^+ - y_r^- \ca$.
	\end{definition}
	
	We will prove that $x_r^\pm$ is close to $x_0^\pm$, and the same for the $y^{\pm}_r$ and $y^{\pm}_0$ factors. This will then show that $x_r$ is close to $x_0$. To begin, we will only work with the $x_r^\pm$. Since $\pi_{v, r}^{\slength{B}} \le \pi_r$, the denominator of all the fractions above is bounded by 
	\begin{equation}
	\sum_v \pi_{v, r}^{\slength{B}} \; N_{0,v}^B \; N_{v,q}^C \le \pi_r \sum_v N_{0,v}^B \; N_{v,q}^C \quad \implies \quad {1 \over \sum_v \pi_{v, r}^{\slength{B}} \; N_{0,v}^B \; N_{v,q}^C} \ge {1 \over \pi_r} \; {1 \over \sum_v N_{0,v}^B \; N_{v,q}^C}
	\end{equation}
	Moving the $1/\pi_r$ factor up to the numerator of the fraction, we get
	\begin{equation}
	x_r^+ \ge {\sum_{v; \; d_v>0} {\pi_{v, r}^{\slength{B}} \over \pi_r} \; N_{0,v}^B \; N_{v,q}^C \; |d_v| \over \sum_v N_{0,v}^B \; N_{v,q}^C } \; j_{p,q}
	\end{equation}
	Observing that
	\begin{equation}
	x_0^+ = {\sum_{v; \; d_v>0} N_{0,v}^B \; N_{v,q}^C \; |d_v| \over \sum_v N_{0,v}^B \; N_{v,q}^C } \; j_{p,q}
	\end{equation}
	and from the normalization ratio convergence assumption
	\begin{equation}
	{\pi_{v, r}^{\slength{B}} \over \pi_r} > 1 - \e
	\end{equation}
	together with the fact that all the terms $N_{0,v}^B \; N_{v,q}^C \; |d_v|$ in the numerator sum are positive, we find
	\begin{equation} \label{eq:xrp_bound_higher}
	x_r^+ > {\sum_{v; \; d_v>0} {\pi_{v, r}^{\slength{B}} \over \pi_r} \; N_{0,v}^B \; N_{v,q}^C \; |d_v| \over \sum_v N_{0,v}^B \; N_{v,q}^C } \; j_{p, q} > {\sum_{v; \; d_v>0} (1 - \e) \; N_{0,v}^B \; N_{v,q}^C \; |d_v| \over \sum_v N_{0,v}^B \; N_{v,q}^C } \; j_{p,q} = (1 - \e) \; x_0^+
	\end{equation}
	and by an identical argument $x_r^- > (1 - \e) \; x_0^-$. For the corresponding upper bounds we again start with the denominators
	\begin{equation}
	\sum_v \pi_{v, r}^{\slength{B}} \; N_{0,v}^B \; N_{v,q}^C > \pi_r (1 - \e) \sum_v N_{0,v}^B \; N_{v,q}^C \; \implies \; {1 \over \sum_v \pi_{v, r}^{\slength{B}} \; N_{0,v}^B \; N_{v,q}^C} < {1 \over 1- \e} \; {1 \over \pi_r} \; {1 \over \sum_v N_{0,v}^B \; N_{v,q}^C}
	\end{equation}
	and with ${\pi_{v, r}^{\slength{B}} / \pi_r} \le 1$ we see
	\begin{equation} \label{eq:xrp_bound_lower}
	x_r^+ < {1 \over 1 - \e} \; {\sum_{v; \; d_v>0} {\pi_{v, r}^{\slength{B}} \over \pi_r} \; N_{0,v}^B \; N_{v,q}^C \; |d_v| \over \sum_v N_{0,v}^B \; N_{v,q}^C } \; j_{p,q} \le {1 \over 1 - \e} \; {\sum_{v; \; d_v>0} N_{0,v}^B \; N_{v,q}^C \; |d_v| \over \sum_v N_{0,v}^B \; N_{v,q}^C } \; j_{p,q} = {1 \over 1 - \e} \; x_0^+
	\end{equation}
	and similarly $x_r^- < x_0^- / (1 - \e)$. It is now clear from \eqref{eq:xrp_bound_higher} and \eqref{eq:xrp_bound_lower} that $\e \to 0$ implies $x_r^+ \to x_0^+$; analogous properties will be true for $x_r^-$ as well as $y_r^{\pm}$, which will lead to the conclusion $x_r \to x_0$. However, to show uniformity over all the parameters required by Lemma \ref{lm:xr-closeness}, we proceed carefully.\\
    
    By elementary manipulations, we find
	\begin{equation}
	{1 \over 1- \e} = 1 + \e + \e \; {\e \over 1 - \e} 
	\end{equation}
	and taking $\e$ small enough that ${\e \over 1 - \e} < 1$, we find ${1 \over 1- \e} < 1 + 2\e$. This gives
	\begin{equation}
	(1 - \e) \; x_0^\pm < x_r^\pm < (1 + 2\e ) \; x_0^\pm
	\end{equation}
	For symmetry purposes, use $1 - \e > 1 - 2\e$ and write
	\begin{equation}
	(1 - 2 \e) \; x_0^\pm < x_r^\pm < (1 + 2\e ) \; x_0^\pm
	\end{equation}
	which gives a lower bound on the difference $x_r^+ - x_r^-$ of
	\begin{equation}
	x_r^+ - x_r^- > (1 -2 \e) x_0^+-  (1 + 2\e) x_0^- = (x_0^+ - x_0^-) - 2\e (x_0^+ + x_0^-)
	\end{equation}
	Recalling from the definitions above that the real part of $x_r$ is $x_r^+ - x_r^-$, we see that the above reads $\Re (x_r) > \Re (x_0) - 2\e (x_0^+ + x_0^-)$. The upper bound gives similarly $\Re (x_r) < \Re (x_0) + 2\e (x_0^+ + x_0^-)$, so we have
	\begin{equation}
	|\Re(x_r - x_0)| < 2\e (x_0^+ + x_0^-)
	\end{equation}
	To complete the argument, we need to find a bound on $x_0^+ + x_0^-$. We start by examining the square of the quantity $j_{p,q}$ defined in \ref{def:xr-factors} above:
	\begin{equation}
	j_{p,q}^2 = {\sum_{u,v} N_{p,u}^A \; N_{u,v}^B \; N_{v,q}^C \over \sum_{u,v} N_{p,u}^A \; N_{u,v}^B \; N_{v,q}^C \; \op |c_v|^2  + |p_v^2| \cp} = {\sum_{u,v} N_{p,u}^A \; \pi_{v, u}^{\slength{B}} \; N_{0,v}^B \; N_{v,q}^C \over \sum_{u,v} N_{p,u}^A \; \pi_{v, u}^{\slength{B}} \; N_{0,v}^B \; N_{v,q}^C \; \op |c_v|^2  + |p_v^2| \cp }
	\end{equation}
	All the terms in the sums above are positive, so we can safely use the bounds $\pi_{v, u}^{\slength{B}} \le \pi_u$ in the numerator and $\pi_{v, u}^{\slength{B}} > (1 - \e) \pi_u$ in the denominator, to bound $j_{p, q}^2$ from above. Then the $u$ sums cancel and we obtain
	\begin{equation}
	j_{p,q} < \sqrt{1 \over 1 - \e} \; \sqrt{\sum_v N_{0,v}^B \; N_{v,q}^C \over \sum_v N_{0,v}^B \; N_{v,q}^C \; \op |c_v|^2  + |p_v^2| \cp }
	\end{equation}
	The quantity $x_0^+ + x_0^-$ is related to the real parts of all the $c_v$ coefficients, so we will retain only that part of the bound. We also discard the unimportant $p_v$ factors. Namely, by using $|c_v|^2 = |d_v|^2 + |e_v|^2$ and the fact that all the terms in the denominator are positive, we keep
	\begin{equation}
	j_{p,q} < 
	\sqrt{1 \over 1 - \e} \; \sqrt{\sum_v N_{0,v}^B \; N_{v,q}^C \over \sum_v N_{0,v}^B \; N_{v,q}^C \; |d_v|^2 }
	\end{equation}
	Of course, it also holds true by the same token (and would be useful if we were discussing the $y_r^\pm$ factors) that we have a bound involving the imaginary parts of the $\{\tilde c_v\}$ coefficients:
	\begin{equation}
	j_{p,q} < \sqrt{1 \over 1 - \e} \; \sqrt{\sum_v N_{0,v}^B \; N_{v,q}^C \over \sum_v N_{0,v}^B \; N_{v,q}^C \; |e_v|^2 }
	\end{equation}
	From the definitions \eqref{eq:defx}, together with $\pi_{v, 0}^{\slength{B}} \equiv 1$ for all $v$, we see that
	\begin{equation}
	x_0^+ + x_0^- = {\sum_v N_{0,v}^B \; N_{v,q}^C \; |d_v| \over \sum_v N_{0,v}^B \; N_{v,q}^C } \; j_{p,q} < {\sum_v N_{0,v}^B \; N_{v,q}^C \; |d_v| \over \sum_v N_{0,v}^B \; N_{v,q}^C } \; \sqrt{1 \over 1 - \e} \; \sqrt{\sum_v N_{0,v}^B \; N_{v,q}^C \over \sum_v N_{0,v}^B \; N_{v,q}^C \; |d_v|^2 }
	\end{equation}
	where the inequality holds due to the bound on $j_{p,q}$, and the fact that all terms multiplying it are positive. Equivalently,
	\begin{equation}
	x_0^+ + x_0^- < \sqrt{1 \over 1 - \e} \; \sqrt{ \op\sum_v N_{0,v}^B \; N_{v,q}^C \; |d_v|\cp^2 \over \op \sum_v N_{0,v}^B \; N_{v,q}^C \cp \; \op \sum_v N_{0,v}^B \; N_{v,q}^C \; |d_v|^2 \cp  }
	\end{equation}
	Applying Cauchy-Schwarz with terms of the form $\sqrt{N_{0,v}^B \; N_{v,q}^C}$ and $\sqrt{N_{0,v}^B \; N_{v,q}^C} \; |d_v|$ gives that the square root is at most 1. We find
	\begin{equation}
	x_0^+ + x_0^- < \sqrt{1 \over 1 - \e} \quad \implies \quad |\Re (x_r - x_0)| < {2 \e \over \sqrt{1 - \e}}
	\end{equation}
	It is clear that exactly the same argument carries over for the imaginary part if we replace $x_r^\pm$ by $y_r^\pm$ and the real parts of the coefficients, $d_v$, by the imaginary ones, $e_v$. We then obtain the corresponding bound $|\Im (x_r - x_0)| < {2 \e \over \sqrt{1 - \e}}$, which in the end gives
	\begin{equation}
	|x_r - x_0| < {2 \sqrt 2 \; \e \over \sqrt{1 - \e}}
	\end{equation}
	which becomes arbitrarily small when we take $\e \to 0$, as desired.
\end{proof}

\section{Completing the low-imbalance proof} \label{sec:completing-low-imbalance-proof}

The goal of this section is to combine the previous results and conclude that the quantity of eq. \eqref{eq:aovlp2}, or equivalently \eqref{eq:xr-approximate-matrix-element}, goes to zero in the limit of large $k$. The argument consists of two steps: 

\begin{lemma} \label{lm:xr-smallness}
	The absolute values of all $x_r$ factors of interest vanish as $k \to \infty$. Formally, 
	\begin{equation} \label{eq:xr-smallness-eq}
	\lim\limits_{k \to \infty} \op \sup\limits_{0 \le p,q \le (1+c)k} \op \sup\limits_{\{\tilde c_v\}} \op \sup\limits_{r < b k} |x_{r}^{(k,\{\tilde c_v\})}| \cp \cp \cp = 0
	\end{equation}
\end{lemma}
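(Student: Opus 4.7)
The plan is to combine the approximate orthogonality constraint \eqref{eq:cs-constraint-eq}, in its rephrased form \eqref{eq:xr-approximate-orthonormality}, with the uniform closeness $x_r \approx x_0$ already supplied by Lemma \ref{lm:xr-closeness}. The crucial structural observation, already noted in the text following \eqref{eq:xr-approximate-matrix-element}, is that the weights
\begin{equation}
w_r \;:=\; \frac{N^{A}_{p,r}\cdot AN^{BC}_{r,q}}{AN_{p,q}}
\end{equation}
appearing in both \eqref{eq:xr-approximate-orthonormality} and \eqref{eq:xr-approximate-matrix-element} are non-negative and sum to one over the relevant range $r < bk$. Hence $\sum_r w_r x_r$ is a convex combination of the $x_r$'s.

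First I would decompose, using $\sum_{r<bk} w_r = 1$,
\begin{equation}
\sum_{r < bk} w_r x_r \;=\; x_0 \;+\; \sum_{r < bk} w_r\,(x_r - x_0),
\end{equation}
and apply the triangle inequality to obtain
\begin{equation}
|x_0| \;\le\; \Bigl|\sum_{r < bk} w_r x_r\Bigr| \;+\; \sup_{r < bk} |x_r - x_0|.
\end{equation}
By \eqref{eq:xr-approximate-orthonormality}, the first term on the right equals $|\inner{FGS_{p,q}}{A\phi_{p,q}}|$. Because every admissible collection $\{\tilde c_v\}$ arises from a $\ket{\phi_{p,q}} \in \range E_k$ which by assumption is orthogonal to $\ket{GS^{ABC}_{p,q}}$, the chain of approximations running through Proposition \ref{pr:fgs-good-approximation}, Definition \ref{def:approximate-phi-state} and the expectation/overlap approximation lemma forces this term to vanish in the iterated supremum $\widetilde{\lim}_{k\to\infty} \sup_{\{\tilde c_v\}}$; this is precisely the content of the constraint \eqref{eq:cs-constraint-eq} that defines the class of admissible $\{\tilde c_v\}$. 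The second term vanishes uniformly in $p,q$ and in admissible $\{\tilde c_v\}$ by Lemma \ref{lm:xr-closeness}. Combining the two displays gives $\widetilde{\lim}_{k\to\infty}\sup_{\{\tilde c_v\}} |x_0| = 0$.

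To conclude \eqref{eq:xr-smallness-eq}, one more triangle inequality $|x_r| \le |x_0| + |x_r - x_0|$ (valid for every $r < bk$) and a second invocation of Lemma \ref{lm:xr-closeness} yield the claim. The genuine difficulty has been dispatched in Lemma \ref{lm:xr-closeness} and, upstream of it, in the normalization-ratio convergence of Proposition \ref{pr:convergence-requirement} and the diffusion analysis of Appendix \ref{sec:norm-ratios-appendix}; the present step is essentially a soft convex-combination deduction. The only minor bookkeeping point to verify is that the suprema are correctly nested: the class of $\{\tilde c_v\}$ over which \eqref{eq:xr-smallness-eq} takes its supremum is precisely the admissibility class defined through \eqref{eq:cs-constraint-eq} on which Lemma \ref{lm:xr-closeness} was formulated, so no loss occurs in the reduction.
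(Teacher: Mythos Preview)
Your proposal is correct and follows essentially the same approach as the paper: both use that the weights $w_r$ sum to one, invoke the approximate-orthogonality constraint \eqref{eq:cs-constraint-eq} to control $\bigl|\sum_r w_r x_r\bigr|$, apply Lemma~\ref{lm:xr-closeness} to control $\sup_r |x_r - x_0|$, deduce that $|x_0|$ is small, and then use Lemma~\ref{lm:xr-closeness} once more to transfer this to all $|x_r|$. Your execution via the decomposition $\sum_r w_r x_r = x_0 + \sum_r w_r(x_r - x_0)$ and a direct triangle inequality is in fact slightly more streamlined than the paper's version, which splits into real and imaginary parts and performs a short sign case analysis to reach the same conclusion.
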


\begin{lemma} \label{lm:xr2-vanishing}
	In the low imbalance limit, the quantity $\mel{A\phi_{p,q}}{AG_{BC}}{A\phi_{p,q}}$ of eq. \eqref{eq:xr-approximate-matrix-element} vanishes at large $k$:
	\begin{equation} \label{eq:xr2-to-bound}
	\lim\limits_{k \to \infty} \op \sup\limits_{0 \le p,q \le (1+c)k} \op \sup\limits_{\{\tilde c_v\}} \op \sum_{r < bk} {N^{A}_{p,r} \cdot \norm{H^{BC,<b}_{r,q}} \over \norm{H^{ABC,<b}_{p,q}}} \cdot |x_r|^2 \cp \cp \cp = 0
	\end{equation}
\end{lemma}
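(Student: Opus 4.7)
The plan is to reduce the sum in \eqref{eq:xr2-to-bound} to the uniform smallness of $|x_r|$ provided by Lemma \ref{lm:xr-smallness}, using the fact that its coefficients form a set of convex weights. The whole proof should be very short, as all the analytic content is absorbed into Lemmas \ref{lm:xr-closeness} and \ref{lm:xr-smallness}.

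First, I would record the normalization identity
$$AN_{p,q} \;=\; \sum_{r<bk} N^{A}_{p,r} \cdot \op \sum_{v<bk} N^{B}_{r,v}\, N^{C}_{v,q} \cp \;=\; \sum_{r<bk} N^{A}_{p,r} \cdot AN^{BC}_{r,q},$$
which is immediate from Definition \ref{def:approximate-full-ground-state} and the definition of $AN^{BC}_{r,q}$ in Definition \ref{def:ags-for-bc-segment}. Consequently the weights $w_r := N^{A}_{p,r} \cdot AN^{BC}_{r,q}/AN_{p,q}$ are nonnegative and satisfy $\sum_{r<bk} w_r = 1$. In particular the quantity appearing in \eqref{eq:xr2-to-bound} is a genuine convex combination of the $|x_r|^2$.

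Next, the elementary pointwise bound
$$\sum_{r<bk} w_r \cdot |x_r|^2 \;\le\; \sup_{r<bk} |x_r|^2$$
holds for each choice of $p$, $q$, and $\{\tilde c_v\}$. Taking suprema over $0 \le p,q \le (1+c)k$ and over admissible $\{\tilde c_v\}$ on both sides preserves the inequality, and by Lemma \ref{lm:xr-smallness} the resulting right-hand side tends to $0$ as $k \to \infty$. This is precisely the assertion \eqref{eq:xr2-to-bound}.

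There is no substantial obstacle at this stage: the genuinely delicate step was establishing uniform smallness of the $x_r$ in Lemma \ref{lm:xr-smallness}, which itself combines the near $r$-independence of $x_r$ from Lemma \ref{lm:xr-closeness} with the approximate-orthogonality constraint \eqref{eq:cs-constraint-eq} (via the convex-combination identity \eqref{eq:xr-approximate-orthonormality}). Given those ingredients, the present lemma is a straightforward Jensen-type corollary of uniform smallness together with the convex-weight structure of the matrix element \eqref{eq:xr-approximate-matrix-element}.
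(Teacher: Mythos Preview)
Your proposal is correct and follows essentially the same approach as the paper's proof: both exploit that the weights $N^{A}_{p,r}\cdot AN^{BC}_{r,q}/AN_{p,q}$ sum to $1$ over $r<bk$ and then bound the resulting convex combination of $|x_r|^2$ by the uniform smallness supplied by Lemma~\ref{lm:xr-smallness}. The paper phrases this as a direct $\epsilon$-argument (pick $k_0$ so that $|x_r|<\sqrt{\epsilon}$, hence $\sum f_r|x_r|^2<\epsilon$), while you state the equivalent pointwise bound $\sum w_r|x_r|^2\le\sup_r|x_r|^2$; the content is identical.
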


\begin{proof}[Proof of Lemma \ref{lm:xr-smallness}]
	
	Take any $\e > 0$, and using the approximate orthogonality (eq. \eqref{eq:cs-constraint-eq}) pick $k_1$ large enough such that for all $k> k_1$ we have
	\begin{equation} 
	| \inner{\normstate{H^{ABC,<b}_{p,q}}}{A\phi_{p,q}} | = \bigg| \sum_{r} {N^{A}_{p,r} \cdot \norm{H^{BC,<b}_{r,q}} \over \norm{H^{ABC,<b}_{p,q}}} \cdot x_r \bigg| < {\e \over 6}
	\end{equation}
	where the $k$ dependence in various quantities above has been kept implicit for notational simplicity.
	\begin{definition} \label{def:fr-factors}
		For brevity make the following notation:
		\begin{equation}
		f_r \equiv {N^{A}_{p,r} \cdot \norm{H^{BC,<b}_{r,q}} \over \norm{H^{ABC,<b}_{p,q}}}
		\end{equation}
		These $f_r$ factors implicitly depend on $k$, $p$, and $q$, but for now we'll only keep the $r$ dependence explicit. Recall that $\sum_r f_r =1$ by the definition of the denominator $\norm{H^{ABC,<b}_{p,q}}$.
	\end{definition}
	With this notation, the above becomes
	\begin{equation} \label{eq:low-imp-comp-pr1}
	| \inner{\normstate{H^{ABC,<b}_{p,q}}}{A\phi_{p,q}} | = \bigg| \sum_{r} f_r \cdot x_r \bigg| < {\e \over 6}
	\end{equation}
	Using the result of Lemma \ref{lm:xr-closeness}, take $k_2$ large enough such that for all $k > k_2$ it is true that
	\begin{equation} \label{eq:low-imp-comp-pr2}
	\sup\limits_{0 \le p,q \le (1+c)k} \op \sup\limits_{\{\tilde c_v\}} \op \sup\limits_{r < b k} |x_{r}^{(k,\{\tilde c_v\})} - x_{0}^{(k,\{\tilde c_v\})}| \cp \cp < {\e \over 6}
	\end{equation}
	For the rest of this proof work with $k > k_1, k_2$, such that both \eqref{eq:low-imp-comp-pr1} and \eqref{eq:low-imp-comp-pr2} hold. Also take any $r < bk$. We will deal with the real and imaginary parts of $x_r$ separately. Since the absolute value of $x_r - x_0$ is below $\e/6$, its real part will satisfy the same property:
	\begin{equation}
	\left| \Re (x_r - x_0) \right| < {\e \over 6} \quad \implies \quad \Re(x_0) - {\e \over 6} < \Re(x_r) < \Re(x_0) + {\e \over 6}
	\end{equation}
	Since all $f_r$ are positive, we multiply the above by $f_r$ and sum over $r$ (remembering $\sum_r f_r = 1$) to find
	\begin{equation}
	\sum_r f_r \op \Re(x_0) - {\e \over 6} \cp < \sum_r f_r \Re(x_r)  < \sum_r f_r \op \Re(x_0) + {\e \over 6} \cp \quad
		\end{equation} 
		and so
			\begin{equation}
 \Re(x_0) - {\e \over 6} < \sum_r f_r \Re(x_r)  < \Re(x_0) + {\e \over 6}
	\end{equation} 
	There are three possible cases, depending on the signs of the quantities  $\Re(x_0) - {\e \over 6}$ and $\Re(x_0) + {\e \over 6}$ above. If they have the same sign, be it positive or negative, we get to the same conclusion:
	\begin{equation}
	{\e \over 6} > |\sum_r f_r x_r| > |\Re (x_0)| - {\e \over 6} \quad \implies \quad | \Re (x_0)| < {\e \over 3}
	\end{equation}
	while if they have opposite signs we directly get $| \Re (x_0)| < \e/6 < \e / 3$. Combined with the property that $| \Re (x_r - x_0) | < {\e / 6}$, we find through the triangle inequality that $|\Re (x_r)| < \e /2$ for all $r$ values of interest. An identical argument follows for the imaginary part, giving $|\Im (x_r)| < \e / 2$. Together, they show that $|x_r| < \e$, completing the proof of the lemma.
\end{proof}

\begin{proof}[Proof of Lemma \ref{lm:xr2-vanishing}]
	With Definition \ref{def:fr-factors}, the sum in eq. \eqref{eq:xr2-to-bound} becomes $\sum_{r < bk} f_r \cdot |x_r|^2$. Fix any $\e > 0$, and use the result of Lemma \ref{lm:xr-smallness} to pick $k_0$ large enough such that for all $k>k_0$, the inequality $|x_r| < \sqrt \e$ holds true at all $r < bk$. Then it follows that $|x_r|^2 < \e$, and so
	\begin{equation}
	\sum_{r < bk} f_r \cdot |x_r|^2 < \sum_{r < bk} f_r \cdot \e < \e \cdot \sum_{r} f_r = \e \cdot 1 = \e
	\end{equation}
	where for the second inequality we switch from summing over $r$ only up to $bk$, to summing over all possible values. Since the $f_r$ factors are positive, the inequality holds true.
	
	The conclusion that $\sum_{r < bk} f_r \cdot |x_r|^2 < \e$ holds for any relevant collection $\{\tilde c_v\}$ and any $p,q$ in the low imbalance regime ($0 \le p,q \le (1+c)k$), as shown in Lemma \ref{lm:xr-smallness}. Therefore the proof is complete.	
\end{proof}

\section{Asymptotic formula for the string order parameter}
\label{sec:stringorder}

In this section, we prove Theorem \ref{thm:main2}.

\subsection{Setup and statement of the asymptotic formula}


\begin{figure}[t]
	\centering	
    \scalebox{0.35}{\includegraphics{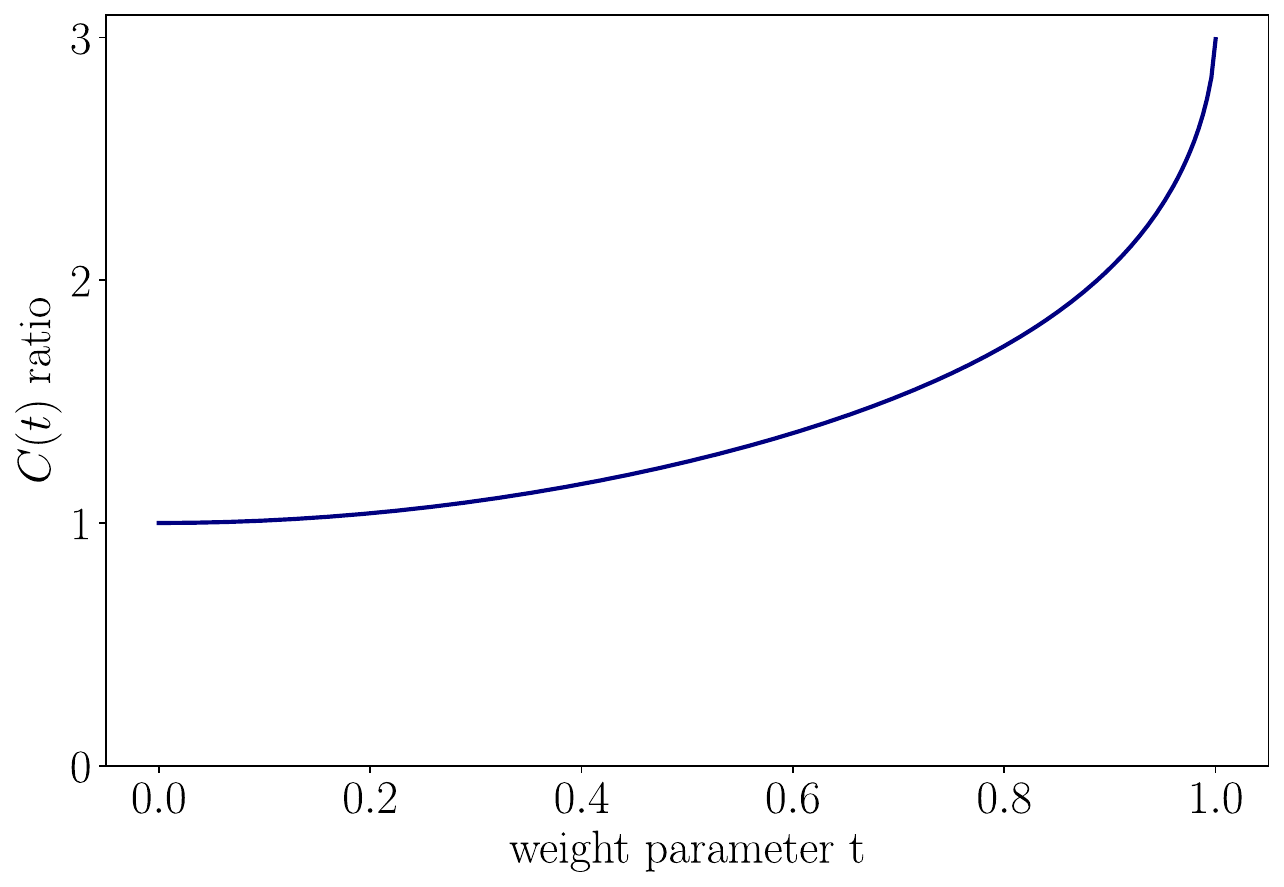}}
	\caption{Calculated dependence of the $\mathcal{C}$ ratio on $t$.}
	\label{fig:c_value_numerics}
\end{figure}

   In order to characterize the quantum phase of our spin chain at zero temperature and $0<t<1$, we introduce the string order parameter.

\begin{definition}
Let $\ex{A}_n$ denote the expectation of the operator $A$, in the unique ground state of a Motzkin Hamiltonian with boundary projectors on $n$ sites. We set
\begin{eqnarray}
    \hat{O}_n^{1,-1}(i, j) & = & S_i^z e^{i\pi \sum_{i\le l < j} S_l^z} S_j^z \nonumber \\ 
    O_n^{1,-1}(i, j) & = & \ex{\hat{O}_n^{1,-1}(i, j)}_n
\end{eqnarray}
\end{definition}

We are interested in understanding their asymptotic behavior in the limit $|i - j| \to \infty$. 
Numerical simulations by Barbiero et al.\ \cite{barbiero2017haldane} indicate a nonzero limiting value for this order parameter, as $|i-j| \to \infty$, at several values of $t \in (0, 1)$. 

On the other hand, the usual spin correlations $\mathcal{C}(i, j) = \ex{S_i^z S_j^z}$ are known to vanish exponentially in $|i-j|$ due to the spectral gap \cite{hastings2006spectral,nachtergaele2006lieb}.
The same methods that we use to understand the string order parameter at large distances will also give a direct proof that the correlations vanish in the large-distance limit, without needing to use \cite{hastings2006spectral,nachtergaele2006lieb}. This is not essential, but it shows the versatility of our representation. We therefore define a single quantity which either represents the string-order parameter ($\sigma=-1$) or the correlation function ($\sigma=1$).

\begin{definition}
Let $\sigma\in\{\pm 1\}$. Set
\begin{eqnarray}
    \label{eq:generalized_string_order_parameter}
    \hat{O}_n^\sigma (i, j) & = & S_i^z \sigma^{\sum_{i\le l < j} S_l^z} S_j^z \\ \nonumber
    O_n^\sigma (i, j) & = & \ex{\hat{O}_n^\sigma (i, j)}_n
\end{eqnarray}
so that picking $\sigma = 1$ recovers $\mathcal{C}(i,j)$, while $\sigma = -1$ corresponds to $O^{1,-1}(i, j)$, the more interesting case of the string-order parameter.
\end{definition}

 The following auxiliary quantity is useful for expressing the string order parameter limits

\begin{definition} \label{def:b_definition}
   Let $\sigma\in\{\pm 1\}$. Set     \begin{equation}\label{eq:Bseries}
        B^\sigma(t) = {\sum_{r = 0}^\infty \sigma^r t^{2r + 1} \pi_r \pi_{r + 1} \over \sum_{r = 0}^\infty \pi_r^2}
    \end{equation}
    where the factors $\pi_r \equiv \pi_{0,r}^\infty$ are defined in Proposition \ref{prop:limit}.
\end{definition}

The main result of this section is the following:
\begin{theorem}[Asymptotic formula] \label{thm:sop_values}
The following limits exist
\begin{itemize}
    \item One site pinned at the chain boundary, the other in the bulk; we refer to this as the boundary order parameter,
\begin{equation} \label{eq:def_ord_par_bdry}
    O^\sigma_\text{bdry} = \lim\limits_{n \to \infty} O_n^\sigma (1, n/2)
\end{equation}
    \item Both sites in the bulk of the chain. Correspondingly, we call this the bulk order parameter:
    \begin{equation} \label{eq:def_ord_par_bulk}
    O^\sigma_\text{bulk} = \lim\limits_{n \to \infty} O_n^\sigma (n/3, 2n/3)
\end{equation}
\end{itemize}  
and obey
    \begin{align}
     O^\sigma_\text{bulk} &= 2 \; (1 - \sigma) \os {B^\sigma(t) \over \mathcal{C}(t)} \cs^2 \label{eq:sop_thm_bulk}\\
        O^\sigma_\text{bdry} &= (1 - \sigma) \; {\mathcal{C}(t) - 1 \over \mathcal{C}^2(t)} \; B^\sigma(t) \label{eq:sop_thm_bdry}
    \end{align}
with $\mathcal{C}(t)$ defined in Corollary \ref{cor:ratios_in_two_directions}, and sketched in Figure \ref{fig:c_value_numerics}.
\end{theorem}

In particular, both expressions vanish for $\sigma=1$ (as they must by \cite{hastings2006spectral,nachtergaele2006lieb} and our proof of the spectral gap). 

\begin{remark}
  Compared to the introduction, we set for the bulk order parameter $\theta=1/3$ and $\theta=2/3$ and for the boundary order parameter $\theta=1/2$. We only do this to avoid carrying around too many parameters: the proof works in the same way for any other choice of $0<
\theta<\theta'<1$.
Moreover, definition \eqref{eq:def_ord_par_bdry} for the boundary order parameter does not rely on site $j$ being located exactly at the middle position $n/2$ of the $n-$site chain. We will find the same limit $O^\sigma_\text{bdry}$ under any procedure that simultaneously takes the distance between $j$ and either end of the chain to $\infty$. Similarly, in \eqref{eq:def_ord_par_bulk} it only matters that the distances between sites, as well as between each site and edge, are simultaneously taken to infinity. In the following, we will assume $n$ to be a multiple of 6, such that $n/2$ and $n/3$ are both integers. As argued above, this serves only for clarity of presentation, and is not a crucial element of the proof. 
\end{remark}

\begin{proof}[Proof of Theorem \ref{thm:main2} assuming Theorem \ref{thm:sop_values}]
    Existence of the limit is implied by Theorem \ref{thm:sop_values}. For the positivity, we note that  Corollary \ref{cor:sodifferencebounds} implies that for all $t<t_{\mathrm{SOP}}$ with $t_{\mathrm{SOP}}$ the unique root of the polynomial $ t^4+t^3+t^2-t-1$ on $(0,1)$, we have
    \[
\pi_{q}>t^2 \; \pi_{q+2},\qquad \textnormal{ for } q\geq 0.
\] 
Rewriting the numerator of \eqref{eq:Bseries} as
\[
\sum_{r = 0}^\infty \sigma^r t^{2r + 1} \pi_r \pi_{r + 1}=\sum_{\substack{1\leq q\leq \infty:\\ q\textnormal{ odd}}} t^q \pi_q (\pi_{q-1}-t^2\pi_{q+1}),
\]
we see that Theorem \ref{thm:main2} follows.
\end{proof}

In the remainder of this section, we prove Theorem \ref{thm:sop_values}.

\subsection{Approximate ground state representation}
We first prove the bulk statement \eqref{eq:sop_thm_bulk}, and afterwards sketch the argument for the boundary one \eqref{eq:sop_thm_bdry}, as the reasoning is similar. \\

The first step is to replace the full ground state by one of the approximate expressions introduced earlier. Recall that once the boundary projectors are included, the unique ground state is fully balanced: $p = q = 0$. Therefore, the low-imbalance approximation scheme will be appropriate.

\begin{assumption} \label{as:bulk_ord_par_split}
    We consider a chain of $n$ sites, which is to be split into three segments as follows: $A = [1, n/3)$, $B = [n/3, 2n/3]$, and $C = (2n/3, n]$. This is chosen such that the middle segment extends between sites $i = n/3$ and $j = 2n/3$, inclusively.
\end{assumption}

\begin{lemma}\label{lm:bulk_ord_par_approximation_state}
    Under the conditions of Assumption \ref{as:bulk_ord_par_split}, the bulk order parameter can be calculated using the three-segment approximate ground state of \ref{def:approximate-full-ground-state}, with no imbalance:
    \begin{equation}
	\ket{\normstate{H^{ABC,<b,(n)}_{0,0}}} = {1 \over \sqrt{\norm{H^{ABC,<b,(n)}_{0,0}}}} \sum_{r,v < b n} \ket{\unnormstate{G^A_{0,r}}} \ket{\unnormstate{G^B_{r,v}}} \ket{\unnormstate{G^C_{v,0}}}
	\end{equation} 
    Namely, we have
    \begin{equation} \label{eq:approximate_expectation_ord_par_bulk}
        O^\sigma_\text{bulk} = \lim\limits_{n \to \infty} \bra{\normstate{H^{ABC,<b,(n)}_{0,0}}} \; \hat{O}_n^\sigma (n/3, 2n/3) \; \ket{\normstate{H^{ABC,<b,(n)}_{0,0}}}
    \end{equation}
\end{lemma}
\begin{proof}[Proof of Lemma \ref{lm:bulk_ord_par_approximation_state}]
The state $\ket{\normstate{H^{ABC,<b,(n)}_{0,0}}}$ approximates the true ground state $\ket{GS_{0,0}^{(n)}}$ superpolynomially, by Proposition \ref{pr:fgs-good-approximation}. Since the operators $\hat{O}_n^\sigma$ have norm $=1$ for all $n$, we then use Lemma \ref{lm:overlaps-and-expectations}, more specifically the form \eqref{eq:oal-res3}, to find
\begin{equation}
    \lim\limits_{n \to \infty} \bra{GS_{0,0}^{(n)}} \; \hat{O}_n^\sigma (n/3, 2n/3) \; \ket{GS_{0,0}^{(n)}} = \lim\limits_{n \to \infty} \bra{\normstate{H^{ABC,<b,(n)}_{0,0}}} \; \hat{O}_n^\sigma (n/3, 2n/3) \; \ket{\normstate{H^{ABC,<b,(n)}_{0,0}}}
\end{equation}
and the conclusion follows, since the LHS above is the definition of $O^\sigma_\text{bulk}$.
\end{proof}

\subsection{Action on approximate ground states}
The main part of the proof is to calculate action of the observable $\hat{O}_n^\sigma$ on approximate ground states.

\begin{definition} \label{def:ord_par_middle_expectation}
The expression \eqref{eq:approximate_expectation_ord_par_bulk} can be simplified, since the operator $\hat{O}_n^\sigma (n/3, 2n/3)$ only acts on the middle segment $B$. To this end, we define
\begin{equation} \label{eq:top_middle_expect}
    \alpha_{r,v}^{(n)} \equiv {1 \over N_{0, 0}^B} \; \mel{\unnormstate{G^B_{r,v}}}{\hat{O}_n^\sigma (n/3, 2n/3)}{\unnormstate{G^B_{r,v}}}
\end{equation}
Note that the expression in the RHS above implicitly depends on the system size $n$, through the definition of segment $B$ given in Assumption \ref{as:bulk_ord_par_split}. We will argue in Lemma \ref{lm:sop_computation_lemma} below that the $\alpha_{r,v}^{(n)}$ have a finite limit as $n \to \infty$.

\end{definition}

\begin{lemma} \label{lm:sop_computation_lemma}
Let $r,v<bn$. The limit 
    \begin{equation} \label{eq:top_middle_expect_limit}
    \alpha_{r,v} \equiv \lim\limits_{n \to \infty} \alpha_{r,v}^{(n)}
\end{equation}
exists and is equal to
\begin{equation} \label{eq:sop_computaton_conclusion}
    \alpha_{r,v} = {1 \over \mathcal{C}^2} \; \sigma^{v-r+1} \; t^{2(r+v)} \Big( \pi_{r-1} \pi_{v+1} + \pi_{r+1} \pi_{v-1} - t^{-2} \pi_{r-1} \pi_{v-1} - t^{2} \pi_{r+1} \pi_{v+1} \Big)
\end{equation}
where the factors $\pi_r \equiv \pi_{0,r}$ are those introduced in Proposition \ref{prop:limit}. 
\end{lemma}

We recall the convention that $\pi_{q} \equiv 0$ for any $q < 0$. 

\begin{proof} [Proof of Lemma \ref{lm:sop_computation_lemma}]
We first assume $r,v \ge 1$ in the following, such that walks with imbalances $r-1$ and $v-1$ are well-defined. The situation of $r=0$ or $v=0$ is discussed at the end of the proof.\\

The description of $\ket{\unnormstate{G^B_{r,v}}}$ as a sum of walks is convenient to work with, since each such walk is an eigenstate of the generalized order parameter operator $\hat{O}_n^\sigma (n/3, 2n/3)$. The string sum in the exponent can be related to the imbalances $r,v$ on either side of the $B$ segment: 
\begin{equation} \label{eq:sop_string_imbalances}
    \sum_{i\le l \le j} S_l^z = v - r \qquad \implies \qquad \sum_{i\le l < j} S_l^z = v - r - S_j^z 
\end{equation}
so we only need to group walks in $\ket{\unnormstate{G^B_{r,v}}}$ by their starting and ending steps. With 3 possibilities for each step, this will give 9 terms total. However, the ones with a flat step $S^z = 0$ at either end will be annihilated by $\hat{O}_n^\sigma$, so we are left with only 4 relevant terms. We let $B'$ denote the collection of sites strictly between $i = n/3$ and $j = 2n/3$ (excluding the ends). Therefore, $B'$ consists of two less sites compared to $B$. We have
\begin{align}
    \ket{\unnormstate{G^B_{r,v}}} &= t^{r+v-1} \ket{d} \ket{\unnormstate{G^{B'}_{r-1,v-1}}} \ket{u} + t^{r+v+1} \ket{u} \ket{\unnormstate{G^{B'}_{r+1,v+1}}} \ket{d} \nonumber \\
    &\quad + t^{r+v} \ket{d} \ket{\unnormstate{G^{B'}_{r-1,v+1}}} \ket{d} + t^{r+v} \ket{u} \ket{\unnormstate{G^{B'}_{r+1,v-1}}} \ket{u} + \ket{\ker} \label{eq:sop_ugs_relevant terms}
\end{align}
where the $\ket{\ker}$ denote the other 5 terms, which are annihilated by the order parameter operator.\\

The four non-$\ket{\ker}$ terms in the RHS of \eqref{eq:sop_ugs_relevant terms} are individually eigenstates of the order parameter operator. Using \eqref{eq:sop_string_imbalances}, we find 
\begin{align}
    \hat{O}_n^\sigma (n/3, 2n/3)  \; \ket{d} \ket{\unnormstate{G^{B'}_{r-1,v-1}}} \ket{u} &= (-1) \times \sigma^{v-r-1} \times (+1) \; \ket{d} \ket{\unnormstate{G^{B'}_{r-1,v-1}}} \ket{u} \nonumber \\ 
    \hat{O}_n^\sigma (n/3, 2n/3)  \; \ket{u} \ket{\unnormstate{G^{B'}_{r+1,v+1}}} \ket{d} &= (+1) \times \sigma^{v-r+1} \times (-1) \; \ket{u} \ket{\unnormstate{G^{B'}_{r+1,v+1}}} \ket{d} \nonumber \\ 
    \hat{O}_n^\sigma (n/3, 2n/3)  \; \ket{d} \ket{\unnormstate{G^{B'}_{r-1,v+1}}} \ket{d} &= (-1) \times \sigma^{v-r+1} \times (-1) \; \ket{d} \ket{\unnormstate{G^{B'}_{r-1,v+1}}} \ket{d} \nonumber \\ 
    \hat{O}_n^\sigma (n/3, 2n/3)  \; \ket{u} \ket{\unnormstate{G^{B'}_{r+1,v-1}}} \ket{u} &= (+1) \times \sigma^{v-r-1} \times (+1) \; \ket{u} \ket{\unnormstate{G^{B'}_{r+1,v-1}}} \ket{u}
\end{align}
Since $\sigma^2 = 1$, we can use $\sigma^{v-r-1} = \sigma^{v-r+1}$. Recalling the norm of the unnormalized ground states,
\begin{equation}
    \inner{\unnormstate{G^{B'}_{p,q}}}{\unnormstate{G^{B'}_{p,q}}} \equiv N_{p,q}^{B'}
\end{equation}
the expectation \eqref{eq:top_middle_expect} becomes
\begin{equation}
    \alpha_{r,v}^{(n)} = {1 \over N_{0, 0}^B} \; \sigma^{v-r+1} \; t^{2(r+v)} \Big( N_{r-1, v+1}^{B'} + N_{r+1, v-1}^{B'} - t^{-2} N_{r-1, v-1}^{B'} - t^{2} N_{r+1, v+1}^{B'}\Big).
\end{equation}
We multiply and divide by $N_{0, 0}^{B'}$ to find
\begin{equation}
    \alpha_{r,v}^{(n)} = {N_{0, 0}^{B'} \over N_{0, 0}^B} \; \sigma^{v-r+1} \; t^{2(r+v)} \Bigg( {N_{r-1, v+1}^{B'} \over N_{0, 0}^{B'}} + {N_{r+1, v-1}^{B'} \over N_{0, 0}^{B'}} - t^{-2} {N_{r-1, v-1}^{B'} \over N_{0, 0}^{B'}} - t^{2} \; {N_{r+1, v+1}^{B'} \over N_{0, 0}^{B'}} \Bigg).
\end{equation}
The ratios inside the parentheses are taken between normalization factors on the same segment $B'$, so they can be expressed in terms of the $\pi^{k}_{p,q}$ factors defined in \eqref{eq:rationdefn}, where $k = n/3-1$ is the length of the $B'$ segment:
\begin{equation} \label{eq:sop_comp_l1}
    {N_{p,q}^{B'} \over N_{0, 0}^{B'}} = \pi^{n/3-1}_{p,q} \pi^{n/3-1}_{0,p}
\end{equation}
According to Corollary \ref{cor:ratios_in_two_directions}, the above converges to $\pi_q \pi_p$ in the limit $n \to \infty$. On the other hand, the ratio $N_{0, 0}^{B'} / N_{0, 0}^B$ relates balanced normalization factors on segments of different lengths, and will be given by
\begin{equation} \label{eq:sop_comp_l2}
    {N_{0, 0}^{B'} \over N_{0, 0}^B} = {1 \over \mathcal{C}^{(n/3)} \; \mathcal{C}^{(n/3 - 1)}}
\end{equation}
which, again due to Corrolary \ref{cor:ratios_in_two_directions}, has the well-defined value $\mathcal{C}^{-2}$ as $n \to \infty$. We take this limit in both \eqref{eq:sop_comp_l1} and \eqref{eq:sop_comp_l2} to find
\begin{equation}
    \lim\limits_{n \to \infty} \alpha_{r,v}^{(n)} = {1 \over \mathcal{C}^2} \; \sigma^{v-r+1} \; t^{2(r+v)} \Big( \pi_{r-1} \pi_{v+1} + \pi_{r+1} \pi_{v-1} - t^{-2} \pi_{r-1} \pi_{v-1} - t^{2} \pi_{r+1} \pi_{v+1} \Big)
\end{equation}
which yields the conclusion \eqref{eq:sop_computaton_conclusion}. 

If we have $r = 0$, then the first step of any walk in $\ket{\unnormstate{G^B_{r,v}}}$ cannot be down; the assumption of $r=0$ implies that the walks already begin at their minimal height. In turn, this will result in two terms of the expansion \eqref{eq:sop_ugs_relevant terms} being absent: namely, those involving an imbalance of $r-1$ on the left side of $B'$. The proof follows in the same way as presented above, but without the contributions of those terms. This is consistent with the expression \eqref{eq:sop_computaton_conclusion} and the definition $\pi_{-1} \equiv 0$. An identical argument deals with the case $v = 0$.
\end{proof}

\begin{proposition} \label{pr:sop_main_simplification_prop}
    The bulk expectation of the generalized order parameter, calculated in the approximate ground state $\ket{\normstate{H^{ABC,<b,(n)}_{0,0}}}$ under the conditions of Assumption \ref{as:bulk_ord_par_split}, obeys
    \begin{equation} \label{eq:sop_bulk_expectation_fgs}
    \lim\limits_{n \to \infty} \bra{\normstate{H^{ABC,<b,(n)}_{0,0}}} \; \hat{O}_n^\sigma (n/3, 2n/3) \; \ket{\normstate{H^{ABC,<b,(n)}_{0,0}}} = {\sum_{r,v = 0}^{\infty} \pi_r \; \alpha_{r,v} \; \pi_v \over \sum_{r, v = 0}^{\infty} \pi_r^2 \; \pi_v^2}
\end{equation}
with $\alpha_{r,v}$ given in Definition \ref{def:ord_par_middle_expectation}.
\end{proposition}

\begin{proof} [Proof of Proposition \ref{pr:sop_main_simplification_prop}]
Expanding the expression for the approximate ground state $\ket{\normstate{H^{ABC,<b,(n)}_{0,0}}}$, and recalling that the segments $A,B,C$ were defined such that the order parameter operator only acts on $B$, we obtain
\begin{align}
    \bra{\normstate{H^{ABC,<b,(n)}_{0,0}}} \; \hat{O}_n^\sigma (n/3, 2n/3) \; \ket{\normstate{H^{ABC,<b,(n)}_{0,0}}} = {1 \over \norm{H^{ABC,<b,(n)}_{0,0}}} & \sum_{r,v,r',v' < b n} \mel{\unnormstate{G^B_{r',v'}}}{\hat{O}_n^\sigma (n/3, 2n/3)}{\unnormstate{G^B_{r,v}}} \times \nonumber \\
    & \inner{\unnormstate{G^A_{0,r'}}}{\unnormstate{G^A_{0,r}}} \inner{\unnormstate{G^C_{v',0}}}{\unnormstate{G^C_{v,0}}}
\end{align}
The latter overlaps impose the identifications $r = r'$ and $v = v'$:
\begin{align}
    \inner{\unnormstate{G^A_{0,r'}}}{\unnormstate{G^A_{0,r}}} & = \delta_{r, r'} N^A_{0, r}\\
    \inner{\unnormstate{G^C_{v',0}}}{\unnormstate{G^C_{v,0}}} &= \delta_{v, v'} N^C_{v, 0}
\end{align}
We also recall the definition of the approximate normalization factor:
\begin{equation}
    \norm{H^{ABC,<b,(n)}_{0,0}} = \sum_{r,v < b n} N^A_{0, r} N^B_{r, v} N^C_{v, 0}
\end{equation}
Combining all the above, we find
\begin{equation}
    \bra{\normstate{H^{ABC,<b,(n)}_{0,0}}} \; \hat{O}_n^\sigma (n/3, 2n/3) \; \ket{\normstate{H^{ABC,<b,(n)}_{0,0}}} = {\sum_{r,v < b n} N^A_{0, r} \;(N_{0,0}^B \; \alpha_{r,v}^{(n)}) \; N^C_{v, 0} \over \sum_{r,v < b n} N^A_{0, r} N^B_{r, v} N^C_{v, 0}}
\end{equation}
It only remains to divide both the numerator and denominator by $N^A_{0, 0} N^B_{0, 0} N^C_{0, 0}$:
\begin{equation}
    \bra{\normstate{H^{ABC,<b,(n)}_{0,0}}} \; \hat{O}_n^\sigma (n/3, 2n/3) \; \ket{\normstate{H^{ABC,<b,(n)}_{0,0}}} = {\sum_{r,v < b n} {N^A_{0, r} \over N^A_{0, 0}} \; \alpha_{r,v}^{(n)} \; {N^C_{v, 0} \over N^C_{0, 0}} \over \sum_{r,v < b n} {N^A_{0, r} \over N^A_{0, 0}} \; {N^B_{r, v} \over N^B_{0, 0}} \; {N^C_{v, 0} \over N^C_{0, 0}}}
\end{equation}
We observe that each individual ratio is the definition of a $\pi$ factor, all of which have well-defined, finite limits as $n \to \infty$. Taking this limit also moves the summation limits to $\infty$, yielding  
\begin{equation}
    \lim\limits_{n \to \infty} \bra{\normstate{H^{ABC,<b,(n)}_{0,0}}} \; \hat{O}_n^\sigma (n/3, 2n/3) \; \ket{\normstate{H^{ABC,<b,(n)}_{0,0}}} = {\sum_{r,v  = 0}^\infty \pi_r \; \alpha_{r,v} \; \pi_v \over \sum_{r,v  = 0}^\infty \pi_r \; (\pi_{r} \pi_{v}) \; \pi_v}
\end{equation}
which completes the proof.    
\end{proof}

\subsection{Concluding Theorem \ref{thm:sop_values}}

\begin{proof} [Proof of Theorem \ref{thm:sop_values}]
We may now combine the previous results to arrive at the expression \eqref{eq:sop_thm_bulk} for the bulk order parameter. From Lemma \ref{lm:bulk_ord_par_approximation_state} and Proposition \ref{pr:sop_main_simplification_prop}, together with the expression for $\alpha_{r,v}$ given in Lemma \ref{lm:sop_computation_lemma}, we find
\begin{equation}
    O^\sigma_\text{bulk} = {\sum_{r,v = 0}^{\infty} \pi_r \; \sigma^{v-r+1} \; t^{2(r+v)} \Big( \pi_{r-1} \pi_{v+1} + \pi_{r+1} \pi_{v-1} - t^{-2} \pi_{r-1} \pi_{v-1} - t^{2} \pi_{r+1} \pi_{v+1} \Big) \; \pi_v \over \mathcal{C}^2 \; \sum_{r, v = 0}^{\infty} \pi_r^2 \; \pi_v^2}
\end{equation}
The summation variables in the denominator are independent, so we can rearrange it as
\begin{equation}
    \sum_{r, v = 0}^{\infty} \pi_r^2 \; \pi_v^2 = \Bigg( \sum_{r = 0}^{\infty} \pi_r^2 \Bigg)^2
\end{equation}
We consider the contribution of the first term in the numerator:
\begin{align}
    \sum_{r,v = 0}^\infty \pi_r \pi_v \; \sigma^{v-r+1}  t^{2(r+v)} \; \pi_{r-1} \pi_{v+1} &= \sum_{r,v = 0}^\infty t^{2r} \sigma^{r-1} \pi_{r-1} \pi_r \; t^{2v} \sigma^{v} \pi_v \pi_{v+1}\\
    &= t^2 \os \sum_{r = 0}^\infty t^{2r} \sigma^{r} \pi_{r} \pi_{r+1} \cs^2  
\end{align}
where in the first line we used $\sigma = \sigma^{-1}$, and in the second we shifted $r$ by 1. This shift is permitted since it only excludes the $r = -1$ term, which has been argued above to vanish. The other three contributions in the numerator give similar expressions:
\begin{align}
    \sum_{r,v = 0}^\infty \pi_r \pi_v \; \sigma^{v-r+1}  t^{2(r+v)} \; \pi_{r+1} \pi_{v-1} &= t^2 \os \sum_{r = 0}^\infty t^{2r} \sigma^{r} \pi_{r} \pi_{r+1} \cs^2  \\
    -t^{-2} \sum_{r,v = 0}^\infty \pi_r \pi_v \; \sigma^{v-r+1}  t^{2(r+v)} \; \pi_{r-1} \pi_{v-1} &= - \sigma \; t^2 \os \sum_{r = 0}^\infty t^{2r} \sigma^{r} \pi_{r} \pi_{r+1} \cs^2\\
    - t^{2} \sum_{r,v = 0}^\infty \pi_r \pi_v \; \sigma^{v-r+1}  t^{2(r+v)} \; \pi_{r+1} \pi_{v+1} &= - \sigma \; t^2 \os \sum_{r = 0}^\infty t^{2r} \sigma^{r} \pi_{r} \pi_{r+1} \cs^2
\end{align}
By combining them, we obtain
\begin{equation}
    O^\sigma_\text{bulk} = {2 (1 - \sigma) \over \mathcal{C}^2} \; {t^2 \os \sum_{r = 0}^\infty t^{2r} \sigma^{r} \pi_{r} \pi_{r+1} \cs^2 \over \os \sum_{r = 0}^{\infty} \pi_r^2 \cs^2 }
\end{equation}
The second fraction is seen to correspond to $(B^\sigma)^2$, with $B^\sigma$ introduced in Definition \ref{def:b_definition}; the bulk result \eqref{eq:sop_thm_bulk} directly follows.\\

Finally, we sketch a similar proof for the boundary expression \ref{eq:sop_thm_bdry}. Since site $i$ is fixed at the system edge in this case, we only need to consider arbitrary heights of walks at position $j$ (the remaining bulk site). We can use a simpler approximation scheme, which only splits the chain into two segments, with the boundary given by site $j$. We will therefore only sum over the possible values of walk heights in the middle of the chain. The analogue to Proposition \ref{pr:sop_main_simplification_prop} reads
\begin{equation}
    O^\sigma_\text{bdry} = {\sum_{r = 0}^{\infty} \alpha_{0,r} \; \pi_r \over \sum_{r = 0}^{\infty} \pi_r^2}
\end{equation}
with the same $\alpha$ factors as before, introduced in Definition \ref{def:ord_par_middle_expectation} and calculated in Lemma \ref{lm:sop_computation_lemma}. We expand the resulting expression in the numerator, and shift the remaining $r$ indices. Since there's a single sum in both numerator and denominator, we obtain a single factor of $B^\sigma$ rather than $(B^\sigma)^2$. In the end, we recover
\begin{equation}
    O^\sigma_\text{bdry} = (1 - \sigma) \; {t \pi_1 \over \mathcal{C}^2} \; B^\sigma
\end{equation}
and together with the identity $\mathcal{C} = 1 + t \pi_1$, we indeed recover the result \ref{eq:sop_thm_bdry}.
\end{proof}

\section{Trial states proving robust gaplessness for  $t>1$}
\label{sec:trial}
In this section, we prove Theorem \ref{thm:main3}.

 We will discuss a class of trial states which yield an upper bound on the $t>1$ gap, which vanishes as $n^2 t^{-n}$, and does not rely on the boundary terms at all. The intuition is that higher-area walks are preferred when $t>1$, so most of the weight in the ground state is held by `mountain-shaped' walks, which contain predominantly up-steps in the first half of the chain, respectively down-steps in the second one. The trial states instead have ``double-peak'' shape \textnormal{/\textbackslash/\textbackslash}, so one such mountain on each half of the chain. On the one hand, the ``double-peak''  encloses a vastly smaller area compared to the typical walk of the full ground state, so they should be almost orthogonal. On the other hand, it will satisfy all projectors in the Hamiltonian, except for the term linking the two halves of the chain. With regard to that term, we argue that most of the walks will contain a $\ket{du}$ contribution for the two middle sites, which is in fact annihilated by the Hamiltonian.\\

\begin{assumption} \label{as:trial_states_basic_assumption}
    With $t>1$, consider a spin chain consisting of $2n$ sites, and governed by the following Hamiltonian:
    \begin{equation}
    H_{2n}(t)= \lambda \; \Pi_{bdry} + \sum_{j=1}^{2n-1}\Pi_{j,j+1}(t) \label{eq:hamiltonian_variable_bdry}
    \end{equation}
    where the boundary and bulk terms are defined in eqns. \eqref{eq:hamiltonian_bdry_term} and \eqref{eq:projectors} respectively. The relative strength $\lambda \ge 0$ of the boundary projectors is left arbitrary, in order to emphasize the independence of our gap bound on the edge terms.
\end{assumption}

\begin{definition} \label{def:trial_state_starting_point}
    We split the system introduced in Assumption \ref{as:trial_states_basic_assumption} into halves: let $A = [1, n]$ and $B = [n + 1, 2n]$. The full chain $[1, 2n]$ will be denoted by $AB$. Later, we will also discuss subsegments with the two middle sites excluded, namely $A' = [1, n-1]$ and $B' = [n+2, 2n]$. The starting point for our trial states will be the product
    \begin{equation} \label{eq:trial_state_starting_point}
        \ket{\Psi_{n}} = \ket{GS_{0,0}^A} \otimes \ket{GS_{0,0}^B}
    \end{equation}
    where $\ket{GS_{0,0}^A}$ denotes the balanced ground state on the chain segment $A$, and analogously for $B$.
\end{definition}

A proper trial state, used for bounding the gap, needs to be exactly orthogonal to the ground space. We next perform this projection:

\begin{definition} \label{def:trial_state_definition}
    From the states $\ket{\Psi_{n}}$ introduced in Definition \ref{def:trial_state_starting_point}, we explicitly separate the components lying within the full-chain ground space, and respectively orthogonal to it:
    \begin{equation} \label{eq:trial_state_definition}
        \ket{\Psi_{n}} = \delta_n \; \ket{GS_{0,0}^{AB}} + \sqrt{1 - \delta_n^2} \; \ket{T_n}
    \end{equation}
    where $\ket{GS_{0,0}^{AB}}$ is the balanced ground state on the full chain, while $T_n$ is by definition orthogonal to it: $\inner{GS_{0,0}^{AB}}{T_n} = 0$. The latter will serve as our trial states.
\end{definition}

Although we will use $\ket{T_n}$ to bound the gap, the $\ket{\Psi_n}$ are still a more convenient object to work with. We therefore argue that the two are close to each other, i.e. $\delta_n$ is small:

\begin{lemma} \label{lm:trial_delta_small}
    The $\delta_n$ introduced in Definition \ref{def:trial_state_definition} vanish in the thermodynamic limit, with an exponential factor. Namely, there exists an $n_1$ such that
    \begin{equation} \label{eq:trial_delta_small_statement}
        \forall n > n_1: \quad  \delta_n < c_1 \; n \; t^{-n}
    \end{equation}
\end{lemma}

Next, we argue that balanced ground states overwhelmingly start with an up-step. Begin by separating walks based on their first step:

\begin{definition} \label{def:trial_state_first_step_split}
    Focusing on segment $B$, we separate walks in the ground state $\ket{GS_{0,0}^B}$ based on their first step:
    \begin{equation} \label{eq:trial_state_first_step_split}
        \ket{GS_{0,0}^B} = \sqrt{1 - \e_n^2} \; \ket{u} \otimes \ket{GS_{1,0}^{B'}} + \e_n \; \ket{0} \otimes \ket{GS_{0,0}^{B'}}
    \end{equation}
    Note that the first step cannot be $\ket{d}$ by the assumption that $\ket{GS_{0,0}^B}$ is balanced; the corresponding term is therefore absent from the expansion above. Moreover, once the first step on $B$ is chosen, the remaining walks on $B'$ all have a fixed imbalance and correct area weighting; therefore, they combine into the ground states $\ket{GS_{1,0}^{B'}}$ and respectively $\ket{GS_{0,0}^{B'}}$.
\end{definition}
\begin{remark}
    By symmetry, we also find on the $A$ segment
    \begin{equation} \label{eq:trial_state_last_step_split}
        \ket{GS_{0,0}^A} = \sqrt{1 - \e_n^2} \;  \ket{GS_{0,1}^{A'}} \otimes \ket{d} + \e_n \; \ket{GS_{0,0}^{A'}} \otimes \ket{0} 
    \end{equation}
    with the same value of $\e_n$.
\end{remark}

\begin{lemma} \label{lm:trial_epsilon_small}
    The RHS in eqs. \eqref{eq:trial_state_first_step_split} and \eqref{eq:trial_state_last_step_split} can be approximated by their first terms, in the thermodynamic limit. Specifically, there exists an $n_2$ for which
    \begin{equation} \label{eq:lm_trial_epsilon_small}
        \forall n > n_2: \quad  \e_n < c_2 \; \sqrt{n} \; t^{-n/2}
    \end{equation}
\end{lemma}

We may now formalize the intuition that the middle steps of $\ket{\Psi_n}$ will mostly be $\ket{du}$:

\begin{definition} \label{def:trial_middle_steps_du}
    Combining the definition \eqref{eq:trial_state_starting_point} with the expansions \eqref{eq:trial_state_first_step_split} and \eqref{eq:trial_state_last_step_split}, we obtain
    \begin{equation} \label{eq:trial_middle_steps_expansion}
        \ket{\Psi_{n}} = (1 - \e_n^2) \; \ket{GS_{0,1}^{A'}} \otimes \ket{du} \otimes \ket{GS_{1,0}^{B'}} + \sqrt{1 - (1 - \e_n^2)^2} \; \ket{E_n}
    \end{equation}
    Note that the first term of the RHS above is obtained by combining the dominant contributions from \eqref{eq:trial_state_first_step_split} and \eqref{eq:trial_state_last_step_split}. There will be three other terms in the full expansion, which we combine into the definition of $\ket{E_n}$, which we take to be a normalized state. From the normalization of $\ket{\Psi_{n}}$, it immediately follows that the amplitude of $\ket{E_n}$ in \eqref{eq:trial_middle_steps_expansion} must be $\sqrt{1 - (1 - \e_n^2)^2}$.    
\end{definition}

\subsection{Proof of Theorem \ref{thm:main3}}
We defer the more technical proofs of Lemmas \ref{lm:trial_delta_small} and \ref{lm:trial_epsilon_small} to the end of the section, and turn to showing the main result:

\begin{proof} [Proof of Theorem \ref{thm:main3}]
    The states $\ket{T_n}$ introduced in Definition \ref{def:trial_state_definition} were defined as orthogonal to the ground space of the full chain. It follows that their energy expectation directly provides an upper bound on the spectral gap:
    \begin{eqnarray} \label{eq:high_t_proof_1}
        \gamma_{2n} (t) \leq \mel{T_n}{H_{2n}}{T_n}
    \end{eqnarray}
    On the other hand, the ground state $\ket{GS_{0,0}^{AB}}$ is defined by $H_{2n} \ket{GS_{0,0}^{AB}} = 0$, so we can replace $\ket{T_n}$ by $\ket{\Psi_n}$ at only a small cost, quantified by $\delta_n$. Using \eqref{eq:trial_state_definition}, we expand:
    \begin{align}
        \mel{\Psi_n}{H_{2n}}{\Psi_n} &= \delta_n^2 \; \mel{GS_{0,0}^{AB}}{H_{2n}}{GS_{0,0}^{AB}} + \delta_n \; \sqrt{1 - \delta_n^2} \; \os \mel{T_n}{H_{2n}}{GS_{0,0}^{AB}} + \mel{GS_{0,0}^{AB}}{H_{2n}}{T_n} \cs \nonumber \\
        &\quad + (1 - \delta_n^2) \; \mel{T_n}{H_{2n}}{T_n}
    \end{align}
    and only the term on the second line survives. It follows that
    \begin{equation} \label{eq:high_t_proof_2}
        \mel{T_n}{H_{2n}}{T_n} = {1 \over 1 - \delta_n^2} \; \mel{\Psi_n}{H_{2n}}{\Psi_n}
    \end{equation}
    We rewrite the definition \eqref{eq:original-hamiltonian} of the full-chain Motzkin Hamiltonian $H_{2n}$ in the following way:
    \begin{equation} \label{eq:full_hamiltonian_split_half}
        H_{2n} = \lambda \left( \ket{d}\bra{d}_1 + \ket{u} \bra{u}_{2 n} \right) + \sum_{j = 1}^{n-1}\Pi_{j,j+1} + \Pi_{n, n+1} + \sum_{j = n + 1}^{2n-1}\Pi_{j,j+1}
    \end{equation}
    Observe that the boundary projectors annihilate $\ket{\Psi_n}$, as the latter is a balanced state by construction. Moreover, the two sums on the RHS of \eqref{eq:full_hamiltonian_split_half} are the open-boundary Hamiltonians on segments $A$ and $B$; in consequence, they annihilate the states $\ket{GS_{0,0}^{A}}$ and $\ket{GS_{0,0}^{B}}$ respectively, by definition. It follows that the only term with a non-trivial action on $\ket{\Psi_n}$ is $\Pi_{n, n+1}$, residing on the bond which links segments $A$ and $B$. We have
    \begin{equation} \label{eq:high_t_proof_3}
        \mel{\Psi_n}{H_{2n}}{\Psi_n} = \mel{\Psi_n}{\Pi_{n, n+1}}{\Psi_n}
    \end{equation}
    and therefore only the middle two steps of $\ket{\Psi_n}$ matter. But we argued (Definition \ref{def:trial_middle_steps_du} and Lemma \ref{lm:trial_epsilon_small}) that these will predominantly be $\ket{du}$. By inspecting the projector definition \eqref{eq:projectors}, one can observe that such a step sequence lies in the kernel: $\Pi_{n, n+1} \ket{du}_{n, n+1} = 0$. Equation \eqref{eq:trial_middle_steps_expansion} then yields
    \begin{eqnarray} \label{eq:high_t_proof_4}
        \mel{\Psi_n}{\Pi_{n, n+1}}{\Psi_n} = \os 1 - (1 - \e_n^2)^2 \cs \; \mel{E_n}{\Pi_{n, n+1}}{E_n}
    \end{eqnarray}
    The matrix element on the RHS is bounded above by 1, since $\Pi_{n, n+1}$ is a projection operator and $\ket{E_n}$ is normalized. Combining \eqref{eq:high_t_proof_1}, \eqref{eq:high_t_proof_2}, \eqref{eq:high_t_proof_3},  and \eqref{eq:high_t_proof_4}, we obtain the upper bound on the gap
    \begin{equation}
        \gamma_{2n} (t) \leq {1 - (1 - \e_n^2)^2 \over 1 - \delta_n^2} = {\e_n^2 (2 - \e_n^2) \over 1 - \delta_n^2}
    \end{equation}
    Using the results of Lemmas \ref{lm:trial_delta_small} and \ref{lm:trial_epsilon_small}, the RHS above is bounded above, in the $n \to \infty$ limit, by $c_0 \; n \; t^{-n}$, for some constant $c_0$, completing the proof.

\end{proof}

\begin{proof} [Proof of Lemma \ref{lm:trial_delta_small}]
   Using the notation of Section \ref{sec:low-imb}, we rewrite:
   \begin{align}
       \ket{GS_{0,0}^{AB}} & = {1 \over \sqrt{N_{0,0}^{AB}}} \ket{\unnormstate{G^{AB}_{0,0}}} = {1 \over \sqrt{N_{0,0}^{AB}}} \sum_{w \in G_{0,0}^{AB}} t^{A(w)} \ket{w}\\
       \ket{\Psi_{n}} & = \ket{GS_{0,0}^A} \otimes \ket{GS_{0,0}^B} = {1 \over \sqrt{N_{0,0}^{A} N_{0,0}^{B}}} \sum_{w \in G_{0,0}^{A} \times G_{0,0}^{B}} t^{A(w)} \ket{w}
   \end{align}
   Recalling that every walk which contributes to $\ket{\Psi_{n}}$ also appears within the full ground state ${GS_{0,0}^{AB}}$, i.e. $G_{0,0}^{A} \times G_{0,0}^{B} \subset G_{0,0}^{AB}$, we find
   \begin{equation}
       \delta_n \equiv \inner{\Psi_{n}}{GS_{0,0}^{AB}} = {1 \over \sqrt{N_{0,0}^{AB}}} {1 \over \sqrt{N_{0,0}^{A} N_{0,0}^{B}}} \sum_{w \in G_{0,0}^{A} \times G_{0,0}^{B}} t^{2A(w)} = \sqrt{N_{0,0}^{A} N_{0,0}^{B} \over N_{0,0}^{AB}}
   \end{equation}
   and therefore we must compare the normalization contributions of walks in $G_{0,0}^{A} \times G_{0,0}^{B}$ to those in $G_{0,0}^{AB}$. Recall from eq. \eqref{eq:normalization_factorization} that
   \begin{equation}
       N_{0,0}^{AB} = \sum_{r = 0}^n N_{0,r}^{A} N_{r,0}^{B}
   \end{equation}
    where no truncation is involved: in the fully balanced case $p=q=0$, the full normalization is the sum over all interface heights. The above yields
   \begin{equation} \label{eq:trial_delta_norm_expression}
       \delta_n = \sqrt{N_{0,0}^{A} N_{0,0}^{B} \over \sum_{r = 0}^n N_{0,r}^{A} N_{r,0}^{B}}
   \end{equation}
   It suffices to compare the numerator with the $r=1$ term in the denominator. We will employ a reverse procedure to that of previous sections: that is, steps are now exchanged to increase the total area of a walk, as this is favorable for weighting parameters $t>1$. Since similar mapping procedures have been discussed at length in previous sections, we will only sketch the argument here.\\

    \begin{figure}[t]
	\centering	
	\scalebox{.4}{\includegraphics{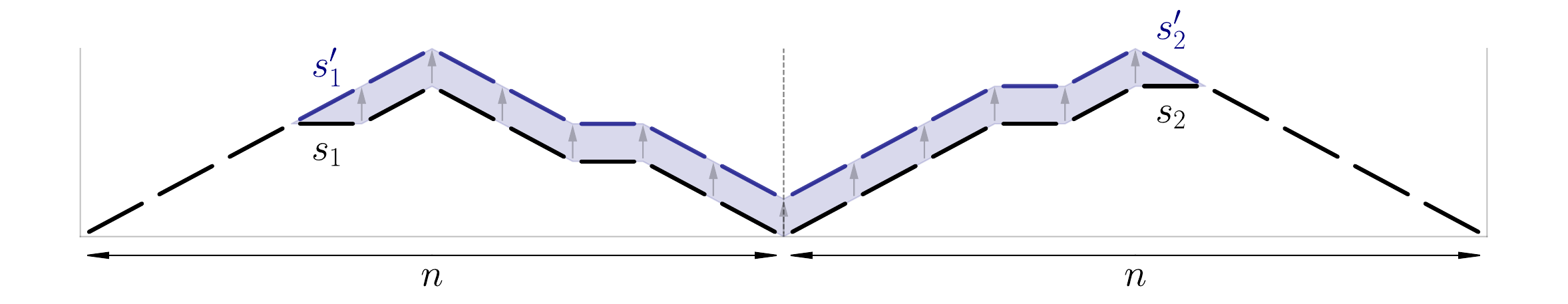}}%
	\caption{Illustration of mapping procedure, with $n=10$. A typical walk $w \in G_{0,0}^{A} \times G_{0,0}^{B}$ is shown in black. The first step not up ($s_1$) and the last one not down ($s_2$) are modified such that the walk portion between them is raised by one unit (blue steps and shaded area). The resulting walk $w'$ belongs to $G_{0,1}^{A} \times G_{1,0}^{B}$. Only two steps have been changed, and the area gain scales linearly with $n$.}
	\label{fig:trial_delta_small}
    \end{figure}

   Start with a walk $w \in G_{0,0}^{A} \times G_{0,0}^{B}$. Identify the leftmost step in segment $A$ which is not up, and call it $s_1$. Similarly, let $s_2$ be the rightmost step inside $B$ which is not down. As $w$ must return to zero height after $n$ steps, there can be no more than $n/2$ up-steps in $A$, or respectively down-steps in $B$. It follows that the distance between $s_1$ and $s_2$ obeys
   \begin{equation} \label{eq:trial_distance_condition_1}
       d(s_1, s_2) \ge n - 1
   \end{equation}
   We lift the portion of $w$ between the two steps under consideration, by raising $s_1$ and lowering $s_2$, to obtain a new walk $w'$. Call this map $\phi$. From eq. \eqref{eq:trial_distance_condition_1}, the area gain is at least $n$, and it follows that
   \begin{equation}
       A(w') \ge A(w) + \left(n - {1 \over 2} \right) \quad \implies \quad t^{2 A(w')} \ge t^{2n - 1} \; t^{2 A(w)}
   \end{equation}
   The walk $w' = \phi(w)$ lies in $G_{0,1}^{A} \times G_{1,0}^{B}$, as illustrated in Figure \ref{fig:trial_delta_small}. The mapping $\phi$ is not injective, but we can bound the cardinality of any preimage $\phi^{-1}(w')$: since only one step was modified in each of the segments $A, B$, and in turn each segment has length $n$, there are no more than $n^2$ possible choices. It follows that
   \begin{equation}
       N_{0,1}^{A} N_{1,0}^{B} \geq \sum_{w' \in \phi(G_{0,0}^{A} \times G_{0,0}^{B})} t^{2 A(w')} \geq n^{-2} \sum_{w \in G_{0,0}^{A} \times G_{0,0}^{B}} t^{2n - 1} \; t^{2 A(w)} = n^{-2} t^{2n - 1} N_{0,0}^{A} N_{0,0}^{B}
   \end{equation}
   Together with eq. \eqref{eq:trial_delta_norm_expression}, from which we keep only the $r\in \{0,1\}$ terms, this gives
   \begin{equation}
       \delta_n < \sqrt{N_{0,0}^{A} N_{0,0}^{B} \over N_{0,0}^{A} N_{0,0}^{B} + N_{0,1}^{A} N_{1,0}^{B}} \leq \sqrt{1 \over 1 + n^{-2} t^{2n - 1}}
   \end{equation}
from which the bound \eqref{eq:trial_delta_small_statement} follows at large $n$.   
\end{proof}

\begin{proof} [Proof of Lemma \ref{lm:trial_epsilon_small}]
    From eq. \eqref{eq:trial_state_first_step_split} we obtain
    \begin{align}
        \e_n &= \bra{GS_{0,0}^B} \left( \ket{0} \otimes \ket{GS_{0,0}^{B'}} \right) \nonumber \\
        &= {1 \over \sqrt{N_{0,0}^B N_{0,0}^{B'}}} \op \sum_{w_1 \in G_{0,0}^B} t^{A(w_1)} \bra{w_1} \cp \op \sum_{w_2 \in \{0\} \times G_{0,0}^{B'}} t^{A(w_2)} \ket{w_2} \cp       
\end {align}
Since $\{0\} \times G_{0,0}^{B'} \subset G_{0,0}^B$, the above can be simplified to
\begin{equation} \label{eq:trial_epsilon_small_rewrite}
    \e_n = {1 \over \sqrt{N_{0,0}^B N_{0,0}^{B'}}} \sum_{w \in \{0\} \times G_{0,0}^{B'}} t^{2 A(w)} = \sqrt{N_{0,0}^{B'} \over N_{0,0}^B}
\end{equation}
and therefore we must compare weights of walks in $\{0\} \times G_{0,0}^{B'}$ to those of $G_{0,0}^B$. The strategy, analogous to the previous proof, is to start with a walk $w \in \{0\} \times G_{0,0}^{B'}$ and raise its area. To this end, identify the rightmost step $s$ which is not down. This must be in the right half of $B$, since no more than half of the steps in $w$ can be `down' by the balanced condition. The distance between the first step of $w$, which is assumed flat, and $s$ therefore obeys
\begin{equation}
    d(1, s) > {n - 1 \over 2}
\end{equation}

\begin{figure}[t]
	\centering	
	\scalebox{.4}{\includegraphics{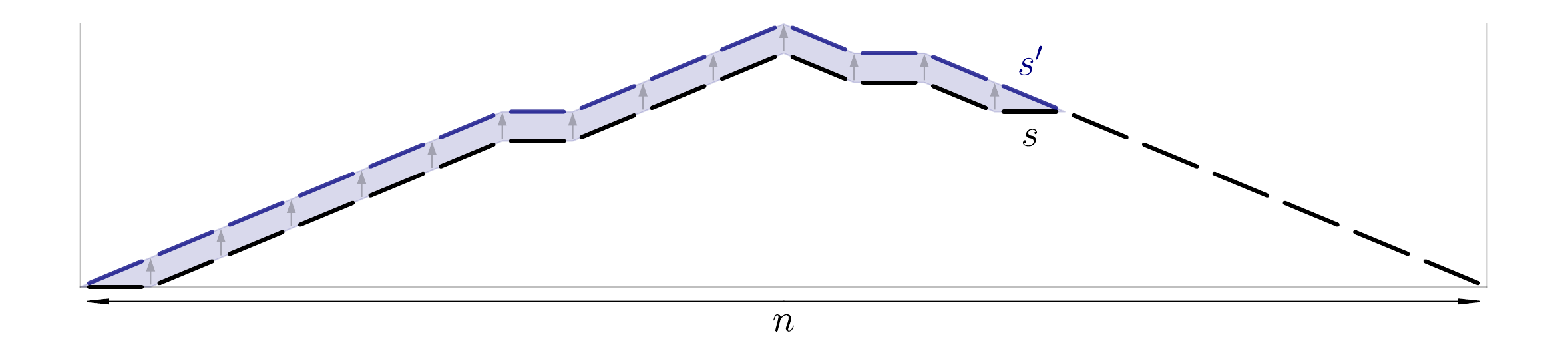}}%
	\caption{Example of mapping $\{0\} \times G_{0,0}^{B'} \to G_{0,0}^B$. Original walk $w$ is shown in black. The first step was raised, and the rightmost step $s$ which is not down was in turn lowered; the resulting walk is depicted in blue. The area gain, shown with blue shading, is at least $n/2$.}
	\label{fig:trial_epsilon_small}
    \end{figure}

As before, raise the first step from `flat' to `up', and lower $s$ to a new step $s'$. The resulting walk $w'$ starts with an up-step and is still balanced, i.e. it belongs to $\{u\} \times G_{1,0}^{B'}$. Its area obeys
\begin{equation}
    A(w') \ge A(w) + \left({n - 1 \over 2} \right) \quad \implies \quad t^{2 A(w')} \ge t^{n - 1} \; t^{2 A(w)}
\end{equation}
The preimage of $w'$ under this mapping will have size at most $n/2$, since the position of $s$ within the walk $w$ is the only variable element. We obtain a bound on the normalization factors of \eqref{eq:trial_epsilon_small_rewrite}:
\begin{align}
    N_{0,0}^B &= \sum_{w \in \{0\} \times G_{0,0}^{B'}} t^{2 A(w)} + \sum_{w' \in \{u\} \times G_{1,0}^{B'}} t^{2 A(w')} \nonumber \\
    &= N_{0,0}^{B'} + \sum_{w' \in \{u\} \times G_{1,0}^{B'}} t^{2 A(w')} \nonumber \\
    &\geq N_{0,0}^{B'} + {2 \over n} \; t^{n-1} \sum_{w \in \{0\} \times G_{0,0}^{B'}} t^{2 A(w)} \nonumber \\
    &= \op 1 + 2 n^{-1} t^{n-1} \cp N_{0,0}^{B'}
\end{align}
From \eqref{eq:trial_epsilon_small_rewrite} we therefore obtain
\begin{equation}
    \e_n \leq \sqrt{1 \over 1 + 2 n^{-1} t^{n-1}}
\end{equation}
Omitting the constant factors, we see that the above behaves as $\sqrt n \; t^{-n/2}$ at large $n$, yielding the bound of \eqref{eq:lm_trial_epsilon_small}.
\end{proof}

\appendix

\section*{Acknowledgments} The proof of Theorem \ref{thm:main} subsumed in this work was done while RA was affiliated with Harvard College and RM was affiliated with Google Quantum AI.  
ML wishes to thank Bruno Nachtergaele and Norbert Schuch for helpful conversations about the string order parameter in Motzkin chains.
The research of ML is  supported by the DFG
through the grant TRR 352 – Project-ID 470903074 and by the European Union (ERC Starting Grant MathQuantProp, Grant Agreement 101163620).\footnote{Views and opinions expressed are however those of the authors only and do not necessarily reflect those of the European Union or the European Research Council Executive Agency. Neither the European Union nor the granting authority can be held responsible for them.}

\section{Analysis of the ratios of normalization factors} \label{sec:norm-ratios-appendix}

\subsection{Setup}
Let $t\in (0,1)$. We will consider the normalization factors
\begin{equation}\label{eq:Nkpqdefn}
N^k_{p,q}=\sum_{w\in G_{p,q}^k} t^{2\mathcal A(w)}
\end{equation}
for $k\geq 1$ and $p,q\geq 0$ such that $p+q\leq k$. Here $G_{p,q}^k$ denotes the ground space $G_{p,q}$ on a segment of length $k$. We adopt the zero boundary conditions
\begin{equation}\label{eq:Nbc}
N^k_{p,q}=0,\qquad \text{for either } p<0,\; q<0 \text{ or } p+q>k.
\end{equation}
For $p,q\geq 0$, we then have the following recursion relations,
\begin{equation}\label{eq:Nrecursion}
\begin{aligned}
N^{k+1}_{p,0}=&t N^k_{p,1}+N^k_{p,0}+t^{2k+1}N^k_{p-1,0},\\
N^{k+1}_{p,q}=&t^{2q+1} N^k_{p,q+1}+t^{2q}N^k_{p,q}+t^{2q-1}N^k_{p,q-1}\qquad \textnormal{for } q\geq 1,\, p+q\leq k+1,
\end{aligned}
\end{equation}
and the symmetry relation 
\begin{equation}\label{eq:Nksymm}
N^k_{p,q}=N_{q,p}^k.
\end{equation}
The initial data for the recursion can be computed from \eqref{eq:Nkpqdefn}, e.g.,
\begin{equation}\label{eq:Ninitialdata}
\begin{aligned}
N^1_{0,0}=&1,\quad N^1_{0,1}=t,\\
N^2_{0,0}=&1+t^2,\quad N^2_{0,1}=t+t^{3},\quad N^2_{0,2}=t^4,\quad N^2_{1,1}=t^2,\\
N^3_{0,0}=&1+2t^2+t^4,\quad N^3_{0,1}=t+2t^{3}+t^5+t^7,\quad N^3_{0,2}=t^4+t^6+t^8,\quad N^3_{0,3}=t^9,\\
N^3_{1,1}=&t^2+2t^4
\quad N^3_{1,2}=t^5.
\end{aligned}
\end{equation}

The recursion relation \eqref{eq:Nrecursion} can be seen as a discrete diffusion equation on the half-line with spatially varying diffusivity by interpreting $k$ as time and $q$ as space variable. We are interested in the limiting behavior of the ratios $\pi^k_{p,q}=\frac{N^k_{p,q}}{N^k_{p,0}}$ for large $k$ and $p,q\lesssim bk$ with $b$ a small constant. The rigorous analysis in this regime is technically moderately challenging, partly because the spatially varying coefficients preclude the use of Fourier theory and related methods to obtain exact formulas for the solution. Instead, we rely on a number of hands-on nested induction arguments and various analytical estimates which are heavily motivated by extensive numerical experiments (see, e.g., Figure \ref{fig:exponential-convergence-numerics}) and physical intuition about one-dimensional diffusion processes. The main technical difficulties that need to be overcome here are due to the boundary behavior at the edges of the conical domain  $0\leq q\leq k$ and the $k$-dependence of the maximal $p$ and $q$-values. It might be interesting to generalize the hands-on approach we develop here to equilibration problems in other discrete 1D diffusion equations.

Let $p,q\geq 0$. We are interested in the large-$k$ behavior of the following ratios, 
\begin{equation}
    \label{eq:rationdefn}
\pi^k_{p,q}=\frac{N^k_{p,q}}{N^k_{p,0}},\qquad \textnormal{for }  p+q\leq k
\end{equation}
Note that $\pi^k_{p,0}=1$. In accordance with \eqref{eq:Nbc}, we adopt the boundary condition that $\pi^k_{p,q}=0$ if $q<0$ or $q>k-p$.

Let $p\geq 0$ and $q\geq 1$. From \eqref{eq:Nrecursion} and \eqref{eq:Nksymm}, we see that the $\pi^k_{p,q}$ satisfy the following recursion relation.

\begin{equation}\label{eq:rkrecursion}
\begin{aligned}
\pi^{k+1}_{p,q}=& t^{2q}  \frac{t^2 \pi^k_{p,q+1}+ t \pi^k_{p,q}+\pi^k_{p,q-1}}
{t^2 \pi^k_{p,1}+t+t^{2k+2} \frac{\pi^k_{0,p-1}}{\pi^k_{0,p}}},
\qquad \textnormal{for } p+q\leq k+1
\end{aligned}
\end{equation}
This is to be understood with the above boundary condition that $\pi^k_{p,q}=0$ if either $p<0$, $q<0$ or $p+q>k$.\\

\textbf{Convention.} We generally suppress the dependence of constants on the parameter $t$, which will be considered fixed in $(0,1)$ except where we want to emphasize it.

\subsubsection{Monotonicity properties and existence of the limit }

We begin with some useful properties of the ratios $\pi^k_{p,q}$. In particular, these ensure the existence of the $k\to\infty$ limit. Define the constant
	\begin{equation}\label{eq:C0defn}
    C_0\equiv C_0(t)=\max\left\{t,\frac{t+t^2-1}{t-t^5
    }\right\}.
    	\end{equation}

\begin{proposition}[Existence of the $k$-limit]\label{prop:limit} Let $t\in (0,1)$. For every $p,q\geq 0$, the following limit exists
	\begin{equation} \label{eq:rationdefn_infty}
	    \pi^{\infty}_{p,q}=\lim_{k\to\infty}  \pi^k_{p,q}.
	\end{equation}
	and satisfies the bounds
	\begin{align}\label{eq:limitmon}
	\pi^k_{p,q}\leq& \pi_{p,q}^{\infty},\qquad\textnormal{for } p+q\leq k,\\
	\label{eq:rlimitdecayq}
	\pi^{\infty}_{p,q+1}\leq& C_0 t^{2q} \pi^{\infty}_{p,q}.
	\end{align}
\end{proposition}


	\begin{corollary}[of Proposition \ref{prop:limit}]\label{cor:sodifferencebounds}
	Let $t\in (0,1)$. We have
	\begin{align}\label{eq:corpipiq1}
	\pi^{\infty}_{p,q}-t^2\pi^{\infty}_{p,q+2}\geq& (1-C_0^2 t^{4q+4})\pi^{\infty}_{p,q},\qquad q\geq 0.
	\end{align}
    In particular, for $t<t_{\mathrm{SOP}}$ with $t_{\mathrm{SOP}}$ the unique root of the polynomial $ t^4+t^3+t^2-t-1$ on $(0,1)$, we have
    \[
\pi_{0,q}^{\infty}>t^2\pi_{0,q+2}^{\infty},\qquad \textnormal{ for } q\geq 0
    \]
	\end{corollary}

 The last fact is used to prove non-vanishing of the string order parameter for $t<t_{\mathrm{SOP}}$.  

 \begin{remark}
      We note that since $C_0(t)\sim t$ as $t\to 0$, it is immediately clear that $\pi_{0,q}^{\infty}>\pi_{0,q+2}^{\infty}$ holds for all $ q\geq 0$ for all sufficiently small $t\approx 0$.
 \end{remark}

\subsubsection{Main convergence result}
It is important in the main text that the convergence to the limit happens in a uniform way at an exponential rate. Since we interpret the variable $k$ as discrete time, we call this a result about exponential equilibration (of the solution to the spatially inhomogeneous discrete diffusion equation). This can be seen as a significant refinement of Proposition \ref{prop:limit}.

\begin{figure}[t]
	\centering	
	\scalebox{\largehalffigsize}{\includegraphics{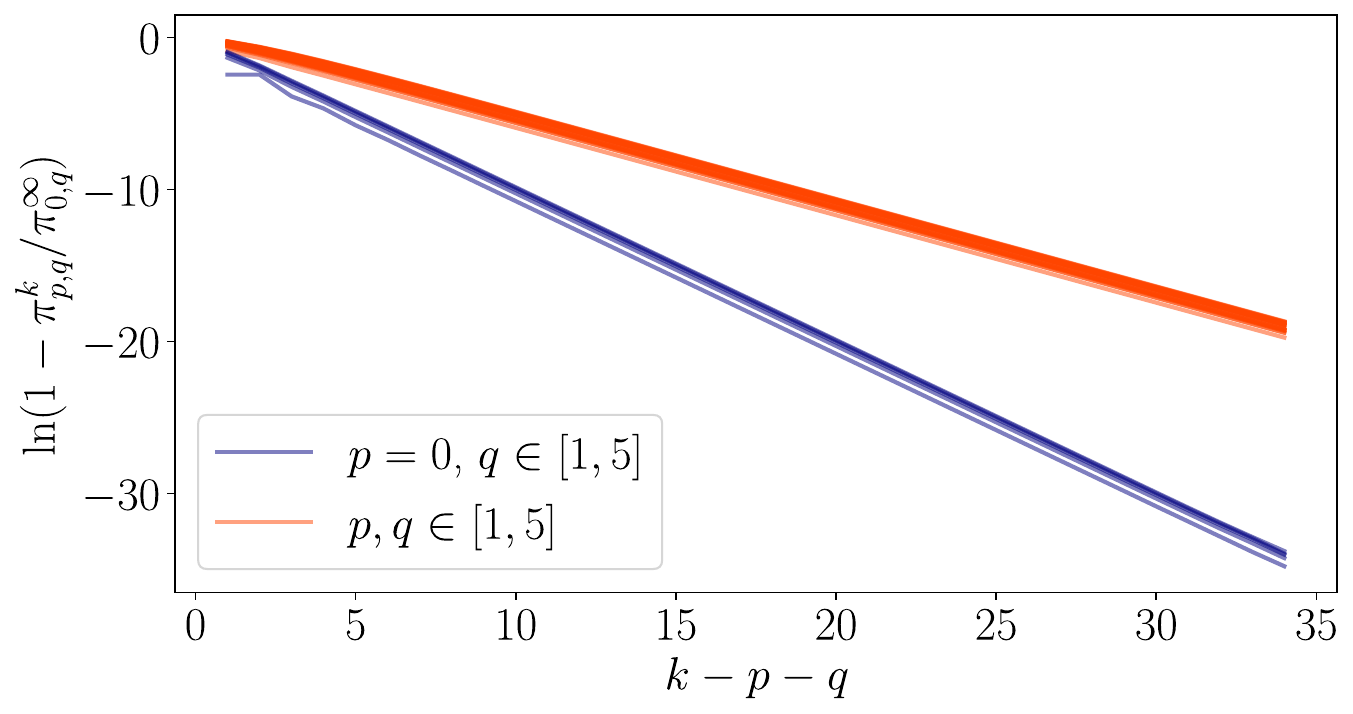}}	
	\caption{Log-linear plots of relative error $1 - \pi^k_{p,q} / \pi_{0,q}^{\infty}$ as a function of $k-p-q$, obtained from numerical simulations, at $t = 0.75$, $p \in [0, 5]$, and $q \in [1,5]$. The results strongly suggest that convergence of the $\pi_{p,q}^k$ ratios is exponential in the quantity $(k-p-q)$, which is slightly stronger than what we prove in Theorem \ref{thm:mainec}. Furthermore, the convergence rate is significantly faster for $p=0$ compared to $p>0$. Similar behavior is reproduced across the range of $t$ accessible through numerics.}
	\label{fig:exponential-convergence-numerics}
\end{figure}

In Figure \ref{fig:exponential-convergence-numerics}, it is shown that the error terms $1 - \pi^{k}_{p,q} / \pi_{0,q}^{\infty}$ converge exponentially fast in $(k-p-q)$. Plotting $\log \op 1 - \pi^{k}_{p,q} / \pi_{0,q}^{\infty} \cp $ versus $k-p-q$ gives a family of straight lines, which overlap at all $p,q \ge 1$. Therefore, the rate appears independent of $p$ when $p \ge 1$, whereas the $p = 0$ ratios converge faster.

\begin{theorem}[Exponential convergence to equilibirium]\label{thm:mainec}
	For every $t\in (0,1)$, there exist constants $C_*,\alpha,\beta>0$ such that
	\begin{equation}\label{eq:mainec}
	0\leq \pi_{0,q}^{\infty}-\pi_{p,q}^k\leq C_* t^{\alpha(k-\beta p-\beta q)}\pi_{0,q}^{\infty}\qquad  \textnormal{for } p+q\leq k.
	\end{equation}
\end{theorem}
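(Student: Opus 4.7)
The plan is to establish \eqref{eq:mainec} by a staged induction in $k$, exploiting two structural features of the recursion \eqref{eq:rkrecursion}: the prefactor $t^{2q}$, which provides strong contraction in the $q$-direction (reflecting the fact that \eqref{eq:rkrecursion} is morally a spatially inhomogeneous 1D discrete diffusion on the half-line $q\geq 0$), and the fact that $p$ enters the denominator only through the exponentially small term $t^{2k+2}\pi^k_{0,p-1}/\pi^k_{0,p}$. The overall idea is borrowed from standard proofs of equilibration for birth--death chains with geometric spatial weights, with the extra twist that the Dirichlet-like boundary term at the origin carries a $p$-dependence that we must perturb around.

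As preparation I would first collect a few a priori estimates. Iterating \eqref{eq:rlimitdecayq} gives $\pi_{0,q}\leq C_2 t^{aq}$ for some $a>0$, and a matching lower bound $\pi_{0,q}\geq C_1 t^{aq}$, together with $\pi^k_{0,p}\geq C_1 t^{ap}$ uniformly for $k\geq p$, follows by lower-bounding \eqref{eq:Nkpqdefn} by the single-walk contribution from $g_{0,p}$ (joined to a Motzkin walk on the remaining sites). The consequence is the crucial bound $t^{2k+2}\pi^k_{0,p-1}/\pi^k_{0,p}\leq C t^{2(k-p)}$, which quantifies the smallness of the $p$-dependent perturbation of the denominator. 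I would also prove the nonnegativity $\pi^k_{p,q}\leq \pi_{0,q}$ by first establishing the stronger combinatorial log-supermodularity
\begin{equation*}
N^k_{p,q}\,N^k_{0,0}\leq N^k_{p,0}\,N^k_{0,q},
\end{equation*}
equivalently $\pi^k_{p,q}\leq \pi^k_{0,q}$, which I expect to propagate under \eqref{eq:Nrecursion} by induction on $k$, with base case supplied by \eqref{eq:Ninitialdata}; taking $k\to\infty$ and combining with Proposition \ref{prop:limit} then gives $\pi^k_{p,q}\leq \pi_{p,q}\leq \pi_{0,q}$.

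For the quantitative upper bound I would first handle the case $p=0$. Setting $\delta^k_q:=\pi_{0,q}-\pi^k_{0,q}\geq 0$ and subtracting the fixed-point equation from the recursion yields an inequality of the schematic form
\begin{equation*}
\delta^{k+1}_q\leq \frac{t^{2q}}{t^2\pi_{0,1}+t}\bigl[t^2\delta^k_{q+1}+t\delta^k_q+\delta^k_{q-1}\bigr]+R^k_q,
\end{equation*}
where the remainder $R^k_q$ is controlled by $\delta^k_1$. The $t^{2q}$ factor together with the denominator lower bound $\geq t$ furnish the required contraction; iterating and invoking the a priori geometric decay of $\pi_{0,q}$ in $q$ produces $\delta^k_q\leq C t^{\alpha(k-\beta q)}\pi_{0,q}$. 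For $p\geq 1$ I would decompose
\begin{equation*}
\pi_{0,q}-\pi^k_{p,q}=(\pi_{0,q}-\pi^k_{0,q})+(\pi^k_{0,q}-\pi^k_{p,q}),
\end{equation*}
handle the first summand by the $p=0$ estimate, and control the second by comparing the two recursions, whose numerators agree (to leading order, modulo inductively controlled differences) and whose denominators differ precisely by the $O(t^{2(k-p)})$ perturbation. Propagating this perturbation through the recursion, and using that the denominators are bounded away from zero by $t$, gives the claimed $\beta p$ loss in the exponent.

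The main obstacle I foresee is the self-referential nature of \eqref{eq:rkrecursion}: $\pi^{k+1}_{p,q}$ depends on $\pi^k_{p,1}$ through the denominator, so the recursion is genuinely nonlinear and the induction hypothesis must be chosen carefully to close. I plan to handle this by bootstrapping, first proving a crude bound on $\pi^k_{p,1}$ that is enough to linearize the denominator, then using it to derive the sharp bound on all $\pi^k_{p,q}$ in a second pass. A secondary difficulty is the compatibility of the constants $\alpha,\beta,C_*$ across the three stages (nonnegativity, $p=0$ decay, $p\geq 1$ perturbation): the exponents extracted from each stage must be combined into a single $\beta$ that works uniformly, and the lower bound $\pi^k_{0,p}\gtrsim t^{ap}$ must be at least as strong as what the upper bound $\pi_{0,q}\lesssim t^{aq}$ demands. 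Any finite $\beta$ suffices for the downstream application in Proposition \ref{pr:convergence-requirement}, which allows some latitude in these choices and is what makes the proof feasible despite the delicate book-keeping.
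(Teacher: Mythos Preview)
Your staged strategy (handle $p=0$ by contraction, then $p\geq 1$ by perturbation of the denominator) differs structurally from the paper's: the paper runs a \emph{single} induction in $k$ for all $p$ at once, but with a substantially refined error ansatz. The paper's ansatz is
\[
\mathcal{E}(k,p,q)=\Bigl(\prod_{j=1}^{k-1}\bigl(1+t^{(\alpha^{1/2}-\alpha)j}\bigr)\Bigr)\,A_q\,t^{\alpha(k-1-\beta p-\beta q)},\qquad A_q=A_0\sum_{q'=0}^{q}\nu^{-q'},\qquad \beta=\alpha^{-1/2},
\]
and the authors state that all four parameters $A_0,\alpha,\beta,\nu$ are essential to make the induction close. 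This is precisely the piece your proposal is missing. Your claim that ``the $t^{2q}$ factor together with the denominator lower bound furnish the required contraction'' glosses over the central obstruction: by the fixed-point equation, $\pi_{0,\cdot}$ is an eigenvector of the linearized map with eigenvalue exactly $1$, so the naive ansatz $\delta^k_q\leq Ct^{\alpha(k-\beta q)}\pi_{0,q}$ yields only marginal contraction of size $O(\alpha)$ per step. The paper extracts genuine decay by (i) the $q$-dependent weight $A_q$, which creates slack via the gap $A_{q-1}-A_{q+1}\sim A_0\nu^{-q}$ and is exploited through a case split $q\leq Q$ versus $q>Q$ (their Lemma~A.7), and (ii) the growing prefactor $\prod_j(1+t^{(\alpha^{1/2}-\alpha)j})$, which lets each induction step inflate the bound by $1+t^{(\alpha^{1/2}-\alpha)k}$ and thereby absorb the denominator term~$\mathrm{(I)}\sim t^{2(k-p)}$. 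Your fixed-$C$ ansatz has no such slack, and your ``iterating'' step would have to produce it from nowhere.

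Two smaller remarks. First, for nonnegativity the paper does not attempt log-supermodularity on $N^k_{p,q}$ directly; it proves $\pi^k_{p+1,q}\leq\pi^k_{p,q}$ via an auxiliary ratio $\rho^k_{p,q}=t\,\pi^k_{p,q+1}/\pi^k_{p,q}$, which satisfies a recursion $\rho^{k+1}_{p,q}=F(\rho^k_{p,q+1},\rho^k_{p,q},\rho^k_{p,q-1})$ with $F$ monotone in each argument---this sidesteps any combinatorial argument on $N^k$. Second, your anticipated bootstrap for the nonlinear denominator is unnecessary: the self-referential $\pi^k_{p,1}$ enters with the \emph{helpful} sign (replacing it by its limit $\pi_{0,1}$ only increases the denominator, hence only helps the upper bound), and the paper uses this directly without any two-pass argument. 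Where a converse input is genuinely needed is a \emph{lower} bound $\pi^k_{q+1}\gtrsim t^{2q}\pi^k_q$ (the paper's Lemma~A.6), used to control the $t^{2k+2}\pi^k_{0,p-1}/\pi^k_{0,p}$ term; you correctly identify the need for such a bound, and the paper proves it by yet another parameter-strengthened induction.
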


We remark that $\alpha$ is a small positive number and $\beta$ is a large positive number.

Owing to the discreteness of the problem, Theorem \ref{thm:mainec} is proved by elementary analytical tools, but is nonetheless surprisingly delicate. In particular, the main technical result we prove (Theorem \ref{thm:expconv}), which then implies Theorem \ref{thm:mainec} as a corollary, involves four auxiliary parameters $A_0, \alpha,\beta$ and $\nu$ and we have found all of them to be essential to make its inductive proof work. 

In Section \ref{sec:stringorder}, where we study the string order parameter, we will also encounter different ratios of normalization factors. Their exponential convergence also follows from Theorem \ref{thm:expconv} by simple manipulations, as we note in this corollary.

\begin{corollary} \label{cor:ratios_in_two_directions}
Let $t\in (0,1)$. We also have the convergence of the following related ratios of normalization factors
\[
\frac{N^k_{p,q}}{N^k_{0,0}}=\pi^k_{p,q}\pi^k_{0,p}\to \pi_{0,q}^{\infty} \pi_{0,p}^{\infty}
\]
and of 
\[
\mathcal C^{(k)}=\frac{N^{k+1}_{0,0}}{N^k_{0,0}}=1+t\pi_{0,1}^k
\to 1+t\pi_{0,1}^{\infty}
\]
with corresponding exponential error bounds that can be read off from \eqref{eq:mainec}.
\end{corollary}

\subsection{Monotonicity and exponential decay in $q$}

The proof of Proposition \ref{prop:limit} rests on the following two lemmas.

\begin{lemma}[Monotonicity in $k$ and $p$]\label{lm:monotonicity}
	Let $t\in(0,1)$ and $p,q\geq 0$. Then we have
	\begin{equation}\label{eq:rmonotone}
	\pi^k_{p,q}\leq \pi^{k+1}_{p,q},\qquad \textnormal{for } p+q\leq k+1,
	\end{equation}
	and
	\begin{equation}\label{eq:rmonotonep}
	\pi^k_{p+1,q}\leq \pi^{k}_{p,q},\qquad \textnormal{for } p+q+1\leq k.
	\end{equation}
\end{lemma}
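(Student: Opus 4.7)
My plan is to prove both parts by induction on $k$, rephrased as nonnegativity statements:
\[
\Delta^k_{p,q} := N^{k+1}_{p,q} N^k_{p,0} - N^k_{p,q} N^{k+1}_{p,0} \geq 0 \qquad \text{for (a)},
\]
\[
\Delta'^{k}_{p,q} := N^k_{p,q} N^k_{p+1,0} - N^k_{p+1,q} N^k_{p,0} \geq 0 \qquad \text{for (b)}.
\]
The case $q=0$ is trivial for both. For $q\geq 1$, substituting the recursion \eqref{eq:Nrecursion} converts these into explicit polynomial identities in the $N^k_{p,\cdot}$. The base case in both inductions is $k=p+q$, where $|G^k_{p,q}|=1$ (only the canonical walk $g_{p,q}$ exists), and the claims reduce to a direct check against the initial data \eqref{eq:Ninitialdata}.

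For part (b), my preferred route is combinatorial. The quantity $\Delta'^{k}_{p,q}$ is a difference of area-weighted enumerations of pairs $(w_1,w_2)\in G^k_{p+1,q}\times G^k_{p,0}$ versus $(w_1',w_2')\in G^k_{p,q}\times G^k_{p+1,0}$, and I would construct an area-preserving injection from the former to the latter. The construction exploits that any $w_1\in G^k_{p+1,q}$ must pass through height $p$ on its way to touching $0$, so $w_1$ admits a canonical decomposition at its first descent from $p+1$ to $p$. Transplanting this initial descent onto $w_2$ while correspondingly shifting the initial trivial portion of $w_2$ onto $w_1$ produces a pair in $G^k_{p,q}\times G^k_{p+1,0}$ of equal total area; injectivity follows because the swap location is recoverable from the output by looking at the first ascent from $p$ to $p+1$ in $w_2'$.

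For part (a), I would apply the recursion to $\Delta^k_{p,q}$ and, after cancelling the common $N^k_{p,q}N^k_{p,0}$ terms, obtain
\[
\Delta^k_{p,q} = t^{2q+1}N^k_{p,q+1}N^k_{p,0} + t^{2q-1}N^k_{p,q-1}N^k_{p,0} + (t^{2q}-1)N^k_{p,q}N^k_{p,0} - tN^k_{p,q}N^k_{p,1} - t^{2k+1}N^k_{p,q}N^k_{p-1,0}.
\]
The negative boundary contribution $t^{2k+1}N^k_{p,q}N^k_{p-1,0}$ can be controlled using part (b) rewritten as $N^k_{p,q}N^k_{p-1,0}\leq N^k_{p-1,q}N^k_{p,0}$, trading it for a bulk quantity. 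The remaining expression (a weighted three-term combination of $N^k_{p,q\pm 1}N^k_{p,0}$ against $N^k_{p,q}N^k_{p,1}$) I would then handle by a secondary induction: either by expanding the $N^k_{p,\cdot}$ factors once more via \eqref{eq:Nrecursion} and recognizing $\Delta^{k-1}_{p,\cdot}$ terms, or by a log-convexity argument in $q$ (of the form $N^k_{p,q-1}N^k_{p,q+1}\geq (N^k_{p,q})^2$ up to correction) proved by a second combinatorial swap.

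The main obstacle is the $k$-dependent boundary term $t^{2k+1}N^k_{p-1,0}$ in the recursion for $N^{k+1}_{p,0}$: it reflects walks that only touch $0$ at the final step and has no analogue in the bulk recursion for $q\geq 1$, breaking the symmetry one would need for a clean one-step induction. This asymmetry is what forces part (b) to be established first and used as an input to (a), so I would structure the argument as a joint induction in which (b) at level $k$ feeds into (a) at level $k$, and (a) at level $k$ plus (b) at level $k$ imply (b) at level $k+1$ via the recursion.
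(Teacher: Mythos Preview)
Your strategy is genuinely different from the paper's, and as written it has real gaps in both parts.

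\textbf{Part (b).} The injection you sketch does not actually produce a pair in $G^k_{p,q}\times G^k_{p+1,0}$. If $w_1=\alpha\cdot d\cdot\beta$ with the first descent from $p+1$ to $p$ at position $j$, and you swap the initial $j$ steps of $w_1$ with those of $w_2$, then the junction heights do not match: the first $j$ steps of $w_2$ end at whatever height $h$ the walk $w_2$ happens to reach at position $j$, and there is no reason this equals $p$. Consequently the resulting ``$w_1'$'' ends at height $h-p+q$, not $q$, and need not stay nonnegative. Total area is indeed preserved under such a swap, but the image walks are simply not in the claimed classes. The phrase ``initial trivial portion of $w_2$'' has no meaning for a generic $w_2\in G^k_{p,0}$.

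\textbf{Part (a).} Two problems. First, using (b) in the form $N^k_{p,q}N^k_{p-1,0}\le N^k_{p-1,q}N^k_{p,0}$ to ``control'' the boundary term replaces $-t^{2k+1}N^k_{p,q}N^k_{p-1,0}$ by the \emph{more negative} quantity $-t^{2k+1}N^k_{p-1,q}N^k_{p,0}$; this strengthens the inequality you must prove rather than weakening it. Second, the log-convexity $N^k_{p,q-1}N^k_{p,q+1}\ge (N^k_{p,q})^2$ is false: the sequence is log-\emph{concave} in $q$ (equivalently $\rho^k_{p,q}$ is decreasing in $q$; check e.g.\ $k=3$, $p=0$ from the initial data). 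So neither branch of your ``remaining expression'' plan goes through.

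\textbf{What the paper does instead.} The paper avoids working with $\Delta^k_{p,q}$ directly and introduces the successive ratios $\rho^k_{p,q}=tN^k_{p,q+1}/N^k_{p,q}$, for which it derives a closed recursion $\rho^{k+1}_{p,q}=F(\rho^k_{p,q+1},\rho^k_{p,q},\rho^k_{p,q-1})$ with $F(x,y,z)=t^2y\,\dfrac{x+1+1/y}{y+1+1/z}$. Since $\pi^k_{p,q}=t^{-q}\prod_{q'=0}^{q-1}\rho^k_{p,q'}$, it suffices to prove $\rho^k_{p,q}\le\rho^{k+1}_{p,q}$ and $\rho^k_{p+1,q}\le\rho^k_{p,q}$. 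Both are done by induction in $k$: the first by writing out $\rho^{k+1}_q/\rho^k_q$ as an explicit product of ratios and comparing numerator to denominator term by term; the second by exploiting that $F$ is monotone increasing in each argument, together with the auxiliary monotonicity $\rho^k_{0,q'+1}\le\rho^k_{0,q'}$ to handle the $q=0$ boundary case. The passage to $\rho$ is precisely what tames the awkward $k$-dependent boundary term and the sign issues you ran into.
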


\begin{lemma}[Exponential decay in $q$]\label{lm:qdecay}
	Let $t\in(0,1)$. For all $p,q\geq 0$,
	\begin{equation}\label{eq:rdecayq}
	\pi^{k}_{p,q+1}\leq C_0 t^{2q} \pi^k_{p,q},\qquad \textnormal{for } p+q\leq k
	\end{equation}

\end{lemma}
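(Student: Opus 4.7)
The plan is to prove the decay via a weighted combinatorial injection together with a structural bound on the ``excess area'' in the image walks. The map is $\Phi:G^k_{p,q+1}\to G^k_{p,q}$ that identifies the unique unbalanced up-step from height $q$ to $q+1$ in a walk $w$ (uniqueness follows from Proposition \ref{pr:unbalanced-step-height}) and flattens it, shifting the portion of the walk after this step down by one. A step-counting calculation shows that if the flattened step is at position $i$, the area decreases by exactly $k-i+1/2$; summing over preimages of each $w'\in G^k_{p,q}$ gives the identity
\begin{equation*}
N^k_{p,q+1} \;=\; \sum_{w' \in G^k_{p,q}} t^{2{\mathcal A}(w')} \cdot M_q(w'), \qquad M_q(w') := \sum_{i \in V_q(w')} t^{2(k-i)+1},
\end{equation*}
where $V_q(w')$ is the set of positions in $w'$ at which a flat step occurs at height $q$ and the walk stays at height $\geq q$ from that point on.

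For the base case $q=0$ the naive geometric bound $M_0(w') \leq \sum_{i=0}^{k} t^{2(k-i)+1} \leq t/(1-t^2)$ is uniform in $w'$ and immediately yields the lemma at $q=0$ with $C_0 = t/(1-t^2)$. For $q\geq 1$ the same bound is short by a factor of $t^{2q}$, and the crucial refinement I plan to make is structural: any $w'$ contributing to the sum above (i.e., with $V_q(w')\neq\emptyset$) must contain a nontrivial ``high portion'' at height $\geq q$. I would factor this out by splitting $w'$ at its unique unbalanced up-step from $q-1$ to $q$, writing $w' = w_{\text{low}}\cdot u\cdot w_{\text{high}}$ with $w_{\text{low}}\in G^{j-1}_{p,q-1}$ for the splitting position $j$ and $w_{\text{high}}$ a lifted Motzkin walk on $[j+1,k]$ that stays at height $\geq q$. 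Every time step in $w_{\text{high}}$ contributes an extra $t^{2q}$ to the area weight, and this is precisely the factor needed to produce the $t^{2q}$ decay.

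Combining the two ingredients, the double sum over $w'$ and $i\in V_q(w')$ becomes a product of a $t^{2q}$ prefactor with a sum that, after resumming the fiber over $i$ and the ``low'' piece, reproduces the analogous decomposition of $N^k_{p,q}$ itself (obtained by applying the same splitting to an arbitrary walk in $G^k_{p,q}$). The ratio of the two sums is bounded by a constant that depends only on $t$ through the convergence of the Motzkin-type generating function $\sum_{n\geq 0} \widetilde M_n\, t^{2n}$, which is finite for all $t\in(0,1)$.

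The main obstacle I anticipate is reconciling the multiplicity count (the sum over $i\in V_q(w')$) with the area penalty carried by $w_{\text{high}}$: both are naturally indexed by positions inside the high portion, and pairing them correctly requires viewing the $i$-sum as redistributing the area weight among the possible positions of a ``marked'' step in $w_{\text{high}}$, rather than as an independent overcount. Once this bookkeeping is set up, the remaining estimates are routine, and the base cases for small $k$ are verified directly from the explicit data in \eqref{eq:Ninitialdata}; the constant $C_0$ is then chosen large enough to absorb the accumulated multiplicative factors uniformly in $p$.
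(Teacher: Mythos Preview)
Your combinatorial identity
\[
N^k_{p,q+1} \;=\; \sum_{w' \in G^k_{p,q}} t^{2{\mathcal A}(w')} M_q(w'),
\qquad M_q(w') = \sum_{i \in V_q(w')} t^{2(k-i)+1},
\]
is correct, and the $q=0$ bound $M_0(w')\le t/(1-t^2)$ does give the base case. The gap is in the step for $q\ge 1$. Carrying out your split $w'=w_{\text{low}}\cdot u\cdot w_{\text{high}}$ at the $q$-th unbalanced up-step (at position $j$) and using $M_q(w')=M_0^{(k-j)}(\tilde w_{\text{high}})$ yields
\[
N^k_{p,q+1}=t^{2q-1}\sum_j t^{2q(k-j)}\,N^{j-1}_{p,q-1}\,N^{k-j}_{0,1},
\qquad
N^k_{p,q}=t^{2q-1}\sum_j t^{2q(k-j)}\,N^{j-1}_{p,q-1}\,N^{k-j}_{0,0}.
\]
The factor $t^{2q(k-j)}$ from the high portion is the \emph{same} in numerator and denominator, so it cancels in the ratio: what you get is a weighted average of $N^{k-j}_{0,1}/N^{k-j}_{0,0}\le C_0$, hence only $N^k_{p,q+1}/N^k_{p,q}\le C_0$, with no $t^{2q}$. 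Your sentence ``every time step in $w_{\text{high}}$ contributes an extra $t^{2q}$ \ldots\ and this is precisely the factor needed'' identifies a factor that is already present in $N^k_{p,q}$ under the same decomposition; it does not survive to the ratio.

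The true source of the $t^{2q}$ is the shift in where the dominant $j$ sits: the $j=k$ term of the $N^k_{p,q}$ sum is $N^{k-1}_{p,q-1}$ (high portion empty), while in the $N^k_{p,q+1}$ sum that term vanishes (since $N^0_{0,1}=0$) and the first nonzero term is at $j=k-1$, carrying $t^{2q}$. Turning this into a proof requires uniform control of $N^{j-1}_{p,q-1}/N^{k-1}_{p,q-1}$ against the tail $\sum_{m\ge 1} t^{2qm}N^m_{0,0}$, and your assertion that the relevant ``Motzkin-type generating function $\sum_n \widetilde M_n t^{2n}$ is finite for all $t\in(0,1)$'' is not justified: $N^m_{0,0}\to\infty$, and even the crude bound $N^m_{0,0}\le(1-t^2)^{-m}$ only gives convergence of $\sum_m t^{2m}N^m_{0,0}$ for $t<1/\sqrt{2}$. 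For comparison, the paper avoids all of this by first reducing to $p=0$ via the monotonicity $\rho^k_{p,q}\le\rho^k_{0,q}$ (Lemma~\ref{lm:monotonicity}), and then running a direct induction on $k$ through the recursion~\eqref{eq:rkrecursion}: the inductive hypothesis at $k$ feeds the three ratios $\pi^k_{q+2}/\pi^k_{q+1}$, $\pi^k_{q+1}/\pi^k_q$, $\pi^k_q/\pi^k_{q-1}$ into the recursion and the bound at $k+1$ falls out in two lines.
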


\begin{remark} 
	See also the converse bound in Lemma \ref{lm:rlb}. The decay rate $t^{2q}$ is displayed clearly in the numerics, which played an important role in formulating these statements. 
\end{remark}

\begin{proof}[Proof of Proposition \ref{prop:limit}] By Lemma \ref{lm:monotonicity}, the sequence $\{\pi^k_{p,q}\}_{k\geq p+q}$ is monotonically increasing and it follows from Lemma \ref{lm:qdecay} that it is bounded. Hence, the limit exists and satisfies \eqref{eq:limitmon}. The estimate \eqref{eq:rlimitdecayq} follows by taking $k\to\infty$ in \eqref{eq:rdecayq}. 
\end{proof}


%

\subsubsection{Proof of monotonicity (Lemma \ref{lm:monotonicity})}
\textbf{Recursion of auxiliary ratios.} Through the proof, we fix $p\geq 0$. We begin with some reductions. We introduce the auxiliary ratio
\begin{equation}\label{eq:rhodefn}
\rho_{p,q}^k =t\frac{N^k_{p,q+1}}{N^k_{p,q}}=t\frac{\pi^k_{p,q+1}}{\pi^k_{p,q}},
\end{equation}
for which we adopt the boundary condition that $\rho^k_{p,q}=\infty$ for $q< 0$ and $\rho^k_{p,q}=0$ for $q\geq k-p$, in accordance with \eqref{eq:Nbc}. The prefactor $t$ in \eqref{eq:rhodefn} is included for convenience only. Note that
\begin{equation}\label{eq:rrho}
\pi^k_{p,q}=t^{-q}\prod_{q'=0}^{q-1} \rho^k_{p,q'}.
\end{equation}

We define the function $F:\R_+^3\to \R_+$ by
\begin{equation}\label{eq:Fdefn}
F(x,y,z)=t^2 y\frac{x+1+\frac{1}{y}}{y+1+\frac{1}{z}}
\end{equation}
For $q>0$,
$$
\begin{aligned}
&\rho^{k+1}_{p,q}
=t\frac{N^{k+1}_{p,q+1}}{N^{k+1}_{p,q}}
=t\frac{t^{3} N^{k}_{p,q+2}+t^{2}N^{k}_{p,q+1}+tN^{k}_{p,q}}{t N^{k}_{p,q+1}+N^{k}_{p,q}+t^{-1}N^{k}_{p,q-1}}=F(\rho^k_{p,q+1},\rho^k_{p,q},\rho^k_{p,q-1}),
\end{aligned}
$$
where the last step is immediate for $1\leq q<k-p-1$, while for $q\in \{k-p-1,k-p\}$ it rests on the conventions \eqref{eq:Nbc} and $\rho^k_{0}=\infty$ and $\rho^k_{p,q}=0$ for $q\geq k-p$, as well as interpreting $\frac{1}{\infty}$ as $0$. The case $q=0$ can be treated similarly.\\

To summarize, the auxiliary ratios $\rho^k_{p,q}$ satisfy the following recursion relations.
\begin{equation}\label{eq:rhorecursion}
\begin{aligned}
\rho^{k+1}_{p,q}=&F(\rho^k_{p,q+1},\rho^k_{p,q},\rho^k_{p,q-1}),\qquad  \textnormal{for } 1\leq q\leq k-p,\\
\rho^{k+1}_{p,0}=&F(\rho^k_{p,1},\rho^k_{p,0},t^{-2k-2}\rho^k_{0,p-1})
\end{aligned}
\end{equation}

\subsubsection{Proof of monotonicity in $k$}
By \eqref{eq:rrho}, it suffices to prove that $\rho^k_{p,q}\leq \rho^{k+1}_{p,q}$ for $0\leq q\leq k-p-1$. 
Recall that $p\geq 0$ is fixed. We claim that we have the monotonicity formula
\begin{equation}\label{eq:rhomonotone}
\rho^k_{p,q}\leq \rho^{k+1}_{p,q},\qquad  \textnormal{for }  0\leq q\leq k+1-p.
\end{equation}
The proof of \eqref{eq:rhomonotone} is done by induction in $k$. The induction base case occurs at $k=p-1$ and reduces to $0=\rho^{p-1}_{p,0}\leq \rho^{p}_{p,0}$.\\

For the induction step, consider any $k\geq p$ and suppose that  \eqref{eq:rhomonotone} holds up to $k-1$. In view of the boundary conditions, then have that $\rho^{k-1}_{p,q}\leq \rho^{k}_{p,q}$ holds for all $q\geq 0$.\\

\uline{\textit{Case (i): $1\leq q\leq  k-p-1$.}} Since the first recursion relation in \eqref{eq:rhorecursion} preserves the value of $p$, we may suppress $p$ from the notation, i.e., we denote $\rho_{p,q}^k\equiv \rho_{q}^k$, etc. We first consider the subcase $1\leq q\leq k-p-3$, since then \eqref{eq:rhorecursion} does not involve any boundary terms. We have
\begin{equation}\label{eq:ratioofrho}
\frac{\rho^{k+1}_q}{\rho^{k}_q}
=\frac{\rho^{k}_{q-1}}{\rho^{k-1}_{q-1}} \quad \frac{\rho^{k}_{q+1}\rho^{k}_q+\rho^{k}_q+1}{\rho^{k}_{q}\rho^{k}_{q-1}+\rho^{k}_{q-1}+1} 
\quad\frac{\rho^{k-1}_{q}\rho^{k-1}_{q-1}+\rho^{k-1}_{q-1}+1}{\rho^{k-1}_{q+1}\rho^{k-1}_{q}+\rho^{k-1}_{q}+1}.
\end{equation}
To verify that this expression is $\geq 1$, we order terms in the numerator and denominator lexicographically, first by $q$-values and then by $k$-value and then we apply the induction hypothesis only as needed to compare term by term. For example, considering the highest-order terms in numerator and denominator, the induction hypothesis gives $\rho^{k}_{q+1}\geq \rho^{k-1}_{q+1}$ and so
$$
\rho^{k}_{q+1}\rho^{k}_{q}\rho^{k-1}_{q}\rho_{q-1}^{k} \rho_{q-1}^{k-1} 
\geq
\rho^{k-1}_{q+1}\rho^{k}_{q}\rho^{k-1}_{q}\rho_{q-1}^{k} \rho_{q-1}^{k-1}.
$$
After grouping the lower order terms as described above, the induction hypothesis yields a term-by-term comparison in a straightforward manner.\\

Next, we consider the subcase $q=k-p-2$. Here the boundary conditions do not change the derivation of \eqref{eq:ratioofrho} and their sole effect is to set $\rho^{k-1}_{q+1}=0$ in the denominator. Since the induction hypothesis in fact applies for all $q\geq 0$ as noted before, the same term-by-term comparison from above can be used.\\

For the final subcase $q=k-p-1$, the boundary conditions lead to the slimmer expression
$$
\frac{\rho^{k+1}_{k-1}}{\rho^{k}_{k-1}}=
\frac{\rho^{k}_{k-2}}{\rho^{k-1}_{k-2}}
\quad \frac{(\rho^{k}_{k-1}+1)(\rho^{k-1}_{k-2}+1)}{\rho^{k}_{k-1}\rho^{k}_{k-2}+\rho^{k}_{k-2}+1}
\geq \frac{\rho^{k}_{k-2}}{\rho^{k-1}_{k-2}}\geq 1,
$$
where the last step uses the induction hypothesis.\\

\uline{\textit{Case (ii): $q=0$.}} 
Using the second relation in \eqref{eq:rhorecursion} and $t\in (0,1)$, we obtain
$$
\begin{aligned}
\frac{\rho^{k+1}_{p,0}}{\rho^{k}_{p,0}}
=&\frac{\rho^{k}_{0,p-1}}{\rho^{k-1}_{0,p-1}} \quad 
\frac{\rho^{k}_{p,1}\rho^{k}_{p,0}+\rho^{k}_{p,0}+1}
{\rho^{k-1}_{p,1}\rho^{k-1}_{p,0}+\rho^{k-1}_{p,0}+1}\quad
\frac{\rho^{k-1}_{0,p}\rho^{k-1}_{0,p-1}+\rho^{k-1}_{0,p-1}+t^{2k}}
{\rho^{k}_{0,p}\rho^{k}_{0,p-1}+\rho^{k}_{0,p-1}+t^{2k+2}}\\
\geq&
\frac{\rho^{k}_{0,p-1}}{\rho^{k-1}_{0,p-1}} \quad 
\frac{\rho^{k}_{p,1}\rho^{k}_{p,0}+\rho^{k}_{p,0}+1}
{\rho^{k-1}_{p,1}\rho^{k-1}_{p,0}+\rho^{k-1}_{p,0}+1}\quad
\frac{\rho^{k-1}_{0,p}\rho^{k-1}_{0,p-1}+\rho^{k-1}_{0,p-1}+t^{2k+2}}
{\rho^{k}_{0,p}\rho^{k}_{0,p-1}+\rho^{k}_{0,p-1}+t^{2k+2}}\\
\geq&1
\end{aligned}
$$
where the last estimate follows as in the case $q>0$ by appropriate grouping and the induction hypothesis. This proves the claim \eqref{eq:rhomonotone} and thanks to \eqref{eq:rrho} also \eqref{eq:rmonotone}.

\subsubsection{Proof of monotonicity in $p$}
We shall prove that
\begin{equation}\label{eq:rhomonotonep}
\rho^k_{p+1,q}\leq \rho^k_{p,q},\qquad \textnormal{for } 0\leq q\leq k-p-1.
\end{equation}
and use \eqref{eq:rrho} to conclude \eqref{eq:rmonotonep}.\\

The proof of \eqref{eq:rhomonotonep} is by induction in $k$. The induction base occurs at $k=p+1$, $q=0$, and holds because $\rho^{p+1}_{p+1,0}=0$. For the induction step, suppose that \eqref{eq:rhomonotonep} holds up to some $k$. \\ 

\uline{\textit{Case (i):} $1\leq q\leq  k-p-1$.}
First, suppose the subcase $1\leq q\leq  k-p-3$. Recall the recursion \eqref{eq:rhorecursion}. By the induction hypothesis and the fact that $F$ defined in \eqref{eq:Fdefn} is monotonically increasing in each of its arguments, we have
$$
\rho^k_{p+1,q}=F(\rho^{k-1}_{p+1,q+1},\rho^{k-1}_{p+1,q},\rho^{k-1}_{p+1,q-1})
\leq F(\rho^{k-1}_{p,q+1},\rho^{k-1}_{p,q},\rho^{k-1}_{p,q-1})=\rho^k_{p,q}.
$$
as desired.
Next we consider the subcase $q\in \{k-p-2,\; k-p-1\}$. Then we either have $\rho^k_{p+1,q+1}=0$ or $\rho^k_{p+1,q+1}=\rho^k_{p+1,q}=\rho^k_{p,q+1}=0$ and the induction step follows again by monotonicity of $F$.\\

\uline{\textit{Case (ii):} $q=0$.} This case is more subtle due to the appearance of $\rho_{0,p-1}^k$ in \eqref{eq:rhorecursion}. For this, we observe the additional monotonicity
\begin{equation}\label{eq:rhoqmonotonic}
\rho^k_{0,q'+1}\leq \rho^k_{0,q'},\qquad \textnormal{for } q'+1\leq k.
\end{equation}
This inequality can be proved by a separate induction in $k$ thanks to the monotonicity of $F$; we skip the details.
Then by the induction hypothesis and \eqref{eq:rhoqmonotonic},
$$
\rho^{k+1}_{p+1,0}=F(\rho^k_{p+1,1},\rho^k_{p+1,0},t^{-2k-2}\rho^k_{0,p})
\leq F(\rho^k_{p,1},\rho^k_{p,0},t^{-2k-2}\rho^k_{0,p})
=\rho^{k+1}_{p,0}
$$
This completes the induction step and hence the proof of Lemma \ref{lm:monotonicity}.
\qed

\subsection{Proof of exponential decay in $q$ (Lemma \ref{lm:qdecay})}
Let $t\in (0,1)$ and $p\geq 0$. By \eqref{eq:rrho} and Lemma \ref{lm:monotonicity}, we have
$$
\frac{\pi^{k}_{p,q+1}}{\pi^{k}_{p,q}}=t^{-1}\rho^k_{p,q}\leq t^{-1}\rho^k_{0,q}=\frac{\pi^{k}_{0,q+1}}{\pi^{k}_{0,q}}
$$
so it suffices to prove the claim for $p=0$, i.e.,
\begin{equation}\label{eq:p=0claimqdecay}
\pi^{k}_{0,q+1}\leq C_0 t^{2q} \pi^k_{0,q},\qquad \textnormal{for } 1\leq q\leq k
\end{equation}
We note that \eqref{eq:p=0claimqdecay} extends to $q>k$ due to the boundary condition $\pi^k_{p,q}=0$ for $q>k$.\\

We proceed by induction in $k$. The induction base is $k=1$. The  $q=1$ statement holds trivially (with any choice of $C_0$) because $\pi^1_{0,2}=0$ by the boundary condition. The $q=0$ statement follows from \eqref{eq:Ninitialdata} under the condition that $C_0\geq t$, which our choice of $C_0$ satisfies.

For the induction step, suppose that \eqref{eq:p=0claimqdecay} holds up to some $k\geq 1$.\\

\uline{Case (i): $2\leq  q\leq k-p$.} Here \eqref{eq:rkrecursion} and the induction hypothesis give
$$
\frac{\pi^{k+1}_{0,q+1}}{\pi^{k+1}_{0,q}}=\frac{t^{3}\pi^k_{0,q+2}+t^{2}\pi^k_{0,q+1}+t \pi^k_{0,q}}{t\pi^k_{0,q+1}+\pi^k_{0,q}+t^{-1}\pi^k_{0,q-1}}
\leq C_0 t^{2q} \frac{t^{5}\pi^k_{0,q+1}+t^{2}\pi^k_{0,q}+t^{-1}\pi^k_{0,q-1}}{t\pi^k_{0,q+1}+\pi^k_{0,q}+t^{-1}\pi^k_{0,q-1}}
\leq C_0 t^{2q}
$$
as required.\\

\uline{Case (ii):  $q=1$.} We use \eqref{eq:rkrecursion}, $\pi_{0,0}^k=1$, and the induction hypothesis to obtain
\[
\frac{\pi^{k+1}_{0,2}}{\pi^{k+1}_{0,1}}
=\frac{t^{3}\pi^k_{0,3}+t^{2}\pi^k_{0,2}+t \pi^k_{0,1}}{t\pi^k_{0,2}+\pi^k_{0,1}+t^{-1}}
\leq C_0t^{2} \frac{t^{5}\pi^k_{0,2}+t^{4}\pi^k_{0,1}+t^{-1}}{t\pi^k_{0,2}+\pi^k_{0,1}+t^{-1}}
\leq C_0t^{2}.
\]

\uline{Case (iii): $q=0$.} 
By $\pi_{0,0}^k=1$, and the induction hypothesis,
\[
\pi^{k+1}_{0,1}\leq  t  \frac{t^2 \pi^k_{0,2}+ t \pi^k_{0,1}+1}
{t \pi^k_{0,1}+1}
\leq C_0 t  \frac{(t^4 +\tfrac{t}{C_0}) \pi^k_{0,1}+1}
{t \pi^k_{0,1}+1}.
\]
Thus, the induction step can be completed under the assumption that
\[
(t^5 +\tfrac{t^2}{C_0}-t) \pi^k_{0,1}\leq 1-t.
\]
The prefactor in front of $\pi^k_{0,1}$ is positive because our choice $C_0\geq \frac{t}{1-t^4}$ as one easily checks with elementary inequalities. Using the induction hypothesis $\pi^k_{0,1}\leq C_0$ once again, it suffices to have
\[
C_0(t^5-t) \leq 1-t-t^2
\]
which our choice of $C_0\geq \frac{t+t^2-1}{t-t^5}$ satisfies.
This completes the induction step and proves Lemma \ref{lm:qdecay}.
\qed

\subsection{Proof of Theorem \ref{thm:mainec}}

The lower bound in \eqref{eq:mainec} follows from \eqref{eq:rmonotonep} and Proposition \ref{prop:limit}. It thus suffices to prove the upper bound in \eqref{eq:mainec}. We shall occasionally denote
$$
\pi^k_{0,q}\equiv \pi_q^k.
$$

\subsubsection{Preliminaries}
Let $p,q\geq 0$ with $p+q\leq k$. Recall \eqref{eq:rkrecursion}. We aim to derive a similar formula for $\pi_{p,q}$ by taking the limit $k\to\infty$ in \eqref{eq:rkrecursion}. To show that the last term in the denominator does not contribute in the limit, we use the following lemma.

\begin{lemma}\label{lm:rlb}
For every $t\in(0,1)$, there exists another constant $C_1>0$ such that
\begin{equation}\label{eq:lemmaconsequence}
	\pi^k_{q+1}\geq C_1 t^{2q} \pi_{q}^k
\end{equation}
\end{lemma}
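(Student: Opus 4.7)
The plan is to reduce Lemma~\ref{lm:rlb} to an elementary explicit computation at the extremal system size $k = q+1$ by exploiting the monotonicity in $k$ already established in Lemma~\ref{lm:monotonicity}. Specifically, the bound \eqref{eq:rhomonotone} specialized to $p = 0$ reads $\rho^k_{0,q} \leq \rho^{k+1}_{0,q}$, which, together with the definition $\rho^k_{0,q} = t\,\pi^k_{q+1}/\pi^k_q$, says that the ratio $\pi^k_{q+1}/\pi^k_q$ is nondecreasing in $k$. It therefore suffices to prove $\pi^k_{q+1} \geq C_1 t^{2q}\pi^k_q$ at the single smallest value $k = q+1$; monotonicity then propagates the bound to every $k \geq q+1$.

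At $k = q+1$ the relevant walk sets are constrained enough to enumerate by hand. The set $G^{q+1}_{0,q+1}$ contains only the strictly ascending walk $u^{q+1}$, which has area $(q+1)^2/2$ and hence $N^{q+1}_{0,q+1} = t^{(q+1)^2}$. For $G^{q+1}_{0,q}$, since the length exceeds the net imbalance by exactly one and a balanced up--down pair would require two additional steps, every admissible walk consists of $q$ up steps together with a single flat step; such walks are parameterized as $u^j\,0\,u^{q-j}$ for $0 \leq j \leq q$, and a direct area calculation gives area $q^2/2 + j$ for the $j$-th walk. Summing the area weights,
\begin{equation*}
N^{q+1}_{0,q} \;=\; t^{q^2}\sum_{j=0}^{q} t^{2j} \;=\; \frac{t^{q^2}\bigl(1 - t^{2q+2}\bigr)}{1 - t^2}.
\end{equation*}
Taking the ratio then yields
\begin{equation*}
\frac{\pi^{q+1}_{q+1}}{\pi^{q+1}_q} \;=\; \frac{N^{q+1}_{0,q+1}}{N^{q+1}_{0,q}} \;=\; \frac{(1 - t^2)\,t^{2q+1}}{1 - t^{2q+2}} \;>\; t(1 - t^2)\cdot t^{2q},
\end{equation*}
so the lemma holds with $C_1 := t(1 - t^2) > 0$.

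The main conceptual obstacle is that a direct induction on $k$ or $q$ based on the recursion \eqref{eq:rkrecursion}, in analogy with the proof of the companion upper bound in Lemma~\ref{lm:qdecay}, does not close. The term $t^{2q-1}\pi^k_{0,q-1}$ appearing in the numerator of the recursion for $\pi^{k+1}_{0,q}$ is comparable in size to $\pi^k_{0,q}$, and controlling it from above requires a lower bound on $\pi^k_{0,q}/\pi^k_{0,q-1}$ of exactly the form we are trying to prove, producing a circular dependence whose constants cannot be reconciled. Passing instead through the already-proved monotonicity in $k$ circumvents this circularity entirely by transferring the entire burden of proof onto the finite explicit calculation at $k = q+1$.
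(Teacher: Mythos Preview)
Your proof is correct and takes a genuinely different, much shorter route than the paper. The paper proves Lemma~\ref{lm:rlb} by a direct induction in $k$: because the naive induction does not close (exactly the obstacle you diagnose), the paper introduces an auxiliary slowly-growing function $\eta(q)=\sum_{r=0}^q \log(1+C_2 t^{2r})/\log(t^{-1})$ and establishes the strengthened bound $\pi^k_{q+1}\geq C_3\,t^{2q+\eta(q)}\pi^k_q$ by a four-case induction step (treating $1\leq q\leq k-2$, $q=k-1$, $q=k$, and $q=0$ separately), invoking both Lemma~\ref{lm:qdecay} and the monotonicity in $k$ along the way. By contrast, you recognize that the monotonicity of $\rho^k_{0,q}$ in $k$, already proved as~\eqref{eq:rhomonotone}, reduces the entire lemma to the single boundary value $k=q+1$, where the walk sets $G^{q+1}_{0,q+1}$ and $G^{q+1}_{0,q}$ can be enumerated by hand. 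Your explicit computation $N^{q+1}_{0,q+1}=t^{(q+1)^2}$ and $N^{q+1}_{0,q}=t^{q^2}(1-t^{2q+2})/(1-t^2)$ is correct (and matches the paper's own computation of these same quantities, which appears buried in Case~(iv) of its proof), yielding the explicit constant $C_1=t(1-t^2)$. What your approach buys is brevity and an explicit constant; what the paper's approach would buy, in principle, is independence from the monotonicity lemma---but since the paper's Case~(iv) already invokes that lemma anyway, this advantage is not realized, and your argument is strictly cleaner.
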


This is a converse to Lemma \ref{lm:qdecay}. For $p\geq 1$, Lemma \ref{lm:rlb} implies that
$$
\limsup_{k\to\infty} t^{2k+3}\frac{\pi^k_{0,p-1}}{\pi^k_{0,p}}\leq  \frac{1}{C_1 t^{2(p-1)}} \limsup_{k\to\infty} t^{2k+3} =0
$$
and so
\begin{equation}\label{eq:limitrecursion}
\begin{aligned}
\pi_{p,q}^{\infty}
=\lim_{k\to\infty}\pi^k_{p,q}= t^{2q}\frac{t^{2}\pi_{p,q+1}+t\pi_{p,q}+\pi_{p,q-1}}
{t^2 \pi_{p,1}+t}.
\end{aligned}
\end{equation}

\begin{proof}[Proof of Lemma \ref{lm:rlb}]
Similarly to the proof of the main result, Theorem \ref{thm:expconv}, a direct induction does not work, but instead a suitable inductive argument can be constructed by slightly strengthening the claim through additional parameters. Fix $t\in(0,1)$. Define the function
$\eta:\mathbb Z_+\cup\{0\}\to \mathbb R_+$ by
$$
\eta(q)=\sum_{r=0}^q\frac{\log(1+C_2 t^{2r})}{\log (t^{-1})}
$$
for an appropriate $t$-dependent constant $C_2>0$ to be determined later. 

We claim that there exists a constant $C_3>0$ such that 
\begin{equation}\label{eq:etainduction}
	\pi^k_{q+1}\geq C_3 t^{(2q+\eta(q))} \pi_{q}^k,\qquad \textnormal{for } 0\leq q\leq k-1.
\end{equation}
Note that $\eta(q)$ is bounded from above by the convergent infinite series $\eta_0 = \sum_{r=0}^\infty\log(1+C_2 t^{2r})$ and so it suffices to establish the strengthened version \eqref{eq:etainduction}.

We prove \eqref{eq:etainduction} by an induction in $k$. The induction base at $k=1$ and $q=0$ is by \eqref{eq:Ninitialdata} equivalent to the inequality $\pi^1_{0,1}=\frac{N^1_{0,1}}{N^1_{0,0}}=t\geq C_1$. For the induction step, we suppose the claim holds up to some $k\geq 1$.\\
	
	\uline{\textit{Case (i):} $1\leq q\leq k-2$.} For $q=1$, we use the convention that $\pi_{0}^k=1$. Then \eqref{eq:rkrecursion} and the induction hypothesis give
	$$
	\begin{aligned}
	\frac{\pi^{k+1}_{q+1}}{\pi^{k+1}_{q}}
	=&t^2\frac{t^{2}\pi^k_{q+2}+t\pi^k_{q+1}+ \pi^k_{q}}{t^2\pi^k_{q+1}+t\pi^k_{q}+\pi^k_{q-1}}\\
		\geq &C_3 t^{2q+\eta(q)}\frac{t^{6+\eta(q+1)-\eta(q)}\pi^k_{q+1}+t^3\pi^k_{q}+ t^{\eta(q-1)-\eta(q)}\pi^k_{q-1}}{t^2\pi^k_{q+1}+t\pi^k_{q}+\pi^k_{q-1}}\\
				\geq &C_3 t^{2q+\eta(q)}\frac{t^3\pi^k_{q}+ t^{\eta(q-1)-\eta(q)}\pi^k_{q-1}}{t^2\pi^k_{q+1}+t\pi^k_{q}+\pi^k_{q-1}}.
		\end{aligned}
	$$
 To conclude, we need to prove that the last fraction is $\geq 1$. From Lemma \ref{lm:qdecay} and $q\geq2$, we obtain the sufficient condition
	$$
C_0^2 t^{4q}+C_0 (1-t^2)t^{2q-1}\leq t^{\eta(q-1)-\eta(q)}-1.
	$$
We obtain the simpler sufficient condition
	$$
t^{\eta(q-1)-\eta(q)}\geq 1+ C t^{2 q}
	$$
	for an appropriate constant $C$ that depends only on $t$. From the definition of $\eta$, we obtain $t^{\eta(q-1)-\eta(q)}=1+C_2 t^{2q}$ and so the condition holds for $C_2$ sufficiently large. This completes the induction step for $2\leq q\leq k-2$.\\
	
	\uline{\textit{Case (ii):} $q=k-1$.} The recurrence relation \eqref{eq:rkrecursion} and the induction hypothesis \eqref{eq:etainduction} give
	$$
	\frac{\pi^{k+1}_{q+1}}{\pi^{k+1}_{q}}=t^2\frac{t\pi^k_{q+1}+ \pi^k_{q}}{t^2\pi^k_{q+1}+t\pi^k_{q}+\pi^k_{q-1}}
	\geq C_3 t^{2q+\eta(q)} \frac{t^{3}\pi^k_{q}+t^{\eta(q-1)-\eta(q)}  t^{2-a_q}\pi^k_{q-1}}{t^2\pi^k_{q+1}+t\pi^k_{q}+\pi^k_{q-1}}.
	$$
	The last fraction is $\geq 1$ by the same argument as above and this completes the induction step in Case (ii).\\
	
	\uline{\textit{Case (iii):} $q=k$.} The recurrence relation \eqref{eq:rkrecursion} and the induction hypothesis \eqref{eq:etainduction} give
	$$
	\frac{\pi^{k+1}_{q+1}}{\pi^{k+1}_{q}}=t^2\frac{ \pi^k_{q}}{t\pi^k_{q}+\pi^k_{q-1}}
	\geq C_3 t^{2q+\eta(q)} \frac{t^{\eta(q-1)-\eta(q)} \pi^k_{q-1}}{t\pi^k_{q}+\pi^k_{q-1}}.
	$$
	By Lemma \ref{lm:qdecay}, the last fraction is $\geq 1$, if the following sufficient condition is met 
	$$
	1+C_0t^{2q-1}\leq t^{\eta(q-1)-\eta(q)}=1+C_2 t^{2q}
	$$
	This can be ensured by choosing $C_2$ sufficiently large. This completes the induction step in Case (iii).\\

	\uline{\textit{Case (iv):} $q=0$.} The recurrence relation \eqref{eq:rkrecursion} gives
	$$
	\frac{\pi^{k+1}_{1}}{\pi^{k+1}_{0}}
	=\pi^{k+1}_{1}
	=t^2\frac{t^{2}\pi^k_{2}+t\pi^k_{1}}{t^2\pi^k_{1}+t}
	$$
	Now we use the monotonicity in $k$ established in Lemma \ref{lm:monotonicity} to bound
	$$
	t^2\frac{t^{2}\pi^k_{2}+t\pi^k_{1}}{t^2\pi^k_{1}+t}\geq
	t^2\frac{t\pi^k_{1}}{t^2\pi^k_{1}+t}
	\geq \frac{t^2\pi_1^{p+1}}{t\pi_1^{p+1}+1}.
	$$
Recall that $\pi_1^{p+1}=\frac{N^{p+1}_{p,1}}{N_{p,0}^{p+1}}$. This ratio is bounded from below by a constant depending only on $t$, which we then call $C_3$. To see this, observe that the sum for $N^{p+1}_{p,1}$ contains only a single walk of all down steps and so $N^{p+1}_{p,1}=t^{(p+1)^2}$ and that $N^{p+1}_{p,0}$ contains all walks obtained by placing a single flat step anywhere among $p$ down steps. Such a walk has area $t^{p^2+2\ell}$ if $\ell$ denotes the position of the flat step. Hence we have
$$
N^{p+1}_{p,0}\leq \sum_{\ell=0}^{p} t^{p^2+2\ell}\leq \frac{t^{p^2}}{1-t^2}
$$
and we obtain $\pi_1^{p+1}\geq \frac{t^2}{1-t^2}$. This completes the induction step in Case (iv) and proves Lemma \ref{lm:rlb}.
\end{proof}

\subsubsection{Main induction argument}
We will not directly prove Theorem \ref{thm:mainec}, but start with the following refined version instead. The refinement is introduced because more quantitative control is required to make an inductive argument work.\\

Given  some $A_0>0$ and $\nu\geq 2$, we define the sequence 
\begin{equation}\label{eq:Aqdefn}
A_q=A_0 \sum_{q'=0}^{q} \nu^{-q'},\qquad \forall q\geq 1.
\end{equation}

\begin{theorem}[Refined exponential convergence]\label{thm:expconv}
	For every $t\in (0,1)$, there exist constants $A_0,\beta>0$, $\alpha\in (0,1)$ and $\nu\geq 2$ so that for the function
	\begin{equation}\label{eq:curlyEdefn}
	\mathcal{E}(k,p,q)=\left(\prod_{j=1}^{k-1} \left( 1+ t^{(\alpha^{1/2}-\alpha)j} \right) \right) A_q t^{\alpha (k-1-\beta p-\beta q)}
	\end{equation}
	it holds that
	\begin{equation} \label{eq:convspeed}
	\pi_{0,q}^{\infty}-\pi_{p,q}^k\leq \mathcal{E}(k,p,q) \pi_{0,q}^{\infty},\qquad  \textnormal{for } p+q\leq k.
	\end{equation}
\end{theorem}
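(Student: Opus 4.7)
The natural approach is strong induction on $k$, with the four parameters chosen in the nested order $\alpha \in (0,1)$ small (so that $\alpha^{1/2}-\alpha > 0$ and hence the infinite product $\prod_{j \geq 1}(1+t^{(\alpha^{1/2}-\alpha)j})$ converges), then $\beta$ large, $\nu \geq 2$ fixed, and finally $A_0$ large enough to swallow base cases and residual multiplicative constants. The case $q = 0$ is trivial since $\pi^k_{p,0} = 1 = \pi_{0,0}$, and the crude bound $\pi_{0,q}-\pi^k_{p,q} \leq \pi_{0,q}$ from Lemma \ref{lm:monotonicity} already implies \eqref{eq:convspeed} whenever $\mathcal{E}(k,p,q) \geq 1$. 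This occurs for $k$ below a threshold $k_0$ (using $A_0$ large) and also on the diagonal boundary $p+q = k$ (using $\beta$ large, so $t^{\alpha(k-\beta p-\beta q)}$ blows up). Thus base and boundary configurations are absorbed into the choice of $A_0$ and $\beta$.

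For the inductive step, combining the recursions \eqref{eq:rkrecursion} and \eqref{eq:limitrecursion} yields the product decomposition
$$\frac{\pi^{k+1}_{p,q}}{\pi_{0,q}} \;=\; \frac{t^2\pi^k_{p,q+1}+t\pi^k_{p,q}+\pi^k_{p,q-1}}{t^2\pi_{0,q+1}+t\pi_{0,q}+\pi_{0,q-1}} \cdot \frac{t^2\pi_{0,1}+t}{t^2\pi^k_{p,1}+t+t^{2k+2}\pi^k_{0,p-1}/\pi^k_{0,p}}.$$
The first factor is bounded below by applying the induction hypothesis to each of $\pi^k_{p,q-1}, \pi^k_{p,q}, \pi^k_{p,q+1}$; the key point is that Lemma \ref{lm:qdecay} together with its companion Lemma \ref{lm:rlb} imply that $\pi_{0,q-1}$ dominates the limit denominator $t^2\pi_{0,q+1}+t\pi_{0,q}+\pi_{0,q-1}$ up to a factor $(1+O(t^{2q-1}))$, so the numerator loss reduces essentially to $\mathcal{E}(k,p,q-1)(1+O(t^{2q-1}))$. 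The second factor is bounded below by $1 - Ct^{2(k-p)+1}$, via the monotonicity bound $\pi^k_{p,1} \leq \pi_{0,1}$ and Lemma \ref{lm:rlb} applied to control $\pi^k_{0,p-1}/\pi^k_{0,p}$; since $k-p \geq q$, this error is exponentially smaller than the target rate $t^{\alpha k}$.

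Closing the induction then reduces to checking
$$\mathcal{E}(k,p,q-1)\bigl(1+O(t^{2q-1})\bigr) + O(t^{2(k-p)+1}) \;\leq\; \mathcal{E}(k+1,p,q),$$
and the architecture of $\mathcal{E}$ is precisely designed for this, as seen from
$$\frac{\mathcal{E}(k+1,p,q)}{\mathcal{E}(k,p,q-1)} \;=\; \bigl(1+t^{(\alpha^{1/2}-\alpha)k}\bigr) \cdot \frac{A_q}{A_{q-1}} \cdot t^{\alpha(1-\beta)}.$$
Taking $\beta$ large makes $t^{\alpha(1-\beta)}$ large, giving substantial margin for the $q\to q-1$ descent in the induction; the product factor $(1+t^{(\alpha^{1/2}-\alpha)k})$ is engineered to absorb the $(1+O(t^{2q-1}))$ loss in the regime $q \lesssim k$, which requires $\alpha^{1/2}-\alpha < 2$ (trivially arranged by taking $\alpha$ small); and the slowly-growing sequence $A_q$ with ratio $A_q/A_{q-1} \leq 1+\nu^{-q}$ prevents any accumulating loss as $q$ varies while keeping $A_q$ uniformly bounded by $2A_0$.

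The main obstacle is the delicate simultaneous calibration of all four parameters so that the induction closes uniformly across the various regions of the $(k,p,q)$-cone: the interior, the $q=1$ edge (where the $\pi^k_{p,q-1}=1$ term is trivial and the dominance argument needs a separate check), the $p=0$ case (where the boundary correction $t^{2k+2}\pi^k_{0,p-1}/\pi^k_{0,p}$ in the denominator is absent), and the diagonal boundary. Each region imposes its own constraint on $(A_0,\alpha,\beta,\nu)$, and the entire argument hinges on a non-circular order of choices depending on $t$ alone. A secondary subtlety is that the inductive quantity is the error relative to $\pi_{0,q}$ rather than $\pi_{p,q}$; the monotonicity in $p$ from Lemma \ref{lm:monotonicity} is what makes this strengthening legitimate and it must be threaded through every comparison made during the step.
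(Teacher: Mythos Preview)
Your overall architecture matches the paper: induction on $k$, the same product decomposition of $\pi^{k+1}_{p,q}/\pi_{0,q}$ into a ``bulk'' factor and a ``boundary'' factor, with Lemma~\ref{lm:rlb} controlling the latter. The base and boundary configurations are also handled the same way. However, your account of \emph{how} the induction actually closes mischaracterizes the roles of $\beta$, of $A_q$, and of the product factor, and the mechanism you describe does not work for $t$ close to $1$.

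The problem is this. You bound the bulk factor by $\mathcal{E}(k,p,q-1)\bigl(1+O(t^{2q-1})\bigr)$ and then appeal to $t^{\alpha(1-\beta)}$ with $\beta$ large for the gain. But making $\alpha\beta$ large also makes $\mathcal{E}(k,p,q+1)/\mathcal{E}(k,p,q-1)=t^{-2\alpha\beta}A_{q+1}/A_{q-1}$ large, so the implied constant in your $O(t^{2q-1})$ (which is really $O(t^{2q-1-\alpha\beta})$) explodes. Concretely, at $q=2$ the requirement becomes $1+2C_0t^{3-\alpha\beta}\le t^{-\alpha\beta+\alpha}$, which after setting $z=t^{-\alpha\beta}$ reads $t^\alpha z\ge 1+2C_0t^3z$; for $t$ near $1$ one has $2C_0t^3\gg 1$ (recall $C_0\ge t^{-1}/(1-t^4)$), so no choice of $z$ works. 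The product factor cannot rescue this either: it contributes $(1+t^{(\alpha^{1/2}-\alpha)k})$, which tends to $1$ for large $k$, while the loss $(1+O(t^3))$ at $q=2$ is a fixed constant $>1$.

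The paper's mechanism is different. It fixes $\beta=\alpha^{-1/2}$, so $\alpha\beta=\alpha^{1/2}\to 0$ and the ratios $\mathcal{E}(k,p,q\pm1)/\mathcal{E}(k,p,q)$ stay close to $1$. The gain then comes from the \emph{increment} $A_q-A_{q-1}=A_0\nu^{-q}$: expanding the fraction-inequality (Lemma~\ref{lm:fraction1}) to leading order in $\alpha$ yields the condition $A_0\bigl(\nu^{-q}\pi_{q-1}-t^2\nu^{-q-1}\pi_{q+1}\bigr)\ge O(\alpha^{1/2})$, and the left side is strictly positive because $\pi_{q+1}\ll\pi_{q-1}$ and $\nu$ is chosen large relative to $C_0$. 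So $A_q$ is not there to ``prevent accumulating loss''---its growth is precisely what supplies the leading-order positive term at small $q$. For $q$ beyond a threshold $Q$ the argument switches: the $\nu^{-q}$ gain is too small, but now $\pi_{q+1}/\pi_{q-1}\le\tfrac12$ and positivity comes directly from $A_{q-1}\pi_{q-1}-A_{q+1}\pi_{q+1}>0$. Finally, the product factor $\prod_j(1+t^{(\alpha^{1/2}-\alpha)j})$ is engineered to absorb the \emph{boundary} term~(I), which after rescaling is of size $t^{(\alpha^{1/2}-\alpha)k}$; the deficit $1/(1+t^{(\alpha^{1/2}-\alpha)k})$ in the bound on~(II) exactly leaves room for it.
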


\begin{proof}[Proof of Theorem \ref{thm:mainec} from Theorem \ref{thm:expconv}]
	It suffices to note that the prefactors in $\mathcal{E}(k,p,q)$ are harmless, i.e., using $1+x\leq \exp x$,
	$$
	\begin{aligned}
	A_q \leq& 2A_0,\\
	\prod_{j=1}^{k-1} \left( 1+ t^{(\alpha^{1/2}-\alpha)j} \right) \leq & \exp \left( \frac{1} {1-t^{\alpha^{1/2} - \alpha}}  \right) = C_{\alpha}
	\end{aligned}
	$$
	where the geometric series is summable because $\alpha\in (0,1)$.
\end{proof}

We are now ready to prove the upper bound in \eqref{eq:convspeed} by induction in $k$. The induction base case occurs at $k=1$ and $(p,q)\in\{(0,0),(1,0),(0,1)\}$ and its validity is ensured by choosing $A_0$ sufficiently large in a way that only depends on $t$. The precise condition can be calculated from \eqref{eq:Ninitialdata}.\\

For the induction step, we suppose that \eqref{eq:convspeed} holds up to some fixed $k\geq 2$. Thanks to the boundary condition $\pi^k_{p,q}=0$ for $p+q>k$, we may assume without loss of generality that \eqref{eq:convspeed} holds for all $p,q\geq 0$. From now on, we fix a $p\geq 0$.

\subsubsection{The case $q\geq 2$}
We treat the more challenging case of $q\geq 2$ first and discuss the case $q=1$ at the end. By \eqref{eq:rkrecursion}, \eqref{eq:rmonotonep}, \eqref{eq:limitrecursion}, and Proposition \ref{prop:limit}, we have
$$
\begin{aligned}
1-\frac{\pi_{p,q}^{k+1}}{\pi_{q}^{\infty}}
=&1- \frac{t^2\pi_{1}^{\infty}+t}{t^2\pi^k_{p,1}+t+\mathbbm 1_{p\geq 1}t^{2k+2}\frac{\pi^k_{0,p-1}}{\pi^k_{0,p}}}\;\; \frac{t^2 \pi_{p,q+1}^k+t \pi_{p,q}^k+\pi_{p,q-1}^k}{t^2 \pi_{q+1}^{\infty}+t \pi_{q}^{\infty}+\pi_{q-1}^{\infty}}
\\
\leq & 1-\frac{t^2\pi_{1}+t}{t^2\pi_{1}^{\infty}+t+\mathbbm 1_{p\geq 1}t^{2k+2} \;\;\frac{\pi^k_{0,p-1}}{\pi^k_{0,p}}}\frac{t^2 \pi_{p,q+1}^k+t \pi_{p,q}^k+\pi_{p,q-1}^k}{t^2 \pi_{q+1}^{\infty}+t \pi_{q}^{\infty}+\pi_{q-1}^{\infty}} \\
\leq &\mathrm{(I)}+ \mathrm{(II)},
\end{aligned}
$$
where we introduced
$$
\begin{aligned}
\mathrm{(I)}=& \ind_{p\geq 1}\frac{t^{2k+2}\frac{\pi^k_{0,p-1}}{\pi^k_{0,p}}}{t^2\pi_{1}+t+\mathbbm 1_{p\geq 1}t^{2k+2}\frac{\pi^k_{0,p-1}}{\pi^k_{0,p}}},\\
\mathrm{(II)}=
&\frac{t^2 (\pi^{\infty}_{q+1}-\pi_{p,q+1}^k)+t (\pi_{q}^{\infty}-\pi^k_{p,q})+(\pi_{q-1}^{\infty}-\pi^k_{p,q-1})}{t^2 \pi_{q+1}^{\infty}+t \pi_{q}^{\infty}+\pi_{q-1}^{\infty}}.
\end{aligned}
$$

To estimate term $(\mathrm{I})$, we use Lemma \ref{lm:rlb}. It gives
\begin{equation}\label{eq:secondparest}
\mathrm{(I)}\leq t^{2k+1}\frac{\pi^k_{0,p-1}}{\pi^k_{0,p}}
\leq C_1^{-1} t^{2k+1-2(p-1)},\qquad \textnormal{for } p\geq1.
\end{equation}
We add the assumptions that
\begin{equation}\label{eq:A0constraint}
A_0\geq 2C_1^{-1},\qquad \beta=\alpha^{-1/2}.
\end{equation}
This implies, for sufficiently small $\alpha>0$,
\begin{equation}\label{eq:Iestimate}
\begin{aligned}
\mathrm{(I)}
\leq& \frac{1}{2}t^{(2-\alpha)k+(\alpha^{1/2}-2)p} \frac{\mathcal E(k+1,p,q)} {x_{k,p}}\\
\leq&\frac{1}{2} t^{(\alpha^{1/2}-\alpha) k } \frac{\mathcal E(k+1,p,q)}{x_{k,p}}.
\end{aligned}
\end{equation}
where we introduced
$$
x_{k,p}=\max\left\{1, t^{(\alpha^{1/2}-\alpha)k} \right\}.
$$

To estimate term (II), we first use the induction hypothesis.
$$
\begin{aligned}
\mathrm{(II)}&\leq \mathcal{E}(k+1,p,q)\frac{t^2 \frac{\mathcal{E}(k,p,q+1)}{\mathcal{E}(k+1,p,q)} \pi_{q+1}^{\infty}
	+t  \frac{\mathcal{E}(k,p,q)}{\mathcal{E}(k+1,p,q)}  \pi_{q}^{\infty}+ \frac{\mathcal{E}(k,p,q-1)}{\mathcal{E}(k+1,p,q)}\pi_{q-1}^{\infty}}{t^2 \pi_{q+1}^{\infty}+t \pi_{q}^{\infty}+\pi_{q-1}^{\infty}}.\\
\end{aligned}
$$
Then we apply the following lemma. 

\begin{lemma}\label{lm:fraction1} Let $t\in (0,1)$ and let  $C_0$ be given by Lemma \ref{lm:qdecay}. Suppose that
	\begin{equation}\label{eq:nu*constraint}
	\nu\in (\nu_*,2\nu_*),\qquad \nu_*>2\max\{C_0^2,2\}
	\end{equation}
	and define
	\begin{equation}\label{eq:Qdefn}
	Q=\inf\setof{q\geq 2}{C_0^2 t^{4q-2} <\frac{1}{2}}.
	\end{equation}
	There exists a constant $C_2$ such that for
	\begin{equation}\label{eq:alphaconstraint}
	\alpha\leq C_2 \nu_*^{-2Q}
	\end{equation}
	it holds that
	$$
	\frac{t^2 \frac{\mathcal{E}(k,p,q+1)}{\mathcal{E}(k+1,p,q)} \pi_{q+1}^{\infty}
		+t  \frac{\mathcal{E}(k,p,q)}{\mathcal{E}(k+1,p,q)}  \pi_{q}^{\infty}+ \frac{\mathcal{E}(k,p,q-1)}{\mathcal{E}(k+1,p,q)}\pi_{q-1}^{\infty}}{t^2 \pi_{q+1}^{\infty}+t \pi_{q}^{\infty}+\pi_{q-1}^{\infty}}\leq \frac{1}{1+t^{(\alpha^{1/2}-\alpha)k} }.
	$$
\end{lemma}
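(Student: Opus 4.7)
\textbf{Proof plan for Lemma \ref{lm:fraction1}.} The first move will be to compute the three ratios $\mathcal{E}(k,p,q\pm 1)/\mathcal{E}(k+1,p,q)$ and $\mathcal{E}(k,p,q)/\mathcal{E}(k+1,p,q)$ explicitly from the definition \eqref{eq:curlyEdefn}. The $k$-dependent part yields a common factor $t^{-\alpha}/(1+t^{(\alpha^{1/2}-\alpha)k})$, while the $q$-dependent part yields the multiplicative factors $(A_{q+1}/A_q)t^{-\alpha\beta}$ and $(A_{q-1}/A_q)t^{\alpha\beta}$, in which $\alpha\beta=\alpha^{1/2}$ by \eqref{eq:A0constraint}. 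Inserting these and dividing the desired inequality by $t^{-\alpha}/(1+t^{(\alpha^{1/2}-\alpha)k})$, the claim becomes equivalent to the $k$-independent algebraic inequality
\begin{equation*}
a\,\pi_{q+1}+b\,\pi_q\leq c\,\pi_{q-1},
\end{equation*}
where I set $a:=t^{2-\alpha^{1/2}}(A_{q+1}/A_q)-t^{2+\alpha}$, $b:=t(1-t^{\alpha})$, and $c:=t^{\alpha}-t^{\alpha^{1/2}}(A_{q-1}/A_q)$. Positivity of $a,b,c$ follows from $t\in(0,1)$ together with $\alpha<\alpha^{1/2}$ and $A_{q-1}/A_q<1<A_{q+1}/A_q$.

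Next, I would apply Lemma \ref{lm:qdecay} twice to bound $\pi_{q+1}\leq C_0t^{2q}\pi_q$ and $\pi_q\leq C_0 t^{2(q-1)}\pi_{q-1}$, thereby reducing the target inequality to the purely numerical condition
\begin{equation*}
a\,C_0^2 t^{4q-2}+b\,C_0 t^{2q-2}\leq c.
\end{equation*}
A first-order Taylor expansion in $\alpha$ of $t^{\pm\alpha}$ and $t^{\pm\alpha^{1/2}}$ around $\alpha=0$, combined with the explicit formulas $A_{q+1}/A_q-1=\nu^{-(q+1)}/\sum_{q'=0}^q \nu^{-q'}\leq\nu^{-(q+1)}$ and $1-A_{q-1}/A_q=\nu^{-q}/\sum_{q'=0}^q\nu^{-q'}\in[(\nu-1)\nu^{-q-1},\nu^{-q}]$, yields the controlling bounds $a\lesssim t^2\alpha^{1/2}|\ln t|+t^2\nu^{-(q+1)}$, $b\lesssim t\alpha|\ln t|$, and $c\gtrsim \alpha^{1/2}|\ln t|/2+\nu^{-q}/4$, provided $\alpha$ is small. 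At this stage the condition reduces to balancing $C_0^2t^{4q}\,\alpha^{1/2}|\ln t|+C_0^2 t^{4q}\nu^{-(q+1)}+C_0t^{2q-1}\alpha|\ln t|$ against $\alpha^{1/2}|\ln t|/2+\nu^{-q}/4$.

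The final step is the case split dictated by the parameter $Q$ in \eqref{eq:Qdefn}. For $q\geq Q$, the definition gives $C_0^2t^{4q-2}<1/2$, so the $t^{4q}$ factors provide absolute smallness: each LHS term can be absorbed termwise into the corresponding $\alpha^{1/2}|\ln t|$ or $\nu^{-q}$ piece of $c$, using $\nu^{-(q+1)}/\nu\leq \nu^{-q}$ with $\nu>\nu_*>2C_0^2$ to control the mixed term. For $2\leq q<Q$ the factor $t^{4q}$ is not small, but now $\nu^{-q}\geq\nu^{-Q+1}$ is uniformly bounded below of size $\sim\nu_*^{-Q}$. The piece $C_0^2t^{4q}\nu^{-(q+1)}$ is handled as before via $\nu>2C_0^2$, whereas the piece $C_0^2\alpha^{1/2}|\ln t|$ is forced below $\nu^{-Q}/8$ by imposing $\alpha\leq C_2\nu_*^{-2Q}$ with $C_2$ chosen so that $\alpha^{1/2}\leq\nu^{-Q}/(16C_0^2|\ln t|)$; this is precisely the constraint \eqref{eq:alphaconstraint}. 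The lower-order $C_0t^{2q-1}\alpha|\ln t|$ term is absorbed into the $\alpha^{1/2}|\ln t|/2$ piece of $c$ whenever $\alpha^{1/2}<1/(4C_0)$.

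The main obstacle will be the small-$q$ regime $2\leq q<Q$: the Lemma \ref{lm:qdecay} bounds are lossy there, so one cannot extract smallness from the $t^{4q}$ factors alone, and both the LHS and the $\nu^{-q}$ contribution to $c$ are of fixed size as $\alpha\to 0$. The delicate point is that the $\alpha^{1/2}|\ln t|$ loss in $a$ (coming from $t^{-\alpha^{1/2}}$) gets amplified by $C_0^2$, so $\alpha$ must be chosen genuinely smaller than $\nu_*^{-2Q}$ rather than of order $\nu_*^{-2}$; this explains why the required threshold on $\alpha$ depends exponentially on $Q$, and in turn why $Q$ must be introduced as a separate parameter rather than replaced by a universal constant. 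All the remaining estimates are routine bookkeeping with Taylor remainders and geometric-series sums, where one must only ensure that the remainder terms of order $\alpha$ and $\alpha\cdot\nu^{-q}$ do not overtake the leading $\alpha^{1/2}|\ln t|$ and $\nu^{-q}$ contributions.
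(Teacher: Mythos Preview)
Your plan is correct and follows essentially the same route as the paper's proof. Both arguments begin by computing the $\mathcal{E}$-ratios to strip out the $k$-dependence (yielding your $k$-independent inequality $a\pi_{q+1}+b\pi_q\le c\pi_{q-1}$, which is algebraically equivalent to the paper's displayed condition after multiplication by $A_q$), then Taylor-expand in $\alpha$, and finally split cases at the threshold $Q$, using Lemma~\ref{lm:qdecay} (via \eqref{eq:rdecayquse}) to control $\pi_{q+1}/\pi_{q-1}$.

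The only organizational difference is the order of operations: you apply the $q$-decay bound \emph{before} the Taylor expansion, reducing immediately to the purely numerical inequality $aC_0^2t^{4q-2}+bC_0t^{2q-2}\le c$, whereas the paper expands first and invokes $q$-decay selectively inside each case. In the large-$q$ regime this leads you to compare the $\alpha^{1/2}$-coefficients of $a$ and $c$ directly (using $C_0^2t^{4q}<t^2/2$), while the paper instead drops the nonnegative $\alpha$-independent piece and exploits the sign of $A_{q-1}\pi_{q-1}-A_{q+1}\pi_{q+1}$. Both routes succeed, and your version is arguably more streamlined. In the small-$q$ regime the two arguments are essentially identical: the $\alpha$-independent margin is $\gtrsim\nu^{-q}$, the $\alpha^{1/2}$-loss is $\lesssim C_0^2\alpha^{1/2}|\ln t|$, and balancing these forces $\alpha\lesssim\nu_*^{-2Q}$ (with the extra $4^{-Q}$ from $\nu<2\nu_*$ absorbed into the $t$-dependent constant $C_2$).

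Two minor points to clean up when writing out the details: cite \eqref{eq:rdecayquse} or Proposition~\ref{prop:limit} rather than Lemma~\ref{lm:qdecay} directly (you need the bound for the limiting $\pi_q$, not $\pi_q^k$); and in the large-$q$ case, make explicit that $A_{q+1}/A_q\le 1+\nu^{-3}$ (not merely $\le 2$) so that the $\alpha^{1/2}$-coefficient comparison $C_0^2t^{4q}(A_{q+1}/A_q)<A_{q-1}/A_q$ goes through with a definite margin for all $t<1$.
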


The proof of this lemma is postponed to Section \ref{sect:twolemmas}. Assuming it for the moment, we have shown that 
$$
\begin{aligned}
1-\frac{\pi_{p,q}^{k+1}}{\pi_{q}^{\infty}}
\leq&\mathrm{(I)}+\mathrm{(II)}\\
\leq& \frac{1}{2}t^{(\alpha^{1/2}-\alpha)k} \frac{\mathcal{E}(k+1,p,q)}{x_{k,p}}+\frac{\mathcal{E}(k+1,p,q)}{1+t^{(\alpha^{1/2}-\alpha)k} }\\
\leq& \mathcal{E}(k+1,p,q),
\end{aligned}
$$
where the last estimate is equivalent to
\begin{equation}\label{eq:tremainsmodif}
 \frac{1}{2}t^{(\alpha^{1/2}-\alpha)k} \frac{1}{x_{k,p}}+\frac{1}{1+t^{(\alpha^{1/2}-\alpha)k} }
\leq 1.
\end{equation}
This can be seen by distinguishing cases as follows. First, assume that $x_{k,p}=1$, or equivalently, $t^{(\alpha^{1/2}-\alpha)k }\leq 1$. Then \eqref{eq:tremainsmodif} follows from elementary estimates. If, conversely, $x_{k,p}>1$, then $ t^{(\alpha^{1/2}-\alpha)k}> 1$ and the left-hand side of \eqref{eq:tremainsmodif} equals $$\frac{1}{2}+\frac{1}{1+t^{(\alpha^{1/2}-\alpha)k}} < 1,$$ which implies \eqref{eq:tremainsmodif}. This completes the induction step for $q\geq 2$ modulo Lemma \ref{lm:fraction1}.

\subsubsection{Proof of Lemma \ref{lm:fraction1} }
\label{sect:twolemmas}
\begin{proof}[Proof of Lemma \ref{lm:fraction1}]
	By \eqref{eq:curlyEdefn} the claim can be written as the $k$-independent condition
	$$
	\begin{aligned}
	t^2 (A_q t^\alpha-A_{q+1} t^{-\alpha\beta}) \pi_{q+1}^{\infty}
	+t A_q (t^\alpha-1) \pi_{q}^{\infty}
	+(A_q t^\alpha-A_{q-1}t^{\alpha\beta})\pi_{q-1}^{\infty}
	\stackrel{!}{\geq} 0.
	\end{aligned}
	$$
	
	For $\alpha\to 0$ we can expand $t^\alpha=1+\alpha\log t+O(\alpha^2)$ and since $\beta=\alpha^{-1/2}$ we can expand $t^{\alpha\beta}$ analogously. Together with \eqref{eq:Aqdefn} this reveals the sufficient condition
	\begin{equation}\label{eq:smallalphaasymptotics}
	\begin{aligned}
	& A_0(\nu^{-q}\pi_{q-1}^{\infty}-t^2 \nu^{-q-1}\pi_{q+1}^{\infty})\\
	&+\alpha\log t \ob (A_q+\beta A_{q+1})t^2\pi_{q+1}+t A_q  \pi_{q} +(A_q -A_{q-1}\beta) \pi_{q-1}^{\infty} \cb \stackrel{!}{\geq} O(\alpha^2 \beta^2) A_0 R_{q} 
	,
	\end{aligned}
	\end{equation}
	where we used $\beta\geq 1$ and introduced 
	$$
	R_{q} =\max\{\pi_{q+1}^{\infty},\pi_{q}^{\infty} ,\pi_{q-1}^{\infty}\}.
	$$
	We remark that the implicit constant in the $O(\alpha^2\beta^2)=O(\alpha)$ term in \eqref{eq:smallalphaasymptotics} depends only on the parameter $t$ thanks to $A_0\leq A_q\leq 2A_0$. If we only keep the leading terms in \eqref{eq:smallalphaasymptotics} as $\alpha\to 0$, we arrive at the sufficient condition
	\begin{equation}\label{eq:smallqcond}
	\begin{aligned}
	&\pi_{q-1}^{\infty}
	\stackrel{!}{\geq} 
	\frac{t^2}{\nu} \pi_{q+1}^{\infty}+\nu^q  R_{q} O(\alpha^{1/2}).
	\end{aligned}
	\end{equation}
	Next, we shall distinguish cases for $q$. For sufficiently small $q$, it suffices to consider the leading-order terms and so we verify the sufficient condition \eqref{eq:smallqcond}. However, for large $q$, the subleading terms in $\alpha$ involve $\nu^q$ and are thus 
	eventually dominant, so we verify the finer condition \eqref{eq:smallalphaasymptotics} instead. In both cases, we will use the following key estimate from Proposition \ref{prop:limit}.
	\begin{equation}\label{eq:rdecayquse}
	\pi_{q+1}^{\infty}\leq C_0 t^{2q} \pi_{q}^{\infty}\leq C_0^2 t^{4q-2}\pi_{q-1}^{\infty}.
	\end{equation}
	
	We use $Q$ from \eqref{eq:Qdefn} as the cutoff value.\\
	
	\uline{\textit{Case (i): $2\leq q\leq Q$.}}
	We aim to verify \eqref{eq:smallqcond}. On the one hand, \eqref{eq:rdecayquse} gives
	$$
	\frac{t^2}{\nu} \pi_{q+1}^{\infty}
	\leq \frac{C_0^2 t^{4q}}{\nu}\pi_{q-1}^{\infty}
	\leq \frac{C_0^2 }{\nu}\pi_{q-1}^{\infty}
	<\frac{1}{2}\pi_{q-1}^{\infty}
	$$
	where the last estimate holds by \eqref{eq:nu*constraint}. Using $\nu^q\leq (2\nu_*)^{Q}$, we have
	$$
	\nu^{-q}\left(\pi_{q-1}^{\infty}-
	\frac{t^2}{\nu} \pi_{q+1}^{\infty}\right)
	\geq (2\nu_*)^{-Q}\frac{\pi_{q-1}^{\infty}}{2}.
	$$
	On the other hand, \eqref{eq:rdecayquse} yields
	\begin{equation}\label{eq:Rqbound}
	R_q=\max\{\pi_{q+1}^{\infty},\pi_{q}^{\infty},\pi_{q-1}^{\infty}\}\leq \max\{1,C_0^2\} \pi_{q-1}^{\infty}.
	\end{equation}
	We see that \eqref{eq:smallqcond} is implied by \eqref{eq:alphaconstraint}.\\

	\uline{\textit{Case (ii): $q> Q$.}}
	First, we drop the $\alpha$-independent terms from \eqref{eq:smallalphaasymptotics} which are no longer so useful for large $q$. We can do this because \eqref{eq:rdecayquse} and $q>Q$ imply that
	$$
	\pi_{q-1}^{\infty}-t^2 \nu^{-1}\pi_{q+1}^{\infty}
	\geq \pi_{q-1}^{\infty} \left( 1-\frac{t^2}{2\nu} \right)   \geq \pi_{q-1}^{\infty} \left( 1-\frac{t^2}{4} \right) \geq 0.
	$$
	Thus, for $q>Q$, \eqref{eq:smallalphaasymptotics} is implied by the condition
	\begin{equation}\label{eq:dropfirst}
	\frac{1}{6A_0}(A_{q-1}\pi_{q-1}^{\infty}-A_{q+1}\pi_{q+1}^{\infty}) \stackrel{!}{\geq} C R_q\alpha^{1/2},
	\end{equation}
	where we estimated $A_{q'}\leq 2A_0$ and absorbed $\frac{1}{6\log( t^{-1})}>0$ into the constant $C$.\\
	
	We recall Definition \eqref{eq:Aqdefn} of $A_q$. By \eqref{eq:rdecayquse} and $q>Q$, we have
	$$
	\begin{aligned}
	\frac{1}{6A_0}(A_{q-1}\pi_{q-1}^{\infty}-A_{q+1}\pi_{q+1}^{\infty})
	\geq&   \frac{1}{6A_0} \pi_{q-1}^{\infty} \left( A_{q-1}-\frac{A_{q+1}}{2} \right) \\
	\geq &   \frac{ \pi_{q-1}^{\infty}}{12} \left( 1-\frac{\nu^{-q}+\nu^{-q-1}}{2A_{q-1}}\right)\\
	\geq &   \frac{ \pi_{q-1}^{\infty}}{12} \left(1-\frac{\nu_*^{-2}}{A_0}\right)\\
	> &   \frac{ \pi_{q-1}^{\infty}}{24},
	\end{aligned}
	$$
	where the last estimate holds under the harmless assumption that $  A_0\nu_*^2 \geq 2.$ Using \eqref{eq:Rqbound}, we see that \eqref{eq:dropfirst} is implied by
	$$
	\alpha\leq CC_0^{-4}
	$$
	for an appropriate constant $C$ that depends only on $t$. This is implied by \eqref{eq:alphaconstraint} and Lemma \ref{lm:fraction1} is proved.
\end{proof}

\subsubsection{The case $q=1$.} The proof is very similar to the case $q\geq 2$ with some simplifcations thanks to $\pi^k_{p,0}=\pi_0=1$. The recursion \eqref{eq:rkrecursion} and \eqref{eq:limitrecursion} give
$$
\begin{aligned}
1-\frac{\pi_{p,1}^{k+1}}{\pi_{1}^{\infty}}
=&1- \frac{t^2\pi_{1}^{\infty}+t}{t^2\pi^k_{p,1}+t+\mathbbm 1_{p\geq 1}t^{2k+2}\frac{\pi^k_{0,p-1}}{\pi^k_{0,p}}}\;\; \frac{t^2 \pi_{p,2}^k+t \pi_{p,1}^k+1}{t^2 \pi_{2}^{\infty}+t \pi_{1}^{\infty}+1}
\leq \mathrm{(III)}+ \mathrm{(IV)},
\end{aligned}
$$
with
$$
\begin{aligned}
\mathrm{(III)}
=\frac{\mathbbm 1_{p\geq 1}t^{2k+2}\frac{\pi^k_{0,p-1}}{\pi^k_{0,p}}}{t^2\pi^k_{p,1}+t},\qquad 
\mathrm{(IV)}=\frac{t^2 (\pi_2^{\infty}-\pi_{p,2}^k)+t (\pi_1^{\infty}-\pi_{p,1}^k)}{t^2 \pi_{2}^{\infty}+t \pi_{1}^{\infty}+1}
\end{aligned}
$$

By \eqref{eq:Iestimate}, we have
$$
\mathrm{(III)}\leq \mathrm{(I)}\leq \frac{1}{2}t^{(\alpha^{1/2}-\alpha) k} \frac{\mathcal E(k+1,p,1)}{x_{k,p}}.
$$

For term (IV), we first use the induction hypothesis to obtain
$$
\begin{aligned}
\mathrm{(IV)}
&= \frac{t^2 (\pi_{2}^{\infty}-\pi_{p,2}^k)+t (\pi_1^{\infty}-\pi^k_{p,1})}{t^2 \pi_{2}^{\infty}+t \pi_1^{\infty}+1}\leq \mathcal{E}(k+1,p,1)\frac{t^2 \frac{\mathcal{E}(k,p,2)}{\mathcal{E}(k+1,p,1)} \pi_{2}^{\infty}
	+t  \frac{\mathcal{E}(k,p,1)}{\mathcal{E}(k+1,p,1)}\pi_1^{\infty} }{t^2 \pi_{2}^{\infty}+t \pi_1^{\infty}+1}.
\end{aligned}
$$

We have the following analog of Lemma \ref{lm:fraction1}:

\begin{lemma}\label{lm:fraction11} Let $t\in (0,1)$ and let  $C_0$ be given by Lemma \ref{lm:qdecay}. Suppose that
	\begin{equation}\label{eq:nu*constraint2}
	\nu\in (\nu_*,2\nu_*),\qquad \nu_*\geq C_0.
	\end{equation}
	There exists a constant $C_3$ such that for
	\begin{equation}\label{eq:alphaconstraint2}
	\alpha\leq C_3,\qquad \beta=\alpha^{-1/2},
	\end{equation}
	it holds that
	$$
	\frac{t^2 \frac{\mathcal{E}(k,p,2)}{\mathcal{E}(k+1,p,1)} \pi_{2}^{\infty}
		+t  \frac{\mathcal{E}(k,p,1)}{\mathcal{E}(k+1,p,1)}\pi_1^{\infty} }{t^2 \pi_{2}^{\infty}+t \pi_1^{\infty}+1}\leq  \frac{1}{1+t^{(\alpha^{1/2}-\alpha)k} }.
	$$
\end{lemma}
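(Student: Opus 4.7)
The plan is to reduce Lemma \ref{lm:fraction11} to a $k$-independent inequality along the same template as Lemma \ref{lm:fraction1}, exploiting the simplification that $\pi_{q-1}$ specializes to $\pi_0 = 1$, a fixed positive reference independent of $p$ and $k$. The "$+1$" that will appear in the denominator of the reduced condition plays the role that $\pi_{q-1}$ played in the general-$q$ case and provides the dominant term we need.

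First, I would directly evaluate the ratios using the definition \eqref{eq:curlyEdefn} to obtain
$$
\frac{\mathcal{E}(k,p,q)}{\mathcal{E}(k+1,p,1)} = \frac{1}{1+t^{(\alpha^{1/2}-\alpha)k}} \cdot \frac{A_q}{A_1} \cdot t^{-\alpha + \alpha\beta(1-q)}, \qquad q\in\{1,2\}.
$$
Substituting these into the claim, the factor $\frac{1}{1+t^{(\alpha^{1/2}-\alpha)k}}$ cancels against the target, reducing the statement to the $k$-independent inequality
$$
t^{2-\alpha-\alpha\beta}\frac{A_2}{A_1}\pi_2 + t^{1-\alpha}\pi_1 \leq t^2\pi_2 + t\pi_1 + 1,
$$
which rearranges to
$$
t^2\pi_2\left(t^{-\alpha-\alpha\beta}\frac{A_2}{A_1}-1\right) + t\pi_1\left(t^{-\alpha}-1\right) \leq 1. \qquad (\star)
$$

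Next, I would bound the left-hand side of $(\star)$ by expanding in the small parameter $\alpha$. Since $\beta = \alpha^{-1/2}$, we have $t^{-\alpha-\alpha\beta} = t^{-\alpha-\alpha^{1/2}} = 1 + O(\alpha^{1/2})$ and $t^{-\alpha} = 1 + O(\alpha)$, with implicit constants depending only on $t$. From the definition \eqref{eq:Aqdefn}, $\frac{A_2}{A_1} - 1 = \frac{\nu^{-2}}{1+\nu^{-1}} = O(\nu_*^{-2})$, so the first parenthesis in $(\star)$ is $O(\nu_*^{-2}) + O(\alpha^{1/2})$ while the second is $O(\alpha)$. Using Proposition \ref{prop:limit} (with the constant $C_0$), we have $\pi_1 \leq C_0$ and $\pi_2 \leq C_0^2 t^2$, both bounded by $t$-dependent constants. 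Hence the left-hand side of $(\star)$ is at most $C(t)(\nu_*^{-2} + \alpha^{1/2})$ for a $t$-dependent constant $C(t)$.

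Finally, I would close the argument by choosing $\nu_*$ large enough (well beyond the mild hypothesis $\nu_* \geq C_0$ in \eqref{eq:nu*constraint2}) so that $C(t)\nu_*^{-2} \leq 1/2$, and then taking $\alpha \leq C_3$ small enough (with $C_3$ depending only on $t$) so that $C(t)\alpha^{1/2} \leq 1/2$; together these yield $(\star)$ and hence the lemma. I expect the only (minor) obstacle to be book-keeping of the $t$-dependent constants, in particular verifying that the bound on $\pi_2$ from Proposition \ref{prop:limit} is compatible with \eqref{eq:nu*constraint2}. Unlike in Lemma \ref{lm:fraction1}, no case split for large $q$ is needed here, because the crucial positive "$+1$" on the right-hand side of $(\star)$ is present unconditionally, furnished by $\pi^k_{p,0} = \pi_0 = 1$.
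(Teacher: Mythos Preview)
Your approach is essentially the same as the paper's: reduce to a $k$-independent inequality via the explicit form of $\mathcal E$, then expand in small $\alpha$ (with $\alpha\beta=\alpha^{1/2}$) and verify the leading term leaves positive slack. Your reduction to $(\star)$ is correct and matches the paper's reduced condition.

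The one point that needs correction is your closing step. You propose to choose $\nu_*$ ``large enough (well beyond the mild hypothesis $\nu_*\geq C_0$)''; but the lemma is stated under the hypothesis $\nu_*\geq C_0$ only, so you are not free to strengthen it. Fortunately this is not needed if you track constants a bit more carefully. The leading-order ($\alpha=0$) contribution of the first term in $(\star)$ is
\[
t^2\pi_2\Big(\tfrac{A_2}{A_1}-1\Big)=t^2\pi_2\cdot\frac{\nu^{-2}}{1+\nu^{-1}}\leq t^2\pi_2\,\nu^{-2}.
\]
By Proposition~\ref{prop:limit} (i.e.\ Lemma~\ref{lm:qdecay} in the limit), $\pi_2\leq C_0 t^2\pi_1\leq C_0^2 t^2$, and since $\nu>\nu_*\geq C_0$ this gives $t^2\pi_2\,\nu^{-2}\leq C_0^2 t^4/\nu^2< t^4<1$. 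The second term in $(\star)$ is $O(\alpha)$, and the $\alpha$-corrections to the first term are $O(\alpha^{1/2})$ with $t$-dependent implicit constants. Hence the left side of $(\star)$ is below $t^4+O(\alpha^{1/2})$, and the fixed slack $1-t^4>0$ absorbs the subleading terms for all $\alpha\leq C_3(t)$. This is precisely how the paper closes the argument (using $\pi_2\leq C_0 t^2\pi_1$ and $\nu>C_0$), without enlarging $\nu_*$.
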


Assuming this lemma, we argue exactly as in the case $q=2$ above to obtain
$$
1-\frac{\pi_{p,1}^{k+1}}{\pi_{1}^{\infty}}\leq \mathrm{(III)}+ \mathrm{(IV)}\leq  \mathcal E(k+1,p,1)
$$
and hence the induction step for $q=1$. It thus suffices to prove this lemma.\\

\begin{proof}[Proof of Lemma \ref{lm:fraction11}]
	By \eqref{eq:curlyEdefn}, the claim reduces to
	$$
	t^2 \pi_2(A_1 t^\alpha-A_2 t^{-\alpha\beta})+t\pi_1^{\infty} A_1 (t^\alpha-1)+A_1 t^\alpha\stackrel{!}{\geq}0.
	$$
	Next, as $\alpha\to 0$, we can expand $t^\alpha=1+\alpha\log t+O(\alpha^2)$ and similarly $t^{\alpha\beta}$ with $\beta=\alpha^{-1/2}$ to obtain the sufficient condition
	\begin{equation}\label{eq:q1verify}
	t^2\pi_2^{\infty} (A_1-A_2)+t\pi_1^{\infty} A_1 +A_0 O(\alpha^{1/2})= A_0 \op \pi_1^{\infty}-\pi_2^{\infty}\frac{t^2}{\nu}\cp + A_0 O(\alpha^{1/2})\stackrel{!}{\geq} 0,
	\end{equation}
	where the implicit constant only depends on $t$. (Here we used $A_0=A_1<A_2< 2A_0$.) By Lemma \ref{lm:qdecay}, $\pi_2^{\infty}\leq C_0 t^2 \pi_1^{\infty}$. 	Since $t<1$ and $\nu>C_0$, we see that condition \eqref{eq:q1verify} holds for all $\alpha \leq C_3$ with $C_3$ depending only on the parameter $t$. 
\end{proof}



\section{Auxiliary Results}\label{sec:auxiliary-results}

Here are some simple technical lemmas that help formalize the approximations made in the main body of the proof.\\

\subsection{Transitive approximations lemma}

\begin{notation} \label{not:index-set}
	Let there be, for every $k$, an index set $I^{(k)}$, of size that can depend on $k$. In practice we will mostly use $I^{(k)} = \{(p,q) \; : \; 0 \le p,q \le (1 + c) \cdot k\}$.
\end{notation}

\begin{lemma} \label{lm:approx-of-approx}
	With Notation \ref{not:index-set}, consider three collections of normalized states, $\{\ket{a_{\alpha}^{(k)}}\}$, $\{\ket{b_{\alpha}^{(k)}}\}$, and $\{\ket{c_{\alpha}^{(k)}}\}$, with $\alpha\in I^{(k)}$ such that
	\begin{equation} \label{eq:tal-c1}
	\forall n \in \N: \quad \quad \quad \quad \lim\limits_{k \to \infty} \os k^n \cdot \op \sup_{\alpha \in I^{(k)}} \ob 1 - |\inner{a_{\alpha}^{(k)}}{b_{\alpha}^{(k)}}|^2 \cb \cp \cs = 0
	\end{equation}
and
	\begin{equation} \label{eq:tal-c2}
	\forall n \in \N: \quad \quad \quad \quad \lim\limits_{k \to \infty} \os k^n \cdot \op \sup_{\alpha \in I^{(k)}} \ob 1 - |\inner{b_{\alpha}^{(k)}}{c_{\alpha}^{(k)}}|^2 \cb \cp \cs = 0
	\end{equation}
Then we have
	\begin{equation} \label{eq:tal-claim}
	\forall n \in \N: \quad \quad \quad \quad \lim\limits_{k \to \infty} \os k^n \cdot \op \sup_{\alpha \in I^{(k)}} \ob 1 - |\inner{a_{\alpha}^{(k)}}{c_{\alpha}^{(k)}}|^2 \cb \cp \cs = 0
	\end{equation}
\end{lemma}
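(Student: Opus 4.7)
The plan is to first establish a uniform, quantitative triangle-type inequality of the form
\begin{equation*}
1 - |\inner{a}{c}|^2 \le C \cdot \ob (1-|\inner{a}{b}|^2) + (1-|\inner{b}{c}|^2) \cb,
\end{equation*}
for some absolute constant $C$ (e.g.\ $C=4$), valid for any three normalized vectors $\ket a, \ket b, \ket c$ in a Hilbert space. Once this pointwise estimate is in hand, we apply it at each $\alpha\in I^{(k)}$, take the supremum over $\alpha$, multiply by $k^n$, and invoke the two hypotheses \eqref{eq:tal-c1} and \eqref{eq:tal-c2} to conclude \eqref{eq:tal-claim}.

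For the triangle inequality itself, I would use the standard trick of choosing optimal phases. Let $\phi_1 = \arg \inner{a}{b}$ and $\phi_2 = \arg \inner{b}{c}$, and set $\ket{b'} = e^{i\phi_1}\ket b$, $\ket{c'} = e^{i(\phi_1 + \phi_2)}\ket c$. Then $\inner{a}{b'} = |\inner{a}{b}|$ and $\inner{b'}{c'} = |\inner{b}{c}|$, so
\begin{equation*}
\| \ket a - \ket{b'} \|^2 = 2(1-|\inner{a}{b}|), \qquad \| \ket{b'} - \ket{c'} \|^2 = 2(1-|\inner{b}{c}|).
\end{equation*}
Applying the ordinary triangle inequality $\|\ket a - \ket{c'}\| \le \|\ket a - \ket{b'}\| + \|\ket{b'} - \ket{c'}\|$, together with the elementary bound $\|\ket a-\ket{c'}\|^2 \ge 2(1-|\inner{a}{c}|)$ and the inequality $1 - x \le 1 - x^2$ for $x\in[0,1]$, one readily squares and expands to obtain the desired estimate with $C = 4$.

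With the pointwise bound in place, the remainder is routine: taking suprema over $\alpha$ and multiplying by $k^n$,
\begin{equation*}
k^n \cdot \sup_{\alpha} \ob 1 - |\inner{a_\alpha^{(k)}}{c_\alpha^{(k)}}|^2 \cb \le 4 k^n \cdot \sup_\alpha \ob 1 - |\inner{a_\alpha^{(k)}}{b_\alpha^{(k)}}|^2 \cb + 4 k^n \cdot \sup_\alpha \ob 1 - |\inner{b_\alpha^{(k)}}{c_\alpha^{(k)}}|^2 \cb,
\end{equation*}
and the right-hand side vanishes as $k\to\infty$ by \eqref{eq:tal-c1} and \eqref{eq:tal-c2}. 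There is no real obstacle here; the only subtle point worth double-checking is that the constant $C$ is truly absolute (independent of $k$, $\alpha$, and the dimensions of the underlying Hilbert spaces, which are allowed to grow with $k$), so that one factor of $k^n$ suffices on both sides and the hypothesis can be directly applied. Once this is confirmed, the proof is complete.
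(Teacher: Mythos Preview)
Your proof is correct and takes a genuinely different route from the paper's. The paper proceeds by explicitly decomposing $\ket{a_\alpha^{(k)}}$ and $\ket{c_\alpha^{(k)}}$ along $\ket{b_\alpha^{(k)}}$ and its orthogonal complement, computing $|\inner{a}{c}|^2$ term by term, and bounding the result by $6\max(\epsilon_{\alpha,k},\delta_{\alpha,k})$. Your argument is more geometric: after aligning phases you reduce to the ordinary norm triangle inequality, which yields $\sqrt{1-|\inner{a}{c}|}\le \sqrt{1-|\inner{a}{b}|}+\sqrt{1-|\inner{b}{c}|}$, and then the elementary estimates $1-z^2\le 2(1-z)$ and $1-x\le 1-x^2$ on $[0,1]$ give the pointwise bound with $C=4$. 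Both approaches produce an absolute constant independent of $k$, $\alpha$, and the ambient Hilbert-space dimension, which is exactly the point you flag as the only thing to double-check; your route is arguably cleaner because it exposes the statement as an instance of the triangle inequality for the Fubini--Study distance, while the paper's direct expansion is more hands-on but requires no phase bookkeeping.
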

\begin{proof}
We introduce the following notations:
	\begin{equation}
	\e_{\alpha, k} \equiv 1 - |\inner{a_{\alpha}^{(k)}}{b_{\alpha}^{(k)}}|^2 \qquad \text{and} \qquad \delta_{\alpha, k} \equiv 1 - |\inner{b_{\alpha}^{(k)}}{c_{\alpha}^{(k)}}|^2
	\end{equation}
	We expand the $\ket{a_{\alpha}^{(k)}}$ and the $\ket{c_{\alpha}^{(k)}}$ in terms of their projection along the $\ket{b_{\alpha}^{(k)}}$, and a small residue.
	
	Note that, since states in our Hilbert space are only defined up to a global phase, any transformation of the type $\ket{a_{\alpha}^{(k)}} \to e^{i \phi} \ket{a_{\alpha}^{(k)}}$, for a real phase $\phi$ (which might even depend on $\alpha, k$), does not in any way affect the conditions \eqref{eq:tal-c1}, \eqref{eq:tal-c2} or the claim \eqref{eq:tal-claim}. Up to global phases, then, it is true that
	\begin{equation}
	\ket{a_{\alpha}^{(k)}} = \sqrt{1 - \e_{\alpha, k}} \cdot \ket{b_\alpha^{(k)}} + \sqrt{\e_{\alpha, k}} \cdot \ket{y_\alpha^{(k)}} \quad \quad \quad \quad \ket{c_\alpha^{(k)}} = \sqrt{1 - \delta_{\alpha, k}} \cdot \ket{b_\alpha^{(k)}} + \sqrt{\delta_{\alpha, k}} \cdot \ket{z_\alpha^{(k)}}
	\end{equation}
	where the $\ket{y_\alpha^{(k)}}$ and $\ket{z_\alpha^{(k)}}$ are normalized and orthogonal to $\ket{b_\alpha^{(k)}}$. Then the overlap between $a$ and $c$ will only contain $\inner{b}{b}$ and $\inner{y}{z}$ contributions, as the other two terms vanish by orthogonality:
	\begin{equation}
	\inner{a_\alpha^{(k)}}{c_\alpha^{(k)}} = \sqrt{(1 - \e_{\alpha, k})(1 - \delta_{\alpha, k})} \cdot \inner{b_\alpha^{(k)}}{b_\alpha^{(k)}} + \sqrt{\e_{\alpha, k} \cdot \delta_{\alpha, k}} \cdot \inner{y_\alpha^{(k)}}{z_\alpha^{(k)}}
	\end{equation}
	The first inner product on the RHS is exactly 1 by normalization of the $\ket{b_\alpha^{(k)}}$. It follows that
	\begin{align}
	|\inner{a_\alpha^{(k)}}{c_\alpha^{(k)}}|^2 &= (1 - \e_{\alpha, k})(1 - \delta_{\alpha, k}) + 2 \sqrt{(1 - \e_{\alpha, k})(1 - \delta_{\alpha, k}) \cdot \e_{\alpha, k} \cdot \delta_{\alpha, k}} \cdot |\inner{y_\alpha^{(k)}}{z_\alpha^{(k)}}|\\
	&\quad + \e_{\alpha, k} \cdot \delta_{\alpha, k} \cdot |\inner{y_\alpha^{(k)}}{z_\alpha^{(k)}}|^2 \nonumber\\
	1 - |\inner{a_\alpha^{(k)}}{c_\alpha^{(k)}}|^2 &= \e_{\alpha, k} + \delta_{\alpha, k} - \e_{\alpha, k} \cdot \delta_{\alpha, k} - 2 \sqrt{(1 - \e_{\alpha, k})(1 - \delta_{\alpha, k}) \cdot \e_{\alpha, k} \cdot \delta_{\alpha, k}} \cdot |\inner{y_\alpha^{(k)}}{z_\alpha^{(k)}}|\\
	&\quad - \e_{\alpha, k} \cdot \delta_{\alpha, k} \cdot |\inner{y_\alpha^{(k)}}{z_\alpha^{(k)}}|^2 \nonumber
	\end{align}
	By normalization, the inner product $\inner{y_\alpha^{(k)}}{z_\alpha^{(k)}}$ is at most 1 in absolute value. We also take the absolute value of the equation above and use the triangle inequality to obtain
	\begin{equation}
	\Big| 1 - |\inner{a_\alpha^{(k)}}{c_\alpha^{(k)}}|^2 \Big| \le \e_{\alpha, k} + \delta_{\alpha, k} +2 \cdot \e_{\alpha, k} \cdot \delta_{\alpha, k} + 2 \sqrt{(1 - \e_{\alpha, k})(1 - \delta_{\alpha, k}) \cdot \e_{\alpha, k} \cdot \delta_{\alpha, k}}
	\end{equation}
	All the terms on the RHS vanish fast enough as $k \to \infty$, and therefore so will the LHS. Specifically we have $1 - \e_{\alpha, k} \le 1$ and $1 - \delta_{\alpha, k} \le 1$, while also $\e_{\alpha, k}, \delta _k \le \max (\e_{\alpha, k}, \delta _k) \le 1$. Then
	\begin{equation}
	\Big| 1 - |\inner{a_\alpha^{(k)}}{c_\alpha^{(k)}}|^2 \Big| \le \e_{\alpha, k} + \delta_{\alpha, k} +2 \cdot \e_{\alpha, k} \cdot \delta_{\alpha, k} + 2 \sqrt{(1 - \e_{\alpha, k})(1 - \delta_{\alpha, k}) \cdot \e_{\alpha, k} \cdot \delta_{\alpha, k}} \le 6 \max (\e_{\alpha, k}, \delta_{\alpha,k})
	\end{equation}
	and, taking the supremum over $\alpha \in I^{(k)}$,
	\begin{equation}
	\sup_{\alpha \in I^{(k)}} \od \Big| 1 - |\inner{a_\alpha^{(k)}}{c_\alpha^{(k)}}|^2 \Big| \cd \le 6 \cdot  \sup_{\alpha \in I^{(k)}} \ob \max (\e_{\alpha, k}, \delta_{\alpha,k}) \cb
	\end{equation}
	and so the LHS will vanish in the limit $k \to \infty$, even when multiplied by $k^n$, since the RHS does by assumption.\\
\end{proof}

\subsection{Superposition approximations lemma}
In the following Lemma \ref{lm:superposition-approx}, we formalize the intuition that, given a superposition (sum) of states, and a superpolynomial approximation for each term in the sum, we naturally get a superpolynomial approximation for the superposition state. This is useful when splitting a spin chain into more than two pieces, since it allows us to do it stepwise (see Lemma \ref{lm:three-way-split}).\\

\begin{assumption} \label{as:sal-given-superposition}
	Suppose we have a consistent method of splitting the full chain into multiple subsegments, at various system sizes, such as the ABC split in Section \ref{sec:application}. Take a collection of states indexed by system size $k$ and unbalanced steps $(p,q)$, each of which is expressed as a superposition of products between an unnormalized ground state on one segment (called $L$) and an arbitrary state on the other segment (called $R$). Naming these states $\{\ket{s_{p,q}^{(k)}}\}$, we want
	\begin{equation}
	\ket{s_{p,q}^{(k)}} = {1 \over \sqrt{m_{p,q}^{(k)}}} \cdot \sum_{v \in I^{(k)}} \ket{\unnormstate{G^L_{p,v}}}\ket{z_{v,q}^{R}}.
	\end{equation}
	Here $I^{(k)}$ is an index set as in Notation \ref{not:index-set}. Also, the vectors $\ket{\unnormstate{G^L_{p,v}}}$ and $\ket{z_{v,q}^{R}}$ are not normalized, and the normalization factor in front is therefore
	\begin{equation}
	m_{p,q}^{(k)} = \sum_{v \in I^{(k)}} N^{L}_{p,v}\, \inner{z_{v,q}^R}{z_{v,q}^R}.
	\end{equation}
\end{assumption}

\begin{assumption} \label{as:sal-further-split}
	In the context of Assumption \ref{as:sal-given-superposition}, further split the segment $L$ into two parts, call them $A$ and $B$. Suppose that the choice of segments $A, B$, and the index set $I^{(k)}$ (i.e. the possible values of $v$), are all such that the assumptions of Lemmas \ref{lm:low-imbalance-approximation-1} and \ref{lm:low-imbalance-approximation-trunc} hold for all $0 \le p,q \le (1 + c) \cdot k$.
\end{assumption}

We want to show that we can approximate the given states $\ket{s^{(k)}}$ by replacing each unnormalized ground state $\ket{\unnormstate{G^L_{p,v}}}$ with the truncated walk-set state obtained after splitting $L=A\cup B$.

\begin{definition} \label{def:sal-approximate-superposition}
	Let $H^{L,<b}_{p,v}$ denote the truncated walk set from Definition \ref{def:low-imbalance-truncated-set}, applied to the segment $L=A\cup B$:
	\begin{equation}
	\ket{\unnormstate{H^{L,<b}_{p,v}}}=
	\sum_{r<bk}\ket{\unnormstate{G^A_{p,r}}}\ket{\unnormstate{G^B_{r,v}}},
	\qquad
	\norm{H^{L,<b}_{p,v}}=\sum_{r<bk}N^A_{p,r}N^B_{r,v}.
	\end{equation}
	Consider the approximate states
	\begin{equation}
	\ket{As_{p,q}^{(k)}} = {1 \over \sqrt{M_{p,q}^{(k)}}} \cdot \sum_{v \in I^{(k)}} \ket{\unnormstate{H^{L,<b}_{p,v}}}\ket{z_{v,q}^{R}}.
	\end{equation}
	Equivalently, expanding the truncated walk-set state,
	\begin{equation}
	\ket{As_{p,q}^{(k)}} = {1 \over \sqrt{M_{p,q}^{(k)}}} \cdot \sum_{v \in I^{(k)}} \op \sum_{r < bk} \ket{\unnormstate{G^A_{p,r}}} \ket{\unnormstate{G^B_{r,v}}} \cp \ket{z_{v,q}^{R}}.
	\end{equation}
	The normalization factor is
	\begin{equation}
	M_{p,q}^{(k)} = \sum_{v \in I^{(k)}} \norm{H^{L,<b}_{p,v}}\, \inner{z_{v,q}^R}{z_{v,q}^R}.
	\end{equation}
\end{definition}
The formal claim of this subsection is, then, the following:
\begin{lemma} \label{lm:superposition-approx}
	With the Assumptions \ref{as:sal-given-superposition}, \ref{as:sal-further-split} and Definition \ref{def:sal-approximate-superposition}, we have	
	\begin{equation}
	\forall n \in \N: \quad \quad \quad \quad \lim\limits_{k \to \infty} \os k^n \cdot \op \sup\limits_{0 \le p,q \le (1+c) \cdot k} \ob 1 - |\inner{s_{p,q}^{(k)}}{As_{p,q}^{(k)}}|^2 \cb \cp \cs = 0.
	\end{equation}
\end{lemma}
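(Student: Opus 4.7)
My plan is to compute the overlap $\inner{s_{p,q}^{(k)}}{As_{p,q}^{(k)}}$ in closed form and then reduce the claim to a uniform-in-$v$ application of Lemma \ref{lm:low-imbalance-approximation-trunc}. The starting point is to expand the bra-ket using Assumption \ref{as:sal-given-superposition} and Definition \ref{def:sal-approximate-superposition}; this produces a double sum over $v, v' \in I^{(k)}$, weighted by the scalars $\inner{z^R_{v,q}}{z^R_{v',q}}$. The key observation, which I would verify first, is that the inner summand $\sum_{r<bk}\ket{UGS^A_{p,r}}\ket{UGS^B_{r,v'}}$ lies entirely in the $(p, v')$-unbalanced sector of $L = A \cup B$, whereas $\ket{UGS^L_{p,v}}$ lies in the $(p,v)$-sector. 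By Proposition \ref{pr:gs-projector-diagonal-pq}, all $v \ne v'$ cross terms vanish, so no orthogonality hypothesis on the $\ket{z^R_{v,q}}$ is needed.

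Next I would compute the diagonal overlap: for $v = v'$, $\bra{UGS^L_{p,v}} \ob \sum_{r<bk}\ket{UGS^A_{p,r}}\ket{UGS^B_{r,v}} \cb$ counts each concatenation $w_1+w_2$ weighted by $t^{2A(w_1+w_2)}$ and evaluates to $\sum_{r<bk} N^A_{p,r} N^B_{r,v} = AN^L_{p,v}$, in the notation of Definition \ref{def:sal-approximate-superposition}. Substituting back yields the clean identity
\[
\inner{s_{p,q}^{(k)}}{As_{p,q}^{(k)}} = \frac{1}{\sqrt{m_{p,q}^{(k)} M_{p,q}^{(k)}}} \sum_{v \in I^{(k)}} AN^L_{p,v} \cdot \inner{z^R_{v,q}}{z^R_{v,q}} = \sqrt{M_{p,q}^{(k)} / m_{p,q}^{(k)}},
\]
so that $1 - |\inner{s_{p,q}^{(k)}}{As_{p,q}^{(k)}}|^2$ equals the nonnegative ratio $(m_{p,q}^{(k)} - M_{p,q}^{(k)})/m_{p,q}^{(k)}$.

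To finish, I would invoke Lemma \ref{lm:low-imbalance-approximation-trunc}, which under Assumption \ref{as:sal-further-split} guarantees that $1 - AN^L_{p,v}/N^L_{p,v}$ is bounded by some $\e_k$ that vanishes superpolynomially, uniformly over $p \le (1+c)k$ and over $v \in I^{(k)}$. Multiplying the pointwise bound $N^L_{p,v} - AN^L_{p,v} \le \e_k N^L_{p,v}$ by the nonnegative weights $\inner{z^R_{v,q}}{z^R_{v,q}}$ and summing over $v$ yields $m_{p,q}^{(k)} - M_{p,q}^{(k)} \le \e_k \cdot m_{p,q}^{(k)}$, hence $1 - |\inner{s_{p,q}^{(k)}}{As_{p,q}^{(k)}}|^2 \le \e_k$ uniformly in $p, q \in [0, (1+c)k]$, which is exactly the desired superpolynomial approximation. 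I do not foresee any serious obstacle: the only delicate step is the cross-term cancellation, which follows cleanly from the unbalanced-step bookkeeping, while the remaining argument is a routine nonnegative-weighted average of a pointwise bound already established in Lemma \ref{lm:low-imbalance-approximation-trunc}.
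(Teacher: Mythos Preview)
Your proposal is correct and follows essentially the same route as the paper: compute the overlap as $\sqrt{M_{p,q}^{(k)}/m_{p,q}^{(k)}}$ via the diagonal term $\bra{UGS^L_{p,v}}\bigl(\sum_{r<bk}\ket{UGS^A_{p,r}}\ket{UGS^B_{r,v}}\bigr)=AN^L_{p,v}$, then bound $1-M/m$ as a nonnegative weighted combination of the quantities $1-AN^L_{p,v}/N^L_{p,v}$ controlled by Lemma \ref{lm:low-imbalance-approximation-trunc}. Your treatment is in fact slightly more explicit than the paper's in justifying the vanishing of the $v\neq v'$ cross terms via the unbalanced-step sector orthogonality, which the paper leaves implicit.
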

\begin{proof}
	At any particular $v$, Definition \ref{def:walk_set_definitions} and the disjoint walk-set decomposition give
	\begin{equation}
	\bra{\unnormstate{G^L_{p,v}}}\ket{\unnormstate{H^{L,<b}_{p,v}}}
	=\norm{H^{L,<b}_{p,v}}.
	\end{equation}
	If the assumptions of Lemma \ref{lm:low-imbalance-approximation-trunc} hold (with $a_1 = a_2 = 1 + c$ and for all $v \in I^{(k)}$), then we have that 
	\begin{equation}
	\forall n \in \N: \quad \quad \quad \quad \lim\limits_{k \to \infty} \os k^n \cdot \sup\limits_{ p \le (1 + c) \cdot k}\op \sup_{v \in I^{(k)}} \op 1 - {\norm{H^{L,<b}_{p,v}} \over N^L_{p,v}} \cp \cp \cs = 0.
	\end{equation}
	The above is independent of $q$, so we can harmlessly introduce it in the first supremum:
	\begin{equation} \label{eq:sal-supdif-vanish}
	\forall n \in \N: \quad \quad \quad \quad \lim\limits_{k \to \infty} \os k^n \cdot \sup\limits_{0 \le p,q \le (1 + c) \cdot k}\op \sup_{v \in I^{(k)}} \op 1 - {\norm{H^{L,<b}_{p,v}} \over N^L_{p,v}} \cp \cp \cs = 0.
	\end{equation}
	Meanwhile, the overlap between the given and approximate states is 
	\begin{align}
	\inner{s_{p,q}^{(k)}}{As_{p,q}^{(k)}} &= {1 \over \sqrt{m_{p,q}^{(k)} \cdot M_{p,q}^{(k)} }} \sum_{v \in I^{(k)}} \bra{\unnormstate{G^L_{p,v}}}\ket{\unnormstate{H^{L,<b}_{p,v}}} \cdot \inner{z_{v,q}^R}{z_{v,q}^R}\\
	&= {1 \over \sqrt{m_{p,q}^{(k)} \cdot M_{p,q}^{(k)} }} \sum_{v \in I^{(k)}} \norm{H^{L,<b}_{p,v}} \cdot \inner{z_{v,q}^R}{z_{v,q}^R}\\
	& = \sqrt{M_{p,q}^{(k)} \over m_{p,q}^{(k)}}.
	\end{align}
	So, with the notation $z_{v,q} \equiv \inner{z_{v,q}^R}{z_{v,q}^R}$, the quantity we're looking to bound is
	\begin{align*}
	1 - |\inner{s_{p,q}^{(k)}}{As_{p,q}^{(k)}}|^2 &= 1 - {M_{p,q}^{(k)} \over m_{p,q}^{(k)}} \\
	&= {\sum_{v \in I^{(k)}} N^L_{p,v} \cdot \op 1 - {\norm{H^{L,<b}_{p,v}} \over N^L_{p,v}} \cp \cdot z_{v,q} \over \sum_{v \in I^{(k)}} N^L_{p,v} \cdot z_{v,q}}\\
	&= \sum_{v \in I^{(k)}} \op { N^L_{p,v} \cdot z_{v,q} \over \sum_{v \in I^{(k)}} N^L_{p,v} \cdot z_{v,q}} \cp \cdot \op 1 - {\norm{H^{L,<b}_{p,v}} \over N^L_{p,v}} \cp.
	\end{align*}
	The expression on the last line makes it explicit that $1 - |\inner{s_{p,q}^{(k)}}{As_{p,q}^{(k)}}|^2$ is equal to the weighted average of the quantities $1 - \norm{H^{L,<b}_{p,v}}/N^L_{p,v}$ at various $v \in I^{(k)}$, since the prefactors of these quantities sum to 1. Since all terms are positive, such a weighted average will be bounded from above by the supremum over $I^{(k)}$ of the quantities:
	\begin{equation}
	\sum_{v \in I^{(k)}} \op { N^L_{p,v} \cdot z_{v,q} \over \sum_{v \in I^{(k)}} N^L_{p,v} \cdot z_{v,q}} \cp \cdot \op 1 - {\norm{H^{L,<b}_{p,v}} \over N^L_{p,v}} \cp \le \sup_{v \in I^{(k)}} \op 1 - {\norm{H^{L,<b}_{p,v}} \over N^L_{p,v}} \cp.
	\end{equation}
	Taking the supremum over $p,q$ gives
	\begin{equation}
	\sup\limits_{0 \le p,q \le (1 + c) \cdot k} \op 1 - |\inner{s_{p,q}^{(k)}}{As_{p,q}^{(k)}}|^2 \cp \le \sup\limits_{0 \le p,q \le (1 + c) \cdot k} \op \sup_{v \in I^{(k)}} \op 1 - {\norm{H^{L,<b}_{p,v}} \over N^L_{p,v}} \cp \cp.
	\end{equation}
	But the RHS vanishes as $k \to \infty$, even when multiplied by any polynomial in $k$, as seen in eq. \eqref{eq:sal-supdif-vanish}. Therefore so does the LHS, and the proof is complete.	
\end{proof}

\subsection{Three-way split lemma}
Here, we combine several previous approximation lemmas to rigorously show how ground states can be approximated when the chain is divided into three parts.

\begin{assumption} \label{as:twsl-conditions}
	Let the spin chain be divided into the $A,B,C$ segments as described in sec. \ref{ssec:low-imb-app}, and also assume the low-imbalance condition $p,q \le (1+c) \cdot k$ holds true.
\end{assumption} 

\begin{definition} \label{def:twsl-approx-def}
		Let $I^{ABC;r,v}_{p,q}$ be the set of walks in $G^{ABC}_{p,q}$ whose interface heights at $A|B$ and $B|C$ are respectively $r$ and $v$, and which reach zero height separately in all three segments. Define
		\begin{equation}
		H^{ABC,<b}_{p,q}:=\bigsqcup_{r,v<bk} I^{ABC;r,v}_{p,q}.
		\end{equation}
		Consider the following approximate ground states on the full chain:
		\begin{equation}
		\ket{\normstate{H^{ABC,<b}_{p,q}}} = {1 \over \sqrt{\norm{H^{ABC,<b}_{p,q}}}} \sum_{r,v < bk} \ket{\unnormstate{G^A_{p,r}}} \ket{\unnormstate{G^B_{r,v}}} \ket{\unnormstate{G^C_{v,q}}} 
		\end{equation}
		where the prefactor is chosen to ensure proper normalization:
		\begin{equation}
		\norm{H^{ABC,<b}_{p,q}} = \sum_{r,v < bk} N^{A}_{p,r} N^{B}_{r,v} N^{C}_{v,q}
		\end{equation}
	\end{definition}

\begin{lemma} \label{lm:three-way-split}
	 With Assumption \ref{as:twsl-conditions}, we have that the states defined in \ref{def:twsl-approx-def} approximate the true ground states superpolynomially:
	 \begin{equation} \label{eq:twsl-claim}
	 \forall n \in \N: \quad \quad \quad \quad \lim\limits_{k \to \infty} \os k^n \cdot \op \sup\limits_{0 \le p,q \le (1 + c) \cdot k} \ob 1 - |\inner{GS_{p,q}^{ABC}}{\normstate{H^{ABC,<b}_{p,q}}}|^2 \cb \cp \cs = 0
	 \end{equation}
\end{lemma}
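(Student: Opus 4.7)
The plan is to obtain $\ket{FGS_{p,q}}$ from $\ket{GS_{p,q}^{ABC}}$ via two successive applications of the low-imbalance truncated approximation (Lemma \ref{lm:low-imbalance-approximation-trunc}), and then to stitch them together using the transitive approximation lemma (Lemma \ref{lm:approx-of-approx}). Throughout, we assume that $b$ is taken small enough that all hypotheses of Lemma \ref{lm:low-imbalance-approximation-trunc} hold both for the $A|(BC)$ split and for the $B|C$ split in the regime $p,q\le(1+c)k$, $r<bk$; the existing restriction $b<1/(4\beta)$ from Assumption \ref{as:low-imb-half-split-constants} is compatible with this.

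\textbf{Step 1.} Apply Lemma \ref{lm:low-imbalance-approximation-trunc} to the $A|(BC)$ split of the full chain (with parameters $f_1=1+2c$, $f_2=2-2c$) to conclude that, uniformly in $0\le p,q\le(1+c)k$, the state $\ket{GS_{p,q}^{ABC}}$ is superpolynomially approximated by
\begin{equation*}
\ket{\Psi_{p,q}^{(1)}}=\frac{1}{\sqrt{M_{p,q}^{(1)}}}\sum_{r<bk}\ket{UGS_{p,r}^A}\ket{UGS_{r,q}^{BC}},\qquad M_{p,q}^{(1)}=\sum_{r<bk}N_{p,r}^A\,N_{r,q}^{BC}.
\end{equation*}
By the unbalanced-sector orthogonality of Proposition \ref{pr:gs-projector-diagonal-pq}, the overlap $\langle\Psi_{p,q}^{(1)}|FGS_{p,q}\rangle$ receives nonzero contributions only from equal $r$-labels, which gives
\begin{equation*}
1-|\langle\Psi_{p,q}^{(1)}|FGS_{p,q}\rangle|^2=\frac{1}{M_{p,q}^{(1)}}\sum_{r<bk}N_{p,r}^A\left(N_{r,q}^{BC}-\sum_{v<bk}N_{r,v}^B\,N_{v,q}^C\right).
\end{equation*}

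\textbf{Step 2.} Apply Lemma \ref{lm:low-imbalance-approximation-trunc} again, this time to the $B|C$ split of the $BC$ subchain (with $f_1=1-4c$, $f_2=1+2c$), now in the regime $r<bk$, $0\le q\le(1+c)k$. Rephrased in terms of normalization factors via the identity $|\langle GS|ATGS\rangle|^2=ATN/N$ used in the proof of that lemma, this yields a single sequence $\eta_k$ with $k^n\eta_k\to 0$ for every $n\in\N$, such that uniformly in $r<bk$ and $0\le q\le(1+c)k$,
\begin{equation*}
N_{r,q}^{BC}-\sum_{v<bk}N_{r,v}^B\,N_{v,q}^C\le\eta_k\,N_{r,q}^{BC}.
\end{equation*}
Inserting this into the expression from Step 1 and using positivity of the weights $N_{p,r}^A$ gives $1-|\langle\Psi_{p,q}^{(1)}|FGS_{p,q}\rangle|^2\le\eta_k$ uniformly in $0\le p,q\le(1+c)k$, i.e.\ $\ket{FGS_{p,q}}$ superpolynomially approximates $\ket{\Psi_{p,q}^{(1)}}$.

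\textbf{Step 3.} Combine Steps 1 and 2 via Lemma \ref{lm:approx-of-approx}, applied to the three collections $\ket{GS_{p,q}^{ABC}}$, $\ket{\Psi_{p,q}^{(1)}}$, $\ket{FGS_{p,q}}$ indexed by $(p,q)\in\{(p,q):0\le p,q\le(1+c)k\}$, which yields \eqref{eq:twsl-claim}. The main obstacle is the uniformity demanded in Step 2: the second application of Lemma \ref{lm:low-imbalance-approximation-trunc} must furnish an error bound decaying superpolynomially \emph{uniformly} in $r<bk$ and $q\le(1+c)k$, so that the weighted sum $\sum_{r<bk}N_{p,r}^A(\,\cdot\,)/M_{p,q}^{(1)}$ also decays superpolynomially. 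This is where the compatibility of the constants $b$ and $c$ has to be checked (one needs, e.g., $b<c$ so that the truncation threshold sits strictly inside the low-imbalance regime of both splits); the freedom in choosing $b$ permits this without disturbing the constraint $b<1/(4\beta)$.
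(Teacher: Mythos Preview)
Your proof is correct and follows essentially the same two-stage strategy as the paper: apply the low-imbalance truncated approximation twice and glue the results via the transitive lemma. The only differences are cosmetic: the paper splits $(AB)\,|\,C$ first and then $A\,|\,B$ (rather than your $A\,|\,(BC)$ then $B\,|\,C$), and it packages your Step~2 weighted-average computation into the separate ``superposition approximation'' Lemma~\ref{lm:superposition-approx}, which it then simply invokes.
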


\begin{proof}
	Consider first splitting the full chain $ABC$ into two parts, namely $AB$ and $C$. Recalling that the size of $C$ was chosen such that the condition $q < (1+c) \cdot k$ is enough for Lemma \ref{lm:low-imbalance-approximation-trunc} to apply (and the same with $p < (1+c) \cdot k$ and the size of $A$, which is of course below the size of $AB$), we find that the states
	\begin{equation} \label{eq:twsl-intermediates}
	\ket{a_{p,q}^{ABC}} = {1 \over \sqrt{n_{p,q}^{ABC}}} \sum_{v < bk} \ket{\unnormstate{G^{AB}_{p,v}}} \ket{\unnormstate{G^C_{v,q}}} 
	\end{equation}
	will superpolynomially approximate the true ground states $\ket{GS_{p,q}^{ABC}}$:
	\begin{equation}
	 \forall n \in \N: \quad \quad \quad \quad \lim\limits_{k \to \infty} \os k^n \cdot \op \sup\limits_{0 \le p,q \le (1 + c) \cdot k} \ob 1 - |\inner{GS_{p,q}^{ABC}}{a_{p,q}^{ABC}}|^2 \cb \cp \cs = 0
	\end{equation}
	In the above, of course, $n_{p,q}^{ABC}$ is chosen to ensure proper normalization.
	
	We now use Lemma \ref{lm:superposition-approx} to approximate each $\ket{\unnormstate{G^{AB}_{p,v}}}$ term inside the sum. Note that the set of $v$ that we are summing over, which is $\{0, 1, \dots, bk \}$, fulfills the conditions of Lemma \ref{lm:superposition-approx}. Most importantly, it ensures that $v < bk$, which in turn is much smaller than the size of $B$, allowing the application of Lemma \ref{lm:low-imbalance-approximation-trunc} when splitting $A$ from $B$. Therefore, the $\ket{\normstate{H^{ABC,<b}_{p,q}}}$ states (which are exactly what comes out of the application of Lemma \ref{lm:superposition-approx}) superpolynomially approximate the $\ket{a_{p,q}^{ABC}}$:
	\begin{equation}
	\forall n \in \N: \quad \quad \quad \quad \lim\limits_{k \to \infty} \os k^n \cdot \op \sup\limits_{0 \le p,q \le (1 + c) \cdot k} \ob 1 - |\inner{a_{p,q}^{ABC}}{\normstate{H^{ABC,<b}_{p,q}}}|^2 \cb \cp \cs = 0
	\end{equation}

	Since the $\ket{a_{p,q}^{ABC}}$ approximate the $\ket{GS_{p,q}^{ABC}}$, and the $\ket{\normstate{H^{ABC,<b}_{p,q}}}$ in turn approximate the $\ket{a_{p,q}^{ABC}}$, we invoke the result of Lemma \ref{lm:approx-of-approx}, with the index set $I^{(k)} = \{(p,q) \; : \; 0 \le p,q \le (1 + c) \cdot k\}$, and the proof is complete.
	
\end{proof}

\subsection{Approximation of expectations and overlaps} 

\begin{assumption} \label{as:oal-conditions-1}
	With Notation \ref{not:index-set}, consider collections of normalized states $\{\ket{a_\alpha^{(k)}}\},\{\ket{b_\alpha^{(k)}}\}$ with $\alpha\in I^{(k)}$ such that
	\begin{equation} \label{eq:oal-cond1}
	\forall n \in \N: \qquad \lim\limits_{k \to \infty} \os  \sup_{\alpha \in I^{(k)}} \ob 1 - |\inner{a_\alpha^{(k)}}{b_\alpha^{(k)}}|^2 \cb \cs = 0
	\end{equation}
\end{assumption}
\begin{assumption} \label{as:oal-conditions-2}
	Also let $O_\alpha^{(k)}$ be a collection of Hermitian, bounded operators, indexed by $k$ and $\alpha \in I^{(k)}$, all with norm less than some constant $c_0 \in \R$:
	\begin{equation} \label{eq:oal-cond2}
	\forall k \in \N \quad \forall \alpha \in I^{(k)}: \qquad \qquad \| O_\alpha^{(k)} \| \le c_0
	\end{equation}
\end{assumption}

\begin{lemma} \label{lm:overlaps-and-expectations}
	Under the conditions of Assumptions \ref{as:oal-conditions-1} and \ref{as:oal-conditions-2}, we have that, in the $k \to \infty$ limit, the expectation of the $O$ operators in the $a$ states can be recovered by replacing $a$ with its approximation $b$. This estimation is also uniform over $\alpha \in I^{(k)}$:
	\begin{equation} \label{eq:oal-res1}
	\lim\limits_{k\to\infty} \os \sup_{\alpha \in I^{(k)}} \Big| \mel{a_\alpha^{(k)}}{O_\alpha^{(k)}}{a_\alpha^{(k)}} - \mel{b_\alpha^{(k)}}{O_\alpha^{(k)}}{b_\alpha^{(k)}} \Big| \cs = 0
	\end{equation} 
	Furthermore, we can distribute the supremum to conclude in particular that
	\begin{equation} \label{eq:oal-res2}
	\lim\limits_{k\to\infty} \os \sup_{\alpha \in I^{(k)}} \oc \mel{a_\alpha^{(k)}}{O_\alpha^{(k)}}{a_\alpha^{(k)}} \cc - \sup_{\alpha \in I^{(k)}} \oc \mel{b_\alpha^{(k)}}{O_\alpha^{(k)}}{b_\alpha^{(k)}} \cc \cs = 0
	\end{equation} 
	or equivalently, if either limit is known to exist, then
	\begin{equation} \label{eq:oal-res3}
	\lim\limits_{k\to\infty} \os \sup_{\alpha \in I^{(k)}} \oc \mel{a_\alpha^{(k)}}{O_\alpha^{(k)}}{a_\alpha^{(k)}} \cc \cs = \lim\limits_{k\to\infty} \os \sup_{\alpha \in I^{(k)}} \oc \mel{b_\alpha^{(k)}}{O_\alpha^{(k)}}{b_\alpha^{(k)}} \cc \cs 
	\end{equation}
\end{lemma}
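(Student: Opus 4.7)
The plan is to decompose each $\ket{a_\alpha^{(k)}}$ along $\ket{b_\alpha^{(k)}}$ and a normalized orthogonal complement, then expand the expectation $\mel{a_\alpha^{(k)}}{O_\alpha^{(k)}}{a_\alpha^{(k)}}$ term by term and bound each term using $\|O_\alpha^{(k)}\| \le c_0$. Since expectation values are invariant under a global phase, we may assume without loss of generality that $\inner{b_\alpha^{(k)}}{a_\alpha^{(k)}} \in [0,1]$. Writing $\epsilon_{\alpha,k} = 1 - |\inner{a_\alpha^{(k)}}{b_\alpha^{(k)}}|^2$, we then have a decomposition
\[
\ket{a_\alpha^{(k)}} = \sqrt{1 - \epsilon_{\alpha,k}}\,\ket{b_\alpha^{(k)}} + \sqrt{\epsilon_{\alpha,k}}\,\ket{b_\alpha^{\perp,(k)}},
\]
where $\ket{b_\alpha^{\perp,(k)}}$ is normalized and orthogonal to $\ket{b_\alpha^{(k)}}$ (its phase is free, but the bounds to follow will not depend on it).

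Next, I would expand
\[
\mel{a_\alpha^{(k)}}{O_\alpha^{(k)}}{a_\alpha^{(k)}} - \mel{b_\alpha^{(k)}}{O_\alpha^{(k)}}{b_\alpha^{(k)}} = -\epsilon_{\alpha,k}\,\mel{b_\alpha^{(k)}}{O_\alpha^{(k)}}{b_\alpha^{(k)}} + 2\sqrt{(1-\epsilon_{\alpha,k})\epsilon_{\alpha,k}}\,\mathrm{Re}\,\mel{b_\alpha^{(k)}}{O_\alpha^{(k)}}{b_\alpha^{\perp,(k)}} + \epsilon_{\alpha,k}\,\mel{b_\alpha^{\perp,(k)}}{O_\alpha^{(k)}}{b_\alpha^{\perp,(k)}},
\]
using Hermiticity of $O_\alpha^{(k)}$ to combine the two cross terms into a real part. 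Since all matrix elements of $O_\alpha^{(k)}$ between normalized vectors are bounded in absolute value by $\|O_\alpha^{(k)}\| \le c_0$, and $\sqrt{1-\epsilon_{\alpha,k}} \le 1$, the triangle inequality yields the purely quantitative estimate
\[
\Big| \mel{a_\alpha^{(k)}}{O_\alpha^{(k)}}{a_\alpha^{(k)}} - \mel{b_\alpha^{(k)}}{O_\alpha^{(k)}}{b_\alpha^{(k)}} \Big| \le 2 c_0 \bigl( \epsilon_{\alpha,k} + \sqrt{\epsilon_{\alpha,k}} \bigr),
\]
which is uniform in $\alpha \in I^{(k)}$.

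Taking the supremum over $\alpha \in I^{(k)}$ and using Assumption \ref{as:oal-conditions-1} (which forces $\sup_{\alpha} \epsilon_{\alpha,k} \to 0$, in fact superpolynomially fast) gives \eqref{eq:oal-res1}. For \eqref{eq:oal-res2}, I would invoke the elementary inequality $|\sup_\alpha f_\alpha - \sup_\alpha g_\alpha| \le \sup_\alpha |f_\alpha - g_\alpha|$ applied to the two families of expectations, which immediately transfers the uniform bound above to a bound on the difference of suprema; statement \eqref{eq:oal-res3} then follows by reorganizing \eqref{eq:oal-res2} whenever one of the two limits is known to exist.

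There is no genuine analytic obstacle here, since the bound is elementary and the uniformity in $\alpha$ is automatic once the Schmidt decomposition is in place. The only subtlety worth stressing is the initial global-phase adjustment — without it the cross term in the expansion would not collapse to a clean real part — and the use of Assumption \ref{as:oal-conditions-2} to bound \emph{all} matrix elements of $O_\alpha^{(k)}$, including the off-diagonal one $\mel{b_\alpha^{(k)}}{O_\alpha^{(k)}}{b_\alpha^{\perp,(k)}}$, by the same constant $c_0$ independent of $k$ and $\alpha$.
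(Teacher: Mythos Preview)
Your proof is correct and follows essentially the same approach as the paper: decompose one state along the other plus an orthogonal remainder, expand the expectation, and bound each term by $c_0$ times a factor involving $\epsilon_{\alpha,k}$. The only cosmetic differences are that the paper decomposes $\ket{b}$ along $\ket{a}$ rather than the reverse, and for \eqref{eq:oal-res2} the paper carries out an explicit $\varepsilon$-argument where you more efficiently invoke the standard inequality $|\sup_\alpha f_\alpha - \sup_\alpha g_\alpha| \le \sup_\alpha |f_\alpha - g_\alpha|$.
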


\begin{corollary} \label{cor:overlaps}
	Given Assumption \ref{as:oal-conditions-1}, and any collection of normalized states $\{\ket{z_\alpha^{(k)}}\}$, it holds true that
	\begin{equation} \label{eq:oalc-res1}
	\lim\limits_{k\to\infty} \os \sup_{\alpha \in I^{(k)}} \big| \inner{z_\alpha^{(k)}}{a_\alpha^{(k)}} \big| - \sup_{\alpha \in I^{(k)}} \big| \inner{z_\alpha^{(k)}}{b_\alpha^{(k)}} \big| \cs = 0
	\end{equation} 
	or equivalently, if either limit is known to exist, that
	\begin{equation} \label{eq:oalc-res2}
	\lim\limits_{k\to\infty} \os \sup_{\alpha \in I^{(k)}} \big| \inner{z_\alpha^{(k)}}{a_\alpha^{(k)}} \big| \cs = \lim\limits_{k\to\infty} \os \sup_{\alpha \in I^{(k)}} \big| \inner{z_\alpha^{(k)}}{b_\alpha^{(k)}} \big| \cs 
	\end{equation}
\end{corollary}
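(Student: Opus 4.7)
The plan is to deduce the corollary directly from Lemma \ref{lm:overlaps-and-expectations} by choosing the operators to be the rank-one projectors onto the $z$-states. Specifically, I would set
\[
O_\alpha^{(k)} = \ket{z_\alpha^{(k)}} \bra{z_\alpha^{(k)}}.
\]
These are Hermitian with $\|O_\alpha^{(k)}\| = 1$ (since the $\ket{z_\alpha^{(k)}}$ are normalized), so Assumption \ref{as:oal-conditions-2} is met with $c_0 = 1$. The expectation values then compute to squared overlaps,
\[
\mel{a_\alpha^{(k)}}{O_\alpha^{(k)}}{a_\alpha^{(k)}} = |\inner{z_\alpha^{(k)}}{a_\alpha^{(k)}}|^2,
\]
and similarly for $b$. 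Applying conclusion \eqref{eq:oal-res1} of Lemma \ref{lm:overlaps-and-expectations} immediately yields
\[
\lim_{k\to\infty}\sup_{\alpha\in I^{(k)}} \Big| |\inner{z_\alpha^{(k)}}{a_\alpha^{(k)}}|^2 - |\inner{z_\alpha^{(k)}}{b_\alpha^{(k)}}|^2 \Big| = 0.
\]

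The only remaining task is to pass from squared magnitudes to magnitudes. For any $x,y\geq 0$ one has the identity $|x^2-y^2| = |x-y|\,(x+y)$, and since $x+y \geq |x-y|$ in this regime, it follows that $|x-y| \leq \sqrt{|x^2-y^2|}$. Applying this pointwise in $\alpha$ and taking the supremum gives
\[
\sup_{\alpha \in I^{(k)}} \Big| |\inner{z_\alpha^{(k)}}{a_\alpha^{(k)}}| - |\inner{z_\alpha^{(k)}}{b_\alpha^{(k)}}| \Big| \leq \sqrt{\sup_{\alpha \in I^{(k)}} \Big| |\inner{z_\alpha^{(k)}}{a_\alpha^{(k)}}|^2 - |\inner{z_\alpha^{(k)}}{b_\alpha^{(k)}}|^2 \Big|},
\]
and the right-hand side tends to zero as $k \to \infty$ by the previous display.

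Finally, to extract \eqref{eq:oalc-res1} from this uniform-in-$\alpha$ control I would invoke the elementary bound $|\sup_\alpha f(\alpha) - \sup_\alpha g(\alpha)| \leq \sup_\alpha |f(\alpha) - g(\alpha)|$, applied to $f(\alpha) = |\inner{z_\alpha^{(k)}}{a_\alpha^{(k)}}|$ and $g(\alpha) = |\inner{z_\alpha^{(k)}}{b_\alpha^{(k)}}|$. The equivalent formulation \eqref{eq:oalc-res2} then follows trivially whenever either of the limits exists. I do not anticipate any genuine obstacle here, as the argument is essentially a direct specialization of the previous lemma combined with the standard square-root stability estimate; the only point requiring a moment of care is that the square-root inequality is applied after taking the supremum rather than before, but this is valid by monotonicity of the square root.
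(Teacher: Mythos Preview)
Your proposal is correct and follows essentially the same approach as the paper: both reduce to Lemma \ref{lm:overlaps-and-expectations} by taking $O_\alpha^{(k)} = \ket{z_\alpha^{(k)}}\bra{z_\alpha^{(k)}}$. The only minor difference is that the paper invokes conclusion \eqref{eq:oal-res2} on the squared overlaps and then remarks that the squaring can be dropped since the quantities are nonnegative, whereas you invoke \eqref{eq:oal-res1} and pass to unsquared quantities via the explicit inequality $|x-y|\leq\sqrt{|x^2-y^2|}$; your version is arguably more carefully justified.
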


\begin{proof}[Proof of Corollary \ref{cor:overlaps} from Lemma \ref{lm:overlaps-and-expectations}]
	Define the operators which project onto the $\ket{z_\alpha^{(k)}}$ state:
	\begin{equation}
	O_\alpha^{(k)} \equiv \ket{z_\alpha^{(k)}} \bra{z_\alpha^{(k)}}
	\end{equation}
	They are manifestly Hermitian, and due to the normalization of $\ket{z_\alpha^{(k)}}$, the norm of any $O_\alpha^{(k)}$ operator is 1. Since they fulfill the Assumption \ref{as:oal-conditions-2}, Lemma \ref{lm:overlaps-and-expectations} applies. The expectation $\mel{a}{O}{a}$ is just $|\inner{z}{a}|^2$, and similarly for the $\ket b$ states:
	\begin{equation} 
	\lim\limits_{k\to\infty} \os \sup_{\alpha \in I^{(k)}} |\inner{z_\alpha^{(k)}}{a_\alpha^{(k)}}|^2 - \sup_{\alpha \in I^{(k)}} |\inner{z_\alpha^{(k)}}{b_\alpha^{(k)}}|^2 \cs = 0
	\end{equation}
	Since the $|\inner{z_\alpha^{(k)}}{a_\alpha^{(k)}}|$ and $|\inner{z_\alpha^{(k)}}{b_\alpha^{(k)}}|$ are real and positive quantities, the squaring can be omitted:
	\begin{equation} 
	\lim\limits_{k\to\infty} \os \sup_{\alpha \in I^{(k)}} |\inner{z_\alpha^{(k)}}{a_\alpha^{(k)}}| - \sup_{\alpha \in I^{(k)}} |\inner{z_\alpha^{(k)}}{b_\alpha^{(k)}}| \cs = 0
	\end{equation}
	and the proof is complete.	
\end{proof}

\begin{proof} [Proof of Lemma \ref{lm:overlaps-and-expectations}]
	As in the proof of Lemma \ref{lm:approx-of-approx}, denote the approximation error by $\e_{\alpha, k}$:
	\begin{equation}
	\e_{\alpha, k} \equiv 1 - |\inner{a_{\alpha}^{(k)}}{b_{\alpha}^{(k)}}|^2
	\end{equation}
	and expand the approximations in terms of the exact states:
	\begin{equation}
	\ket{b_{\alpha}^{(k)}} = \sqrt{1 - \e_{\alpha, k}} \cdot \ket{a_\alpha^{(k)}} + \sqrt{\e_{\alpha, k}} \cdot \ket{c_\alpha^{(k)}}
	\end{equation}
	where $\ket{c_\alpha^{(k)}}$ is normalized and orthogonal to $\ket{a_\alpha^{(k)}}$. We find the matrix element
	\begin{align*}
	\mel{b_\alpha^{(k)}}{O_\alpha^{(k)}}{b_\alpha^{(k)}} &= (1 - \e_{\alpha, k}) \cdot \mel{a_\alpha^{(k)}}{O_\alpha^{(k)}}{a_\alpha^{(k)}} + \sqrt{(1 - \e_{\alpha, k})\e_{\alpha, k}} \cdot \mel{a_\alpha^{(k)}}{O_\alpha^{(k)}}{c_\alpha^{(k)}} \\
	&\quad + \sqrt{\e_{\alpha, k}(1 - \e_{\alpha, k})} \cdot \mel{c_\alpha^{(k)}}{O_\alpha^{(k)}}{a_\alpha^{(k)}}  + \e_{\alpha, k} \cdot \mel{c_\alpha^{(k)}}{O_\alpha^{(k)}}{c_\alpha^{(k)}} 
	\end{align*}
	Take the absolute value of the difference between the matrix element involving the approximate states and that including the original ones:
	\begin{align*}
	\Big| \mel{b_\alpha^{(k)}}{O_\alpha^{(k)}}{b_\alpha^{(k)}} - \mel{a_\alpha^{(k)}}{O_\alpha^{(k)}}{a_\alpha^{(k)}}\Big| &= \bigg | - \e_{\alpha, k} \cdot \mel{a_\alpha^{(k)}}{O_\alpha^{(k)}}{a_\alpha^{(k)}} + \sqrt{(1 - \e_{\alpha, k})\e_{\alpha, k}} \cdot \mel{a_\alpha^{(k)}}{O_\alpha^{(k)}}{c_\alpha^{(k)}} \\
	&\quad + \sqrt{\e_{\alpha, k}(1 - \e_{\alpha, k})} \cdot \mel{c_\alpha^{(k)}}{O_\alpha^{(k)}}{a_\alpha^{(k)}}  + \e_{\alpha, k} \cdot \mel{c_\alpha^{(k)}}{O_\alpha^{(k)}}{c_\alpha^{(k)}} \bigg |
	\end{align*}
	All the matrix elements on the RHS involve the operators $O_\alpha^{(k)}$ and normalized states, so in absolute value they are bounded above by $c_0$, since the norm of $O_\alpha^{(k)}$ is assumed to have that bound. Each prefactor of an RHS matrix element will clearly vanish at large $k$, and furthermore it will do so uniformly over $\alpha \in I^{(k)}$; that is because $\e_{\alpha, k}$ has this property. (Again, this is similar to the proof of Lemma \ref{lm:approx-of-approx}.) After invoking the triangle inequality, we arrive at the conclusion presented in eq. \eqref{eq:oal-res1}:
	\begin{equation}
	\lim\limits_{k \to \infty } \op \sup_{\alpha \in I^{(k)}} | \mel{a_\alpha^{(k)}}{O_\alpha^{(k)}}{a_\alpha^{(k)}} - \mel{b_\alpha^{(k)}}{O_\alpha^{(k)}}{b_\alpha^{(k)}} | \cp = 0
	\end{equation}
	The proof of the "furthermore" part of the lemma (eqs. \eqref{eq:oal-res2}, \eqref{eq:oal-res3}) is a straightforward exercise in limits. For brevity, make the notations:
	\begin{equation}
	x_{k, \alpha} \equiv \mel{a_\alpha^{(k)}}{O_\alpha^{(k)}}{a_\alpha^{(k)}} \qquad \qquad y_{k,\alpha} \equiv \mel{b_\alpha^{(k)}}{O_\alpha^{(k)}}{b_\alpha^{(k)}}
	\end{equation}
	These are real, since the $O_\alpha^{(k)}$ are Hermitian. This allows us to look at the suprema over $I^{(k)}$:
	\begin{equation}
	x_k \equiv \sup_{\alpha \in I^{(k)}} x_{k, \alpha} = \sup_{\alpha \in I^{(k)}} \mel{a_\alpha^{(k)}}{O_\alpha^{(k)}}{a_\alpha^{(k)}} \qquad \qquad y_k \equiv \sup_{\alpha \in I^{(k)}} y_{k, \alpha} = \sup_{\alpha \in I^{(k)}} \mel{b_\alpha^{(k)}}{O_\alpha^{(k)}}{b_\alpha^{(k)}}
	\end{equation}
	Since all states are normalized and the operators are bounded, the suprema $x_k$ and $y_k$ are finite.
	
	Fix $\e>0$, and use result \eqref{eq:oal-res1} to find $k_0$ large enough such that, at all $k > k_0$,
	\begin{equation}
	\sup_{\alpha \in I^{(k)}} \Big| \mel{a_\alpha^{(k)}}{O_\alpha^{(k)}}{a_\alpha^{(k)}} - \mel{b_\alpha^{(k)}}{O_\alpha^{(k)}}{b_\alpha^{(k)}} \Big| =  \sup_{\alpha \in I^{(k)}} | x_{k, \alpha} - y_{k, \alpha} | < {\e \over 2}
	\end{equation} 
	For an arbitrary but fixed $k>k_0$, consider without loss of generality the case $x_k > y_k$. By the assumption that $x_k$ is the supremum over $\alpha$ of $x_{k,\alpha}$, we know we can pick an $\alpha \in I^{(k)}$ such that
	\begin{equation}
	0 \le x_k - x_{k,\alpha} < {\e \over 2} \quad \implies \quad x_{k,\alpha} > x_k - {\e \over 2}
	\end{equation} 
	Then we use the bound on $| x_{k, \alpha} - y_{k, \alpha} |$ to conclude that
	\begin{equation}
	x_{k,\alpha} + {\e \over 2} > y_{k,\alpha} > x_{k,\alpha} - {\e \over 2} \quad \implies \quad y_{k,\alpha} > x_k - \e
	\end{equation}
	But $y_k \ge y_{k,\alpha}$ by definition, so $y_k > x_k - \e$. Together with the assumption $x_k > y_k$ we see $|x_k - y_k| < \e$. The argument is analogous if $x_k \le y_k$. We conclude
	\begin{equation}
	\lim\limits_{k \to \infty} (x_k - y_k) = 0
	\end{equation}
	which is exactly eq. \eqref{eq:oal-res2}. The truth of eq. \eqref{eq:oal-res3} follows if we assume that either limit involved exists.
\end{proof}

\subsection{Projector approximation lemma} In this subsection we formalize the idea of approximating a projector, when given good estimations for basis states of the operator's range.

\begin{assumption} \label{as:proj-approx-cond}
	Within a collection of Hilbert spaces $\HH^{(k)}$ indexed by $k \in \N$ (e.g. corresponding to spin systems of different sizes), let there be subspaces $S^{(k)}$ of dimension at most polynomial in $k$. Namely, there should exist $c_0 \in \R$ and $n_0 \in \N$ such that
	\begin{equation}
	\forall k : \qquad \dim S^{(k)} \le c_0 \cdot k^{n_0}
	\end{equation}
	Let $P^{(k)}$ be the projectors onto the subspaces $S^{(k)}$. Pick an orthonormal basis $\ket{\psi_\alpha^{(k)}}$ for $S^{(k)}$, where $\alpha$ takes values in an index set $I^{(k)}$, of cardinality $\dim S^{(k)}$. This gives
	\begin{equation}
	P^{(k)} = \sum_{\alpha \in I^{(k)}} \ket{\psi_\alpha^{(k)}} \bra{\psi_\alpha^{(k)}}
	\end{equation}
	Assume that we have good approximations $\{ \ket{A \psi_\alpha^{(k)}} \}$ for the basis states, in the sense that for all $n \in \N$:
	\begin{equation}
	\lim\limits_{k \to \infty} \od k^n \cdot \sup\limits_{\alpha \in I^{(k)}} \ob 1 - |\inner{A \psi_\alpha^{(k)}}{\psi_\alpha^{(k)}}|^2 \cb \cd = 0
	\end{equation}
\end{assumption}
\begin{definition} \label{def:proj-approx}
	Define the approximate projectors corresponding to the $P^{(k)}$ above as:
	$$AP^{(k)} = \sum_{\alpha \in I^{(k)}} \ket{A\psi_\alpha^{(k)}} \bra{A\psi_\alpha^{(k)}}$$
\end{definition}

\begin{lemma} \label{lm:proj-approx}
	
	Under Assumption \ref{as:proj-approx-cond} and with Definition \ref{def:proj-approx}, take collections of normalized states $\{\ket{a^{(k)}}\}_{k \in \N}$ and $\{\ket{b^{(k)}}\}_{k \in \N}$ such that at least one of the limits 
	$$\lim\limits_{k \to \infty} \mel{a^{(k)}}{P^{(k)}}{b^{(k)}} \quad \text{or} \quad \lim\limits_{k \to \infty} \mel{a^{(k)}}{AP^{(k)}}{b^{(k)}}$$ 
	exists. Then we have that the other limit also exists, and moreover they are equal:
	\begin{equation}
	\lim\limits_{k \to \infty} \mel{a^{(k)}}{P^{(k)}}{b^{(k)}}= \lim\limits_{k \to \infty} \mel{a^{(k)}}{AP^{(k)}}{b^{(k)}}
	\end{equation}
\end{lemma}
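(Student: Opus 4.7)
The plan is to show that the operator difference $P^{(k)} - AP^{(k)}$ has matrix elements that vanish as $k \to \infty$, so that the two limits in the statement necessarily coincide whenever one exists. The entire argument hinges on trading the polynomial dimension bound $\dim S^{(k)} \le c_0 k^{n_0}$ against the superpolynomial closeness of each basis vector to its approximation.

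The first step is to control a single rank-one term. For any $\alpha \in I^{(k)}$, write $\e_{\alpha,k} = 1 - |\inner{A\psi_\alpha^{(k)}}{\psi_\alpha^{(k)}}|^2$. Since the projectors $\ket{\psi_\alpha^{(k)}}\bra{\psi_\alpha^{(k)}}$ and $\ket{A\psi_\alpha^{(k)}}\bra{A\psi_\alpha^{(k)}}$ are phase-invariant, I may choose representatives so that $\inner{A\psi_\alpha^{(k)}}{\psi_\alpha^{(k)}}$ is real and non-negative. A direct computation of the Hilbert–Schmidt norm of the difference of two rank-one projectors then gives
\begin{equation}
\big\| \ket{\psi_\alpha^{(k)}}\bra{\psi_\alpha^{(k)}} - \ket{A\psi_\alpha^{(k)}}\bra{A\psi_\alpha^{(k)}} \big\| \le \sqrt{2\e_{\alpha,k}},
\end{equation}
where the operator norm is bounded by the Hilbert–Schmidt norm. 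This gives the pointwise bound
\begin{equation}
\big| \mel{a^{(k)}}{\,\ket{\psi_\alpha^{(k)}}\bra{\psi_\alpha^{(k)}} - \ket{A\psi_\alpha^{(k)}}\bra{A\psi_\alpha^{(k)}}\,}{b^{(k)}} \big| \le \sqrt{2\e_{\alpha,k}}
\end{equation}
for arbitrary normalized $\ket{a^{(k)}},\ket{b^{(k)}}$.

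The second step is to sum. Expanding $P^{(k)}$ and $AP^{(k)}$ in their defining sums and using the triangle inequality yields
\begin{equation}
\big| \mel{a^{(k)}}{P^{(k)} - AP^{(k)}}{b^{(k)}} \big| \le \sum_{\alpha \in I^{(k)}} \sqrt{2\e_{\alpha,k}} \le c_0 k^{n_0} \sqrt{2\,\sup_{\alpha \in I^{(k)}} \e_{\alpha,k}}.
\end{equation}
By Assumption \ref{as:proj-approx-cond} we have $k^{2n_0 + 2} \sup_\alpha \e_{\alpha,k} \to 0$, so the right-hand side vanishes as $k \to \infty$. Hence
\begin{equation}
\lim_{k \to \infty} \big( \mel{a^{(k)}}{P^{(k)}}{b^{(k)}} - \mel{a^{(k)}}{AP^{(k)}}{b^{(k)}} \big) = 0,
\end{equation}
and the final step is the trivial observation that if one of the two limits exists, the other must exist with the same value.

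The main (mild) obstacle is the phase issue in the individual rank-one estimate: the basis $\{\ket{\psi_\alpha^{(k)}}\}$ and the family $\{\ket{A\psi_\alpha^{(k)}}\}$ are only defined up to independent phases, so one must not pretend that $\ket{\psi_\alpha^{(k)}} - \ket{A\psi_\alpha^{(k)}}$ itself is small. Switching to the manifestly phase-invariant operator norm of the difference of the two rank-one projectors, as above, bypasses this cleanly. No other delicate point is expected; the polynomial-versus-superpolynomial budget is comfortable, which is why no finer control on the rate of convergence of $\e_{\alpha,k}$ is needed.
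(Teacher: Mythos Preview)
Your proof is correct and follows essentially the same approach as the paper: bound each rank-one difference $\ket{\psi_\alpha}\bra{\psi_\alpha} - \ket{A\psi_\alpha}\bra{A\psi_\alpha}$ by $O(\sqrt{\e_{\alpha,k}})$, sum via the triangle inequality, and trade the polynomial number of terms against the superpolynomial decay of $\sup_\alpha \e_{\alpha,k}$. The only cosmetic difference is that the paper obtains the rank-one bound by explicitly expanding $\ket{\psi_\alpha} = \sqrt{1-\e_\alpha}\,\ket{A\psi_\alpha} + \sqrt{\e_\alpha}\,\ket{\psi_\alpha'}$ and estimating matrix elements directly (arriving at $4\sqrt{\e_\alpha}$), whereas you invoke the Hilbert--Schmidt norm of the difference of two rank-one projectors to get $\sqrt{2\e_\alpha}$; both are equivalent here.
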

 
\begin{proof}
Begin by working at a specific $k$ and suppressing the $k$ index for simplicity. Letting 
\begin{equation}
\e_\alpha \equiv 1 - |\inner{A \psi_\alpha}{\psi_\alpha}|^2
\end{equation}
write the true state in terms of the approximate one, up to a global phase:
\begin{equation}
\ket{\psi_\alpha} = \sqrt{1 - \e_\alpha} \cdot \ket{A\psi_\alpha} + \sqrt{\e_\alpha} \cdot  \ket{\psi_\alpha'}
\end{equation}
with $\ket{\psi'_{\alpha}}$ being some normalized error term that is orthogonal to $\ket{A\psi_\alpha}$. The projector onto $\ket{\psi_\alpha}$ is expanded as
\begin{align}
	\ket{\psi_\alpha} \bra{\psi_\alpha} &= \ob \sqrt{1 - \e_\alpha} \cdot \ket{A\psi_\alpha} + \sqrt{\e_\alpha} \cdot  \ket{\psi_\alpha'} \cb \ob \sqrt{1 - \e_\alpha} \cdot \bra{A\psi_\alpha} + \sqrt{\e_\alpha} \cdot  \bra{\psi_\alpha'} \cb\\
	&= (1 - \e_\alpha) \cdot \ket{A\psi_\alpha}\bra{A\psi_\alpha} + \sqrt{\e_\alpha (1 - \e_\alpha)} \cdot \ob \ket{A\psi_\alpha} \bra{\psi_\alpha'} + \ket{\psi_\alpha'} \bra{A\psi_\alpha} \cb + \e_\alpha \cdot \ket{\psi_\alpha'} \bra{\psi_\alpha'}\\
	&= \ket{A\psi_\alpha}\bra{A\psi_\alpha} + \sqrt{\e_\alpha (1 - \e_\alpha)} \cdot \ob \ket{A\psi_\alpha} \bra{\psi_\alpha'} + \ket{\psi_\alpha'} \bra{A\psi_\alpha} \cb + \e_\alpha \cdot \ob \ket{\psi_\alpha'} \bra{\psi_\alpha'} - \ket{A\psi_\alpha}\bra{A\psi_\alpha}\cb
\end{align}
 Moving the approximate projector to the left and taking the matrix element between arbitrary $\bra a$ and $\ket b$, we find 
\begin{align}
	\bra a \ob \ket{\psi_\alpha} \bra{\psi_\alpha} - \ket{A \psi_\alpha} \bra{A \psi_\alpha} \cb \ket b &= \sqrt{\e_\alpha (1 - \e_\alpha)} \cdot \ob \inner{a}{A\psi_\alpha} \inner{\psi_\alpha'}{b} + \inner{a}{\psi_\alpha'} \inner{A\psi_\alpha}{b} \cb\\
	&\quad + \e_\alpha \cdot \ob \inner{a}{\psi_\alpha'} \inner{\psi_\alpha'}{b} - \inner{a}{A\psi_\alpha} \inner{A\psi_\alpha}{b} \cb \nonumber
\end{align}
Due to normalization, every overlap on the RHS is between 1 and -1, so we find
\begin{equation}
\left| \bra a \ob \ket{\psi_\alpha} \bra{\psi_\alpha} - \ket{A \psi_\alpha} \bra{A \psi_\alpha} \cb \ket b \right| \le 2 \sqrt{\e_\alpha} \cdot \ob \sqrt{\e_\alpha} + \sqrt{1 - \e_\alpha} \cb < 4 \sqrt{\e_\alpha}
\end{equation}
where the last inequality follows from $\e_\alpha \in [0,1]$. If we let \begin{equation}
\e^{(k)} \equiv \sup\limits_{\alpha \in I^{(k)}} \e_{\alpha}^{(k)}
\end{equation}
where the $k$ index was momentarily restored for clarity, then it follows that 
\begin{equation}
\left| \bra{a} \ob \ket{\psi_\alpha} \bra{\psi_\alpha} - \ket{A \psi_\alpha} \bra{A \psi_\alpha} \cb \ket{b} \right| < 4 \sqrt{\e} \quad \quad \quad \quad \quad \quad \forall \alpha
\end{equation}
and summing over all $\alpha$
\begin{equation}
\left| \bra a \sum_\alpha \ob \ket{\psi_\alpha} \bra{\psi_\alpha} - \ket{A \psi_\alpha} \bra{A \psi_\alpha} \cb \ket b \right| \le \sum_\alpha\left| \bra a  \ob \ket{\psi_\alpha} \bra{\psi_\alpha} - \ket{A \psi_\alpha} \bra{A \psi_\alpha} \cb \ket b \right|  < \sum_\alpha 4 \sqrt{\e}
\end{equation}
Now consider what happens as we take $k$ to be large. If the set of possible values of $\alpha$ has size polynomial in $k$ and we're assuming $\lim\limits_{k \to \infty} k^n \cdot \e^{(k)} = 0$ for all $n$, then the LHS from above also vanishes:
\begin{equation}
\lim\limits_{k \to \infty} \left| \bra{a^{(k)}} \sum_\alpha \ob \ket{\psi_\alpha^{(k)}} \bra{\psi_\alpha^{(k)}} - \ket{A \psi_\alpha^{(k)}} \bra{A \psi_\alpha^{(k)}} \cb \ket{b^{(k)}} \right| = 0
\end{equation}
Since it goes to zero, we can drop the absolute value. As we assume that the limit of the matrix element exists for at least one projector, we can move them on different sides to get
\begin{equation}
\lim\limits_{k \to \infty} \bra{a^{(k)}} \oc \sum_\alpha \ket{\psi_\alpha} \bra{\psi_\alpha} \cc \ket{b^{(k)}} = \lim\limits_{k \to \infty}  \bra{a^{(k)}} \oc \sum_\alpha  \ket{A \psi_\alpha} \bra{A \psi_\alpha} \cc \ket{b^{(k)}}
\end{equation}
as expected.
\end{proof}

\subsection{Proof of Proposition \ref{pr:hiap-p2}} \label{ssec:high-p-other-walks} The aim is to argue that $\| G_{[k+1, 3k]}  \ket{\mathrm {IIb}} \|$ vanishes at large $k$. Recall that $\ket{\mathrm {IIb}}$ is obtained by selecting, from the sum $\ket{\mathrm {II}}$, only the walks that do not have all their first $k$ steps down. The definition of $\ket{\mathrm {II}}$ was:
\begin{equation}
    \ket{\mathrm {II}} = \sum_{z = 1}^{p} \sum_{v \ge 0} \ket{GS_{p - z, v}^{[1,2k]}} \otimes \ket{\psi^2_{v + z,q}}
\end{equation}
Let $w$ be a walk that appears in the sum $\ket{\mathrm {IIb}}$, and let $r$ be the height that $w$ reaches after the first $k$ steps. Note that, by the definition of $\ket{\mathrm{II}}$, all walks reach zero height only within the last third of the chain, $[2k+1, 3k]$. Since they end at height $q$, it must be that $q \le k$. Also, the number of unbalanced steps of the walk's $[k+1, 3k]$ component is $(r,q)$.

First consider the case $r \le (1 - c) k$. We divide the last two thirds $[k+1, 3k]$ into two unequal segments, $L = [k+1, (2 - c/2) k]$ and $R = [(2-c/2)k, 3k]$. Note that the length of $L$ is $(1-c/2)k$, and it holds true that $r \le (1-c)k < (1 - c/2) k$. Similarly, the length of $R$ is $(1+c/2) k$ and so $q \le k \le (1+c/2) k$. The low-imbalance approximation lemma says that the ground state $\ket{GS_{r,q}^{[k+1, 3k]}}$ can be approximated using only walks that reach zero height within both $L$ and $R$. Since our walk $w$ is assumed to not reach zero height in $[k+1, 2k]$, which includes $L$, we see that it will pick up an exponentially small prefactor when compared to ground states on $[k+1, 3k]$.

On the other hand, let $r > (1 - c) k$. Then we use a different division of $[k+1, 3k]$, into $L = [k+1, (2 - 2c) k]$ and $R = [(2 - 2c)k, 3k]$. The length of $L$ is now $(1-2c)k$, and $r$ is assumed larger than that by at least $ck$. By the high-imbalance approximation lemma, the ground state $\ket{GS_{r,q}^{[k+1, 3k]}}$ can be approximated using only walks whose first $(1-2c k)$ steps (counting from position $k+1$ on) are down. If the walk $w$ does not have that property, then again it picks up an exponentially small factor when acted on by $G_{[k+1, 3k]}$. 

If on the other hand $w$ does have that property, then its contribution to the $[1,2k]$ ground state that it came from must be negligible (recall that $\ket{\mathrm{II}}$ contains only ground states on $[1,2k]$, so $w$ must have come from one of them). This holds true because:
\begin{itemize}
    \item If $p - r > 3ck$, then the component of $w$ on $[1,2k]$ has at least $(1+c)k$ unbalanced down steps, because we know that at least $(1-2c)k$ are found in the middle third. A high-imbalance approximation with $[1,2k]$ divided into $L = [1,k]$ and $R = [k+1, 2k]$ shows that ground states must (approximately) have their first $k$ steps down, which $w$ does not.
    \item If $p - r \le 3ck$, then use a low-imbalance approximation with a slightly wider $L = [1, (1 + 4c) k]$ and narrower $R = [(1 + 4c) k, 2k]$. The low-imbalance approximation must hold because the number of unbalanced down-steps of $w$ in the first third is $p-r \le 3ck$, and in the middle third it is at most $k$ (at most all steps). So the component of $w$ on $[1,2k]$ cannot have more than $(1 + 3c) k$ unbalanced down-steps. On the other hand, it cannot have more than $ck$ unbalanced up-steps either, since $r > (1 - c) k$. So the low imbalance regime applies, and ground states can be approximated by walks that reach zero height within both $L$ and $R$. But since $w$ has the property that the $(1-2c k)$ steps that follow the $k$th one are all down, it cannot satisfy the desired property.
\end{itemize}

Therefore every walk in $G_{[k+1, 3k]}  \ket{\mathrm {IIb}}$ gives a negligible contribution, and Proposition \ref{pr:hiap-p2} follows.

\subsection{High $q$ regime (Proof sketch for \eqref{eq:high-q-limit})}\label{ssec:high-q-outline} 
As mentioned after the derivation of \eqref{eq:high-p-limit} in the high-$p$ regime, the proof of \eqref{eq:high-q-limit} in the high-$q$-regime is very similar. Here we provide a sketch of the argument.  We can again start with the orthogonality $\inner{GS_{p,q}^{[1,3k]}}{\phi_{p,q}} = 0$, and this time we approximate the ground state on the full chain by 
$$\ket{GS_{p,q - k}^{[1,2k]}} \otimes \ket{u}^{\otimes k}.
$$
Comparison with $\ket{GS_{p,q - k}^{[1,2k]}}$ is, in this case, equivalent to acting with $G_{[1,2k]}$, as can be seen by counting unbalanced steps. Since $\ket {\phi_{p,q}}$ is a +1 eigenstate of this latter operator, it must be that the $\ket{u}^{\otimes k} \otimes \bra{u}^{\otimes k}$ approximately annihilates it. That is, for any walk that makes nonvanishing contributions to $\ket{\phi_{p,q}}$, not all the last $k$ steps are up. 

Now we compare $\ket{\phi_{p,q}}$ with ground states on $[k+1, 3k]$. Analogously to \eqref{eq:schmidt}, we can perform a Schmidt decomposition of $\ket{\phi_{p,q}}$ about subsystems $[1,k]$ and $[k+1,3k]$:
\begin{equation}
\ket{\phi_{p,q}} = \ket{\mathrm{I}} + \ket{\mathrm{II}} + \ket{\mathrm{III}}
\end{equation}
with the three terms corresponding to initial walks that reach zero height with both the left and middle thirds, only the left one $[1,k]$, or only the middle $[k+1, 2k]$:
\begin{align*}
    \ket{\mathrm{I}} &=\sum_{r \ge 0} \ket{GS_{p,r}^{[1, k]}} \otimes \ket{\psi^1_{r,q}}\\
    \ket{\mathrm{II}} &= \sum_{z = 1}^{q} \sum_{r \ge 0} \ket{GS_{p, r + z}^{[1, k]}} \otimes \ket{\psi^2_{r, q - z}}\\
    \ket{\mathrm{III}} &= \sum_{z = 1}^{p} \sum_{r \ge 0} \ket{GS_{p - z, r}^{[1,k]}} \otimes \ket{\psi^3_{r + z, q}}
\end{align*}
where the $\ket{\psi^i}$ are unnormalized (having absorbed the Schmidt coefficients) and live on $[k+1, 3k]$. When comparing the $\ket{\psi^i}$ with ground states, they must have the same numbers of unbalanced steps for the overlap to be nonzero. Since $q > (1 + c)k$, we use the high imbalance lemma to argue that ground states on $[k+1, 3k]$ with $q$ unbalanced up-steps must, to a good approximation, have all their last $k$ steps up. This means that their overlap with the terms in $\ket{\mathrm{I}}$ and $\ket{\mathrm{III}}$ vanishes, since the latter are known to be approximately annihilated by $\ket{u}^{\otimes k} \otimes \bra{u}^{\otimes k}$ acting on the last $k$ sites.

For the remaining term $\ket{\mathrm{II}}$ we observe that a walk which, on the entire $[1,3k]$, reaches the ground only within the first third, but does not have all its last $k$ steps up, must make an exponentially vanishing contribution to the relevant ground state on $[1,2k]$, or to that on $[k+1,3k]$, or both. The reason is as follows: if $q$ is very high, then the last $k$ steps must be all up in order for the walk to contribute to ground states on $[k+1, 3k]$. If $q$ is not very high, then consider the height $v$ after $2k$ steps. With high $v$, we have too many balanced steps in the last third to contribute to ground states on $[k+1, 3k]$. With low $v$, the low imbalance approximation on $[1,2k]$ says that ground states should reach zero height within both $[1,k]$ and $[k+1, 2k]$. So $G_{[k+1, 3k]}$ approximately annihilates $\ket{\phi_{p,q}}$, and the argument is complete.

\subsection{Proving Proposition \ref{pr:reduction-of-phi-pq}} \label{ssec:proving-reduction-of-phi-pq}

When expanding the state $\ket{\phi_{p,q}}$ in terms of ground states on $AB$, and some other states on $C$, we need to consider three possibilities:
\begin{itemize}
	\item (i) Walks that reach zero height both in AB and in C;
	\item (ii) Walks that reach zero height only in C, but not in AB;
	\item (iii) Walks that reach zero height only in AB, but not in C.
\end{itemize} 
As in eq. \eqref{eq:phi-pq-full-expansion}, we write
\begin{equation}
\ket{\phi_{p,q}} = \ket{\mathrm{I}} + \ket{\mathrm{II}} + \ket{\mathrm{III}}
\end{equation}
with the three separate terms corresponding to the three cases above:
\begin{align*}
\ket{\mathrm{I}} &= {1 \over \sqrt{{N'}^{\phi}_{p,q}}} \cdot \sum_{s \ge 0} \ket{\unnormstate{G^{AB}_{p,s}}} \ket{\psi^C_{s,q}}\\
\ket{\mathrm{II}} &= {1 \over \sqrt{{N'}^{\phi}_{p,q}}} \cdot \sum_{z = 1}^p \sum_{s \ge 0} \ket{\unnormstate{G^{AB}_{p-z,s}}} \ket{{\psi'}^C_{s + z,q}}\\
\ket{\mathrm{III}} &= {1 \over \sqrt{{N'}^{\phi}_{p,q}}} \cdot \sum_{y = 1}^q \sum_{s \ge 0} \ket{\unnormstate{G^{AB}_{p,s+y}}} \ket{{\psi''}^C_{s, q - y}}
\end{align*}
The normalization factor ${N'}^{\phi}_{p,q}$ is of course chosen such that $\inner{\phi_{p,q}}{\phi_{p,q}} = 1$. First note that, from the characterization above regarding heights, the three categories are fully disjoint. Therefore terms such as $\inner{\mathrm{I}}{\mathrm{II}}$, $\inner{\mathrm{I}}{\mathrm{III}}$ and $\inner{\mathrm{II}}{\mathrm{III}}$ vanish exactly. This gives
\begin{equation}
1 = \inner{\phi_{p,q}}{\phi_{p,q}} = \sum_{x \in \{\mathrm{I}, \mathrm{II}, \mathrm{III} \}} \inner{x}{x} \quad \implies \quad \inner{x}{x} \le 1 \quad \forall x \in \{\mathrm{I}, \mathrm{II}, \mathrm{III}\}
\end{equation}
When computing the matrix element $\mel{\phi_{p,q}}{G_{BC}}{\phi_{p,q}}$ we will get nine terms:
\begin{equation}
\mel{\phi_{p,q}}{G_{BC}}{\phi_{p,q}} = \sum_{x,y \in \{\mathrm{I}, \mathrm{II}, \mathrm{III}\}} \mel{x}{G_{BC}}{y}
\end{equation}
Note that, given a state $\ket{\phi_{p,q}}$, the decomposition into terms I, II, III is unique. We want to find the supremum of $\mel{\phi_{p,q}}{G_{BC}}{\phi_{p,q}}$ under the known conditions on $p, q$ and the state $\phi_{p,q}$. It is clear by the properties of the supremum that 
\begin{equation}
\sup_{\substack{p,q \le (1 + c) k \\ \phi_{p,q} \in \range E_k}} \mel{\phi_{p,q}}{G_{BC}}{\phi_{p,q}} = \sup_{\substack{p,q \le (1 + c) k \\ \phi_{p,q} \in \range E_k}} \sum_{x,y \in \{\mathrm{I}, \mathrm{II}, \mathrm{III}\}} \mel{x}{G_{BC}}{y} \le  \sum_{x,y \in \{\mathrm{I}, \mathrm{II}, \mathrm{III}\}} \sup_{\substack{p,q \le (1 + c) k \\ \phi_{p,q} \in \range E_k}} \mel{x}{G_{BC}}{y}
\end{equation}
i.e. the supremum of the sum must be bounded above by the sum of suprema for individual terms. The aim is to show that all terms in the rightmost sum vanish, except possibly the one with $x = y = \mathrm{I}$. Formally, we have
\begin{proposition} \label{pr:phi-red-others-vanish}
	Under the conditions of Section \ref{ssec:low-imb-app}, one has
	\begin{equation}
	\limsup\limits_{k\to\infty} \op \sum_{x,y \in \{\mathrm{I}, \mathrm{II}, \mathrm{III}\}} \sup_{\substack{p,q \le (1 + c) k \\ \phi_{p,q} \in \range E_k}} \mel{x}{G_{BC}}{y} \cp = \limsup\limits_{k\to\infty} \op \sup_{\substack{p,q \le (1 + c) k \\ \phi_{p,q} \in \range E_k}} \mel{\mathrm{I}}{G_{BC}}{\mathrm{I}} \cp
	\end{equation}
\end{proposition}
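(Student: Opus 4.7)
My plan is to apply Cauchy--Schwarz to reduce the nine cross-terms to the two diagonal terms $\mel{II}{G_{BC}}{II}$ and $\mel{III}{G_{BC}}{III}$, and then show that each of these vanishes in the large-$k$ limit by exploiting that walks contributing to $\ket{II}$ (respectively $\ket{III}$) have strictly positive minimum height on $AB$ (respectively $C$). The Cauchy--Schwarz step is immediate: since $G_{BC}$ is a positive operator and $\|\ket x\|\le 1$ for each $x\in\{I,II,III\}$,
\[
|\mel{x}{G_{BC}}{y}| \le \sqrt{\mel{x}{G_{BC}}{x}\,\mel{y}{G_{BC}}{y}} \le \sqrt{\mel{y}{G_{BC}}{y}},
\]
so the six cross-terms involving $II$ or $III$ vanish once we control the two diagonal ones. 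I focus on the $II$ case, as the $III$ case proceeds by an analogous argument with the roles of $AB$ and $C$ reversed.

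For $\ket{II}$ I will decompose further according to the height $h$ at the $A$--$B$ seam, writing $\ket{II} = \sum_{h \ge 1}\ket{II^{(h)}}$. Walks contributing to $\ket{II^{(h)}}$ have, upon restriction to $BC$, equivalence class $G^{BC}_{h,q}$, and moreover have minimum height at least $1$ within $B$ because by definition they do not reach zero anywhere on $AB$. Orthogonality of distinct ground-state projectors on $BC$ then yields
\[
\mel{II}{G_{BC}}{II} = \sum_{h \ge 1}\,\|(I_A\otimes\bra{GS^{BC}_{h,q}})\ket{II^{(h)}}\|^2,
\]
and I analyze this sum in two regimes.

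For $h < bk$ the low-imbalance approximation of Lemma \ref{lm:low-imbalance-approximation-trunc} applied to $BC$ with its $B$--$C$ split replaces $\ket{GS^{BC}_{h,q}}$ by $\ket{ATGS^{BC}_{h,q}}$ up to superpolynomially small error; since $\ket{ATGS^{BC}_{h,q}}$ is supported entirely on walks reaching zero in both $B$ and $C$, and these are disjoint as basis walks from those appearing in $\ket{II^{(h)}}$, the partial inner product against $\ket{ATGS^{BC}_{h,q}}$ vanishes identically, so the one against the true $\ket{GS^{BC}_{h,q}}$ is superpolynomially small uniformly in $p,q,\phi$. For $h \ge bk$ no such clean approximation is available, so I will instead bound $\|\ket{II^{(h)}}\|^2$ directly: such walks come from terms $\ket{UGS^{AB}_{p-z,v}}\otimes\ket{{\psi'}^C_{v+z,q}}$ with $v+z=h$, and their contribution is controlled by $N^{AB}_{p-z,v}$, which is exponentially small in $v$ (and hence in $h$) by the normalization-ratio estimates of Appendix \ref{sec:norm-ratios-appendix} combined with the monotonicity $N^{AB}_{p-z,0}\le N^{AB}_{p,0}$. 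Summing the contributions of the $O(k)$ values of $h$ in each regime gives $\sup\mel{II}{G_{BC}}{II}\to 0$.

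The main obstacle will be the high-$h$ regime: one cannot invoke a clean ground-state approximation lemma there and must instead quantitatively compare the exponentially small $N^{AB}_{p-z,v}$ against a potentially adversarial choice of Schmidt weights in $\ket{\phi_{p,q}}$. The saving feature is that $\ket{\phi_{p,q}}$ lies in the range of $E_k$, so in particular $G_{AB}\ket{\phi_{p,q}} = \ket{\phi_{p,q}}$, which forces every term in the Schmidt decomposition across the $AB|C$ cut to have $AB$-factor in the ground space of $AB$; this is what lets the exponential bound on $N^{AB}_{p-z,v}$ propagate to $\|\ket{II^{(h)}}\|^2$ after proper normalization.
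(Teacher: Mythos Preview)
Your Cauchy--Schwarz reduction and your low-$h$ argument are essentially the same as the paper's: both observe that $\|G_{BC}\ket{II}\|$ (equivalently $\mel{II}{G_{BC}}{II}$) controls all the unwanted terms, and both handle small $A$--$B$ seam height by replacing $\ket{GS^{BC}_{h,q}}$ with its truncated low-imbalance approximation, which is supported on walks reaching zero in $B$ and hence orthogonal to every walk in $\ket{II}$.

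The high-$h$ regime is where your argument breaks. You identify $h$ with $v+z$, but these are different quantities: $v+z$ is the height at the $B$--$C$ border (the endpoint of the $AB$ portion, after undoing the shift by $z$), whereas $h$---the quantity that determines which $\ket{GS^{BC}_{h,q}}$ picks up the walk---is the height at the $A$--$B$ border. These are not related in any simple way, since the height at the $A$--$B$ seam of a walk in $\ket{UGS^{AB}_{p-z,v}}$ ranges freely over many values depending on the specific walk. Consequently your claimed control on $\|\ket{II^{(h)}}\|^2$ via the exponential smallness of $N^{AB}_{p-z,v}$ in $v$ does not follow, and the monotonicity $N^{AB}_{p-z,0}\le N^{AB}_{p,0}$ you invoke is neither proved in the paper nor evidently true.

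The paper closes this gap differently: for large $r$ ($=h$) it further subdivides by the $B$--$C$ border height $v$. When $v$ is small, Lemma~\ref{lm:low-imbalance-approximation-trunc} applied to the $A$--$B$ split of $AB$ shows that walks with large $A$--$B$ seam height already carry exponentially small weight inside the $AB$ ground state they came from. When both $r$ and $v$ are large, the high-imbalance Lemma~\ref{lm:approx-high} forces the relevant $AB$ ground state to have its rightmost steps in $B$ all up, while the relevant $BC$ ground state must have its leftmost steps in $B$ all down---contradictory requirements, so the overlap vanishes up to superpolynomially small corrections. You need this two-level case analysis (or something equivalent) to handle high $h$; a direct norm bound on $\ket{II^{(h)}}$ does not suffice.
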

\begin{proof}[Proof sketch of Proposition \ref{pr:phi-red-others-vanish}]
The proof is based on very similar considerations as in earlier parts of the paper. In order to not repeat many similar estimates, we summarize the overarching line of argument, but leave the details to the reader.

	We want to show that eight terms vanish (all but the one with $\mathrm{I}$ on both sides). We will view such terms as inner products of a state $\ket{x} \in \{\ket{\mathrm{I}}, \ket{\mathrm{II}}, \ket{\mathrm{III}}\}$ and the state $G_{BC} \ket{y}$ with $y \in \{\mathrm{II}, \mathrm{III} \}$. Up to complex conjugation (which does not affect the vanishing of the supremum), all eight terms can be written like this. The idea is that, from Cauchy-Schwarz, we know
	\begin{equation}
	|\mel{x}{G_{BC}}{y}| \le \| \ket{x} \| \cdot \| G_{BC} \ket{y} \| \le \| G_{BC} \ket{y} \|
	\end{equation}
	where the second inequality follows since $\| \ket{x} \| \le 1$ for all $x$. So it is sufficient to argue that the norms $\| G_{BC} \ket{\mathrm{II}} \|$ and $\| G_{BC} \ket{\mathrm{III}} \|$ will vanish in the limit of large $k$, given that $\| \ket{\mathrm{II}} \|$ and $\| \ket{\mathrm{III}} \|$ are always at most 1.
	
	This last property can be seen by considering several subcases for each state. For example, we know the states in $\ket{\mathrm{II}}$ reach zero height within region $C$, but not in $AB$. When acting with the $G_{BC}$ projector, we are implicitly taking their overlap with ground states on $BC$.
	
	Since our walks reach zero height in $C$ and terminate at height $q$, it is clear that, when only looking at their $BC$ portions, we still have $q$ unbalanced steps on the right. The number of unbalanced steps on the left (of the $BC$ portion only), will be the height that they reach at the border between $A$ and $B$; call this height $r$. It is clear that the only $BC$ ground state which will give nonzero overlap with such a walk is $\ket{GS_{r,q}^{BC}}$.
	
	For walks whose $r$ is small (e.g. below two-thirds of the length of $B$), we use the approximation Lemma \ref{lm:low-imbalance-approximation-trunc} to show that $\ket{GS_{r,q}}$ is approximated (up to exponentially small errors) by a sum of walks which reach zero height in both $B$ and $C$. That clearly cannot have any overlap with our walk, so the only nonzero contribution must come from the exponentially suppressed terms which were excluded in the approximation lemma.
	
	On the other hand, consider walks with large $r$ (e.g. above two-thirds the length of $B$). If they have a small $v$ value (height at the border between $B$ and $C$), then we use Lemma \ref{lm:low-imbalance-approximation-trunc} to argue that they must, from the beginning, have had an exponentially vanishing prefactor within the ground state $\ket{GS_{p,v}^{AB}}$. If on the other hand both $r$ and $v$ are large, we use the high-imbalance approximation Lemma \ref{lm:approx-high}: for the $AB$ ground state, due to the large $v$ it requires that the rightmost steps in $B$ are up; for the $BC$ ground state, it requires that the leftmost are down due to the large $r$. We can take our definitions of 'large' so that these two requirements are contradictory, so that any walk with $r$ and $v$ both large will either start out with an exponentially vanishing prefactor, or gain one from $G_{BC}$.
	
	Therefore, when we collect all the terms in $G_{BC} \ket{\mathrm{II}}$, we find that they either gained, or already had, an exponentially vanishing prefactor. Any combinatorial factors arising from $G_{BC}$ being written as a sum of projectors etc. will be at most polynomial in $k$, so they will not affect the conclusion that $\| G_{BC} \ket{\mathrm{II}} \|$ vanishes as $k \to \infty$. An entirely similar reasoning holds for $\ket{\mathrm{III}}$, and the argument is complete. 
\end{proof}

\subsection{Conclusion (Proof of Proposition \ref{pr:reduction-of-phi-pq})}
\begin{proof}[Proof of Proposition \ref{pr:reduction-of-phi-pq} ]
The state $\ket{\phi'_{p,q}}$ is not the normalized version
of the full term $\ket{I}$, but only of its low-intermediate-height part. We
therefore write
\[
    \ket{I}=\ket{I_{<}}+\ket{I_{\ge}},
\]
where
\[
    \ket{I_{<}}
    =
    \frac{1}{\sqrt{N^{\phi}_{p,q}}}
    \sum_{s<bk}
    \ket{GS^{AB}_{p,s}}\ket{\psi^{C}_{s,q}},
    \qquad
    \ket{I_{\ge}}
    =
    \frac{1}{\sqrt{N^{\phi}_{p,q}}}
    \sum_{s\ge bk}
    \ket{GS^{AB}_{p,s}}\ket{\psi^{C}_{s,q}} .
\]
By Definition, $\ket{\phi'_{p,q}}$ is the normalized version of $\ket{I_{<}}$, namely
\[
    \ket{\phi'_{p,q}}
    =
    \frac{\ket{I_{<}}}{\|\ket{I_{<}}\|}.
\]
Consequently,
\[
    \bra{I_{<}}G_{BC}\ket{I_{<}}
    =
    \|I_{<}\|^{2}
    \bra{\phi'_{p,q}}G_{BC}\ket{\phi'_{p,q}}
    \le
    \bra{\phi'_{p,q}}G_{BC}\ket{\phi'_{p,q}},
\]
since $\|I_{<}\|\le 1$.
It remains to compare $\ket{I}$ with $\ket{I_{<}}$. By the same low-imbalance truncation estimate used in Lemma \ref{lm:low-imbalance-approximation-trunc}, the discarded term $\ket{I_{\ge}}$ satisfies, uniformly in the low-imbalance regime under consideration,
\[
    \|I_{\ge}\|=o(1)
    \qquad (k\to\infty).
\]
Since $G_{BC}$ is a projection, $\|G_{BC}\|\le 1$, and hence
\[
\begin{aligned}
\left|
    \bra{I}G_{BC}\ket{I}
    -
    \bra{I_{<}}G_{BC}\ket{I_{<}}
\right|
&=
\left|
    \bra{I_{<}}G_{BC}\ket{I_{\ge}}
    +
    \bra{I_{\ge}}G_{BC}\ket{I_{<}}
    +
    \bra{I_{\ge}}G_{BC}\ket{I_{\ge}}
\right|  \\
&\le
    2\|I_{\ge}\|+\|I_{\ge}\|^{2}
    =
    o(1).
\end{aligned}
\]

Therefore,
\[
    \limsup_{k\to\infty}
    \sup
    \bra{I}G_{BC}\ket{I}
    \le
    \limsup_{k\to\infty}
    \sup
    \bra{\phi'_{p,q}}G_{BC}\ket{\phi'_{p,q}} .
\]
	Together with Proposition \ref{pr:phi-red-others-vanish}, and taking the $k \to \infty$ limit, we find
	\begin{align*}
	\limsup\limits_{k\to\infty} \op \sup_{\substack{p,q \le (1 + c) k \\ \phi_{p,q} \in \range E_k}} \mel{\phi_{p,q}}{G_{BC}}{\phi_{p,q}} \cp &\le \limsup\limits_{k\to\infty} \op \sum_{x,y \in \{\mathrm{I}, \mathrm{II}, \mathrm{III} \}} \sup_{\substack{p,q \le (1 + c) k \\ \phi_{p,q} \in \range E_k}} \mel{x}{G_{BC}}{y} \cp\\
	&= \limsup\limits_{k\to\infty}\op \sup_{\substack{p,q \le (1 + c) k \\ \phi_{p,q} \in \range E_k}} \mel{\mathrm{I}}{G_{BC}}{\mathrm{I}} \cp\\
	&\le \limsup\limits_{k\to\infty} \op \sup_{\substack{p,q \le (1 + c) k \\ \phi_{p,q} \in \range E_k}} \mel{\phi_{p,q}'}{G_{BC}}{\phi_{p,q}'} \cp 
	\end{align*}
	as desired.
\end{proof}

\bibliographystyle{alpha}
\newcommand{\etalchar}[1]{$^{#1}$}

\end{document}